\documentclass[11pt,a4paper,reqno,oneside]{amsproc}
\usepackage[left=1in,right=1in,bottom=1in,top=1in]{geometry}
\usepackage[utf8]{inputenc}
\usepackage[T1]{fontenc}
\usepackage{tikz,amsthm,amsmath,amstext,amssymb,amscd,epsfig,euscript, mathrsfs, dsfont,pspicture,multicol, mathtools,graphpap,graphics,graphicx,times,enumerate,subfig,sidecap,wrapfig,color}
\usepackage{hyperref}
\usepackage{bm}
\usepackage{dsfont}
\usepackage{pgfplots}
\usepackage[normalem]{ulem}
\usetikzlibrary{patterns}
\usepackage[normalem]{ulem}

\DeclareMathOperator{\dom}{dom}
\DeclareMathOperator{\ran}{ran}
\DeclareMathOperator{\spec}{spec}
\DeclareMathOperator{\supp}{supp}
\DeclareMathOperator{\pv}{p.v.}
\DeclareMathOperator{\Op}{Op}
\newcommand{\specd}{\spec_\mathrm{disc}}
\newcommand{\spece}{\spec_\mathrm{ess}}

\newcommand{\ddd}{\,\mathrm{d}}
\newcommand{\capa}{{\textrm{Cap}}}
\newcommand{\gM}{{\mathfrak{M}}}
\newcommand{\cS}{\mathcal{S}}

\def \rr {{\mathbb R}}
\def \cc {{\mathbb C}}
\def \nn {{\mathbb N}}

\def \zz {{\mathbb Z}}

\DeclareGraphicsExtensions{.png}
\numberwithin{equation}{section}

\def\var{\varphi}
\def\dom{\mathrm{dom}}

\def \cc {{\mathbb{C}}}

\newtheorem{definition}{Definition}[section]
\newtheorem{theorem}[definition]{Theorem}
\newtheorem{proposition}[definition]{Proposition}
\newtheorem{lemma}[definition]{Lemma}
\newtheorem{corollary}[definition]{Corollary}
\newtheorem{remark}[definition]{Remark}

\makeatletter

\title[The Landau--Dirac operator with  shell interactions]{The Landau--Dirac operator with  shell interactions: \\self-adjointness and clustering}

\makeatletter
\@namedef{subjclassname@2020}{%
  \textup{2020} Mathematics Subject Classification}
\makeatother
    
\author[B. Benhellal]{Badreddine Benhellal}
\address{(B. Benhellal) National Higher School of Mathematics, Scientific and Technology Hub of Sidi Abdellah, P. O. Box 75,  16093 Algiers, Algeria}
\email{badreddine.benhellal@nhsm.edu.dz}
\author[B. Vincent]{Vincent Bruneau}
\address{ (V. Bruneau) Institut de Math\'ematiques de Bordeaux, UMR 5251, Universit\' e de Bordeaux,  33405 Talence Cedex,  France.}
\email{vbruneau@math.u-bordeaux.fr}
\author[P. Miranda]{Pablo Miranda}
\address{(P. Miranda) Departamento de Matem\'{a}tica y Ciencia de la Computaci\'{o}n
Universidad de Santiago de Chile, Las Sophoras 173, Estaci\'{o}n Central, Santiago, Chile}
\email{pablo.miranda.r@usach.cl}
\date{\today}
\subjclass[2020]{Primary: 35Q40; Secondary 35P05, 45N50, 81Q10}
\keywords{magnetic Dirac operator, delta interaction, eigenvalue asymptotics}
\begin{document}
	\begin{abstract} 
	We consider the two-dimensional Dirac operator with  constant magnetic field that is  perturbed by a combination of electrostatic and Lorentz-scalar delta interactions with variable coefficients supported on a smooth closed curve. Self-adjointness is studied in the so-called non-critical and critical cases. In the non-critical case the essential spectrum is unchanged—it remains to be the set of the Landau--Dirac levels, the eigenvalues of infinite multiplicity of the unperturbed operator—while in the critical case an additional interval of essential spectrum emerges in the spectral gap containing zero. Our main result concerns the discrete spectrum in the non-critical case: using the pseudodifferential properties of the involved boundary integral operators, we show that the eigenvalues accumulate at each Landau--Dirac level at a rate governed by the logarithmic capacity of the curve.  A novel and surprising phenomenon is the change in the side of the accumulation depending on the position relative to the critical value. As a byproduct,  clusters of eigenvalues for a family of exterior boundary value problems are obtained via {\it confining} couplings; the infinite-mass boundary condition arises as a special case.  
	\end{abstract}
	\maketitle	
	\section{Introduction}
	Since the seminal works \cite{RaWa02,MeRo07}, the spectral properties of compactly supported perturbations of magnetic Hamiltonians have attracted considerable interest and have been extensively studied over the past two decades. In the classical setting of constant full-rank magnetic field --typically in two dimensions-- such perturbations are known to generate clusters of discrete eigenvalues accumulating exponentially fast  near the Landau levels (i.e., eigenvalues of infinite multiplicity corresponding to the free magnetic Hamiltonian). The asymptotic distribution of these eigenvalues and the rate of the accumulation are, notably, characterized in terms of the logarithmic capacity of the perturbation's support, as established in \cite{FiPu06}.
	
	In the context of magnetic  Schr\"odinger or Pauli operators, the side from which discrete  eigenvalues accumulate  to the Landau levels is closely related to the sign of the perturbation. This phenomenon appears uniformly across different types of compact  perturbations:  electromagnetic potentials \cite{ RaWa02, FiPu06, MeRo07, RoTa08, PuRo11}, obstacles  \cite{PushRoz07, Pe09, GoKaPe16}, and  delta interactions supported on a curve \cite{BHOP20}. In particular, the spectral behaviour induced by a delta interactions or obstacles is qualitatively analogous to that caused by an approximating electric potential.

For the Dirac operator, the picture is more intricate. This stems partly from the richer matrix-valued structure of the perturbations and partly from the subtleties involved in regularizing delta interactions by approximating potentials, which require (nonlinear) normalization of coefficients to ensure resolvent convergence;  see, e.g., \cite{CaLoMaTu23, BHT23} for the non-magnetic case.  In the special case of scalar perturbations of the Landau--Dirac operator in $\rr^2$, it is known that the side of the accumulation at the essential spectrum depends on the sign of the perturbation \cite{MeRo07}.  

Recently, in \cite{BM}, it was proved that for matrix-valued potentials of arbitrary sign, a particular matrix component dominates the spectral behaviour. For instance,  Lorentz-type potentials of the form $\sigma_3$ on a compact set,  induce eigenvalue accumulation exclusively from above at each Landau--Dirac level, despite $\sigma_3$ having eigenvalues $\pm 1$. 

The primary goal of this paper is to investigate whether this dominance phenomenon persists for perturbations of the Landau--Dirac operator by electrostatic and Lorentz-type delta interactions with variable coefficients    supported on smooth curves. In particular, a special case of our study will yield results for the obstacle problem with boundary conditions including the so-called {\it infinite mass} boundary condition, confirming a conjecture made in \cite{BM}. 

More precisely, for a constant magnetic field $b>0$ and  an  associated magnetic potential on $\rr^2$: 
\[
A(x)=\begin{bmatrix}
A_1(x) \\                                              
A_2(x) \\                                            
\end{bmatrix}
=\frac{b}{2} \begin{bmatrix}
-x_2 \\                                              
x_1 \\                                            
\end{bmatrix},
\]
where $(x_1,x_2)$ are the standard Cartesian coordinates of $x\in\rr^2$, we consider the Landau--Dirac operator with mass $m\geq 0$ defined in $L^2(\mathbb{R}^2 ; \mathbb{C}^2)$  by
\begin{align*} 
D_0 f &= D f, \quad
\dom\, D_0 =  \{f\in L^2(\rr^2; \cc^2):\ D f \in L^2(\rr^2; \cc^2)\}.
\end{align*}  
Here $D$ is the action given by 
\begin{align*}\label{LandauDirac}
    D:= -i\sigma\cdot \nabla_A +m\sigma_3, \quad  \nabla_A:= \nabla-iA,
\end{align*}
and $(\sigma_j)_{j=1,2,3}$ are the Pauli matrices defined by
\begin{equation*} 
 \sigma_1 =  \begin{pmatrix}
0 & 1\\                                              
1 & 0 \\                                            
\end{pmatrix}, \quad  \sigma_2 = \begin{pmatrix}
0 & -i\\                                              
i & 0 \\                                            
\end{pmatrix}, \quad  \sigma_3 =  \begin{pmatrix}
1 & 0\\                                              
0 & -1 \\                                            
\end{pmatrix}.
\end{equation*}
They satisfy the anticommutation relation
\[
\sigma_j\sigma_k +\sigma_k\sigma_j=2 \delta_{jk}I_2, \quad I_2= \begin{pmatrix}
1 & 0\\                                              
0 & 1 \\                                            
\end{pmatrix},
\]
and for $\sigma=(\sigma_1,\sigma_2)$ we use the standard notation
\[
\sigma\cdot x=\sigma_1 x_1 +\sigma_2 x_2.
\]
It is well known \cite[Theorem 7.2]{Thaller92} that $D_0$ is self-adjoint and its spectrum $\spec(D_0)$ consists of eigenvalues of infinite multiplicities, called \textit{Landau--Dirac levels},
\begin{align*}
    \mu_n= \begin{cases}
    \sqrt{2bn+m^2}, \,\,\,\, n\in\{0,1,2,\dots\},\\
    -\sqrt{2b|n|+m^2}, \,\, \,\,  n\in\{-1, -2,\dots\}.
\end{cases}
\end{align*}
We consider perturbations of $D_0$ by $\delta$-interactions supported on a smooth  closed curve $\Sigma\subset\rr^2$,  formally expressed as 
	\[
	D_{\epsilon, \tau}= D+ (\epsilon I_2 + \tau \sigma_3)\delta_\Sigma, 
	\]
 where  $\epsilon,\tau: \Sigma\to \rr$ are smooth coefficient functions satisfying $|\epsilon| \neq |\tau|$ everywhere on $\Sigma$.   As in the non-magnetic case, self-adjoint realizations of  $D_{\epsilon, \tau}$ in $L^2(\rr^2;\cc^2)$ are defined via appropriate transmission conditions on $\Sigma$, involving traces from either side of the curve $\Sigma$ (see Section \ref{secLDD} for more details). Due to the smoothness of the magnetic potential,  one expects the emergence of a natural classification:   the so-called non-critical case $\epsilon^2-\tau^2\neq 4$, and the critical case $\epsilon^2-\tau^2 = 4$, reflecting a fundamental dichotomy in the operator's behaviour.  It turns out that this intuition is indeed correct.  However, what is less expected is that the critical case also represents a turning point in determining the side on which eigenvalues accumulate.
Our main results for non-critical regime can be succinctly stated as follows:
 \begin{theorem}\label{main_Th0}
Let $\epsilon,\tau\in C^1(\Sigma)$ such that $(\epsilon^2-\tau^2)(\Sigma) \cap \{0,4\} = \emptyset$. Then $D_{\epsilon,\tau} $ is self-adjoint with domain embedded in the Sobolev space $ H^1(\rr^2\setminus\Sigma;\cc^2)$, and its essential spectrum remains unchanged: 
\[\spece (D_{\epsilon,\tau})=\spece (D_0)=\spec (D_0) = \{ \mu_n; \, n \in  \zz \}.\]
Furthermore, if the function
\[
\frac{\epsilon + \tau}{4 - (\epsilon^2-\tau^2)}: \,\Sigma\rightarrow \rr
\]
is strictly positive (resp. strictly negative), then near each Landau--Dirac level $\mu_n$, $ n \in \zz$, the operator $D_{\epsilon,\tau}$ has infinitely many eigenvalues accumulating to $\mu_n$ from above (resp. below), while only finitely many eigenvalues accumulate on the opposite side. 	
\end{theorem}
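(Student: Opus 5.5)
The proof has two parts. The first is self-adjointness and the essential spectrum, handled by boundary triples and compact resolvent differences. The second is the eigenvalue clusters, handled by a Birman--Schwinger reduction to the Landau level projections and Toeplitz asymptotics.

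\emph{Step 1: boundary integral framework and self-adjointness.} Write $\Phi_z$ for the single-layer operator associated with the resolvent kernel $G_z$ of $D_0$, and $C_z$ for its boundary (principal value) part on $\Sigma$. The magnetic Green function has the same local singularity as the free massless Dirac kernel; the magnetic phase factor and the Gaussian corrections are smooth. So $C_z$ differs from the non-magnetic Cauchy-type operator $C_z^{0}$ by an operator of order $-1$ on $\Sigma$. The transmission condition is $\tfrac{1}{2}(\epsilon+\tau\sigma_3)(f_++f_-)=i(\sigma\cdot\nu)(f_--f_+)$. It leads to the Krein-type formula
\[
(D_{\epsilon,\tau}-z)^{-1}=(D_0-z)^{-1}-\Phi_z\,\bigl(I+(\epsilon+\tau\sigma_3)C_z\bigr)^{-1}(\epsilon+\tau\sigma_3)\,\Phi_{\bar z}^*.
\]
The key point is invertibility of $\Lambda_z:=I+(\epsilon+\tau\sigma_3)C_z$ on $H^{1/2}(\Sigma;\cc^2)$. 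Its principal symbol coincides with the non-magnetic one. That symbol is elliptic precisely when $\epsilon^2-\tau^2\neq 4$, because $(C_z^0)^2=\tfrac14 I+$ lower order. So $\Lambda_z$ is Fredholm of index zero. Injectivity for $z\notin\rr$ follows from the symmetry of $D_{\epsilon,\tau}$, which is checked by Green's formula. This yields self-adjointness with domain in $H^1(\rr^2\setminus\Sigma)$. The hypothesis $\epsilon^2\neq\tau^2$ is used to make $\epsilon+\tau\sigma_3$ invertible.

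\emph{Step 2: essential spectrum.} The resolvent difference factors through $\Phi_{\bar z}^*:L^2\to H^{1/2}(\Sigma)$, followed by the embedding into $L^2(\Sigma)$, which is compact. By Weyl's theorem, $\spece(D_{\epsilon,\tau})=\spec(D_0)$.

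\emph{Step 3: eigenvalue clusters.} Fix $\mu_n$ and let $P_n$ be the orthogonal projection onto $\ker(D_0-\mu_n)$. Write
\[
(D_0-z)^{-1}=\frac{P_n}{\mu_n-z}+R_n(z),
\]
where $R_n(z)$ is analytic near $\mu_n$. For $\lambda$ near $\mu_n$, the Birman--Schwinger principle says that $\lambda$ is an eigenvalue exactly when $\Lambda_\lambda$ has nontrivial kernel. Then
\[
\Lambda_\lambda=\Lambda^{(n)}_\lambda+\frac{1}{\mu_n-\lambda}(\epsilon+\tau\sigma_3)\gamma P_n\gamma^*,
\]
where $\gamma$ is the trace on $\Sigma$ and $\Lambda^{(n)}_\lambda$ is invertible for $\lambda$ near $\mu_n$ by Step 1. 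Eigenvalue counting therefore reduces to counting $s$ with $\pm$ sign such that
\[
1=\frac{s}{\mu_n-\lambda}
\]
for the eigenvalues $s$ of the compact operator $K_n(\lambda):=(\Lambda_\lambda^{(n)})^{-1}(\epsilon+\tau\sigma_3)\gamma P_n\gamma^*$. Equivalently, one counts eigenvalues of the self-adjoint Toeplitz-type operator
\[
T_n=P_n\gamma^*\,W_n\,\gamma P_n,\qquad W_n:=(\epsilon+\tau\sigma_3)\bigl(\Lambda^{(n)}_{\mu_n}\bigr)^{-1}\Big|_{\text{symmetrized}}.
\]
This is where I expect the main obstacle. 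One must identify the effective sign of $W_n$ on the relevant component of $\ran P_n$, using the Landau-level structure. Each $\ran P_n$ is built from the lowest Landau level in one spinor component and the $n$-th or $(n-1)$-th level in the other, with explicit weights $(\mu_n\pm m)$. Modulo smoothing operators, the leading contribution to $\gamma P_n\gamma^*$ is a rank-structured operator of order $-\infty$ with a fixed sign pattern. So only the principal symbol of $(\Lambda^{(n)})^{-1}(\epsilon+\tau\sigma_3)$, restricted to a specific matrix entry, matters. I would compute it from $(I+(\epsilon+\tau\sigma_3)C^0)^{-1}$ using $(\sigma\cdot\nu\, C^0)$-type identities. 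This should give a scalar multiplier proportional to
\[
\frac{4(\epsilon+\tau)}{4-(\epsilon^2-\tau^2)}
\]
on the dominant component, up to positive factors. The remainder, coming from the other component and from lower-order terms, is relatively compact in the Schatten sense and exponentially smaller.

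\emph{Step 4: asymptotics and the sign conclusion.} With the sign of the principal part fixed, Toeplitz operators $P_n\gamma^* w\gamma P_n$ with $w$ of constant sign have eigenvalues decaying like $(n!)^{-1}(b\,\capa(\Sigma)^2/2)^n$. This is the argument of Filonov--Pushnitski \cite{FiPu06} adapted to curves as in \cite{BHOP20}. In particular there are infinitely many of them, all of the sign of the multiplier. The lower-order parts of opposite sign are dominated and contribute only finitely many eigenvalues. A standard Weyl-inequality perturbation argument, as in \cite{BM}, then transfers this to $D_{\epsilon,\tau}$. The conclusion is that infinitely many eigenvalues accumulate at $\mu_n$ from above when $\frac{\epsilon+\tau}{4-(\epsilon^2-\tau^2)}>0$, and from below when it is $<0$, with only finitely many on the opposite side.
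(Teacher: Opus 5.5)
Your Steps 1--2 follow the paper's argument. Your $\Lambda_z^{-1}(\epsilon+\tau\sigma_3)$ is exactly $\Pi_z^{-1}$ with $\Pi_z=(\epsilon I_2+\tau\sigma_3)^{-1}+\mathscr C_z$, and the Krein formula, the Fredholm property and the Weyl argument are those of Theorem~\ref{main_th1}. One caveat: ellipticity on a closed curve does not by itself give index zero. The paper gets index zero from self-adjointness of $\Pi_z$ for real $z$ together with compactness of $\mathscr C_z-\mathscr C_\zeta$.

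\textbf{The gap is in Step~3.} It sits exactly where you flag the obstacle: you assert, but do not prove, that the side is governed by $V_1=\frac{4(\epsilon+\tau)}{4-(\epsilon^2-\tau^2)}$, and the mechanism you give for discarding everything else is wrong. Dropping compact errors is legitimate, since on a finite-codimension subspace of $L^2(\Sigma;\cc^2)$ they are bounded by $\tilde\delta\|g\|^2$. But modulo compacts the effective boundary operator is
\[
\Pi^{-1}\equiv\begin{pmatrix} V_1 & W_\Sigma\\ W_\Sigma^* & V_2\end{pmatrix},\qquad V_2=\frac{4(\epsilon-\tau)}{4-(\epsilon^2-\tau^2)},\quad W_\Sigma=-\frac{4(\epsilon^2-\tau^2)}{4-(\epsilon^2-\tau^2)}\,H_\Sigma .
\]
Its off-diagonal entries are non-compact zeroth-order operators. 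Its entry $V_2$ can have the sign opposite to $V_1$; in the infinite-mass case $(\epsilon,\tau)=(0,2)$ one has $V_2=-V_1$. So this matrix is indefinite, and its other entries are neither ``lower order'' nor ``relatively compact in the Schatten sense''. The sign of $V_1$ decides nothing unless you show that the lower spinor component is negligible. Your description of $\ran P_n$ is also incorrect. For every $n\in\zz$ it consists of spinors $(c_1p_{|n|}f,\;c_2\,a\,p_{|n|}f)$: the upper component lies in Landau level $|n|$ and the lower one in level $|n|-1$. There is no ``lowest level'' component unless $|n|\le 1$.

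\textbf{The missing idea is to exploit this level shift.} In Fock-space variables the two components are $e^{-b|z|^2/4}(2\partial-b\bar z)^{|n|}F$ and $e^{-b|z|^2/4}(2\partial-b\bar z)^{|n|-1}F$. As in Pushnitski--Rozenblum, on a subspace of finite codimension one has $\int_\Sigma|\partial^kF|^2\,\mathrm{d}s\le\tilde\delta^2\int_\Sigma|\partial^{|n|}F|^2\,\mathrm{d}s$ for all $k<|n|$. Hence the trace of the lower component is relatively small; this is smallness on finite-codimension subspaces, not an exponential gap. Cauchy--Schwarz then absorbs the $W_\Sigma$ cross term and the $V_2$ term. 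One obtains $c\,\|(p_{|n|}f)|_\Sigma\|^2\le Q[f]\le C\,\|(p_{|n|}f)|_\Sigma\|^2$ on a finite-codimension subspace when $V_1>0$. Because the upper component carries the higher level for positive and negative $n$ alike, $V_1$ decides the side at every $\mu_n$.

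Your reduction also has loose ends that the paper avoids:
\begin{itemize}
\item $\Lambda^{(n)}_\lambda$ is not of the form $\Lambda_z$, so Step~1 does not give its invertibility.
\item $K_n(\lambda)$ is $\lambda$-dependent and non-self-adjoint, so the ``equivalently'' passage to a frozen self-adjoint Toeplitz operator needs proof.
\item The relation should read $s=\lambda-\mu_n$, not $s=\mu_n-\lambda$; this sign matters for a statement about sides.
\end{itemize}
The paper instead works with the compact self-adjoint operator $\Upsilon=-\Phi\Pi^{-1}\Phi^*$ at a fixed $\zeta_0$. It decouples $\ran\mathcal P_q$ via $|\langle\mathcal P_q\Upsilon\mathcal P_q^\perp f,f\rangle|\le\delta\|\Phi^*\mathcal P_qf\|^2+\delta^{-1}\|\Pi^{-1}\Phi^*\mathcal P_q^\perp f\|^2$, and concludes with the min-max Lemmas~\ref{le1b} and~\ref{le1}.
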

\begin{remark}\label{rqconf} As indicated in Section \ref{secLDD}, in the particularly confining  case $(\epsilon, \tau)=(0, 2)$ (resp. $(\epsilon, \tau)=(0, - 2)$) the transmission condition reduce to the so-called \textit{infinite mass} (resp. \textit{anti-infinite mass}) boundary condition.  Theorem \ref{main_Th0} confirms the conjecture in \cite{BM} that the Landau--Dirac operator  restricted to  the exterior of an obstacle with \textit{infinite mass} boundary condition exhibits an infinitely many discrete eigenvalues accumulating above (and finitely many below) each Landau-Dirac level $\mu_n$, $ n \in \zz$, with the eigenvalue distribution governed by \eqref{Asymp}. The \textit{anti-infinite mass} boundary condition inverts the side of accumulations, see Corollary \ref{ClusConf}.
\end{remark}
\begin{figure}
        \centering
\begin{tikzpicture}[scale=0.9]

\draw[->] (-3.2,0) -- (3.2,0) node[right] {$\epsilon$};
\draw[->] (0,-3.2) -- (0,3.2) node[above] {$\tau$};

\fill[gray!25, opacity=0.7, domain=-2:2,smooth,variable=\x]
  (-3,3)
  --(2,3)
  --(2,-2)
   -- cycle;
   
\fill[gray!25, opacity=0.7, domain=2:3,smooth,variable=\x]   
(2,3)
--plot ({\x},{sqrt(\x*\x-4)})
--(3,3)
 -- cycle;
 
 \fill[gray!25, opacity=0.7, domain=2:3,smooth,variable=\x]   
(2,-2)
--plot ({\x},-{sqrt(\x*\x-4)})
--(3,-3)
 -- cycle;


\fill[gray!25, opacity=0.7, domain=-3:-2,smooth,variable=\x]
 plot ({\x},{-sqrt(\x*\x-4)})
 --(-3,0)
  -- plot ({\x},{sqrt(\x*\x-4)})
   --(-3,0)
  -- cycle;

\fill[pattern=north east lines,domain=2:3,smooth,variable=\x]
  plot ({\x},{-sqrt(\x*\x-4)})
  --(3,0)
  -- plot ({\x},{sqrt(\x*\x-4)})
  --(3,0)
  -- cycle;
  
\fill[pattern=north east lines,domain=-2:3,smooth,variable=\x]
  (-2,-3)
  --(-2,2)
  --(3,-3)
   -- cycle;
   
\fill[pattern=north east lines,domain=-3:-2,smooth,variable=\x]
(-3,-3)
--plot ({\x},-{sqrt(\x*\x-4)})
--(-2,-3)
 -- cycle;

\fill[pattern=north east lines,domain=-3:-2,smooth,variable=\x]
(-3,3)
--plot ({\x},{sqrt(\x*\x-4)})
--(-2,2)
 -- cycle;

\draw[red, thick] (-3,3) -- (3,-3); 

\draw[red, thick] (-3,-3) -- (3,3); 

\draw[red, thick,domain=-3:-2,smooth] plot (\x,{sqrt(\x*\x-4)});
\draw[red, thick,domain=2:3,smooth] plot (\x,{sqrt(\x*\x-4)});
\draw[red,thick,domain=-3:-2,smooth] plot (\x,{-sqrt(\x*\x-4)});
\draw[red,thick,domain=2:3,smooth] plot (\x,{-sqrt(\x*\x-4)});

\node at (1.5,2.3) {$>0$};
\node at (-2.5,0.3) {$>0$};
\node at (-1.5,-2.3) {$<0$};
\node at (2.5,0.3) {$<0$};

\end{tikzpicture}
\caption{When the image of $(\epsilon, \tau)$ is outside the red lines (critical cases), eigenvalues of $D_{\epsilon, \tau}$ accumulate at the Landau--Dirac levels from above (grey region) or from below (hatched region).}
        \label{fig1}
\end{figure}

In the previous theorem, the distribution of these eigenvalues follows an asymptotic behaviour similar to classical results, involving the logarithmic capacity of the support of the perturbation (here $\Sigma$), see formula \eqref{Asymp} for precise statements.  Remarkably, unlike the scenario for compactly supported scalar perturbations of magnetic Hamiltonians  \cite{FiPu06, MeRo07, GoKaPe16, BrRa20} (including the Landau Hamiltonian with delta interactions \cite{BEHL20}), where the side of accumulation is determined solely by the sign of the potential, the Dirac operator $D_{\epsilon, \tau}$ exhibits a notably richer behaviour as can be seen in Figure \ref{fig1}.  In fact, this behaviour also differs from the one observed for a measurable compactly supported perturbation 
\[
V= \begin{pmatrix}
V_1 & 0\\                                              
0 & V_2 \\                                            
\end{pmatrix}
\]
 of $D_0$, where the side of the accumulation is determined by the sign of $V_1$ (if $\text{supp} V_2 \subset \text{supp} V_1$), see \cite{BM}.  In our situation, we formally have $V_1= (\epsilon  + \tau )\delta_\Sigma$ and $V_2= (\epsilon - \tau )\delta_\Sigma$. However, even if both $(\epsilon + \tau )$ and $(\epsilon - \tau )$ are positive on $\Sigma$, accumulation can occur on the 'opposite' side depending on their combined magnitude (it suffices that their product is larger than $4$), a phenomenon unique to the Dirac operator $D_{\epsilon, \tau}$. This insight naturally leads to a detailed study of the critical case $\epsilon^2-\tau^2=4$. As usual, the first task is to understand the self-adjointness and basic spectral properties, but even at this stage our analysis reveals intriguing spectral features:

\begin{theorem} Suppose that $\epsilon^2(s)-\tau^2(s)=4$  for each $s\in \Sigma$. Then,  $D_{\epsilon,\tau}$ is self-adjoint, but its domain does not embed into  $H^1(\rr^2\setminus\Sigma;\cc^2)$. Moreover, its essential spectrum enlarges to  $\{ \mu_n; \, n \in  \zz \}\cup  I_\Sigma$, where 
\begin{align*}
I_\Sigma:=  \ran\left(-m\frac{\tau}{\epsilon}\right).
\end{align*}
\end{theorem}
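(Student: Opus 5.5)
We follow the strategy used for the non-magnetic critical case and add a localization argument. The idea is that the magnetic potential $A$ is smooth, so near $\Sigma$ it acts as a bounded perturbation of the free Dirac operator.

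\textbf{Transmission condition and boundary reformulation.} First we write the transmission condition that defines $D_{\epsilon,\tau}$ on $\Sigma$:
\[
-i(\sigma\cdot\nu)(f_+-f_-)+\tfrac12(\epsilon+\tau\sigma_3)(f_++f_-)=0,
\]
where $\nu$ is the unit normal and $f_\pm$ are the traces from the two sides. We then use the resolvent $(D_0-z)^{-1}$, $z\notin\spec(D_0)$. Its kernel is the free Dirac kernel multiplied by a smooth phase, plus a remainder that is smoother. From it we build the single layer operator $\Phi_z$ and the boundary operator $C_z$, a principal value Cauchy-type integral on $\Sigma$.

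Every element of the domain is written as $f=(D_0-z)^{-1}g+\Phi_z\varphi$. The transmission condition then becomes a boundary equation
\[
\Lambda_z\varphi := \big(I+(\epsilon+\tau\sigma_3)C_z\big)\varphi = \text{data}.
\]
Self-adjointness follows from a Krein-type resolvent formula once $\Lambda_z$ is invertible for nonreal $z$ in the natural trace space. That trace space is $H^{-1/2}(\Sigma)$-type rather than $H^{1/2}$; this is why $H^1$-regularity fails.

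\textbf{Symbol of $\Lambda_z$ and the critical case.} We use that $C_z$ is a pseudodifferential operator of order $0$ on $\Sigma$. Its principal symbol coincides with that of the non-magnetic operator, because $A$ only adds a smooth phase and lower-order terms. From the known identity $(C_z(\sigma\cdot\nu))^2=\tfrac14+$ (order $-1$), the principal symbol of $\Lambda_z$ is invertible exactly when $\epsilon^2-\tau^2\neq4$.

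When $\epsilon^2-\tau^2=4$, $\Lambda_z$ is not Fredholm on $H^{1/2}$. We then pass to the modified operator
\[
\Lambda_z\,\big(I-(\epsilon+\tau\sigma_3)C_z\big),
\]
whose principal part cancels. This leaves an operator of order $-1$ plus a multiplication operator involving $z$, $m$ and $\epsilon,\tau$. The step where the computation is most delicate is the subprincipal symbol: we must show it equals a nonzero multiple of $\epsilon z+m\tau$. That proves self-adjointness through the $H^{-1/2}$ framework, as in Behrndt--Holzmann--Ourmières-Bonafos--Pankrashkin. It also shows the domain is strictly larger than $H^1(\rr^2\setminus\Sigma;\cc^2)$. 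To see this, we exhibit traces $\varphi\in H^{-1/2}\setminus L^2$ in the kernel of the principal part, with highly oscillating tangential profiles.

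\textbf{Essential spectrum.} For $\lambda\notin\spec(D_0)$ we have that $\lambda\in\spece(D_{\epsilon,\tau})$ iff $\Lambda_\lambda$ is not Fredholm. After the reduction above, non-Fredholmness occurs exactly when the zero-order multiplier $\epsilon(s)\lambda+m\tau(s)$ vanishes at some $s\in\Sigma$, that is, when $\lambda\in\ran(-m\tau/\epsilon)=I_\Sigma$. Note that $\epsilon\neq0$ automatically, since $\epsilon^2=4+\tau^2$.

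For the inclusion $I_\Sigma\subset\spece$ we construct singular Weyl sequences. These are localized near a point $s_0$ where $\lambda=-m\tau(s_0)/\epsilon(s_0)$, with increasing tangential frequency and concentrating on $\Sigma$. This mimics the non-magnetic construction; the magnetic term contributes only $o(1)$ errors on shrinking supports.

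For the Landau levels $\mu_n$ we argue by compactness. Resolvent differences between $D_{\epsilon,\tau}$ and a decoupled operator are compact away from a neighbourhood of $\Sigma$. Infinite-multiplicity eigenfunctions of $D_0$ supported (approximately, with Gaussian decay) far from $\Sigma$ then give Weyl sequences. Conversely, outside $\{\mu_n\}\cup I_\Sigma$ the Fredholm property of $\Lambda_\lambda$, together with the structure of the Krein formula, yields only discrete spectrum.
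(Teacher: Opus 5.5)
\textbf{Genuine gap: the inclusion $I_\Sigma\subset\spece$.}

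Your order $-1$ computation is correct, and it carries the same information as the paper's Schur complement. Write $M=\epsilon I_2+\tau\sigma_3$, so your $\Lambda_z=M\Pi_z$, and $\mathscr C_z=\Theta+(zI_2+m\sigma_3)(S(0)\otimes I_2)+T_z$. Use $M\Theta\equiv\Theta(\epsilon I_2-\tau\sigma_3)$ and $\Theta^2\equiv\frac14$ modulo smoothing operators (Lemma~\ref{Cummutator smooth}), together with the fact that commutators with $S(0)$ lower the order. A short symbol computation then gives
\[
I-(M\mathscr C_z)^2\equiv-2(\epsilon z+m\tau)\,M\,\Theta\,(S(0)\otimes I_2)\quad\mathrm{mod}\ \Op\cS^{-2}_2(\Sigma).
\]
Everything that survives is of order $-1$, and this matches the paper's $\mathcal S_z\equiv 2(z\epsilon+m\tau)/(\epsilon-\tau)$.

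There is a secondary, repairable problem with your criterion ``$\lambda\in\spece$ iff $\Lambda_\lambda$ is not Fredholm''. As written it is ill-posed: you note yourself that in the critical case $\Lambda_\lambda$ is non-Fredholm on $H^{1/2}$ (indeed on every $H^s(\Sigma;\cc^2)$) for \emph{every} $\lambda$. You must state it either for $\Pi_\lambda$ acting from $\mathcal G=\{g\in H^{-\frac12}(\Sigma;\cc^2):\Pi_\lambda g\in H^{\frac12}(\Sigma;\cc^2)\}$ to $H^{1/2}$, or, as the paper does, for the self-adjoint $\mathcal L_\lambda=(\Lambda\otimes I_2)\Pi_\lambda(\Lambda\otimes I_2)$ with $\Lambda=S(-1)^{-1/2}$. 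The space $\mathcal G$ does not depend on $\lambda$, because $\mathscr C_\lambda-\mathscr C_\mu$ maps $H^{-1/2}$ into $H^{1/2}$.

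With that formulation the repairs are as follows.
\begin{itemize}
\item Ellipticity of $(I-M\mathscr C_\lambda)(I+M\mathscr C_\lambda)$ gives Fredholmness on $\mathcal G$, and smoothness of kernel elements, whenever $\epsilon\lambda+m\tau\neq0$.
\item Combined with symmetry of the $H^1$-operator (Lemma~\ref{Closable}) and an index-zero argument, this supplies the invertibility for nonreal $z$ that your self-adjointness step simply presupposes.
\item Analytic Fredholm theory on $\cc\setminus(\spec D_0\cup I_\Sigma)$ then gives $\spece\subset\spec D_0\cup I_\Sigma$.
\item You must still identify your $H^{-1/2}$-operator with $\overline{D_{\epsilon,\tau}}$. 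The paper does this through $D_{\epsilon,\tau}^*$ and a density argument resting on $\Pi_0\underline{\Pi_0}\colon H^{-1/2}\to H^{1/2}$.
\end{itemize}

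The genuine gap is the inclusion $I_\Sigma\subset\spece$. A bulk Weyl sequence ``localized near $s_0$ with increasing tangential frequency'' is not a construction in which only the magnetic term has to be controlled. For a straight interface with constant coefficients the dispersion relation is
\[
2(\epsilon\lambda+m\tau)+\bigl(2-\tfrac{\epsilon^2-\tau^2}{2}\bigr)\sqrt{\xi^2+m^2-\lambda^2}=0,
\]
so in the critical case the $O(|\xi|)$ term cancels identically. This cancellation is exactly the non-ellipticity.

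Consequently the level $-m\tau/\epsilon$ is fixed only at order $O(1)$. The following effects are of that same order:
\begin{itemize}
\item curvature terms in tubular coordinates $(s,n)$, for example $n\,\kappa(s)\,\partial_s$ acting on states with $|n|\sim k^{-1}$ and $\partial_s\sim k$;
\item the variation of $\nu$, $\epsilon$ and $\tau$ inside the transmission condition;
\item localization itself, since for a first-order operator a cut-off $\chi_h$ costs $\|\nabla\chi_h\|_\infty\sim h^{-1}$.
\end{itemize}
Frozen-coefficient quasimodes therefore have $O(1)$ defects, not $o(1)$ ones. Removing them requires a boundary-layer corrector whose solvability condition amounts to redoing your order $-1$ symbol computation in the bulk. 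That curvature drops out in 2D is something you would have to prove; it does not drop out in 3D (compare $J_\Sigma$). The magnetic field is the harmless part of this.

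The paper avoids the bulk construction entirely. For $z\in I_\Sigma$ one has $0\in\spece\mathcal L_z$, because by the block-matrix argument this is equivalent to $0\in\spece\mathcal S_z$, and $\mathcal S_z$ is a multiplication operator modulo compacts. A Weyl sequence $g_j$ for $\mathcal L_z$ is then transported to $\rr^2$ by
\[
v_j=\mathcal E(-\Lambda^{-1}\mathcal L_zg_j)+\Phi_z\Lambda g_j .
\]
This $v_j$ satisfies the transmission condition exactly and needs no cut-off. It has $\|v_j\|\geq c-o(1)$ by Proposition~\ref{Basic_Phi}(i), and $\|(\overline{D_{\epsilon,\tau}}-z)v_j\|\lesssim\|\mathcal L_zg_j\|\to0$. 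Your product symbol could replace $\mathcal S_z$ in this argument, but only if you build the Weyl sequence through the layer potential in this way.
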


The appearance of the additional interval $I_\Sigma$ in the essential spectrum is itself noteworthy:  it shows that, at the critical threshold $\epsilon^2-\tau^2=4$, the delta interaction is strong enough to generate essential spectrum of its own, partially filling the gap $(-m,m)$ whenever $m>0$ and $\frac{\tau}{\epsilon}$ is non-constant along $\Sigma$; when $\frac{\tau}{\epsilon}$ is constant this additional contribution collapses to the single point $\{-m\frac{\tau}{\epsilon}\}$.  The Landau--Dirac levels alone are recovered when $m=0$.  This also generalizes the main result of \cite{BHOP20}, where the appearance of the extra point $\{-m\frac{\tau}{\epsilon}\}$ in the essential spectrum was observed in the absence of the magnetic field and for constant interaction coefficients.

Notice that even on bounded domains the emergence of essential spectrum on an interval is possible.  Indeed, for a  smooth bounded domain $\Omega\subset\rr^2$ such that $\Sigma\subset\Omega$, consider the restriction of  the Dirac operator $D_{\epsilon,\tau}$  to $L^2(\Omega;\cc^2)$  subject to a  suitable local  boundary condition on $\partial \Omega$ (e.g., the infinite mass boundary condition).  Then, by  adapting \cite{BBHP} we should get that the essential spectrum of this operator coincides with $I_\Sigma$. This would give us an example of a self-adjoint operator on bounded domains of $\rr^2$  with  non-trivial essential spectrum, extending the result of  \cite{BBHP} to a two dimensional setting.

In three dimensions, a related phenomenon arise when $\Sigma\subset\rr^3$ is a compact surface. In this situation an additional interval of essential spectrum  appears \cite{BP24}, namely, 
\begin{equation*}
 J_\Sigma= \left[-\frac{\tau}{\epsilon}m -\frac{ A_\Sigma}{2|\epsilon|},-\frac{\tau}{\epsilon}m   +\dfrac{ A_\Sigma}{2|\epsilon|}\right],
\end{equation*}
where 
\[
A_\Sigma:=\max_{x\in\Sigma}\big|\kappa_1(x)-\kappa_2(x)\big|
\]
involves the difference of the principal curvatures  at $x\in\Sigma$. For the analogue of $D_0$ (i.e., the Magnetic Dirac Hamiltonian in 3D), the essential spectrum consists of purely absolutely continuous bands with thresholds given by $\mu_n$,  that is, $\spec D_0= (-\infty,-\sqrt{m^2+2b}]\cup [m,\infty)$.  If we consider constant critical interactions coefficients, by  mimicking our analysis carried below together with the one from \cite{BP24}, we expect to arrive at the conclusion that the essential spectrum of 3D the analogue of  $D_{\epsilon,\tau}$ is given by the union $J_\Sigma \cup \spec D_0 $  with $J_\Sigma$ as above  (the case of variable interactions is more subtle as the interval $[-A_\Sigma, A_\Sigma]$ is the essential spectrum of a $2\times2$ block pseudodifferential operator on $\Sigma$, see \cite[pages 250-251]{BP24}).

 Concerning the accumulation of discrete eigenvalues in the critical case, we notice that the emergence of this additional essential spectrum influences the rest of the spectrum. For example, in the non-magnetic case when $\Sigma$ is a circle and $\epsilon$, $\tau$ are constants, it was shown in  \cite{BCF} that  $\{-m\frac{\tau}{\epsilon}\}$ serves as an accumulation point for a double sequence of eigenvalues within $(-m,m)$, approaching from both the left and the right, while no eigenvalues accumulate at $\pm m$. In the magnetic case, we may therefore expect the possibility of spectral accumulation on both sides of $I_\Sigma$, and the distribution of the eigenvalues near the Landau--Dirac levels could be affected by this.

Our approach to addressing the clustering of eigenvalues consists on showing that the eigenvalue counting function near each Landau level is controlled, via min-max principle and operator inequalities,  by a Toeplitz-type effective Hamiltonian involving the resolvents difference $\Upsilon:= (D_{\epsilon,\tau}-\zeta)^{-1}-(D_{0}-\zeta)^{-1}$ for some fixed $\zeta\in \rr\cap\rho(D_{\epsilon,\tau})\cap\rho(D_0)$ (see Section~\ref{SEC51}). This reduction relies crucially on the fact that, in the non-critical case, the effective Hamiltonian is compact thanks to the compactness of $\Upsilon$. In the critical case, however,  the effective Hamiltonian is no longer compact as it involves an unbounded operator on $\Sigma$ (see Lemma \ref{L BS}).   This prevents us from obtaining the analogue of Lemma \ref{le1}, where the $O(1)$  bound crucially depends on compactness. One might expect an effective Hamiltonian of a different nature that accounts for an interaction between the Landau--Dirac levels and $I_\Sigma$. We do not address this  issue in the current paper.

\subsection*{Organization of the paper} 
In Section~\ref{S2}, we introduce the functional spaces and boundary integral operators adapted to the study of  delta interactions for magnetic Hamiltonians in $\rr^2$. Using the expression of the Green function of $D_0$ in terms of Kummer confluent hypergeometric functions, we establish the pseudodifferential properties of boundary integral operators needed in our context. The operator $D_{\epsilon, \tau}$ is then defined in Section~\ref{secLDD} for $|\epsilon | \neq | \tau |$, and in the confining case ($\epsilon^2-\tau^2 = -4$) its description as decoupled boundary problems is given  (see Lemma~\ref{TC_alter}). 
We study the self-adjointness and state Krein-type resolvent formulas in Section~\ref{S4} for the non-critical case (see Theorem \ref{main_th1})  and in Section~\ref{S6} for the critical case (see Theorem \ref{main_th2} and Proposition \ref{BS and Krein 2}). These Sections also discuss the essential spectrum, with particular attention to the critical case (see Theorem \ref{Ess Spec}). The distribution of the discrete spectrum near Landau--Dirac levels in the non-critical case is covered in Section~\ref{S5} (see Theorem \ref{Cluster}). This analysis combines the earlier results of \cite{PushRoz07,BM} with those of Section~\ref{S4}.

 \section{The Landau--Dirac operator and associated boundary integral operators}\label{S2}
Let $\rho(D_0)$ be the resolvent set of $D_0$. Then for $z\in\mathbb C$ one has 
\begin{align*}
D_0^2 - z^2 \;\;
&=\; \begin{pmatrix} -\nabla_A^2 - b + m^2 - z^2 & 0 \\ 0 & -\nabla_A^2 + b + m^2 - z^2\end{pmatrix},
\end{align*}
where $-\nabla_A^2$ is the Landau Hamiltonian. Recall that (cf. \cite{AvHeSi78})
\[
\spec ( -\nabla_A^2)=\spece ( -\nabla_A^2)= \{b(2n+1): n\in \{0,1,2,\dots\} \}.
\]
For any $w\in\rho(-\nabla_A^2)$ the integral kernel of $(-\nabla_A^2-w)^{-1}$  on $L^2(\rr^2;\cc^2)$ is given by (see \cite{DoMaMa75,HoSmi2002})
\begin{equation*}
\psi_w(x,y) \;:=\; \frac{1}{4\pi}\,\Gamma\!\Big(\tfrac{b-w}{2b}\Big)\, e^{-\frac{b}{4}|x-y|^2 - i\frac{b}{2} x\wedge y}\, \mathbb{U}\!\Big(\tfrac{b-w}{2b},\,1,\,\tfrac{b}{2}|x-y|^2\Big),
\end{equation*}
where $\mathbb{U}(\alpha,\beta,\cdot)$ is the Kummer confluent hypergeometric function  \cite[Section (13.2)]{Olver}, $\Gamma$ is  the Gamma function  and $x\wedge y = x_1y_2 - x_2y_1$. 
Define 
\begin{equation}\label{pi}
\pi_+ :=\tfrac{I_2+\sigma_3}{2}=
\begin{pmatrix}
1 & 0 \\
0 & 0
\end{pmatrix},
\qquad
\pi_- :=\tfrac{I_2-\sigma_3}{2}=
\begin{pmatrix}
0 & 0 \\
0 & 1
\end{pmatrix},
\end{equation}
and 
\begin{equation*}
G_z(x,y) \;=\; \psi_{b-m^2+z^2}(x,y)\,\pi_+ \;+\; \psi_{-(b+m^2)+z^2}(x,y)\,\pi_-.
\end{equation*}
Then, for any $z\in\rho(D_0)\setminus{\{- m}\}$
\begin{align*}
    (D_0 -z)^{-1}g(x)&=(D_0+z)(D_0^2-z^2)^{-1}g(x)\\
    &=\int_{\rr^2}(D+z)G_z(x,y)g(y)\ddd y, \quad g\in L^2(\rr^2).\\
\end{align*}
Set
\begin{equation*}
\varphi_z(x,y):=((D+z)G_z)(x,y).
\end{equation*}
Notice that, since $\overline{\psi_z(x,y)}=\psi_{\Bar{z}}(y,x)$, using the anti-commutations properties of $\sigma_j$ one easily obtains
\begin{align}\label{var_ast}
     {\varphi_z(x,y)}^*=\varphi_{\Bar{z}}(y,x).
    \end{align}
To give a more explicit expression for $\varphi_z$ we will need the following notation: 
\begin{equation}\label{notation}
z_m\;:=\;\frac{m^2-z^2}{2b}, \qquad \zeta\;:=\;\tfrac{b}{2}|x-y|^2, \qquad F(x,y)\;:=\;e^{-\zeta/2 \,-\, i\frac{b}{2}x\wedge y}. 
\end{equation}

\begin{lemma}\label{prop:explicit_varphi}
For every $z\in\rho(D_0)\setminus\{- m\}$, the integral kernel of $(D_0-z)^{-1}$ is given by
\begin{equation}\label{varphi_z}
\begin{aligned}
\var_z(x,y)\;=\;\frac{F(x,y)}{4\pi}\bigg\{\;
&(m+z)\,\Gamma(z_m)\,\mathbb{U}(z_m,1,\zeta)\,\pi_+ \\
\;+\;&(z-m)\,\Gamma(z_m+1)\,\mathbb{U}(z_m+1,1,\zeta)\,\pi_- \\
\;+\;&i\,b\,\Gamma(z_m+1)\,\mathbb{U}(z_m+1,2,\zeta)\,\sigma\!\cdot\!(x-y)\bigg\}.
\end{aligned}
\end{equation}
\end{lemma}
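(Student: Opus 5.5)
We apply $D+z$ directly to the kernel $G_z(x,\cdot)$ in the $x$-variable, using the representation $(D_0-z)^{-1}=(D_0+z)(D_0^2-z^2)^{-1}$ recalled above.

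\textbf{Reduction to the two Landau resolvents.} First we identify the two parameters. For the $\pi_+$ block, $w=b-m^2+z^2$, so $\frac{b-w}{2b}=\frac{m^2-z^2}{2b}=z_m$. For the $\pi_-$ block, $w=-(b+m^2)+z^2$, so $\frac{b-w}{2b}=z_m+1$. Writing $\psi^\pm$ for these two scalar kernels, we have $\psi^+=\frac{1}{4\pi}\Gamma(z_m)F\,\mathbb{U}(z_m,1,\zeta)$ and $\psi^-=\frac{1}{4\pi}\Gamma(z_m+1)F\,\mathbb{U}(z_m+1,1,\zeta)$. Here $F$ absorbs $e^{-\zeta/2}$ and the phase.

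\textbf{Diagonal part.} We use $\sigma_3\pi_\pm=\pm\pi_\pm$. Then $(m\sigma_3+z)G_z=(m+z)\psi^+\pi_++(z-m)\psi^-\pi_-$. This gives the first two lines of \eqref{varphi_z}.

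\textbf{Off-diagonal part.} We write
\[
-i\sigma\cdot\nabla_A=\begin{pmatrix}0&-i(\partial_1-iA_1)-(\partial_2-iA_2)\\ -i(\partial_1-iA_1)+(\partial_2-iA_2)&0\end{pmatrix}.
\]
Since $\sigma_j\pi_+=\pi_-\sigma_j\pi_+$, this operator maps the $\pi_+$ block of $G_z$ into the lower-left entry, acting by $-i(\partial_1-iA_1)+(\partial_2-iA_2)$ on $\psi^+$. It maps the $\pi_-$ block into the upper-right entry, acting by $-i(\partial_1-iA_1)-(\partial_2-iA_2)$ on $\psi^-$.

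To compute these, differentiate the phase: $\partial_{x_j}(-i\frac b2 x\wedge y)$ gives $-i\frac b2 y_2$ for $j=1$ and $+i\frac b2 y_1$ for $j=2$. Combined with $-iA(x)$, which equals $-i\frac b2(-x_2,x_1)$, we get
\[
(\nabla-iA)e^{-i\frac b2x\wedge y}=e^{-i\frac b2 x\wedge y}\,\tfrac{ib}{2}(x_2-y_2,\,-(x_1-y_1)).
\]
This is the magnetic translation-covariance. With $u=x-y$, we also have $\nabla\zeta=bu$. So each entry reduces to a first-order operator on a radial function $g(\zeta)=e^{-\zeta/2}\mathbb{U}(\alpha,1,\zeta)$ plus a term linear in $u$ from the phase.

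\textbf{Lower-left entry.} Here the combination is proportional to $\bar{u}=u_1-iu_2$ (up to a factor $\pm i$). The result is $b\,\bar u\,\big(g'(\zeta)\pm\tfrac12 g(\zeta)\big)$ times a constant. The phase contribution $\frac{ib}{2}(u_2,-u_1)$ enters the combination as $\mp\frac{b}{2}\bar u$. Thus in one entry the $\frac12 g$ from $e^{-\zeta/2}$ cancels with the phase, and in the other it adds.

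\textbf{Key identities.} We then invoke the standard Kummer relations \cite[13.3]{Olver}:
\[
\frac{d}{d\zeta}\mathbb{U}(a,b,\zeta)=-a\,\mathbb{U}(a+1,b+1,\zeta),
\]
and
\[
\mathbb{U}(a,b,\zeta)-\mathbb{U}'(a,b,\zeta)=\mathbb{U}(a,b+1,\zeta).
\]
In the $\pi_+$ column we get $-\Gamma(z_m)\,z_m\,\mathbb{U}(z_m+1,2,\zeta)=-\Gamma(z_m+1)\mathbb{U}(z_m+1,2,\zeta)$, using $\Gamma(z_m)z_m=\Gamma(z_m+1)$. In the $\pi_-$ column we get $\Gamma(z_m+1)\big(\mathbb{U}-\mathbb{U}'\big)(z_m+1,1,\zeta)=\Gamma(z_m+1)\mathbb{U}(z_m+1,2,\zeta)$. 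Both therefore carry the common factor $ib\,\Gamma(z_m+1)\mathbb{U}(z_m+1,2,\zeta)$, multiplied respectively by $u_1+iu_2$ and $u_1-iu_2$. These are exactly the entries of $\sigma\cdot(x-y)$, which gives the third line.

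\textbf{Main obstacle.} The delicate step is sign and bookkeeping in this last computation. One must check that the phase cross terms produce exactly the combinations where the $e^{-\zeta/2}$ derivative either cancels, giving a pure $\mathbb{U}'$, or doubles, giving $\mathbb{U}-\mathbb{U}'$. The prefactors $\pm i$ must also assemble to $ib\,\sigma\cdot(x-y)$ rather than its conjugate. As a cross-check we would verify the adjoint symmetry \eqref{var_ast}.

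Finally, the exclusion $z\neq -m$ is only needed so that $\Gamma(z_m)$ is finite ($z_m=0$ there). The poles of $\Gamma$ at other nonpositive integers correspond to $z=\pm\mu_n$, which are excluded by $z\in\rho(D_0)$.
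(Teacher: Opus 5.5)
Your proposal is correct and is essentially the paper's argument: you compute $(D+z)G_z$ entrywise, using $\frac{d}{d\zeta}\mathbb{U}(\alpha,1,\zeta)=-\alpha\,\mathbb{U}(\alpha+1,2,\zeta)$ in the $\pi_+$ column and, in the $\pi_-$ column, the identity $\mathbb{U}-\mathbb{U}'=\mathbb{U}(\cdot,b+1,\cdot)$ (equivalent to the paper's derivative formula combined with \cite[(13.3.9)]{Olver}). The only difference is that the paper does the bookkeeping in complex coordinates, where $aF=0$ and $a^*F=ib(\overline\omega-\overline\omega')F$ make the cancellation or doubling of the $e^{-\zeta/2}$ term immediate.

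One slip to fix when you write it out: the lower-left entry is $-ib(u_1+iu_2)\big(g'+\tfrac12 g\big)$ and the upper-right is $-ib(u_1-iu_2)\big(g'-\tfrac12 g\big)$. So your paragraph labelled ``Lower-left entry'' has $\bar u$ where it should have $u_1+iu_2$, although the assignment in your ``Key identities'' paragraph is the correct one.
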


\begin{proof}
Let us begin by introducing the  creation and annihilation operators 
\begin{equation*}
a := (-i \partial_{x_1} -A_1) + i (-i \partial_{x_2} -A_2) , \qquad a^* := (-i \partial_{x_1} -A_1) - i (-i \partial_{x_2} -A_2), 
\end{equation*}
so that
\[
D \;=\; \begin{pmatrix} m & a^* \\ a& -m \end{pmatrix}.
\]
Then the  diagonal entries of $(D+z)G_z$ are 
\[
\bigl[\varphi_z\bigr]_{11} = (m+z)\,\psi_{b-m^2+z^2}, \qquad
\bigl[\varphi_z\bigr]_{22} = (z-m)\,\psi_{-(b+m^2)+z^2},
\]
and  the off-diagonal terms are
 $a\bigl[F\,\mathbb{U}(z_m,1,\zeta)\bigr]$ and $a^*\bigl[F\,\mathbb{U}(z_m+1,1,\zeta)\bigr]$.

Introduce the complex coordinates $\omega:=x_1+ix_2$, $\omega':=y_1+iy_2$. Then $|x-y|^2=(\omega-\omega')(\overline{\omega-\omega'})$ and  $ix\wedge y = \tfrac{1}{2}(\overline\omega\omega'-\omega\overline\omega')$.  Moreover,  the differential operators $a$ and $a^*$ are connected to the complex derivatives 
$
\partial:= \frac12(\partial_{x_1} - i \partial_{x_2})$ and $ \overline{\partial}:= \frac12(\partial_{x_1} + i \partial_{x_2}),
$
by the relations: 
\begin{equation}\label{a_partial}
a =-i\left(2\bar\partial +\tfrac{b\omega}{2}\right), \qquad
a^*= -i(2\partial-\tfrac{b\overline\omega}{2}).
\end{equation}
A direct calculation gives
\begin{equation*}
\bar\partial F \;=\; -\tfrac{b\omega}{4}\,F, \qquad
a F \;=\; 0, \qquad
a^*F \;=\; i\,b(\overline\omega-\overline\omega')\,F.
\end{equation*}
Since $\bar\partial\zeta=\tfrac{b}{2}(\omega-\omega')$ and $\frac{d}{d\zeta}\mathbb{U}(\alpha,1,\zeta) = -\alpha\,\mathbb{U}(\alpha+1,2,\zeta)$, we get
\begin{equation*}
a\bigl[F\,\mathbb{U}(z_m,1,\zeta)\bigr] \;=\; i\,z_m\,b\,(\omega-\omega')\,F\,\mathbb{U}(z_m+1,2,\zeta).
\end{equation*}
Moreover
\[
a^*\bigl[F\,\mathbb{U}(z_m+1,1,\zeta)\bigr] \;=\; -i\,b(\overline\omega-\overline\omega')\,F\,\Bigl[\mathbb{U}(z_m+1,1,\zeta) \,+\, (z_m+1)\,\mathbb{U}(z_m+2,2,\zeta)\Bigr].
\]
Using the relation \cite[(13.3.9)]{Olver} we conclude  that 
\begin{equation*}
a^*\bigl[F\,\mathbb{U}(z_m+1,1,\zeta)\bigr] \;=\; -i\,b(\overline\omega-\overline\omega')\,F\,\mathbb{U}(z_m+1,2,\zeta).
\end{equation*}
Using $\Gamma(z+1)=z\,\Gamma(z)$ we finish the proof.
\end{proof}

\begin{lemma}\label{lemma:varphi_asymp}
For every $z\in\rho(D_0)\setminus\{-m\}$ we have the decomposition
\begin{equation*}
\var_z(x,y) \;=\; \theta(x-y) \;-\; \frac{m\,\sigma_3 + z\,I_2}{2\pi}\,\log|x-y| \;+\; \widetilde\theta_z(x,y),
\end{equation*}
where
\begin{equation*}
\theta(x) \;:=\;\frac{i}{2\pi} \,\sigma\!\cdot\!\frac{x}{|x|^2},
\end{equation*}
and  $\widetilde\theta_z:\rr^2\times\rr^2\to\cc^{2\times 2}$ is continuous in a neighbourhood of the diagonal $x=y$ and $C^\infty$ outside it.
\end{lemma}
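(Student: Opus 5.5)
The plan is to expand each of the three terms of Lemma \ref{prop:explicit_varphi} near $\zeta=0$, using the standard small-argument behaviour of the Kummer functions. Away from the diagonal there is nothing to prove. Since $F$ and $\mathbb{U}(\alpha,\beta,\cdot)$ are smooth (indeed real analytic) on $(0,\infty)$, and $\zeta=\tfrac b2|x-y|^2>0$ there, $\var_z$ is $C^\infty$ off the diagonal. The same then holds for $\widetilde\theta_z:=\var_z-\theta(x-y)+\frac{m\sigma_3+zI_2}{2\pi}\log|x-y|$. I also note that $z\in\rho(D_0)\setminus\{-m\}$ guarantees that $z_m$ and $z_m+1$ are not non-positive integers, so all Gamma factors are finite.

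\textbf{Diagonal terms.} I will use \cite[(13.2.19) and (13.2.9)]{Olver}:
\[
\mathbb{U}(a,1,\zeta)=-\frac{1}{\Gamma(a)}\big(\log\zeta+\psi(a)+2\gamma\big)+O(\zeta\log\zeta),\qquad \zeta\to0 .
\]
Hence $\Gamma(a)\mathbb{U}(a,1,\zeta)=-\log\zeta+c_a+o(1)$ with $\log\zeta=2\log|x-y|+\log\tfrac b2$. Since $F=1+O(|x-y|)$ locally uniformly, the product $F\cdot\log|x-y|$ differs from $\log|x-y|$ by a function that is continuous and vanishes on the diagonal. The $\pi_\pm$ terms therefore give
\[
-\frac{1}{2\pi}\big((m+z)\pi_++(z-m)\pi_-\big)\log|x-y| + \text{continuous}
=-\frac{m\sigma_3+zI_2}{2\pi}\log|x-y|+\text{continuous}.
\]

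\textbf{Off-diagonal term.} Here I use \cite[(13.2.9)]{Olver} with $n=1$:
\[
\mathbb{U}(a,2,\zeta)=\frac{1}{\Gamma(a)\zeta}+\frac{1}{\Gamma(a-1)}\big(\log\zeta+c'_a\big)+O(\zeta\log\zeta).
\]
With $a=z_m+1$, the leading piece gives
\[
\frac{ib\,\sigma\cdot(x-y)}{4\pi\cdot \tfrac b2|x-y|^2}=\theta(x-y).
\]
The remaining pieces are $O(|x-y|\log|x-y|)$ times smooth bounded factors, so they are continuous and vanish on the diagonal.

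\textbf{Main obstacle.} The main obstacle is the cross term $(F-1)\,\theta(x-y)$, because $\theta$ is singular of order $|x-y|^{-1}$.
\begin{itemize}
\item The factor $e^{-\zeta/2}-1=O(|x-y|^2)$ is harmless.
\item The phase is the issue. Writing $x\wedge y=-x\wedge(x-y)$ gives
\[
e^{-i\frac b2 x\wedge y}-1=\tfrac{ib}{2}\,x\wedge(x-y)+O(|x-y|^2).
\]
The contribution is then $-\tfrac{b}{4\pi}\,\big(x\wedge(x-y)\big)\,\sigma\cdot(x-y)/|x-y|^2$. This is bounded, but homogeneous of degree $0$ in $x-y$ with an $x$-dependent coefficient.
\end{itemize}
I would first try to verify directly whether this piece is continuous at diagonal points, by checking its limits along different directions of $x-y$. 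I expect this check to fail except at $x=0$. If it does fail, then $\widetilde\theta_z$ as defined is bounded near the diagonal but continuous only at the origin. Continuity in a neighbourhood of the diagonal as literally stated would then not hold, and no rearrangement of the expansion can repair this, since $\widetilde\theta_z$ is uniquely determined by the decomposition. The right conclusion would be that this phase term must be carried along explicitly rather than absorbed into $\widetilde\theta_z$. I regard settling this step as the crux of the proof.
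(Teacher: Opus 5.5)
Your expansion is correct, and so is your suspicion. The check you postpone does fail, so present it as a completed computation rather than an expectation.

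With $h=x-y$, every contribution to $\widetilde\theta_z$ is continuous up to the diagonal except
\[
(F-1)\,\theta(h)=-\frac{b}{4\pi}\,\frac{(x\wedge h)\,\sigma\cdot h}{|h|^2}+O(|h|).
\]
Take $x_0=(1,0)$. Along $h=t(1,0)$ the leading term is $0$, while along $h=t(0,1)$ it equals $-\frac{b}{4\pi}\sigma_2$. The same happens at every $x_0\neq 0$. Since $|x\wedge h|\le |x|\,|h|$, the only diagonal point of continuity is $(0,0)$.

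As you note, $\widetilde\theta_z$ is uniquely determined by the decomposition. So the statement, read literally (continuity in a neighbourhood of the diagonal of $\rr^2\times\rr^2$), is false: $\widetilde\theta_z$ is only bounded there.

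The paper's proof misses this at exactly the step you singled out. It records $F=1+O(|x-y|)$ and then writes
\[
\frac{F}{4\pi}\,ib\,\Gamma(z_m+1)\,\mathbb{U}(z_m+1,2,\zeta)=\frac{ib}{4\pi\zeta}+O(\log\zeta).
\]
But the cross term $(F-1)/\zeta$ is of order $|x-y|^{-1}$, not $O(\log\zeta)$. After multiplication by $\sigma\cdot(x-y)$ it produces precisely your bounded, direction-dependent term. For the diagonal entries the paper only claims $R=O(1)$; your argument there, which gives continuity, is sharper. So the gap is in the paper's statement and proof, not in your reasoning.

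What is true is enough for the later uses; there are two ways to say it.
\begin{itemize}
\item Keep the phase, as you propose:
\[
\varphi_z=F(x,y)\,\theta(x-y)-\frac{m\sigma_3+zI_2}{2\pi}\log|x-y|+(\text{continuous}).
\]
\item Observe that the offending term is harmless on $\Sigma\times\Sigma$. In arclength coordinates, near the diagonal, $F(\rho(s),\rho(t))-1=(s-t)\,a(s,t)$ and $\rho(s)-\rho(t)=(s-t)\,G(s,t)$, with $a,G$ smooth and $G\neq 0$, exactly as in the proof of Lemma \ref{Cummutator smooth}. Hence $(F-1)\,\theta(x-y)$ restricts to a $C^\infty$ kernel on $\Sigma\times\Sigma$.
\end{itemize}
The second observation suffices for the decomposition in Proposition \ref{Basic_C_z}. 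The term is also uniformly bounded for $x$ near $\Sigma$, $y\in\Sigma$, so by dominated convergence it creates no jump, and the jump formula of Proposition \ref{Further_Phi_z} is unaffected.
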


\begin{proof}
First notice that from  \cite[(13.2.19), (13.5.17)]{Olver}, as $\zeta\to 0^+$,
\begin{equation*}
\mathbb{U}(\alpha,1,\zeta) \;=\; -\frac{1}{\Gamma(\alpha)}\Bigl[\log\zeta + \psi(\alpha) + 2\gamma\Bigr] \,+\, O(\zeta\log\zeta),
\end{equation*}
\begin{equation}\label{U_asymp_2}
\mathbb{U}(\alpha,2,\zeta) \;=\; \frac{1}{\Gamma(\alpha)\,\zeta} \,+\, O(\log\zeta),
\end{equation}
where $\gamma$ is the Euler constant and $\psi$ is the digamma function. Moreover, in \eqref{notation}  $F(x,y) = 1 + O(|x-y|)$. Then 
substituting \eqref{U_asymp_2} into the off-diagonal terms of \eqref{varphi_z} gives 
\begin{align*}
\frac{F}{4\pi}\,i\,b\,\Gamma(z_m+1)\,\mathbb{U}(z_m+1,2,\zeta) =& \frac{i\,b}{4\pi\,\zeta}+O(\log \zeta) \\
=& \frac{i}{2\pi\,|x-y|^2}+O(\log |x-y|),
\end{align*}
and multiplication by $\sigma\!\cdot\!(x-y)$ yields 
\[
\frac{i}{2\pi}\,\sigma\!\cdot\!\frac{x-y}{|x-y|^2}=\theta(x-y).
\]
For the diagonal terms, we have 
\begin{align*}
&\frac{F}{4\pi}\Bigl[(m+z)\,\Gamma(z_m)\,\mathbb{U}(z_m,1,\zeta)\,\pi_+ + (z-m)\,\Gamma(z_m+1)\,\mathbb{U}(z_m+1,1,\zeta)\,\pi_-\Bigr]
\\
=& -\frac{\log\zeta}{4\pi}\bigl[(m+z)\,\pi_+ + (m-z)\,\pi_-\bigr]+O(1).
\end{align*}
Since $(m+z)\pi_+ + (z-m)\pi_- = m\,\sigma_3 + z\,I_2$, this contribution equals
\[
-\frac{m\,\sigma_3 + z\,I_2}{2\pi}\,\log|x-y| \;+\; R(x,y),
\]
where the matrix-valued remainder satisfies $R(x,y)=O(1)$.
\end{proof}

\subsection{Functional spaces associated with $D_0$} 
Throughout the paper, $\Omega_+ \subset \mathbb{R}^2$ is a bounded domain with $C^\infty$-smooth boundary $\Sigma=\partial\Omega_+$, and denote its complement by   $\Omega_- := \mathbb R^2 \setminus \overline{\Omega_+}$. We write $\nu=(\nu_1,\nu_2)^\top$ for the unit normal vector field on $ \Sigma$  pointing outward from $\Omega_+$. For $s\in[0,1]$ we denote by $H^{s}(\Omega_\pm;\cc^2)$ the standard $L^2$-based Sobolev space of order $s$ (with the convention that $H^{0}(\Omega_\pm;\cc^2)=L^{2}(\Omega_\pm;\cc^2)$),  and  by $H^s_{\sigma, A}(\Omega_\pm; \cc^2)$ the spaces:
\begin{align*}
		H^s_{\sigma, A}(\Omega_\pm; \cc^2):=\Big\{f\in H^s(\Omega_\pm; \cc^2):\ \sigma\cdot\nabla_A f\in L^2(\Omega_\pm; \cc^2)
		\Big\},
	\end{align*}
	which becomes a Hilbert space if endowed with the norm
	\[
	\| f\|^2_{H^s_{\sigma, A}(\Omega;\cc^2)}:= \| f\|^2_{H^s(\Omega;\cc^2)}+ \|\sigma\cdot\nabla_A f\|^2_{L^{2}(\Omega;\cc^2)}.
	\]
 We also define the non-magnetic analogue of $H_{\sigma, A}(\Omega_\pm;\cc^2)\equiv H^0_{\sigma, A}(\Omega_\pm;\cc^2)$,
	\begin{align*}
		H_{\sigma}(\Omega_\pm; \cc^2):=\Big\{f\in L^2(\Omega_\pm; \cc^2):\ \sigma\cdot\nabla f\in L^2(\Omega_\pm; \cc^2)
		\Big\},
	\end{align*}
 endowed with the norm
	\[
	\| f\|^2_{H_{\sigma}(\Omega_\pm;\cc^2)}:= \| f\|^2_{L^2(\Omega_\pm;\cc^2)}+ \|\sigma\cdot\nabla f\|^2_{L^{2}(\Omega_\pm;\cc^2)}.
	\]
	As we will see in Section \ref{secLDD}, the domain of the Dirac operator $D_{\epsilon,\tau}$ is a subset of $H_{\sigma, A}(\Omega_+; \cc^2)\oplus H_{\sigma, A}(\Omega_-; \cc^2)$. Hence,  to have a well define transmission condition along $\Sigma$, we first need to extend  the trace operator to the magnetic Sobolev spaces $H_{\sigma, A}(\Omega_\pm;\cc^2)$. 

Let us start by noticing that from \cite[Theorem 1.24]{BLRS23} it can be seen that 
\[
\dom D_0=\{ f\in H^1(\mathbb R^2; \cc^2):\ (\sigma\cdot A ) f\in L^2(\mathbb R^2)\}.
\]
In particular, for any $\eta \in C_0^\infty(\mathbb R^2)$   we have that $\eta H^1(\mathbb R^2; \cc^2)\subset \dom D_0.$  Therefore  the Dirichlet trace 
\[
\gamma_D^\Sigma: \dom D_0\longrightarrow H^{\frac{1}{2}}(\Sigma;\cc^2), \quad  f\mapsto f|_{\Sigma}, 
\] 
is a bounded surjective linear operator.

We recall that the trace maps $\gamma_D^\pm: H^1(\Omega_\pm; \mathbb{C}^2) \rightarrow H^{\frac{1}{2}}(\Sigma; \mathbb{C}^2)$ admit unique continuous extensions from $H_\sigma(\Omega_\pm; \mathbb{C}^2)$ to $H^{-\frac{1}{2}}(\Sigma; \mathbb{C}^2)$; cf. \cite[Lemma 2.3]{BFSV}. Moreover, if $u\in H_{\sigma}(\Omega_\pm;\cc^2) $ with $\gamma^\pm_D u\in H^{\frac{1}{2}}(\Sigma;\cc^2)$ then $u\in H^1(\Omega_\pm;\cc^2)$; cf. \cite[Lemma 2.4]{BFSV}.  Consequently, one readily obtains the following result.
\begin{lemma}\label{tracemap} The Dirichlet trace operators $\gamma_D^\pm$ extend to continuous maps 
\[
\gamma_D^\pm:H_{\sigma, A}(\Omega_\pm; \mathbb{C}^2)\rightarrow H^{-\frac{1}{2}}(\Sigma; \mathbb{C}^2). 
\] 
Moreover, the following holds:
 If $u\in H_{\sigma, A}(\Omega_\pm;\cc^2) $ with $\gamma^\pm_D u\in H^{\frac{1}{2}}(\Sigma;\cc^2)$, then $u\in H^1(\Omega_\pm;\cc^2)$.
\end{lemma}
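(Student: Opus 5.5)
The idea is to reduce Lemma \ref{tracemap} to the non-magnetic results \cite[Lemma 2.3, Lemma 2.4]{BFSV}. The key observation is that on a bounded region the magnetic term is a bounded perturbation. Since $A$ is smooth but unbounded, we first localize near $\Sigma$.

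Fix a cutoff $\chi\in C_0^\infty(\rr^2)$ with $\chi\equiv1$ in a neighbourhood of $\Sigma$. For $u\in H_{\sigma,A}(\Omega_\pm;\cc^2)$ we have
\[
\sigma\cdot\nabla(\chi u)=\chi\,\sigma\cdot\nabla_A u + i\chi(\sigma\cdot A)u + (\sigma\cdot\nabla\chi)u .
\]
Since $\chi A$ and $\nabla\chi$ are bounded, this gives $\chi u\in H_\sigma(\Omega_\pm;\cc^2)$ together with
\[
\|\chi u\|_{H_\sigma(\Omega_\pm)}\le C\|u\|_{H_{\sigma,A}(\Omega_\pm)}.
\]
For $\Omega_-$ the function $\chi u$ is supported near $\Sigma$. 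For $\Omega_+$ one may simply take $\chi=1$ on $\overline{\Omega_+}$.

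We then define $\gamma_D^\pm u:=\gamma_D^\pm(\chi u)$, using the extended non-magnetic trace $H_\sigma(\Omega_\pm;\cc^2)\to H^{-1/2}(\Sigma;\cc^2)$ of \cite[Lemma 2.3]{BFSV}. Three checks are needed.
\begin{enumerate}
\item Continuity follows from the norm estimate above.
\item The definition does not depend on $\chi$: two cutoffs differ by a function vanishing near $\Sigma$, and the trace of such a product is zero. This holds by density of smooth functions together with continuity.
\item For $u\in H^1(\Omega_\pm;\cc^2)$ the new map agrees with the usual trace, since $\chi u=u$ near $\Sigma$.
\end{enumerate}
Uniqueness of the extension needs density of $H^1\cap H_{\sigma,A}$, or of $C_0^\infty(\overline{\Omega_\pm})$, in $H_{\sigma,A}(\Omega_\pm)$. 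This follows by truncating with $\chi_R(x)=\chi(x/R)$ and then applying the non-magnetic density on bounded sets. The term $(\sigma\cdot\nabla\chi_R)u$ goes to zero as $R\to\infty$, while the truncated function on a bounded region lies in $H_\sigma$, where density is known.

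For the regularity statement, suppose $\gamma^\pm_D u\in H^{1/2}(\Sigma;\cc^2)$. Then $\chi u\in H_\sigma(\Omega_\pm)$ has trace in $H^{1/2}$, so \cite[Lemma 2.4]{BFSV} gives $\chi u\in H^1(\Omega_\pm)$.

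It remains to show $(1-\chi)u\in H^1$, and this is the main obstacle: $(1-\chi)u$ is supported away from $\Sigma$ but possibly unbounded. The natural approach is to extend $(1-\chi)u$ by zero to all of $\rr^2$. The extension is well defined because it vanishes near $\Sigma$. It satisfies
\[
(1-\chi)u\in L^2,\qquad \sigma\cdot\nabla_A\big((1-\chi)u\big)\in L^2,
\]
so it lies in $\dom D_0$, which is contained in $H^1(\rr^2;\cc^2)$ by the characterization quoted from \cite[Theorem 1.24]{BLRS23}. Therefore $u=\chi u+(1-\chi)u\in H^1(\Omega_\pm;\cc^2)$.

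For $\Omega_+$, which is bounded, the choice $\chi=1$ makes this last step unnecessary.
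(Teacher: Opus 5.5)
You follow the paper's route: cut off near $\Sigma$, apply \cite[Lemma 2.3, Lemma 2.4]{BFSV} to $\chi u$, and treat $(1-\chi)u$ separately. The trace extension, its independence of $\chi$, and your truncation argument for density are fine. The step that fails is the last one. $(1-\chi)u$, extended by zero, does lie in $\dom D_0$, but $\dom D_0\not\subset H^1(\rr^2;\cc^2)$ when $A$ grows linearly. From $(-i\sigma\cdot\nabla_A)^2=-\nabla_A^2-b\sigma_3$ one gets $\dom D_0=\{f\in L^2:\nabla_A f\in L^2\}$, which lies in $H^1_{\mathrm{loc}}$ but not in $H^1$. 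Here is an explicit example. With $\omega=x_1+ix_2$, let $w=e^{-b|\omega|^2/4}F(\omega)$, where $F\in\mathcal F^2$ but $\omega F\notin\mathcal F^2$. For instance take $F=\sum_{n\geq1}n^{-1}e_n$ with $e_n$ the normalized monomials; the $\omega e_n$ are mutually orthogonal with $\|\omega e_n\|^2_{\mathcal F^2}=2(n+1)/b$. Then $aw=0$, so $f=(w,0)^\top$ satisfies $\sigma\cdot\nabla_A f=0$ and hence $f\in\dom D_0$. On the other hand $\overline{\partial}w=-\frac{b}{4}\,\omega w\notin L^2(\rr^2)$, so $f\notin H^1(\rr^2;\cc^2)$. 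The characterization of $\dom D_0$ quoted from \cite{BLRS23} at the start of the subsection therefore cannot be used as a global inclusion.

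This is not only a defect in your justification. Restrict the same $f$ to $\Omega_-$: this gives $u\in H_{\sigma,A}(\Omega_-;\cc^2)$ with a $C^\infty$ trace on $\Sigma$. Since $\omega w\in L^2(\Omega_+)$, we get $\omega w\notin L^2(\Omega_-)$, hence $\overline{\partial}w\notin L^2(\Omega_-)$ and $u\notin H^1(\Omega_-;\cc^2)$. So the second assertion of the lemma is false for $\Omega_-$ when $H^1$ is the standard Sobolev space, and no argument can close this step. The paper's proof has the same weak point: ``elliptic regularity of $D_0$ away from the boundary'' gives only $\psi_2 u\in H^1_{\mathrm{loc}}$, not $\psi_2u\in H^1(\Omega_-;\cc^2)$. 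What your argument (and the paper's) does prove is the local statement: $\chi u\in H^1(\Omega_-;\cc^2)$ for every $\chi\in C_0^\infty(\rr^2)$, together with $\nabla_A\big((1-\chi)u\big)\in L^2$. That is the correct form of the claim and is what matters for traces on $\Sigma$. For $\Omega_+$ your proof is complete.
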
 
\begin{proof}
    It is clear that $H_{\sigma, A}(\Omega_+;\cc^2)=H_{\sigma}(\Omega_+;\cc^2)$ and  the statement for $\gamma_D^+$ follows directly.   For $\gamma_D^-$ notice that  $H_{\sigma, A}(\Omega_-;\cc^2)$ coincides locally with $H_{\sigma}(\Omega_-;\cc^2)$ and in view of the elliptic regularity of $D_0$, away from the boundary, functions in $H_{\sigma, A}(\Omega_-;\cc^2)$ possess the $H^1$-regularity.  That is,   for any $R>0$ such that $\overline{\Omega_+}\subset B(0,R)$ and any smooth partition of unity $\psi_1,\psi_2:\overline{\Omega_-} \longrightarrow [0,1]$ with 
     \begin{align*}
    \psi_1(x)= \begin{cases}
    1, \,\,\,\,  x\in \overline{\Omega_-}\cap B(0,2R),\\
    0, \,\, \,\,  x\in \rr^2 \setminus B(0,3R),
\end{cases}
\end{align*}
 one has
    \begin{align*}
     \psi_1 u\in H_{\sigma}(\Omega_-;\cc^2) \quad\text{and}\quad \psi_2u\in H^1(\Omega_-;\cc^2),
     \end{align*}
     for any $u\in H_{\sigma, A}(\Omega_-;\cc^2)$.  Then using \cite[Lemma 2.3]{BFSV} we define:
\[
\gamma_D^-(u):=\gamma_D^-(\psi_1 u)+\gamma_D^-(\psi_2 u).
\]     
For the second assertion  of the lemma, since    $u\in H_{\sigma, A}(\Omega_-;\cc^2), $  we have that   $\psi_2u\in H^1(\Omega_-;\cc^2) $, then $\gamma_D^-(\psi_2 u)\in H^{\frac{1}{2}}(\Sigma;\cc^2)$.  Now, if in addition   $\gamma^-_D u\in H^{\frac{1}{2}}(\Sigma;\cc^2)$ we obtain that 
$\gamma_D^-(\psi_1 u)=\gamma_D^-(u)-\gamma_D^-(\psi_2 u)\in H^{\frac{1}{2}}(\Sigma;\cc^2)$. Since   $\psi_1 u\in H_{\sigma}(\Omega_-;\cc^2)$, by \cite[Lemma 2.4]{BFSV} $\psi_1 u \in H^1(\Omega_-;\cc^2) $ and in consequence  $ u \in H^1(\Omega_-;\cc^2)$.
\end{proof}
\begin{remark}\label{extTr} We mention that the same arguments as in the proof of Lemma \ref{tracemap} combined with \cite[Corollary~4.6]{BHSLS24} yield that 
the Dirichlet trace operators $\gamma_D^\pm$ extend for any $s\in[0,1]$ to continuous maps 
\[
\gamma_D^\pm:H^s_{\sigma, A}(\Omega_\pm; \mathbb{C}^2)\rightarrow H^{s-\frac{1}{2}}(\Sigma; \mathbb{C}^2). 
\] 
\end{remark}

We finish this part with the following density result and its consequence.
\begin{proposition}\label{density_result} $C^\infty_0(\overline{\Omega_\pm};\cc^2)$ is dense in $H_{\sigma, A}(\Omega_\pm;\cc^2)$.
\end{proposition}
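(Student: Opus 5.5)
Since $\Omega_+$ is bounded, $A$ is smooth and bounded on $\overline{\Omega_+}$, so the zeroth-order term $-i\sigma\cdot(-iA)f=-(\sigma\cdot A)f$ is bounded on $L^2(\Omega_+;\cc^2)$. Hence $H_{\sigma,A}(\Omega_+;\cc^2)=H_{\sigma}(\Omega_+;\cc^2)$ with equivalent norms, as already noted in the proof of Lemma \ref{tracemap}. The statement for $\Omega_+$ therefore reduces to the density of $C^\infty(\overline{\Omega_+};\cc^2)$ in $H_\sigma(\Omega_+;\cc^2)$. This is the non-magnetic result used in \cite{BFSV}; I would quote it, or reprove it by the standard localization and translation argument.

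For $\Omega_-$ I will use the partition of unity $\psi_1,\psi_2$ from the proof of Lemma \ref{tracemap}. Take $u\in H_{\sigma,A}(\Omega_-;\cc^2)$. Then $\sigma\cdot\nabla_A(\psi_j u)=\psi_j\,\sigma\cdot\nabla_A u+(\sigma\cdot\nabla\psi_j)u\in L^2$, so both pieces lie in the space and it suffices to approximate each one.

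\emph{The near piece $\psi_1 u$.} It is supported in the bounded set $\overline{\Omega_-}\cap \overline{B(0,3R)}$, where again $H_{\sigma,A}=H_\sigma$ with equivalent norms. The bounded-domain argument applies on $\Omega_-\cap B(0,4R)$, which has smooth boundary $\Sigma$ near the relevant support. To keep the support compact, multiply the approximants by a fixed cutoff equal to $1$ on $B(0,3R)$. This gives approximants in $C_0^\infty(\overline{\Omega_-};\cc^2)$.

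\emph{The far piece $\psi_2 u$.} It vanishes near $\Sigma$, so we may regard it as a function on $\rr^2$. It belongs to $\dom D_0=\{f: Df\in L^2\}$. Approximation then has two steps.
\begin{enumerate}
\item \emph{Cutting off at infinity.} Set $\chi_k(x)=\chi(x/k)$. Then $\sigma\cdot\nabla_A(\chi_k v)-\chi_k\,\sigma\cdot\nabla_A v=-ik^{-1}(\sigma\cdot\nabla\chi)(x/k)\,v\to0$ in $L^2$. The magnetic potential commutes with multiplication, so it causes no difficulty despite being unbounded.
\item \emph{Mollifying.} Each $\chi_k v$ has compact support and lies in $H^1$. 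Mollification $\rho_\delta*(\chi_k v)$ converges in $H^1$, and $A$ is bounded on the common compact support, so the convergence also holds in the graph norm.
\end{enumerate}
Applying the same cutoff at infinity first is also an alternative route on the exterior domain.

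The main obstacle is the near-boundary density in $H_\sigma$. Functions there are not $H^1$, only $L^2$ with $\sigma\cdot\nabla f\in L^2$, so one cannot mollify naively. I will handle it by the local argument of \cite{BFSV}, or by the standard duality/Hahn--Banach argument for graph-norm spaces of first-order elliptic systems. Concretely, that argument goes as follows. Suppose $(g,h)\in L^2\times L^2$ annihilates all $(\varphi,\sigma\cdot\nabla\varphi)$ with $\varphi\in C^\infty_0(\overline{\Omega})$. Extend $g$ and $h$ by zero; then $\sigma\cdot\nabla h=g$ in $\mathcal D'(\rr^2)$. Ellipticity gives $h\in H^1(\rr^2)$ with $h=0$ outside $\Omega$, hence $h\in H^1_0(\Omega)$. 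Integration by parts then shows that $(g,h)$ annihilates the whole space.
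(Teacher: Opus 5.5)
Your proof is correct, but it is organized differently from the paper's.

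\textbf{The paper's route.} The paper does not localize. It runs a single Hilbert-space orthogonality argument directly in $H_{\sigma,A}(\Omega_-;\cc^2)$, and the same for $\Omega_+$. Suppose $u$ is orthogonal in the graph inner product to $C^\infty_0(\overline{\Omega_-};\cc^2)$. Then $U=\sigma\cdot\nabla_A u$, extended by zero, solves a first-order Dirac-type equation on all of $\rr^2$ with right-hand side $\pm u_0\in L^2$. Hence $U_0\in\dom D_0\subset H^1(\rr^2;\cc^2)$, using the global description of $\dom D_0$ recalled before Lemma \ref{tracemap}. Since $U_0$ vanishes outside $\Omega_-$, one gets $U\in H^1_0(\Omega_-;\cc^2)$. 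An integration by parts against approximants $U_n\in C^\infty_0(\Omega_-;\cc^2)$ then forces $u=0$.

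\textbf{Your route.} Your Hahn--Banach argument uses exactly this mechanism, but you run it only on a bounded smooth domain. There $A$ is a bounded perturbation, so the regularity step reduces to ellipticity of $\sigma\cdot\nabla$ for compactly supported $L^2$ functions, which is a Fourier computation. The unbounded potential is handled separately by the elementary cutoff-and-mollify argument for $\psi_2 u$.

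\textbf{What each approach buys.} Your version is self-contained: it needs no global information on $\dom D_0$. It also never needs to approximate $U\in H_{\sigma,A}\cap H^1_0(\Omega_-)$ by $C^\infty_0(\Omega_-)$ functions in the magnetic graph norm on the unbounded exterior domain. The paper asserts that step without comment, and justifying it requires a cutoff at infinity of precisely the kind you write down. The paper's version is shorter and treats $\Omega_+$ and $\Omega_-$ uniformly.

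One harmless slip: in your step 1 the commutator is $k^{-1}(\sigma\cdot\nabla\chi)(x/k)\,v$, without the factor $-i$.
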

\begin{proof} We only give the proof for $H_{\sigma, A}(\Omega_-;\cc^2)$, the same arguments yields the result for $H_{\sigma, A}(\Omega_+;\cc^2)$. The proof will follow by showing that if $u\in H_{\sigma, A}(\Omega_-;\cc^2)$ fulfills 
\begin{align}\label{density1}
   \langle u,v\rangle_{L^2(\Omega_-;\cc^2)} + \langle \sigma\cdot\nabla_A u, \sigma\cdot\nabla_A v\rangle_{L^2(\Omega_-;\cc^2)}=0 \quad \text{for all } v\in C^\infty_0(\overline{\Omega_-};\cc^2),
\end{align}
then $u=0$. To this end, let $U=\sigma\cdot\nabla_A u \in L^2(\Omega_-;\cc^2)$, and remark that \eqref{density1} implies that $\sigma\cdot\nabla_A U=-u$ holds in $\mathcal{D}^\prime(\Omega_-;\cc^2)$ and then in $L^2(\Omega_-;\cc^2)$. Denote by $U_0$ and $u_0$ the extensions by zero to $L^2(\rr^2;\cc^2)$ of $U$ and $u$, respectively. Then, for any $w\in C^\infty_0(\Omega_-;\cc^2)$ there holds
\begin{align*}
\langle\sigma\cdot\nabla_A U_0, w\rangle_{\mathcal{D}^\prime(\rr^2;\cc^2),\mathcal{D}(\rr^2;\cc^2)}&=\langle U_0, \sigma\cdot\nabla_A  w\rangle_{L^2(\rr^2;\cc^2)}=\langle U, \sigma\cdot\nabla_A  w\rangle_{L^2(\Omega_-;\cc^2)}\\
&=\langle - u, \sigma\cdot\nabla_A  w\rangle_{L^2(\Omega_-;\cc^2)}=\langle - u_0, \sigma\cdot\nabla_A  w\rangle_{L^2(\Omega_-;\cc^2)},
\end{align*}
which yields $\sigma\cdot\nabla_A U_0=-u_0$ in $L^2(\rr^2;\cc^2)$, and thus $U_0\in\dom D_0\subset H^1(\rr^2;\cc^2)$. Since $\gamma_D^\Sigma U_0=\gamma_D^+U= \gamma_D^-U_0=0$ it follows that $U\in H_{\sigma, A}(\Omega_-;\cc^2)\cap  H^1_0  (\Omega_-;\cc^2)$. \ From this it follows that there is a sequence $U_n\in C^\infty_0(\Omega_-;\cc^2)$ with $U_n \longrightarrow U$ in $H_{\sigma, A}(\Omega_-;\cc^2)$, in particular, we have $\sigma\cdot\nabla_A U_n \longrightarrow \sigma\cdot\nabla_A U$ in $L^2(\Omega_-;\cc^2)$. Therefore, using that $\sigma\cdot\nabla_A U=-u$ holds in $L^2(\Omega_-;\cc^2)$ and integrating by parts,  we get 
\begin{align*}
   \langle \sigma\cdot\nabla_A u, \sigma\cdot\nabla_A v\rangle_{L^2(\Omega_-;\cc^2)}&= \langle U, \sigma\cdot\nabla_A v\rangle_{L^2(\Omega_-;\cc^2)}=\lim\limits_{n\rightarrow \infty}\langle  U_n, \sigma\cdot\nabla_Av\rangle_{L^2(\Omega_-;\cc^2)}\\
   &=-\lim\limits_{n\rightarrow \infty}\langle \sigma\cdot\nabla_A U_n, v\rangle_{L^2(\Omega_-;\cc^2)}=-\langle \sigma\cdot\nabla_A U, v\rangle_{L^2(\Omega_-;\cc^2)}\\
   &=\langle u, v\rangle_{L^2(\Omega_-;\cc^2)},
\end{align*}
which plugged in \eqref{density1} yields $u=0$.
\end{proof}
A straightforward application of the Green formula combined with Proposition \ref{density_result} yields: 
\begin{lemma}\label{Green} For any $u_\pm,v_\pm\in  H^1_{\sigma, A}(\Omega_\pm;\cc^2)$ there holds 
      \begin{align*}
    \langle -i\sigma\cdot\nabla_A u_\pm, v_\pm \rangle_{L^2(\Omega_\pm;\cc^2)}=\langle u_\pm, -i\sigma\cdot\nabla_A v_\pm \rangle_{L^2(\Omega_\pm;\cc^2)}+\langle \mp i\sigma\cdot\nu\gamma_D^\pm u_\pm, \gamma_D^\pm v_\pm \rangle_{L^2(\Sigma;\cc^2)}.
\end{align*}
\end{lemma}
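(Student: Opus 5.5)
We first prove the identity for smooth functions and then extend it by density and continuity. Take $u_\pm,v_\pm\in C^\infty_0(\overline{\Omega_\pm};\cc^2)$; for $\Omega_-$ these have bounded support, so all integrations take place over bounded regions. Write
\[
-i\sigma\cdot\nabla_A=-i\sigma\cdot\nabla-\sigma\cdot A .
\]
The term $-\sigma\cdot A$ is multiplication by a Hermitian matrix-valued smooth function, since $\sigma_1,\sigma_2$ are Hermitian and $A$ is real. It is therefore symmetric and produces no boundary term. For the derivative part, apply the divergence theorem componentwise to $\partial_j\langle \sigma_j u,v\rangle_{\cc^2}$ and use that each $\sigma_j$ is Hermitian. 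This gives
\[
\int_{\Omega_\pm}\langle -i\sigma\cdot\nabla u,v\rangle\,dx=\int_{\Omega_\pm}\langle u,-i\sigma\cdot\nabla v\rangle\,dx+\int_{\partial\Omega_\pm}\langle -i\sigma\cdot n_\pm u,v\rangle\,ds ,
\]
where $n_\pm$ is the outward normal of $\Omega_\pm$. Since $n_+=\nu$ and $n_-=-\nu$, the boundary term is $\langle \mp i\sigma\cdot\nu\,\gamma_D^\pm u,\gamma_D^\pm v\rangle_{L^2(\Sigma)}$. For $\Omega_-$, compact support guarantees there is no contribution from infinity.

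Next we pass to general $u_\pm,v_\pm\in H^1_{\sigma,A}(\Omega_\pm;\cc^2)$. This space is contained in $H^1(\Omega_\pm;\cc^2)$. The Dirichlet traces map it continuously into $H^{1/2}(\Sigma;\cc^2)\subset L^2(\Sigma;\cc^2)$ (Remark \ref{extTr} with $s=1$, or the classical trace theorem). Hence each of the three terms in the identity is a continuous sesquilinear form on $H^1_{\sigma,A}(\Omega_\pm;\cc^2)$.

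What remains is density of $C^\infty_0(\overline{\Omega_\pm};\cc^2)$ in $H^1_{\sigma,A}(\Omega_\pm;\cc^2)$ for its own norm. Proposition \ref{density_result} is stated for $H_{\sigma,A}$ (the case $s=0$), so some adaptation is needed.
\begin{itemize}
\item[(i)] For $\Omega_+$, which is bounded, $H^1_{\sigma,A}=H^1$ with equivalent norms, and $C^\infty(\overline{\Omega_+})$ is dense in $H^1(\Omega_+)$.
\item[(ii)] For $\Omega_-$, first multiply by cut-offs $\chi(\cdot/R)$ with $\chi\in C_0^\infty(\rr^2)$ equal to $1$ near the origin. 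Then $\sigma\cdot\nabla_A(\chi_R u)=\chi_R\,\sigma\cdot\nabla_A u+(\sigma\cdot\nabla\chi_R)u$. The commutator term is $O(1/R)$ in $L^2$, so $\chi_R u\to u$ in $H^1_{\sigma,A}$. Each $\chi_R u$ has bounded support, so we can approximate it in $H^1$ by smooth compactly supported functions using standard extension and mollification. On bounded sets $A$ is bounded, so the $H^1$ norm controls the $H^1_{\sigma,A}$ norm.
\end{itemize}
Alternatively, one could argue as the paper suggests, via Proposition \ref{density_result} combined with the continuity of the traces.

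The main obstacle is this density and continuity step on the unbounded domain $\Omega_-$, where $A$ is unbounded. The cut-off argument in (ii) handles it, because only the gradient of the cut-off appears in the commutator, never $A$ itself. With density in hand, both sides of the identity extend by continuity, which proves Lemma \ref{Green}.
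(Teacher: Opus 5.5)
Your proof is correct, and its skeleton (Green's formula for functions in $C^\infty_0(\overline{\Omega_\pm};\cc^2)$, then density) is the one the paper intends. The difference lies in the density step.

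The paper invokes Proposition \ref{density_result}, which gives density only in the weaker $H_{\sigma,A}$-norm. The traces of the approximants then converge merely in $H^{-\frac12}(\Sigma;\cc^2)$ (Lemma \ref{tracemap}), so the boundary term must be read as an $H^{-\frac12}$--$H^{\frac12}$ duality pairing. One approximates one argument at a time: first $v$ with $u$ smooth, then $u$ with $v\in H^1_{\sigma,A}$ fixed, so that $\sigma\cdot\nu\,\gamma_D^\pm v\in H^{\frac12}$. This actually yields the identity for $u\in H_{\sigma,A}(\Omega_\pm;\cc^2)$ and $v\in H^1_{\sigma,A}(\Omega_\pm;\cc^2)$.

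You instead prove density of $C_0^\infty(\overline{\Omega_\pm};\cc^2)$ directly in the stronger $H^1_{\sigma,A}$-norm. You use cut-offs, whose commutator with $\sigma\cdot\nabla_A$ only involves $\nabla\chi_R$, followed by extension and mollification with supports in a fixed ball where $A$ is bounded. Then all three terms are continuous for that norm, with traces in $H^{\frac12}\subset L^2(\Sigma;\cc^2)$, and neither the extended trace nor the duality is needed.

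Your route is more elementary and self-contained and gives exactly the stated $H^1$ version. The paper's route reuses an already established density result and gives a slightly more general identity. You correctly noticed that Proposition \ref{density_result}, combined only with $L^2(\Sigma)$-continuity of the boundary term, does not directly cover $H^1_{\sigma,A}$. That is why either the duality argument or your direct density argument is needed.
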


\subsection{Potential and boundary integral operators associated with $D_0$}

In what follows, it will be convenient to use the identification
\[
H^s_{\sigma, A}(\rr^2\setminus\Sigma;\cc^2)\simeq H^s_{\sigma, A}(\Omega_+;\cc^2)\oplus H^s_{\sigma, A}(\Omega_-;\cc^2),
\quad
u\simeq (u_+,u_-),
\]
with $u_\pm$ being the restriction of $u$ in $\Omega_\pm$, as well as analogous identifications for $H^s(\rr^2\setminus\Sigma;\cc^2)$ and $H_\sigma(\rr^2\setminus\Sigma;\cc^2)$.

We introduce the potential operators $\Phi_z: L^2(\Sigma; \mathbb{C}^2) \rightarrow L^2(\rr^2; \mathbb{C}^2)$ defined for all $z\in\rho(D_0)\setminus\{-m\}$ by
\begin{align} \label{def_Phi_z}
	\Phi_z   g(x) &:= \int_\Sigma \var_z (x,y) g(y) \,\ddd s(y). 
\end{align}
As $\var_z$ is a fundamental solution of $D_0-z$, it is clear that $(D_0-z)\Phi_z=0$ in $\rr^2\setminus\Sigma$. The following proposition gives some preliminary properties of $\Phi_z$.
\begin{proposition}\label{Basic_Phi}For any $z\in\rho(D_0)\setminus\{-m\}$ the operator $\Phi_z$ is well defined and bounded. Moreover, the following holds:
    \begin{itemize}
	  \item[$\textup{(i)}$]  $\Phi_z$ extends to a bounded bijective operator
	\begin{equation*}
	\begin{split}
	\Phi_z :H^{-\frac{1}{2}}(\Sigma;\mathbb{C}^2) \to \{ u\in H_{\sigma, A}(\rr^2\setminus\Sigma;\mathbb{C}^2): (D_0-z) u=0 \text{ in } \rr^2\setminus\Sigma\}.
	\end{split}
	\end{equation*}
	\item[$\textup{(ii)}$] The adjoint $\Phi_z^*: L^2(\rr^2; \mathbb{C}^2) \to  L^2(\Sigma; \mathbb{C}^2)$ of $\Phi_z$ acts on $u \in L^2(\rr^2; \mathbb{C}^2)$ as
	\begin{equation} \label{adj_Phi}
	  \Phi_{z}^* u(x) = \int_{\rr^2} \var_{\overline{ z}}(x,y) u(y) \, \ddd y, \qquad x \in \Sigma,
	\end{equation}
	and $\Phi_z^*$ gives rise to a bounded operator $\Phi_z^*: L^2(\rr^2; \mathbb{C}^2) \rightarrow H^{\frac{1}{2}}(\Sigma; \mathbb{C}^2)$. In particular, $\Phi_z^*$ is compact from $ L^2(\rr^2; \mathbb{C}^2)$ to $L^2(\Sigma; \mathbb{C}^2)$.
	\end{itemize}
\end{proposition}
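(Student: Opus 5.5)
The plan is to reduce everything to the non-magnetic case through a localisation near $\Sigma$, using the decomposition of Lemma \ref{lemma:varphi_asymp} and the exponential decay of $\varphi_z$ away from the diagonal. The decay comes from the factor $e^{-\zeta/2}$ in \eqref{varphi_z} together with the polynomial growth of $\mathbb{U}(\alpha,\beta,\zeta)$ as $\zeta\to\infty$.

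\textbf{Boundedness on $L^2$.} Split $\varphi_z(x,y)=\chi(x)\varphi_z(x,y)+(1-\chi(x))\varphi_z(x,y)$, with $\chi\in C_0^\infty$ equal to $1$ near $\overline{\Omega_+}$.
\begin{itemize}
\item For $x$ far from $\Sigma$, the kernel is smooth and Gaussian-decaying in $x$, uniformly in $y\in\Sigma$. Hence this piece maps $L^2(\Sigma)$ into $L^2$, and in fact into $H^1$.
\item Near $\Sigma$ the kernel is $\theta(x-y)+O(\log|x-y|)+\widetilde\theta_z$. The logarithmic and continuous parts are harmless. The $\theta$ part is exactly the kernel of the free massless Dirac single layer potential, which is classically bounded from $L^2(\Sigma)$ to $L^2_{\rm loc}$, and even from $H^{-1/2}(\Sigma)$ to $H_\sigma$ locally.
\end{itemize}
These two bounds give that $\Phi_z$ is well defined and bounded.

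\textbf{Item (ii).} The formula \eqref{adj_Phi} follows from Fubini together with \eqref{var_ast}. For the mapping property, I would write $\Phi_z^*u=\gamma_D^\Sigma(D_0-\bar z)^{-1}u$; this identity holds because $\varphi_{\bar z}$ is the kernel of $(D_0-\bar z)^{-1}$. Since $(D_0-\bar z)^{-1}$ maps $L^2$ into $\dom D_0$, the boundedness of $\gamma_D^\Sigma:\dom D_0\to H^{1/2}(\Sigma)$ gives the $H^{1/2}$ bound. Compactness into $L^2(\Sigma)$ then follows from the compact embedding $H^{1/2}(\Sigma)\hookrightarrow L^2(\Sigma)$, since $\Sigma$ is compact.

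\textbf{Item (i).} I would define the extension by duality: for $g\in H^{-1/2}(\Sigma)$, set $\langle\Phi_z g,u\rangle:=\langle g,\Phi_z^*u\rangle$. By (ii) this gives a bounded map $H^{-1/2}(\Sigma)\to L^2$. Moreover $(D-z)\Phi_z g=0$ in $\mathbb R^2\setminus\Sigma$ distributionally, so $\sigma\cdot\nabla_A\Phi_zg=-i(z-m\sigma_3)\Phi_zg\in L^2$ there. Hence $\Phi_z g\in H_{\sigma,A}(\mathbb R^2\setminus\Sigma)$, with the norm controlled by $\|g\|_{H^{-1/2}}$.

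For injectivity and surjectivity I would use a jump relation. From the density of Proposition \ref{density_result}, the Green formula of Lemma \ref{Green}, and the extended traces of Lemma \ref{tracemap}, one gets
\[
i\sigma\cdot\nu\,(\gamma_D^+-\gamma_D^-)\Phi_z g=g ,
\]
up to the sign convention. This identity gives injectivity directly. For surjectivity, take $u\in H_{\sigma,A}(\mathbb R^2\setminus\Sigma)$ with $(D-z)u=0$ off $\Sigma$. Testing against $(D_0-\bar z)^{-1}w$ and applying Green's formula on both sides yields
\[
u=\Phi_z\bigl(i\sigma\cdot\nu(\gamma_D^+u-\gamma_D^-u)\bigr),
\]
again up to sign.

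\textbf{Main obstacle.} The difficulty is the low regularity: Green's formula in Lemma \ref{Green} is stated only for $H^1$ functions. I would first prove the representation and jump formulas for smooth $g$, where $\Phi_zg$ is $H^1$ on each side by the classical free layer theory applied to the singular part $\theta$. I would then pass to $H^{-1/2}(\Sigma)$ and $H_{\sigma,A}$ by density, using Proposition \ref{density_result} and the continuity of $\gamma_D^\pm$ from Lemma \ref{tracemap}. That continuity extends the boundary pairing to the duality between $H^{-1/2}(\Sigma)$ and $H^{1/2}(\Sigma)$, which is all the density argument requires.
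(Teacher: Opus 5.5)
Your proof is correct, but for (i) it takes a different route from the paper.

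\textbf{What the paper does.} It never uses the kernel asymptotics. It defines $\Phi_z$ on $H^{-\frac12}(\Sigma;\mathbb C^2)$ directly as the anti-dual of the bounded map $\gamma_D^\Sigma(D_0-\bar z)^{-1}:L^2(\mathbb R^2;\mathbb C^2)\to H^{\frac12}(\Sigma;\mathbb C^2)$. A Fubini computation with \eqref{var_ast} then shows this extends the integral operator. So boundedness on $L^2(\Sigma)$ comes for free, and your cut-off and Cauchy-kernel estimate is unnecessary. For bijectivity, the paper takes the closed symmetric restriction $\mathcal T$ of $D_0$ to $\{v\in\dom D_0:\gamma_D^\Sigma v=0\}$, whose adjoint is $D$ on $H_{\sigma,A}(\mathbb R^2\setminus\Sigma;\mathbb C^2)$. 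It notes that $\ran(\mathcal T-\bar z)=\ker\Phi_z^*$. The surjectivity of $\gamma_D^\Sigma$ and the closed range theorem then give that $\Phi_z$ maps $H^{-\frac12}$ bijectively onto $\ker(\mathcal T^*-z)$.

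\textbf{What your route does.} You get surjectivity from an explicit Green representation formula and injectivity from the jump relation. This buys an explicit inverse $u\mapsto i\sigma\cdot\nu(\gamma_D^+u_+-\gamma_D^-u_-)$, which the paper only uses later (cf.\ \eqref{Traceeigen} and Proposition \ref{Further_Phi_z}). The price is reliance on classical Plemelj theory for smooth densities.

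\textbf{Avoiding your ``main obstacle''.} The Plemelj step can be removed inside your own framework.
\begin{itemize}
\item Injectivity is immediate by duality: if $\Phi_zg=0$, then $g$ annihilates $\ran\Phi_z^*=\gamma_D^\Sigma(\dom D_0)=H^{\frac12}(\Sigma;\mathbb C^2)$, so $g=0$.
\item Your representation formula only needs the extended Green formula, which you correctly obtain from Proposition \ref{density_result} and Lemma \ref{tracemap}. Apply it to $u=\Phi_zg$ and use the injectivity above. This yields the jump relation $i\sigma\cdot\nu(\gamma_D^+-\gamma_D^-)\Phi_zg=g$ for every $g\in H^{-\frac12}(\Sigma;\mathbb C^2)$, with no smooth-density approximation.
\end{itemize}

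(Minor: $(D-z)u=0$ gives $\sigma\cdot\nabla_A u=i(z-m\sigma_3)u$, not $-i(z-m\sigma_3)u$. This does not affect the argument.)
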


\begin{proof} Fix $z\in\rho(D_0)$ and define the  operator 
           \[
		\widetilde{\Phi_{z}}^{\prime}=\gamma_D^\Sigma (D_0-\overline z)^{-1}: L^2(\rr^2; \mathbb{C}^2) \rightarrow H^{\frac{1}{2}}(\Sigma; \mathbb{C}^2),
		\]
      which is well-defined, bounded, and surjective. We then define the map $\widetilde{\Phi_{z}}$ as the anti-dual operator
		\[
		\widetilde{\Phi_{z}}=\left(\gamma_D^\Sigma (D_0-\overline z)^{-1}\right)^{\prime}: H^{-\frac{1}{2}}(\Sigma; \mathbb{C}^2)  \rightarrow L^2(\rr^2; \mathbb{C}^2).
		\]
        Let $g\in L^2(\Sigma; \mathbb{C}^2)$ and $u\in L^2(\rr^2; \mathbb{C}^2)$, then the symmetry property \eqref{var_ast} together with Fubini's theorem give
		\begin{align}\label{Dualcr}
        \begin{split}
			\langle \widetilde{\Phi_{z}}g , u \rangle_{L^2(\rr^2; \mathbb{C}^2)}&=\langle  g, \widetilde{\Phi_{z}}^{\prime} u \rangle_{L^2(\Sigma; \mathbb{C}^2)}\\
			&=\langle  g, \gamma_D^\Sigma (D_0-\Bar z)^{-1} u \rangle_{L^2(\Sigma; \mathbb{C}^2)}\\
			&=  \int_\Sigma \Big\langle g(x), \int_{\rr^2} \var_{\Bar{z}}(x,y)u(y)\mathrm{d} y \Big\rangle_{\cc^2}\,\mathrm{d}s(x)\\
			&=\int_{\rr^2} \Big\langle \int_\Sigma \overline{\var_{\Bar{z}}(x,y)}g(x)\,\mathrm{d}s(x), u(y)\Big\rangle_{\cc^2}\,\mathrm{d}y\\
			&=\int_{\rr^2} \Big\langle \int_\Sigma \var_{z}(y,x)g(x)\, \mathrm{d}s(x), u(y)\Big\rangle_{\cc^2}\,\mathrm{d}y\\
            &=\langle \Phi_{z}g , u \rangle_{L^2(\rr^2; \mathbb{C}^2)}.
            \end{split}
		\end{align}
Therefore, $\widetilde{\Phi_{z}}$ is an extension of $\Phi_z$ and thus $\Phi_z: L^2(\Sigma; \mathbb{C}^2) \rightarrow L^2(\rr^2; \mathbb{C}^2)$ is well defined and bounded for all $z\in\rho(D_0)$ which proves the first statement. Note also that the second and third equalities in \eqref{Dualcr} give the integral representation \eqref{adj_Phi} for the adjoint operator $\Phi_z^\ast$ and show that $\Phi_z^\ast=\widetilde{\Phi_{z}}^{\prime}=\gamma_D^\Sigma (D_0-\bar z)^{-1}$. Hence, $\Phi_z^*: L^2(\rr^2; \mathbb{C}^2) \rightarrow H^{\frac{1}{2}}(\Sigma; \mathbb{C}^2)$ is bounded, and since $H^{\frac{1}{2}}(\Sigma; \mathbb{C}^2)$ is compactly embedded in $L^2(\Sigma; \mathbb{C}^2)$ it follows that $\Phi_z^*$ is compact from $ L^2(\rr^2; \mathbb{C}^2)$ to $L^2(\Sigma; \mathbb{C}^2)$, which completes the proof of (ii). To complete the proof of (i),  consider the linear operator
\begin{equation*}
  \begin{split}
    \mathcal T v = D v, \quad    \dom \mathcal T =\{v\in\dom D_0: \gamma_D^\Sigma v=0 \},
  \end{split}
\end{equation*}
 By Lemma \ref{tracemap} one has $u_-\in H^1(\Omega_-;\cc^2)$ for any $u=u_+\oplus u_-\in\dom \mathcal T$. Then one easily checks that $\mathcal T$ is a densely defined, closed, and symmetric operator, and that 
 \begin{equation*}
  \begin{split}
    \mathcal T^\ast u = (Du_+)\oplus(D u_-),\quad \dom \mathcal T^\ast =\{u=u_+\oplus u_-\in H_{\sigma, A}(\rr^2\setminus\Sigma;\cc^2) \}.
  \end{split}
\end{equation*}
Since $D_0$ is self-adjoint and $\mathcal T$ is the restriction of $D_0$ to $\{u\in\dom D_0: \gamma_D^\Sigma u=0 \}$, it follows that 
 \begin{align*}
     \ran (\mathcal T-\Bar{z})&= \{u\in L^2(\rr^2;\cc^2):  (D_0-\Bar{z})^{-1}u\in\dom D_0 \text{ and } \gamma_D^\Sigma (D_0-\Bar{z})^{-1}u=0 \},\\
     &= \{u\in L^2(\rr^2;\cc^2):  \widetilde{\Phi_{z}}^{\prime}u=0 \}= \ker \widetilde{\Phi_{z}}^{\prime},
 \end{align*}
where the surjectivity of the trace map $\gamma_D^\Sigma$ was used in the second equality. As  $\ker \widetilde{\Phi_{z}}^{\prime}=(\ran  \widetilde{\Phi_{z}})^{\bot}$ and  $\ker(\mathcal T^\ast-z)=(\ran(\mathcal T-\Bar{z}))^\bot$ is a closed subspace,  the closed range theorem entails that $\widetilde{\Phi_z}$ is bounded and bijective from $H^{-\frac{1}{2}}(\Sigma;\mathbb{C}^2)$ onto $\ker(\mathcal T^\ast-z)$. Assertion (i) follows by setting $\widetilde{\Phi_z}g=\Phi_z g$ for $g\in H^{-\frac{1}{2}}(\Sigma;\mathbb{C}^2)$.        
\end{proof}

In the sequel, it will be convenient to extend the definition of $\Phi_z$ and $\Phi_{\Bar{z}}^\ast$ to the point  $z=-m$.  The following lemma gives some useful identities and properties of $\Phi_z$ and $\Phi_z^\ast$ allowing this.

\begin{lemma}\label{Phi_identities} 
\begin{enumerate}
\item For $z,\zeta\in\rho(D_0)\setminus\{-m\}$ one has $\Phi_{\Bar{z}}^\ast \Phi_{\zeta}= \Phi_{\Bar\zeta}^\ast\Phi_{z}$ and 
\begin{align}\label{Phi_identity}
\begin{split}
    \Phi_z-\Phi_{\zeta}=(z-\zeta)(D_0 -z)^{-1}\Phi_{\zeta}\quad\text{and} \quad  \Phi_z^\ast-\Phi_{\zeta}^\ast&=(\Bar{z}-\Bar{\zeta})\Phi_{\zeta}^\ast(D_0 -\Bar{z})^{-1}.
    \end{split}
\end{align} 
\item In particular, for a fixed  $\zeta\in\rho(D_0)\cap\rr$ with $\zeta\neq-m$,  if one sets 
\[
 \Phi_{-m}=\Phi_{\zeta}-(m+\zeta)(D_0 +m)^{-1}\Phi_{\zeta},
 \]
  then the statement of Proposition \ref{Basic_Phi} still holds true for $z=-m$ with 
\[
 \Phi_{-m}^\ast= \Phi_{\zeta}^\ast-(m+ \zeta)\Phi_{\zeta}^\ast(D_0 +m)^{-1}.
 \]
  Furthermore, the operator-valued functions  
\begin{align*} 
z \mapsto \Phi_z \in \mathcal L(H^{-\frac{1}{2}}(\Sigma;\mathbb{C}^2), L^2(\rr^2;\mathbb{C}^2) ),\quad   z \mapsto \Phi_{\Bar{z}}^\ast \in \mathcal L(L^2(\rr^2;\mathbb{C}^2), H^{\frac{1}{2}}(\Sigma;\mathbb{C}^2)) ,
\end{align*}
are holomorphic on $\rho(D_0)$.
\end{enumerate}
\end{lemma}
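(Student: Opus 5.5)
The plan has three parts: prove the resolvent-type identities, use them to define $\Phi_{-m}$ and transfer Proposition \ref{Basic_Phi}, and read off holomorphy from the resulting formula.

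\textbf{Resolvent identities.} We start from the representation $\Phi_z^\ast=\gamma_D^\Sigma(D_0-\bar z)^{-1}$ obtained in the proof of Proposition \ref{Basic_Phi}. For $z,\zeta\in\rho(D_0)\setminus\{-m\}$, the first resolvent identity gives
\[
(D_0-\bar z)^{-1}-(D_0-\bar\zeta)^{-1}=(\bar z-\bar\zeta)(D_0-\bar\zeta)^{-1}(D_0-\bar z)^{-1}.
\]
Applying $\gamma_D^\Sigma$ on the left yields
\[
\Phi_z^\ast-\Phi_\zeta^\ast=(\bar z-\bar\zeta)\Phi_\zeta^\ast(D_0-\bar z)^{-1},
\]
which is the second identity in \eqref{Phi_identity}. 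Taking adjoints (anti-duals), and using $((D_0-\bar z)^{-1})^\ast=(D_0-z)^{-1}$, gives the first identity in \eqref{Phi_identity} on $H^{-\frac12}(\Sigma;\cc^2)$.

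For the symmetry $\Phi_{\bar z}^\ast\Phi_\zeta=\Phi_{\bar\zeta}^\ast\Phi_z$, we apply $\Phi_{\bar z}^\ast$ to the first identity:
\[
\Phi_{\bar z}^\ast\Phi_z-\Phi_{\bar z}^\ast\Phi_\zeta=(z-\zeta)\Phi_{\bar z}^\ast(D_0-z)^{-1}\Phi_\zeta .
\]
We then compare this with the second identity with $\bar z$ and $\bar\zeta$ swapped, composed on the right with $\Phi_\zeta$. That gives $\Phi_{\bar\zeta}^\ast\Phi_\zeta-\Phi_{\bar z}^\ast\Phi_\zeta$ as the same expression, but with $\Phi_{\bar\zeta}^\ast$ in place of $\Phi_{\bar z}^\ast$. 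The difference between these two is a further resolvent-identity term, which shows the combination is symmetric in $z$ and $\zeta$.

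A cleaner route is at the kernel level. For $g,h\in L^2(\Sigma)$, the difference $\Phi_{\bar z}^\ast\Phi_\zeta-\Phi_{\bar\zeta}^\ast\Phi_z$ has kernel equal to the restriction to $\Sigma\times\Sigma$ of
\[
\varphi_\zeta-\varphi_z=(\zeta-z)\int\varphi_z(\cdot,w)\varphi_\zeta(w,\cdot)\,dw .
\]
This integral is symmetric in $(z,\zeta)$ because the two resolvents commute. By the decomposition in Lemma \ref{lemma:varphi_asymp}, the singular parts $\theta$ and the logarithmic term depend on $z$ only through the linear factor $zI_2$, so the restriction to $\Sigma$ makes sense in $\mathcal L(L^2(\Sigma))$. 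If this bookkeeping becomes messy, I would instead use the regularized boundary-trace definition of the composition and simply verify the identity on the dense set of $g$ that are smooth, via Fubini and \eqref{var_ast}.

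\textbf{Extension to $z=-m$.} The point $-m$ belongs to $\rho(D_0)$, since $-m$ is not among the levels $\mu_n$; it was excluded only because the kernel formula degenerates there ($\Gamma(z_m)$ at $z_m=0$). We define $\Phi_{-m}$ by the stated formula. The first identity in \eqref{Phi_identity} shows that $\Phi_z$ for $z\neq-m$ is given by this same formula with $z$ in place of $-m$, so the formula is the natural continuation. Boundedness $H^{-\frac12}\to L^2$ is immediate. For the adjoint, $\Phi_{-m}^\ast$ equals $\gamma_D^\Sigma(D_0+m)^{-1}$, and the properties of Proposition \ref{Basic_Phi}(ii) follow as before. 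For bijectivity onto $\ker(\mathcal T^\ast+m)$, we rerun the closed range argument of Proposition \ref{Basic_Phi}(i) with $\widetilde\Phi_{-m}=(\gamma_D^\Sigma(D_0+m)^{-1})'$; the only inputs are that $-m\in\rho(D_0)$ and the surjectivity of the trace, both of which hold. This anti-dual coincides with the defined $\Phi_{-m}$ by the adjoint computation above.

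\textbf{Holomorphy.} Fixing $\zeta$, we have $\Phi_z=(I+(z-\zeta)(D_0-z)^{-1})\Phi_\zeta$ on all of $\rho(D_0)$. Since the resolvent is holomorphic in $\mathcal L(L^2)$ on $\rho(D_0)$, $z\mapsto\Phi_z$ is holomorphic in $\mathcal L(H^{-\frac12},L^2)$. Likewise $\Phi_{\bar z}^\ast=\gamma_D^\Sigma(D_0-z)^{-1}$, and $z\mapsto(D_0-z)^{-1}$ is holomorphic into $\mathcal L(L^2,\dom D_0)$ with the graph norm, so $\Phi_{\bar z}^\ast$ is holomorphic into $\mathcal L(L^2,H^{\frac12})$.

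I expect the only delicate point to be the symmetry identity $\Phi_{\bar z}^\ast\Phi_\zeta=\Phi_{\bar\zeta}^\ast\Phi_z$ as an operator statement on $\Sigma$, because $\Phi_\zeta g$ need not be in $\dom D_0$ and its trace must be understood appropriately.
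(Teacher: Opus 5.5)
Your treatment of the two resolvent-type identities, of the extension to $z=-m$, and of holomorphy matches the paper's proof. For $z=-m$ you are more explicit than the paper: you check $\Phi_{-m}^\ast=\gamma_D^\Sigma(D_0+m)^{-1}$ via the resolvent identity and rerun the closed range argument, which the paper leaves implicit. (Note only that $-m\in\rho(D_0)$ requires $m>0$; for $m=0$ one has $-m=\mu_0$, and part (2) is vacuous.)

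The gap is in the symmetry identity $\Phi_{\Bar z}^\ast\Phi_\zeta=\Phi_{\Bar\zeta}^\ast\Phi_z$.
\begin{itemize}
\item Your first attempt only produces relations involving $\Phi_{\Bar z}^\ast\Phi_z$, $\Phi_{\Bar z}^\ast\Phi_\zeta$ and $\Phi_{\Bar\zeta}^\ast\Phi_\zeta$. It never reaches the target combination $\Phi_{\Bar\zeta}^\ast\Phi_z$, and the claimed cancellation by ``a further resolvent-identity term'' is not carried out.
\item Your kernel route is wrong as stated. Since $\Phi_{\Bar z}^\ast u(x)=\int_{\rr^2}\varphi_z(x,w)u(w)\,\mathrm{d}w$, the operator $\Phi_{\Bar z}^\ast\Phi_\zeta$ has kernel $\int_{\rr^2}\varphi_z(x,w)\varphi_\zeta(w,y)\,\mathrm{d}w$ on $\Sigma\times\Sigma$. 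This is the kernel of $(D_0-z)^{-1}(D_0-\zeta)^{-1}$.
\item Hence the difference $\Phi_{\Bar z}^\ast\Phi_\zeta-\Phi_{\Bar\zeta}^\ast\Phi_z$ has kernel zero. It does not have kernel $\varphi_\zeta-\varphi_z$, which is nonzero and would contradict the claim. Your remark that the integral is symmetric because the resolvents commute is the right idea, but you attached it to the wrong object.
\end{itemize}

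The underlying misreading shows in your final remark. $\Phi_{\Bar z}^\ast\Phi_\zeta g$ is never a trace of $\Phi_\zeta g$. Since $\Phi_{\Bar z}^\ast=\gamma_D^\Sigma(D_0-z)^{-1}$, it is the trace of $(D_0-z)^{-1}\Phi_\zeta g\in\dom D_0$, so there is no delicate point at all.

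The paper's argument is one line. Use the first identity once as written and once with $z$ and $\zeta$ interchanged:
\[
(z-\zeta)(D_0-z)^{-1}\Phi_\zeta=\Phi_z-\Phi_\zeta=(z-\zeta)(D_0-\zeta)^{-1}\Phi_z .
\]
Hence $(D_0-z)^{-1}\Phi_\zeta=(D_0-\zeta)^{-1}\Phi_z$ for $z\neq\zeta$. Applying $\gamma_D^\Sigma$ to both sides gives $\Phi_{\Bar z}^\ast\Phi_\zeta=\Phi_{\Bar\zeta}^\ast\Phi_z$; the case $z=\zeta$ is trivial.
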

\begin{proof}Fix $z,\zeta\in\rho(D_0)\setminus\{-m\}$ and recall the resolvent identity 
\begin{align*}
(D_0 -\Bar{z})^{-1}-(D_0 -\zeta)^{-1}&=(\Bar{z}-\Bar{\zeta})(D_0 -\Bar{\zeta})^{-1}(D_0 -\Bar{z})^{-1}\\
&\equiv(\Bar{z}-\Bar{\zeta})(D_0 -\Bar{z})^{-1}(D_0 -\Bar{\zeta})^{-1}.
\end{align*}
Applying the trace operator gives 
\begin{align*}
    \Phi_z^\ast-\Phi_{\zeta}^\ast&=(\Bar{z}-\Bar{\zeta})\Phi_{\zeta}^\ast(D_0 -\Bar{z})^{-1}=(\Bar{z}-\Bar{\zeta})\Phi_{z}^\ast(D_0 -\Bar{\zeta})^{-1},
\end{align*}
and taking the adjoint yields the identity   
\[
\Phi_z-\Phi_{\zeta}=(z-\zeta)(D_0 -z)^{-1}\Phi_{\zeta}=(z-\zeta)(D_0 -\zeta)^{-1}\Phi_{z},
\]
and applying again the trace gives the identity $\Phi_{\Bar{z}}^\ast \Phi_{\zeta}= \Phi_{\Bar\zeta}^\ast\Phi_{z}$.

Finally, since the map $\rho(D_0) \ni z \mapsto (D_0 -z)^{-1}$ is holomorphic,  by definition the maps $z\mapsto \Phi_z$ and $z\mapsto \Phi_{\Bar{z}}^\ast $ are holomorphic in $\rho(D_0)$.
\end{proof}

For the subsequent analysis, we define the singular integral operator $H_\Sigma :\mathit{L}^2(\Sigma)  \longrightarrow \mathit{L}^2(\Sigma)$ by
\begin{align}\label{Cauchy2}
		\begin{split}
			H_{\Sigma}h(x)&:= \dfrac{i}{2\pi}\lim\limits_{\rho\searrow 0}\int_{\substack{y\in\Sigma\\ |x-y|>\rho}}\dfrac{h(y)}{(x_1-y_1) +i(x_2-y_2)}\,\mathrm{d}s(y),
		\end{split}
\end{align}
and the single layer boundary operator  $S(\mu) :\mathit{L}^2(\Sigma)  \longrightarrow \mathit{L}^2(\Sigma)$ by

\begin{align}\label{SL}
S(\mu)h(x):= 	\begin{cases}
		-\dfrac{1}{2\pi}\int_{\Sigma}\log|x-y|h(y)\,\mathrm{d}s(y), \quad \text{if } &\, \mu=0,\\
		\dfrac{1}{2\pi}\int_{\Sigma}K_0(-i\sqrt{\mu}|x-y|)h(y)\,\mathrm{d}s(y), \quad \text{if }& \, \mu\in(\infty,0).
	\end{cases}
\end{align}
Here $K_0$ is the modified Bessel function of the second kind. The next lemma states some known results on the pseudodifferential properties of $H_\Sigma$ and $S(\mu)$.  In the statement and what follows, we use the notation $\Op \cS^n_l(\Sigma)$ to denote the class of classical pseudodifferential operators of order $n$ acting on sections of $\Sigma\times \cc^l$.

Set 
\begin{align}\label{Def T}
T\,:=\, \text{the multiplication operator by $ t_1+ i t_2$ in $L^2(\Sigma),$}
\end{align}
where $t=(t_1,t_2)^{\top}:=(-\nu_2,\nu_1)^{\top}$ is the tangent vector field on $\Sigma$. 
\begin{lemma}\label{H properties} Let  $H_\Sigma$ and $S(\mu)$ be as above. Then the following hold:
\begin{itemize}
\item[(i)] For any $\mu\in(\infty,0]$, $S(\mu)$ is a  classical elliptic pseudodifferential operator of order $-1$, and its  principal symbol is given by $p_{S(\mu)}(x,\xi)= \frac{1}{2|\xi|}$ for all $\xi\in T^\ast \Sigma\setminus\{0\}$. In addition, $S(\mu)$ is self-adjoint and non-negative for $\mu\in(\infty,0)$.

 \item[(ii)] $H_\Sigma\in \Op \cS^0_1(\Sigma)$  with principal symbol
\[
	p_{H_\Sigma}(x,\xi) = -\frac{1}{2}T(x) \mathrm{sgn}(\xi):= -\frac{1}{2}T(x)\frac{\xi}{|\xi|},\quad \text{ for all } \xi\in T^\ast \Sigma\setminus\{0\}.
\]
	Moreover, it holds that $(H_\Sigma\overline{T})^2=\frac{1}{4} I$.
\end{itemize}

\end{lemma}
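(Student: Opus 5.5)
We would work in an arclength parametrization $\gamma:\mathbb R/L\mathbb Z\to\Sigma$, so that $\gamma'(s)=t(\gamma(s))$, i.e. in complex notation $\gamma'(s)=T(\gamma(s))$. Classical pseudodifferential operators on $\Sigma$ are then those on the circle $\mathbb R/L\mathbb Z$, and principal symbols can be read off from the singular part of the kernel at $s=s'$. For both operators the plan is to write the kernel as an explicit model singular kernel plus a remainder that is smooth, or at least smoother by enough orders not to change the principal symbol.

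For (i), take $\mu=0$ first. Write
\[
\log|\gamma(s)-\gamma(s')|=\log\Bigl|\tfrac{L}{\pi}\sin\tfrac{\pi(s-s')}{L}\Bigr|+r(s,s'),
\]
where $r$ is $C^\infty$ on the torus. This holds because $|\gamma(s)-\gamma(s')|/|\tfrac{L}{\pi}\sin\tfrac{\pi(s-s')}{L}|$ is smooth and positive, using smoothness of $\gamma$, injectivity, and $|\gamma'|=1$. The convolution operator with $-\frac1{2\pi}\log|2\sin(\pi(s-s')/L)|$ is the Fourier multiplier with coefficients $\frac{L}{4\pi|k|}$ for $k\neq 0$, i.e. $\frac{1}{2|\xi|}$ at frequency $\xi=2\pi k/L$. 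Hence $S(0)\in\Op\cS^{-1}_1(\Sigma)$ with the stated principal symbol, and it is elliptic. For $\mu<0$ put $\kappa=\sqrt{-\mu}>0$, so that $-i\sqrt\mu=\kappa$. Use the expansion $K_0(\kappa r)=-\log r+c_\kappa+O(r^2\log r)$, which is of the form $-\log r\,(1+r^2a(r^2))+b(r^2)$ with $a,b$ entire. This gives $S(\mu)-S(0)=$ a smooth kernel plus a kernel $|x-y|^2\log|x-y|\cdot(\text{smooth})$, which is of order $-3$. So the principal symbol is again $\frac1{2|\xi|}$. The kernel of $S(\mu)$ is real and symmetric, so $S(\mu)$ is self-adjoint. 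For non-negativity, recall that $\frac1{2\pi}K_0(\kappa|x-y|)$ is the kernel of $(-\Delta+\kappa^2)^{-1}$. Therefore $S(\mu)=\gamma_D(-\Delta+\kappa^2)^{-1}\gamma_D^*$, and
\[
\langle S(\mu)h,h\rangle=\|(-\Delta+\kappa^2)^{-1/2}\gamma_D^*h\|^2\ge 0.
\]
This is justified first for smooth $h$ and then by density.

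For (ii), use $\gamma(s)-\gamma(s')=T(\gamma(s))(s-s')\bigl(1+(s-s')q(s,s')\bigr)$ with $q$ smooth. This gives
\[
\frac{1}{\gamma(s)-\gamma(s')}=\frac{\overline{T(\gamma(s))}}{s-s'}+(\text{smooth}),
\]
so $H_\Sigma$ equals $\frac{i}{2\pi}$ times a multiplication by a unimodular function composed with $\pv\frac{1}{s-s'}$, modulo smoothing operators. On the circle, $\pv\frac1{s-s'}$ (more precisely its periodized version $\frac{\pi}{L}\cot\frac{\pi(s-s')}{L}$) has symbol $-i\pi\,\mathrm{sgn}\,\xi$. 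Hence $H_\Sigma\in\Op\cS^0_1(\Sigma)$ with principal symbol $\pm\frac12(\text{unimodular tangent factor})\,\mathrm{sgn}\,\xi$. We would then track the orientation of $\Sigma$ and the sign convention for $\xi$ to match this with $-\frac12T(x)\,\mathrm{sgn}(\xi)$ as stated.

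For the exact identity $(H_\Sigma\overline T)^2=\frac14 I$, the principal-symbol computation only gives this modulo $\Op\cS^{-1}$, so we would argue globally instead. Rewrite $H_\Sigma\overline T$ in terms of the Cauchy singular integral $C_\Sigma g(x)=\frac1{2\pi i}\pv\int_\Sigma\frac{g(w)}{w-x}\,dw$, with $dw=T\,ds$, up to conjugation and a sign. Then invoke the Plemelj–Sokhotski formulas $C^\pm=\pm\frac12 I+C_\Sigma$ for the boundary values from $\Omega_\pm$ of the Cauchy integral. Since $C^+$ and $-C^-$ are complementary projections (boundary values of holomorphic functions inside and of holomorphic functions vanishing at infinity outside), we get $C_\Sigma^2=\frac14 I$ (Privalov), and the identity follows. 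We expect this step to be the main obstacle: getting the exact placement of $T$ versus $\overline T$, and the orientation conventions, consistent with the definition \eqref{Cauchy2} of $H_\Sigma$, so that $H_\Sigma\overline T$ is exactly (an adjoint or conjugate of) $\pm C_\Sigma$ rather than merely equal to it modulo lower-order terms.
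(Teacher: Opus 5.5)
Your part (i) is correct, and it is more self-contained than the paper, which proves (i) only by citing the literature. The Fourier series of the periodic logarithmic kernel, the $K_0$ expansion showing that $S(\mu)-S(0)$ has order $-3$, and $S(\mu)=\gamma_D(-\Delta+\kappa^2)^{-1}\gamma_D^*$ give the symbol, ellipticity, self-adjointness and non-negativity directly. For (ii) your route is the paper's: identify $H_\Sigma$ with the Cauchy/Hilbert transform on $\Sigma$ and apply Plemelj--Privalov. However, the step you postpone, matching to $-\frac12T\,\mathrm{sgn}\,\xi$ and writing $H_\Sigma\overline{T}$ as $\pm C_\Sigma$ up to conjugation or adjoint, cannot be done, because (ii) as printed is false.

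Your own computation already shows this. From $\frac{1}{\gamma(s)-\gamma(s')}=\frac{\overline{T}}{s-s'}+\text{smooth}$ and the symbol $-i\pi\,\mathrm{sgn}\,\xi$ of $\pv\frac{1}{s-s'}$ you get $p_{H_\Sigma}=\frac{i}{2\pi}\overline{T}(-i\pi\,\mathrm{sgn}\,\xi)=\frac12\overline{T}\,\mathrm{sgn}\,\xi$. This uses the counterclockwise arclength parametrization, which is the orientation of $t=(-\nu_2,\nu_1)$. Its ratio to the printed symbol $-\frac12T\,\mathrm{sgn}\,\xi$ is $-T^2$, which varies along $\Sigma$ and takes both values $\pm1$. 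So no choice of orientation or sign convention for $\xi$ reconciles them.

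Likewise, in complex notation $\frac{1}{2\pi i}\frac{\mathrm{d}w}{w-x}=\frac{i}{2\pi}\frac{T(y)\,\mathrm{d}s(y)}{x-y}$. Hence the Cauchy singular integral is exactly $C_\Sigma=H_\Sigma T$. Plemelj--Privalov then gives $(H_\Sigma T)^2=(TH_\Sigma)^2=\frac14 I$. By contrast, $H_\Sigma\overline{T}=C_\Sigma\overline{T}^{2}$ has principal symbol whose square is $\frac14\overline{T}^4\neq\frac14$.

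A concrete check on the unit circle, where $T=ie^{i\phi}$:
\begin{itemize}
\item $H_\Sigma e^{ik\phi}=-\frac{i}{2}e^{i(k-1)\phi}$ for $k\ge1$, and $H_\Sigma e^{ik\phi}=\frac{i}{2}e^{i(k-1)\phi}$ for $k\le0$.
\item Hence $H_\Sigma Te^{ik\phi}=\frac12e^{ik\phi}$ for $k\ge0$ and $-\frac12e^{ik\phi}$ for $k<0$.
\item $H_\Sigma\overline{T}$ lowers the frequency by $2$, so $(H_\Sigma\overline{T})^2$ lowers it by $4$ and cannot equal $\frac14 I$.
\end{itemize}

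The correct statement is (ii) with $T$ and $\overline{T}$ interchanged and the sign adjusted to the counterclockwise orientation: $p_{H_\Sigma}=\frac12\overline{T}\,\mathrm{sgn}\,\xi$ and $(H_\Sigma T)^2=\frac14 I$. Your argument proves exactly this once you stop trying to match the printed formula. The paper's own proof has the same slip, since it asserts that $-H_\Sigma\overline{T}$ is the Hilbert transform.

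The slip is harmless downstream. The later uses only need $|T|=1$: the symbol of $\mathscr C_z$ squaring to $\frac14 I_2$, its anticommutation with $\sigma_3$, and the determinant in Lemma~\ref{Pi_notFredholm}. The remaining uses are identities modulo smoothing operators. For instance, in Lemma~\ref{Cummutator smooth} one has $H_\Sigma H_\Sigma^*-\frac14=C_\Sigma(C_\Sigma^*-C_\Sigma)$, where $C_\Sigma-C_\Sigma^*$ is the smoothing Kerzman--Stein operator.
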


\begin{proof}  That $S(\mu)$ is a  classical elliptic pseudodifferential operator of order $-1$ with  $p_{S_\Sigma}(x,\xi)= \frac{1}{2|\xi|}$  is  proved  in \cite[§ 11.4.1]{HW}, see also \cite[Proposition 11.2]{T1}. Now,  that $S(\mu)$  is self-adjoint and non-negative for $\mu\in(\infty,0)$ is shown in \cite[Sec. 2]{BCP}, and this gives (i).

To proof of (ii) relies on the observation that $\mathcal H=-H_\Sigma \overline{T}$ corresponds to the Hilbert transform on $\Sigma$ (see e.g. \cite[Sec.~5]{bell}). As is well-known \cite[§ 11.3.1, p. 693]{HW},  this operator is a classical pseudodifferential operator of order zero with principal symbol given by $p_{\mathcal H}(x,\xi)= \frac{1}{2}\mathrm{sgn}(\xi)$, and satisfy $4 \mathcal H^2 =I$, see e.g. \cite[§  27]{musk}. 
\end{proof}

 We next consider the singular integral operator $\mathscr{C}_{z}: L^2(\Sigma,\cc^2)\longrightarrow L^2(\Sigma,\cc^2)$ defined by 
\begin{align}\label{SIO}
  \mathscr{C}_{z}g(x) &= \pv \int_{\Sigma} \varphi_{z}(x,y)g(y)\,\mathrm{d}s(y),\quad x\in\Sigma,
\end{align}
if $z\in\rho(D_0)\setminus\{-m\}$, and in view of Lemma \ref{Phi_identities}, for a fixed  $\zeta\in\rr\setminus(\spec (D_0)\cap \{-m\}$,  we also admit the value $z=-m$ by setting
 \begin{equation}\label{C_-m}
 \mathscr C_{-m}=\mathscr C_{\zeta}- (m+\zeta)\Phi^\ast_{-m}\Phi_{\zeta}.
 \end{equation}
The basic properties of $\mathscr{C}_{z}$  are summarized in the following proposition. 
\begin{proposition}\label{Basic_C_z} For any  $z\in\rho(D_0)$ we have 
\begin{align}\label{IdenC}
\mathscr C_z= \Theta + (z I_2+m\sigma_3)(S(0) \otimes I_2)  + T_z, \quad 
\Theta:=   \begin{pmatrix} 
	       0& H_\Sigma\\ 
	H^*_\Sigma & 0 \, 
	   \end{pmatrix},
\end{align}
where $T_z$ is bounded from $H^{-\frac{1}{2}}(\Sigma;\cc^2)$ to $H^{\frac{1}{2}}(\Sigma;\cc^2)$. In particular,  the following hold:
\begin{itemize}
			\item[$\textup{(i)}$] For any $s\in[-\frac{1}{2},\frac{1}{2}]$ the map $\mathscr{C}_{z}$ gives rise to a bounded operator  $\mathscr{C}_{z}: H^s(\Sigma,\cc^2)\longrightarrow H^s(\Sigma,\cc^2)$. 
            \item[$\textup{(ii)}$] For any $s\in[-\frac{1}{2},\frac{1}{2})$ there is a compact operator $K_z$ in $H^s(\Sigma,\cc^2)$ such that $\mathscr{C}_z^2= \frac{1}{4}I + K_z$ as an operator in $H^s(\Sigma,\cc^2)$. 
            \item[$\textup{(iii)}$] If $z\in\rho(D_0)\cap\rr$, then $\mathscr{C}_{z}$ is self-adjoint in $L^2(\Sigma,\cc^2)$.
		\end{itemize}
\end{proposition}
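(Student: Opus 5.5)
The plan is to start from the kernel decomposition of Lemma \ref{lemma:varphi_asymp}, restricted to $x,y\in\Sigma$:
\[
\varphi_z(x,y)=\theta(x-y)-\tfrac{m\sigma_3+zI_2}{2\pi}\log|x-y|+\widetilde\theta_z(x,y).
\]

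\textbf{The singular part.} With $\omega=x_1+ix_2$, the matrix $\sigma\cdot(x-y)$ has entries $\overline{\omega-\omega'}$ in position $(1,2)$ and $\omega-\omega'$ in position $(2,1)$. Hence
\[
\theta(x-y)=\tfrac{i}{2\pi}\begin{pmatrix}0&(\omega-\omega')^{-1}\\ (\overline{\omega-\omega'})^{-1}&0\end{pmatrix}.
\]
The $(1,2)$ entry, taken in principal value, is exactly the kernel of $H_\Sigma$ from \eqref{Cauchy2}. The $(2,1)$ entry equals $\overline{\frac{-i}{2\pi}(\omega-\omega')^{-1}}$. Since the kernel of $H_\Sigma^*$ is $\overline{k(y,x)}$ with $k(y,x)=\frac{i}{2\pi}(\omega'-\omega)^{-1}$, this entry is the kernel of $H_\Sigma^*$. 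So the principal value part gives $\Theta$.

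\textbf{The logarithmic part.} By \eqref{SL}, the logarithmic term is $(zI_2+m\sigma_3)(S(0)\otimes I_2)$.

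\textbf{The remainder $T_z$.} We define $T_z$ as the operator with kernel $\widetilde\theta_z$ restricted to $\Sigma\times\Sigma$. The main obstacle is showing that it maps $H^{-1/2}\to H^{1/2}$; mere continuity of $\widetilde\theta_z$ is not enough. I would revisit the expansions in Lemma \ref{lemma:varphi_asymp} more carefully, using the full series of $\mathbb U(\alpha,1,\zeta)$ and $\mathbb U(\alpha,2,\zeta)$ from \cite{Olver}. These show that near the diagonal $\widetilde\theta_z$ is a sum of terms of the form
\[
(\text{smooth})+(\text{smooth})\,|x-y|^2\log|x-y|+(\text{smooth})\,(x-y)\log|x-y|.
\]
One further correction is needed. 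The factor $F=e^{-\zeta/2-i\frac b2x\wedge y}$ multiplying $\theta$ contributes $(F-1)\theta$. Its leading part $-\tfrac{ib}{2}(x\wedge y)\,\theta(x-y)$ is bounded, since $x\wedge y=x\wedge(y-x)$ is $O(|x-y|)$, and it is homogeneous of degree $0$. I would expand it as $x\wedge(y-x)\,\sigma\cdot(x-y)/|x-y|^2$, which is smooth times a degree-zero homogeneous kernel.

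On a smooth curve, parametrized locally by arclength, all these kernels are classical pseudodifferential kernels of order $\le -2$. This includes the degree-zero homogeneous odd piece restricted to a curve: its Fourier transform in one dimension gives order $-1$, but after accounting for the oddness and smooth curve geometry it should be at most order $-1$, which already suffices. Away from the diagonal the kernel is smooth. Thus $T_z\in\Op\cS^{-1}_2(\Sigma)$, which maps $H^{-1/2}\to H^{1/2}$. For $z=-m$, I would use \eqref{C_-m} together with Lemma \ref{Phi_identities}: the correction $(m+\zeta)\Phi_{-m}^*\Phi_\zeta$ maps $H^{-1/2}\to L^2\to H^{1/2}$, and $\zeta\mapsto -m$ only changes the coefficient of $S(0)$.

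\textbf{Part (i).} By Lemma \ref{H properties}, $\Theta$ is of order $0$ and $S(0)$ is of order $-1$, and $T_z$ is bounded as above. Hence $\mathscr C_z$ is bounded on $H^s$ for $|s|\le 1/2$.

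\textbf{Part (ii).} Write $\mathscr C_z=\Theta+R_z$, where $R_z$ maps $H^s\to H^{s+1}$ for $s\in[-1/2,1/2)$ (for $s>-1/2$, interpolate or use the pseudodifferential order), so $R_z$ is compact on $H^s$. Then $\mathscr C_z^2=\Theta^2+(\text{compact})$, and
\[
\Theta^2=\mathrm{diag}(H_\Sigma H_\Sigma^*,\,H_\Sigma^*H_\Sigma).
\]
By Lemma \ref{H properties}(ii), $H_\Sigma$ has principal symbol $-\tfrac12 T\,\mathrm{sgn}\xi$ and $|T|=1$. Therefore $H_\Sigma H_\Sigma^*$ has principal symbol $\tfrac14$, so $H_\Sigma H_\Sigma^*-\tfrac14$ has order $-1$ and is compact on $H^s$. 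The same holds for $H_\Sigma^*H_\Sigma$.

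\textbf{Part (iii).} For real $z$, \eqref{var_ast} gives $\varphi_z(x,y)^*=\varphi_z(y,x)$. The principal value truncations are symmetric, so $\mathscr C_z$ is symmetric and bounded on $L^2$, hence self-adjoint. The case $z=-m$ follows by a limit, using holomorphy in $z$.
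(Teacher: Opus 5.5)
Your proposal is correct and follows the paper's proof: the same kernel splitting from Lemma \ref{lemma:varphi_asymp} into $\Theta$, $(zI_2+m\sigma_3)(S(0)\otimes I_2)$ and a negative-order remainder, the reduction of $z=-m$ via \eqref{C_-m}, the principal-symbol identity (symbol squared equals $\frac14 I_2$) for (ii), and \eqref{var_ast} for (iii). The one difference is that you analyse the remainder's kernel explicitly, rightly noting that continuity alone does not give $H^{-\frac12}\to H^{\frac12}$, whereas the paper invokes Taylor's layer-potential calculus to get order $-2$; a small correction there is that $x\wedge(y-x)\,\sigma\cdot(x-y)/|x-y|^2$ is even, not odd, in $x-y$, so on $\Sigma\times\Sigma$ (writing $y-x=(s'-s)G(s,s')$ with $G\neq0$) it is actually smooth, consistent with the paper's order $-2$ rather than your more cautious $-1$.
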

\begin{proof}   Since $\Phi^\ast_{\Bar{z}}\Phi_{-m}$ is bounded from $H^{-\frac{1}{2}}(\Sigma;\cc^2)$ to $H^{\frac{1}{2}}(\Sigma;\cc^2)$ by Proposition \ref{Basic_Phi} (i)-(ii), it gives rise to a compact operator in $H^s(\Sigma,\cc^2)$ for each $s\in[-\frac{1}{2},\frac{1}{2})$. Hence, by definition of $\mathscr C_{-m}$  in \eqref{C_-m}  it suffices to show the result for $z\neq -m$. So, fix $z\in \rho(D_0)\setminus\{-m\}$ and recall the decomposition of the kernel $\varphi_z$ from Lemma \ref{lemma:varphi_asymp}. Then we have
\begin{align*}
\mathscr C_z= \Theta + (z I_2+m\sigma_3)(S(0) \otimes I_2) + T_z, 
\end{align*}
where $\Theta$ is the integral operator in $L^2(\Sigma;\cc^2)$ defined by
\[
\Theta g(x)=\mathrm{p.v } \int_\Sigma\theta(x-y)g(y)\mathrm{d}s(y), \quad g\in L^2(\Sigma;\cc^2),
\]
 with kernel 
\begin{align*}
     \theta (x) \;:=\; \,\frac{i}{2\pi}\,\sigma\!\cdot\!\frac{x}{|x|^2}=\frac{i}{2\pi}\left(\begin{array}{cc}
0& \dfrac{1}{x_1+ix_2} \\
\dfrac{1}{x_1-ix_2} & 0
\end{array}\right),
\end{align*}
and  $T_z$ is an integral operator with a  continuous kernel in a neighborhood of the origin and $C^\infty$ outside it. By standard arguments which treat layer potential operators as pseudodifferential operators (cf. \cite[Chap. 7, Sec. 11]{T1}), it follows that $T_z$ gives a rise to a pseudodifferential operator of order $-2$, i.e., $T_z \in \Op \cS^{-2}_{2}(\Sigma)$. In particular, $T_z: H^{s}(\Sigma;\cc^2) \rightarrow H^{s+2}(\Sigma;\cc^2)$ is bounded for any $s\in\rr$.

A straightforward computation shows that 
 \begin{equation*}
\Theta=   \begin{pmatrix} 
	       0& H_\Sigma\\ 
	H^*_\Sigma & 0 \, 
	   \end{pmatrix}.
\end{equation*} 
This proves the identity \eqref{IdenC} and that $\mathscr{C}_{z}\in \Op \cS^0_2(\Sigma)$ with principal symbol
\begin{align}\label{symbol_C}
    p_{\mathscr{C}_{z}}(x,\xi)= -\frac{1}{2} \begin{pmatrix} 
	       0& T(x) \mathrm{sgn}(\xi) \\ 
	\overline{T}(x) \mathrm{sgn}(\xi)  & 0 \, 
	   \end{pmatrix}.
\end{align}
Thus, $\mathscr{C}_{z}$ maps continuously $H^s(\Sigma;\cc^2)$ into itself for any $s\in\rr$ and this gives (i). Since $ p_{\mathscr{C}_{z}}(x,\xi)^2= \frac{1}{4} I_2$ it follows that $\mathscr{C}_z^2= \frac{1}{4}I$ modulo  $\Op \cS^{-1}_2(\Sigma)$ which shows (ii).  Finally, the fact that $\mathscr{C}_z$ is self-adjoint in $L^2(\Sigma;\cc^2)$  follows directly from the identities \eqref{var_ast}.
\end{proof}

We next state a jump formula describing the trace of $\Phi_z$ defined in \eqref{def_Phi_z}  in terms of the singular integral operator $\mathscr{C}_{z}$.

\begin{proposition}\label{Further_Phi_z} Let $z\in\rho(D_0)$ and let  $\mathscr{C}_{z}$ and $\Phi_z$ be as above. Then, for any $g \in H^{-\frac{1}{2}}(\Sigma;\mathbb{C}^2)$ one has
			\begin{align}\label{eq_C_z_formula}
				\gamma_D^\pm(\Phi_z g)_\pm =    \mathscr{C}_zg   \,\mp \frac{i}{2}  (\sigma \cdot \nu)g. 
			\end{align}
 In particular, $\Phi_z$ gives rise to a bounded operator
 \[
 \Phi_z: H^{\frac{1}{2}}(\Sigma;\mathbb{C}^2)\longrightarrow H^1(\Omega_+;\mathbb{C}^2) \oplus H^1_{\sigma, A}(\Omega_-;\mathbb{C}^2),
 \]
 and for any $s\in[-\frac{1}{2},\frac{1}{2}]$ the maps 
\begin{align*} 
z \mapsto \Phi_z \in \mathcal L(H^{\frac{1}{2}}(\Sigma;\mathbb{C}^2), H^1(\Omega_+;\mathbb{C}^2) \oplus H^1_{\sigma, A}(\Omega_-;\mathbb{C}^2)),\quad   z \mapsto \mathscr C_z\in \mathcal L(H^{s}(\Sigma;\mathbb{C}^2)) 
\end{align*}
are holomorphic in $\rho(D_0)$.
\end{proposition}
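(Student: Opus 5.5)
The plan is to reduce the jump formula \eqref{eq_C_z_formula} to its known non-magnetic analogue, using the kernel decomposition of Lemma \ref{lemma:varphi_asymp}. Then I would extend from smooth densities to $H^{-\frac12}$ by density, and read off the mapping and holomorphy properties from the formula.

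First take $z\neq -m$ and $g\in C^\infty(\Sigma;\cc^2)$. By Lemma \ref{lemma:varphi_asymp}, split
\[
\Phi_z g=\Phi^{\theta} g+\Phi^{\log}_z g+\widetilde\Phi_z g .
\]
Here $\Phi^\theta$ has kernel $\theta(x-y)$, $\Phi^{\log}_z$ has kernel $-\frac{m\sigma_3+zI_2}{2\pi}\log|x-y|$, and $\widetilde\Phi_z$ has the continuous kernel $\widetilde\theta_z$. The last two are a single layer potential and a potential with continuous kernel, so they are continuous across $\Sigma$. Their traces from both sides therefore coincide with the direct values $(zI_2+m\sigma_3)(S(0)\otimes I_2)g$ and $T_zg$, and no jump arises from them.

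The only jump comes from $\Phi^\theta$. Its off-diagonal entries are the Cauchy integrals $\frac{i}{2\pi}\int_\Sigma \frac{g_2(y)}{\omega-\omega'}\,ds(y)$ and the conjugate analogue. Write $ds(y)=\overline{(t_1+it_2)}\,d\omega'$ along $\Sigma$ and apply the classical Plemelj--Sokhotski formula for the Cauchy integral over a smooth closed curve. The boundary values from $\Omega_\pm$ equal the principal value, which is $H_\Sigma g_2$ (resp. $H_\Sigma^* g_1$), plus $\mp\frac12$ times a factor built from the tangent. Computing that factor, the contribution in the first component is $\mp\frac{i}{2}(\nu_1-i\nu_2)g_2$, and in the second it is $\mp\frac{i}{2}(\nu_1+i\nu_2)g_1$. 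Together these are exactly $\mp\frac i2(\sigma\cdot\nu)g$. I expect this bookkeeping of orientation and sign to be the only delicate computation. As a check, one can verify it against the known non-magnetic formula of \cite{BFSV}, whose kernel has the same singular part $\theta$.

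Next extend to $g\in H^{-\frac12}(\Sigma;\cc^2)$. Both sides of \eqref{eq_C_z_formula} are continuous in $g$ for the $H^{-\frac12}$ topology:
\begin{itemize}
\item the left side by Proposition \ref{Basic_Phi}(i) together with Lemma \ref{tracemap};
\item the right side by Proposition \ref{Basic_C_z}(i).
\end{itemize}
Density of smooth functions then gives the identity. The case $z=-m$ follows from the definitions of $\Phi_{-m}$ and \eqref{C_-m}, using the trace of the identity in Lemma \ref{Phi_identities}, since $(D_0+m)^{-1}\Phi_\zeta g\in\dom D_0$ has no jump. In fact the analytic continuation argument below covers it anyway.

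For $g\in H^{\frac12}$, the right side of \eqref{eq_C_z_formula} lies in $H^{\frac12}$ by Proposition \ref{Basic_C_z}(i). Lemma \ref{tracemap} then gives $(\Phi_zg)_\pm\in H^1(\Omega_\pm)$, and $\sigma\cdot\nabla_A(\Phi_z g)_-\in L^2$ because $(D_0-z)\Phi_z g=0$. This yields membership in $H^1(\Omega_+)\oplus H^1_{\sigma,A}(\Omega_-)$. Boundedness follows from the closed graph theorem, or directly from the norm bounds in these estimates.

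Finally, holomorphy. By Lemma \ref{Phi_identities}(ii), $z\mapsto\Phi_z\in\mathcal L(H^{-\frac12},L^2)$ is holomorphic. For $\mathscr C_z$, \eqref{eq_C_z_formula} gives $\mathscr C_z=\frac12(\gamma_D^+P_++\gamma_D^-P_-)\Phi_z$, where $P_\pm$ denotes restriction to $\Omega_\pm$. On the range of $\Phi_z$, the $H_{\sigma,A}$ norm is controlled by the $L^2$ norm, since $\sigma\cdot\nabla_A\Phi_zg=-i(z-m\sigma_3)\Phi_zg$ off $\Sigma$. Hence the traces compose holomorphically. An alternative is to use $\mathscr C_z-\mathscr C_\zeta=(z-\zeta)\Phi^*_{\bar z}\Phi_\zeta$, which follows by taking traces in \eqref{Phi_identity}; this gives holomorphy in $\mathcal L(H^s)$ for $s\in[-\frac12,\frac12]$. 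Holomorphy of $\Phi_z$ into $H^1(\Omega_+)\oplus H^1_{\sigma,A}(\Omega_-)$ follows the same way from \eqref{Phi_identity}, because $(D_0-z)^{-1}\Phi_\zeta$ maps into $\dom D_0\subset H^1_{\mathrm{loc}}$ holomorphically.
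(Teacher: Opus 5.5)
Your proposal is correct and follows essentially the same route as the paper. The paper also splits $\varphi_z$ according to \eqref{IdenC} and obtains the jump relation exactly as in the non-magnetic case; it simply cites \cite[Proposition 3.4]{BHOP20}, which is the Plemelj--Sokhotski plus density argument you spell out. It then handles $z=-m$ via \eqref{Phi_identity}, deduces the $H^1$ regularity from Lemma \ref{tracemap}, and gets holomorphy from that of $(D_0-z)^{-1}$, which you make explicit through $\mathscr C_z-\mathscr C_\zeta=(z-\zeta)\Phi^*_{\bar z}\Phi_\zeta$.

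Two small corrections. First, $(D-z)u=0$ gives $\sigma\cdot\nabla_A u=i(z-m\sigma_3)u$, not $-i(z-m\sigma_3)u$. Second, $H^1_{\mathrm{loc}}$ alone does not control the global $H^1(\Omega_-)$ norm needed in $H^1_{\sigma,A}(\Omega_-)$; you can supply it from the Gaussian decay of $\varphi_z(x,y)$ in $|x-y|$ away from $\Sigma$.
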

\begin{proof}  Thanks to the decomposition  \eqref{IdenC},  the identity \eqref{eq_C_z_formula} is proved exactly as in \cite[Proposition 3.4]{BHOP20} for $z\in\rho(D_0)\setminus\{-m\}$. The same holds for $z=-m$ thanks to \eqref{Phi_identity}. 

Since $\nu$ is smooth and $\mathscr{C}_z$ is bounded in $H^{\frac{1}{2}}(\Sigma;\cc^2)$, it follows that $\gamma_D^\pm(\Phi_zg)_\pm\in H^{\frac{1}{2}}(\Sigma;\cc^2)$ for any $g\in H^{\frac{1}{2}}(\Sigma;\cc^2)$. Therefore, Lemma \ref{tracemap} implies that  $\Phi_zg\in H^1(\Omega_+;\mathbb{C}^2) \oplus H^1_{\sigma, A}(\Omega_-;\mathbb{C}^2)$. Finally, the holomorphic properties of $\Phi_z$ and $\mathscr C_z$ are consequence of the holomorphic property of $(D_0-z)^{-1}$.
\end{proof}

\section{The Landau--Dirac operator with $\delta$-interactions: Basic properties}\label{secLDD}
In this section, we discuss the basic spectral properties of the operator defined formally by
	\[
	D_{\epsilon, \tau}= D+ (\epsilon I_2 + \tau \sigma_3)\delta_\Sigma.
	\]
    Here and in the sequel, $\epsilon,\tau: \Sigma\to \rr$ are $C^1$-smooth functions such that 
     \begin{align}\label{assumption1}
     |\epsilon(s)|\neq |\tau(s)|\,\,\, \text{for any }\, s\in\Sigma.
     \end{align}
     As in the non-magnetic case, the operator $D_{\epsilon,\tau}$ can be defined rigorously via a transmission condition. Namely, for $u\in H^1(\Omega_+,\cc^2)\oplus H^1_{\sigma, A}(\Omega_-,\cc^2)$ a straightforward computation in the sense of distributions gives 
    \[
    D_{\epsilon,\tau}u= (D u_+)\oplus(D u_-)+ \left[\frac{1}{2}(\epsilon I_2+\tau\sigma_3)(\gamma_D^+u_+ +\gamma_D^-u_-) +i\sigma\cdot\nu(\gamma_D^+u_+-\gamma_D^-u_-)\right]\delta_\Sigma.
    \]
    Hence, if we assume that the transmission condition
\begin{align}\label{TC}
    \frac{1}{2}(\epsilon I_2+\tau\sigma_3)(\gamma_D^+u_+ +\gamma_D^-u_-) +i\sigma\cdot\nu(\gamma_D^+u_+-\gamma_D^-u_-)=0,
 \end{align}
is satisfied in $H^{\frac{1}{2}}(\Sigma;\cc^2)$, then $D_{\epsilon,\tau}u\in L^2(\rr^2;\cc^2)$. This leads us to define $D_{\epsilon,\tau}$ as follows
	\begin{equation} \label{def_A_eta_tau}
		\begin{split}
			D_{\epsilon,\tau}u = (D u_+) \oplus (D u_-), \\
			\dom D_{\epsilon,\tau} = \Big\{ u = u_+ \oplus u_- \in  H^1(\Omega_+,\cc^2)&\oplus H^1_{\sigma, A}(\Omega_-,\cc^2): \\
            &\, \eqref{TC} \text{ holds in } H^{\frac{1}{2}}(\Sigma;\cc^2) \Big\}.
		\end{split}
	\end{equation}

\begin{lemma}\label{Closable} Let $D_{\epsilon,\tau}$ be defined as above. Then, $D_{\epsilon,\tau}$ is densely defined and symmetric. In particular, $D_{\epsilon,\tau}$ is closable. 
\end{lemma}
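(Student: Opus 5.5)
Density is immediate: every $u\in C^\infty_0(\Omega_+;\cc^2)\oplus C^\infty_0(\Omega_-;\cc^2)$ lies in $H^1(\Omega_+;\cc^2)\oplus H^1_{\sigma,A}(\Omega_-;\cc^2)$ and has $\gamma_D^\pm u_\pm=0$, so \eqref{TC} holds trivially. Hence $\dom D_{\epsilon,\tau}$ contains this space, which is dense in $L^2(\rr^2;\cc^2)=L^2(\Omega_+;\cc^2)\oplus L^2(\Omega_-;\cc^2)$ because $\Sigma$ has Lebesgue measure zero. Once symmetry is shown, closability follows since a densely defined symmetric operator satisfies $D_{\epsilon,\tau}\subset D_{\epsilon,\tau}^*$ and adjoints are closed.

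For symmetry, take $u,v\in\dom D_{\epsilon,\tau}$. The term $m\sigma_3$ is bounded and self-adjoint, so it suffices to treat $-i\sigma\cdot\nabla_A$. Applying Lemma \ref{Green} on $\Omega_+$ and $\Omega_-$ (with $u_\pm,v_\pm\in H^1_{\sigma,A}(\Omega_\pm;\cc^2)$) and summing gives
\[
\langle D_{\epsilon,\tau}u,v\rangle-\langle u,D_{\epsilon,\tau}v\rangle=\langle -i\sigma\cdot\nu\,\gamma_D^+u_+,\gamma_D^+v_+\rangle_{L^2(\Sigma)}+\langle i\sigma\cdot\nu\,\gamma_D^-u_-,\gamma_D^-v_-\rangle_{L^2(\Sigma)}.
\]
Write $\langle u\rangle=\tfrac12(\gamma_D^+u_++\gamma_D^-u_-)$ and $[u]=\gamma_D^+u_+-\gamma_D^-u_-$. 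A short algebraic rearrangement turns the boundary form into
\[
-i\big(\langle\sigma\cdot\nu[u],\langle v\rangle\rangle+\langle\sigma\cdot\nu\langle u\rangle,[v]\rangle\big).
\]
Condition \eqref{TC} reads $i\sigma\cdot\nu[u]=-(\epsilon+\tau\sigma_3)\langle u\rangle$. Substituting this for both $u$ and $v$, and using that $\sigma\cdot\nu$ is Hermitian, the form becomes
\[
\langle(\epsilon+\tau\sigma_3)\langle u\rangle,\langle v\rangle\rangle-\langle\langle u\rangle,(\epsilon+\tau\sigma_3)\langle v\rangle\rangle .
\]
This vanishes because $\epsilon,\tau$ are real, so $\epsilon+\tau\sigma_3$ is a Hermitian matrix pointwise.

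The only delicate point is justifying Lemma \ref{Green} on the unbounded domain $\Omega_-$, where functions lie only in $H^1_{\sigma,A}$ and need not belong to $H^1$ globally. This is exactly the setting in which Lemma \ref{Green} is stated, relying on Proposition \ref{density_result}, so it may be invoked directly. All traces lie in $H^{1/2}(\Sigma)$, so every pairing above is a genuine $L^2(\Sigma)$ inner product.
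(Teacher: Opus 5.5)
Your proof is correct. Density and closability are handled exactly as in the paper. For symmetry you take a genuinely different and more direct route.

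Both arguments start from Lemma \ref{Green}, but they diverge from there.

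\textbf{The paper's route.} It invokes the reformulations of the transmission condition from Lemma \ref{TC_alter} and splits $\Sigma$ into $J_{-4}=\{\epsilon^2-\tau^2=-4\}$ and its complement. On the complement it uses $\gamma_D^+u_+=Q_+\gamma_D^-u_-$ together with the matrix identity $Q_+^\ast(\sigma\cdot\nu)Q_+=\sigma\cdot\nu$. On $J_{-4}$ it uses the projector relations $P_\pm\gamma_D^\pm u_\pm=0$ and $P_\pm^\ast(i\sigma\cdot\nu)=(i\sigma\cdot\nu)P_\mp$.

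\textbf{Your route.} You rewrite the boundary form in terms of the mean $\tfrac12(\gamma_D^+u_++\gamma_D^-u_-)$ and the jump $\gamma_D^+u_+-\gamma_D^-u_-$. It does expand exactly to $-i\bigl(\langle\sigma\cdot\nu[u],\langle v\rangle\rangle+\langle\sigma\cdot\nu\langle u\rangle,[v]\rangle\bigr)$. You then substitute \eqref{TC} in its original form, so everything reduces pointwise to the fact that $\epsilon I_2+\tau\sigma_3$ is Hermitian.

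\textbf{Comparison.} Your argument needs no case distinction, no invertibility of $P_\pm$ or $Q_\pm$, and does not use the assumption $|\epsilon|\neq|\tau|$ at all, only that $\epsilon,\tau$ are real. It is therefore shorter and more general. The paper's route instead works through the structure of the transmission condition from Lemma \ref{TC_alter}, notably the decoupling $P_+\gamma_D^+u_+=0$, $P_-\gamma_D^-u_-=0$ in the confining case. That structure is needed anyway later, e.g.\ in Proposition \ref{Spec_H}.
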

Before presenting the proof of Lemma \ref{Closable}, we provide an alternative formulation of the transmission condition \eqref{TC} in the following lemma. 
\begin{lemma}\label{TC_alter} Let $\epsilon,\tau\in C^1(\Sigma)$ satisfy \eqref{assumption1} and set 
\begin{equation*}\label{Projection}
    P_\pm=\frac{1}{2}\left(I_2 \mp \frac{i}{2}\sigma\cdot\nu(\epsilon I_2+\tau\sigma_3) \right).
\end{equation*}
Then, the following is true:  
\begin{itemize}
    \item[(i)] For those $s\in\Sigma$ such that   $\epsilon^2(s)-\tau^2(s)=-4$ we have that  $P_+$ and $P_-$ are projectors (i.e., $P_+ +P_-=I_2$ and $P_\pm^2= P_\pm$) and satisfy 
    \begin{align}\label{P_pm}
   P_\pm^\ast=\frac{1}{2}\left(I_2 \pm \frac{i}{2}\sigma\cdot\nu(\epsilon I_2-\tau\sigma_3) \right)\quad\text{and}\quad   P_\pm^\ast(i\sigma\cdot\nu)=(i\sigma\cdot\nu)P_\mp.
    \end{align}
  In particular, if  $\epsilon^2-\tau^2\equiv-4$ on $\Sigma$, then  for any $u=u_+\oplus u_-\in\dom D_{\epsilon,\tau}$ the transmission condition \eqref{TC} is equivalent to
    \[
    P_+ \gamma_D^+u_+=0 \quad\text{and}\quad  P_- \gamma_D^-u_-=0,
    \]
  and $D_{\epsilon,\tau}$ decomposes into the direct sum $D^{\Omega_+}_{\epsilon,\tau}\oplus D^{\Omega_-}_{\epsilon,\tau}$, where $D^{\Omega_\pm}_{\epsilon,\tau}$ are the Dirac operators defined by
    \begin{align}\label{H_Omega}
		\begin{split}
			D^{\Omega_\pm}_{\epsilon,\tau}u_\pm =& D u_\pm , \\
			\dom D^{\Omega_+}_{\epsilon,\tau} =& \{ v \in  H^1(\Omega_+,\cc^2): P_+ \gamma_D^+v=0  \},\\
            \dom D^{\Omega_-}_{\epsilon,\tau} =& \{ v \in H^1_{\sigma, A}(\Omega_-,\cc^2): P_- \gamma_D^-v=0  \}.
		\end{split}
	\end{align}
    \item[(ii)] For those $s\in\Sigma$ such that   $\epsilon^2(s)-\tau^2(s)\neq-4$ we have that   $P_+$ and $P_-$ are invertible with 
    \[
    (P_\pm)^{-1}=\frac{16}{\epsilon^2-\tau^2+4}P_{\mp}.
    \]
    In this case, for any $u=u_+\oplus u_-\in\dom D_{\epsilon,\tau}$ the transmission condition \eqref{TC} reads as follows
    \[
  Q_\pm \gamma_D^\mp u_\mp(s)=  \gamma_D^\pm u_\pm(s),
    \]
    where $Q_\pm$ are the invertible matrices
    \[
    Q_\pm= \frac{4}{\epsilon^2-\tau^2+4}\left(\frac{4 -\epsilon^2+\tau^2}{4}I_2 \pm i\sigma\cdot\nu(\epsilon I_2+\tau\sigma_3)) \right).
    \]
\end{itemize}
\end{lemma}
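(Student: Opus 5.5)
The plan is to reduce everything to pointwise $2\times 2$ matrix algebra at a fixed $s\in\Sigma$, based on the single matrix
\[
N:=\frac{i}{2}\,\sigma\cdot\nu\,B,\qquad B:=\epsilon I_2+\tau\sigma_3,\qquad\text{so that}\qquad P_\pm=\tfrac12(I_2\mp N).
\]
The key identity is $N^2=-\frac{\epsilon^2-\tau^2}{4}I_2$. To see it, note that $\sigma_3$ anticommutes with $\sigma\cdot\nu$, which gives $B\,\sigma\cdot\nu=\sigma\cdot\nu\,(\epsilon I_2-\tau\sigma_3)$. Combined with $(\sigma\cdot\nu)^2=I_2$, this yields $(\sigma\cdot\nu B)^2=(\epsilon-\tau\sigma_3)(\epsilon+\tau\sigma_3)=(\epsilon^2-\tau^2)I_2$. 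Every statement in the lemma will follow from this identity together with $P_+P_-=P_-P_+=\frac14(I_2-N^2)$.

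\emph{Case (i), where $\epsilon^2-\tau^2=-4$.} Here $N^2=I_2$, so $P_+P_-=0$, $P_++P_-=I_2$ and $P_\pm^2=\frac14(I_2\mp2N+N^2)=P_\pm$. For the adjoint I compute $N^*=-\frac i2 B\,\sigma\cdot\nu=-\frac i2\sigma\cdot\nu(\epsilon I_2-\tau\sigma_3)$, which gives the stated formula for $P_\pm^*$. The intertwining relation $P_\pm^*(i\sigma\cdot\nu)=(i\sigma\cdot\nu)P_\mp$ is checked by expanding both sides. Using $\sigma\cdot\nu(\epsilon I_2-\tau\sigma_3)\sigma\cdot\nu=B$, both sides reduce to $\frac i2\sigma\cdot\nu\mp\frac14 B$.

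\emph{Rewriting \eqref{TC}.} Multiply \eqref{TC} on the left by $(i\sigma\cdot\nu)^{-1}=-i\sigma\cdot\nu$ to get
\[
(I_2-N)\gamma_D^+u_+-(I_2+N)\gamma_D^-u_-=0,\qquad\text{i.e.}\qquad P_+\gamma_D^+u_+=P_-\gamma_D^-u_-.
\]
In case (i), applying $P_+$ to this identity and using $P_+^2=P_+$, $P_+P_-=0$ gives $P_+\gamma_D^+u_+=0$. Applying $P_-$ instead gives $P_-\gamma_D^-u_-=0$. The converse is immediate. Since the two conditions no longer couple $u_+$ and $u_-$, the domain in \eqref{def_A_eta_tau} splits as the direct sum of the domains in \eqref{H_Omega}, and $D_{\epsilon,\tau}=D^{\Omega_+}_{\epsilon,\tau}\oplus D^{\Omega_-}_{\epsilon,\tau}$.

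\emph{Case (ii), where $\epsilon^2-\tau^2\neq-4$.} Here $P_\pm P_\mp=\frac{4+\epsilon^2-\tau^2}{16}I_2$ is a nonzero scalar multiple of $I_2$, which gives $P_\pm^{-1}=\frac{16}{\epsilon^2-\tau^2+4}P_\mp$. The rewritten transmission condition then becomes
\[
\gamma_D^+u_+=P_+^{-1}P_-\gamma_D^-u_-=\frac{16}{\epsilon^2-\tau^2+4}P_-^2\gamma_D^-u_-.
\]
Expanding $P_-^2=\frac14\bigl((1-\frac{\epsilon^2-\tau^2}{4})I_2+2N\bigr)$ produces exactly $Q_+$. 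The formula for $Q_-$ follows symmetrically from $\gamma_D^-u_-=P_-^{-1}P_+\gamma_D^+u_+$. Invertibility of $Q_\pm$ holds because each is a product of the invertible matrices $P_\mp^{-1}$ and $P_\pm$.

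The main point of care is that these are pointwise identities, whereas \eqref{TC} is required in $H^{\frac12}(\Sigma;\cc^2)$. Since $\epsilon,\tau\in C^1(\Sigma)$ and $\nu$ is smooth, multiplication by $P_\pm$, $P_\pm^{-1}$ and $Q_\pm$ preserves $H^{\frac12}(\Sigma;\cc^2)$. Hence all the equivalences hold in that space, and in the setting of \eqref{H_Omega} the traces are well defined by Lemma~\ref{tracemap}. The only other potential obstacle is sign bookkeeping in the adjoint and intertwining identities, which I will settle by expanding both sides explicitly as above.
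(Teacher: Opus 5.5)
Your proposal is correct and takes the same route as the paper, which only says that the lemma follows from straightforward pointwise matrix computations, as in the constant-coefficient case of \cite{BHOP20}, and omits the details. Organizing those computations around $N^2=-\frac{\epsilon^2-\tau^2}{4}I_2$ and the rewriting of \eqref{TC} as $P_+\gamma_D^+u_+=P_-\gamma_D^-u_-$ supplies the omitted verification, and your formulas for $P_\pm^\ast$, the intertwining relation, $P_\pm^{-1}$ and $Q_\pm$ all check out.
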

\begin{proof} The proof follows by straightforward computations and proceeds along the same lines as those in the proof of \cite[Lemma 4.1]{BHOP20} where the case of constant interactions $\epsilon$ and $\tau$ is treated, we omit the details.
\end{proof}

We can now give the proof of Lemma \ref{Closable}.\\

\textbf{Proof of Lemma \ref{Closable}.} Since inclusions $C^\infty_0(\rr^2\setminus\Sigma;\cc^2)\subset \dom D_{\epsilon,\tau}\subset L^2(\rr^2;\cc^2)$ holds and $C^\infty_0(\rr^2\setminus\Sigma;\cc^2)$ is a dense subspace of $L^2(\rr^2;\cc^2)$ it follows that $D_{\epsilon,\tau}$ is a densely defined. We now check that $D_{\epsilon,\tau}$ is symmetric. Let  $u,v\in\dom D_{\epsilon,\tau}$, and set $J_{-4}:=\{s\in \Sigma: \epsilon^2(s)-\tau^2(s)=-4\}$ and $J^c_{-4}:=\Sigma\setminus J_{-4}$. 
Then a simple computation using Lemma \ref{Green} and the fact that the vector potential $A$ is smooth gives 
\begin{align*}
   & \langle D_{\epsilon,\tau} u, v\rangle_{L^2(\rr^2;\cc^2)}-\langle u, D_{\epsilon,\tau} v \rangle_{L^2(\rr^2;\cc^2)}
   \\=&\langle - i\sigma\cdot\nu\gamma_D^+ u_+, \gamma_D^+ v_+ \rangle_{L^2(\Sigma;\cc^2)}+ \langle  i\sigma\cdot\nu\gamma_D^- u_-, \gamma_D^- v_- \rangle_{L^2(\Sigma;\cc^2)}\\
    =& \int_{J_{-4}}  \langle - i\sigma\cdot\nu\gamma_D^+ u_+,{ \gamma_D^+ v_+}\rangle_{\mathbb C^2}\,\mathrm{d}s +\int_{J^c_{-4}} \langle-i\sigma\cdot\nu\gamma_D^+ u_+,{ \gamma_D^+ v_+}\rangle_{\mathbb C^2}\, \mathrm ds\\
    +&  \int_{J_{-4}} \langle i\sigma\cdot\nu\gamma_D^- u_-,{ \gamma_D^- v_-} \rangle_{\mathbb C^2}\,\mathrm ds + \int_{J^c_{-4}}  \langle i\sigma\cdot\nu\gamma_D^- u_-,{ \gamma_D^- v_-} \rangle_{\mathbb C^2}\,\mathrm{d}s.
\end{align*}
Now, by Lemma \ref{TC_alter}  we have $Q_+ \gamma_D^- u_-=  \gamma_D^+ u_+$ and $Q_+ \gamma_D^- v_-=  \gamma_D^+ v_+$  when $\epsilon^2(s)-\tau^2(s)\neq-4$, then 
    \begin{align*}
   &\int_{J^c_{-4}} \langle- i\sigma\cdot\nu\gamma_D^+ u_+,{ \gamma_D^+ v_+} \rangle_{\mathbb C^2}\,\mathrm{d}s+\int_{J^c_{-4}}  \langle i\sigma\cdot\nu\gamma_D^- u_-,{ \gamma_D^- v_-} \rangle_{\mathbb C^2}\,\mathrm{d}s\\
    =&  \int_{J^c_{-4}}  \langle- i\sigma\cdot\nu Q_+\gamma_D^- u_-, {Q_+\gamma_D^- v_-}  \rangle_{\mathbb C^2}\,\mathrm{d}s + \int_{J^c_{-4}}  \langle i\sigma\cdot\nu\gamma_D^- u_-, \gamma_D^- v_-\rangle_{\mathbb C^2}\,\mathrm{d}s \\
    =& \int_{J^c_{-4}} \langle i[ (\sigma\cdot\nu) - Q_+^\ast(\sigma\cdot\nu)Q_+]\gamma_D^- u_-, \gamma_D^- v_- \rangle_{\mathbb C^2}\,\mathrm{d}s=0.   
\end{align*}
 where we used the relation  $Q_+^\ast(\sigma\cdot\nu)Q_+ =\sigma\cdot\nu$. Similarly, in the case where $\epsilon^2(s)-\tau^2(s)=-4$, Lemma \ref{TC_alter} implies that $P_\pm\gamma_D^\pm u_\pm=P_\pm\gamma_D^\pm v_\pm=0$. Combining this with the property \eqref{P_pm} yields
 \begin{align*}
&\int_{J_{-4}} \langle- i\sigma\cdot\nu\gamma_D^+ u_+,{ \gamma_D^+ v_+} \rangle_{\mathbb C^2}\,\mathrm{d}s+\int_{J_{-4}}  \langle i\sigma\cdot\nu\gamma_D^- u_-,{ \gamma_D^- v_-} \rangle_{\mathbb C^2}\,\mathrm{d}s\\ 
=&\int_{J_{-4}} \langle- i\sigma\cdot\nu P_-\gamma_D^+ u_+,P_-{ \gamma_D^+ v_+} \rangle_{\mathbb C^2}\,\mathrm{d}s+\int_{J_{-4}}  \langle i\sigma\cdot\nu P_+\gamma_D^- u_-,P_+{ \gamma_D^- v_-} \rangle_{\mathbb C^2}\,\mathrm{d}s\\
=&\int_{J_{-4}} \langle \underbrace{P_-^\ast(- i\sigma\cdot\nu) P_-}_{=0}\gamma_D^+ u_+, \gamma_D^+ v_+ \rangle_{\mathbb C^2}\,\mathrm{d}s+\int_{J_{-4}}  \underbrace{\langle P_+^\ast( i\sigma\cdot\nu) P_+}_{=0}\gamma_D^- u_-, \gamma_D^- v_- \rangle_{\mathbb C^2}\,\mathrm{d}s=0.
\end{align*} 
Therefore $D_{\epsilon,\tau}$ is symmetric. As any symmetric operator on a Hilbert space with a dense domain of definition always admits
a closure, this completes the proof.
\qed

\section{Self-adjointness and other spectral properties of $D_{\epsilon,\tau}$: Non-critical case}\label{S4}
For $z\in\rho(D_0)\setminus\{-m\}$, we define the auxiliary operators   in $L^2(\Sigma;\mathbb C^2)$
\begin{align}\label{def_Pi}
    \Pi_z= \frac{1}{\epsilon^2-\tau^2}(\epsilon I_2-\tau\sigma_3)+ \mathscr{C}_z \quad\text{and}\quad \underline{ \Pi_z}= \frac{1}{\epsilon^2-\tau^2}(\epsilon I_2+\tau\sigma_3)- \mathscr{C}_z,
\end{align}
where $\mathscr{C}_z$ is defined by \eqref{SIO}. Since $\epsilon$ and $\tau$ are $C^1$-smooth, by Proposition \ref{Basic_C_z} it is clear that $\Pi_z$ and $\underline{ \Pi_z}$ are well defined and bounded in $H^s(\Sigma;\cc^2)$ for any $s\in[-\frac{1}{2},\frac{1}{2}]$. 
\begin{lemma}\label{Pi_Fredholm} Assume that $\epsilon(s)^2-\tau(s)^2\neq 4$ for all $s\in\Sigma$. Then, for any $z\in\rho(D_0)$ 
\begin{enumerate}
\item[(i)]  The operator $\Pi_z$ is Fredholm with index zero in $L^2(\Sigma;\cc^2)$,
\item[(ii)] $\ker (D_{\epsilon,\tau}-z)=\Phi_z\{ g\in H^{\frac{1}{2}}(\Sigma;\cc^2): g\in\ker \Pi_z\}$. In particular, it holds that $\mathrm{dim}\ker (D_{\epsilon,\tau}-z)=\mathrm{dim}\ker \Pi_z$.
\end{enumerate}
\end{lemma}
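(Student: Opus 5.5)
The plan is to treat $\Pi_z$ as a zeroth-order pseudodifferential operator with a matrix-valued multiplier and to build an explicit parametrix from $\underline{\Pi_z}$. Throughout, set $d:=\epsilon^2-\tau^2$ and
\[
M:=\tfrac{1}{d}(\epsilon I_2-\tau\sigma_3), \qquad N:=\tfrac1d(\epsilon I_2+\tau\sigma_3),
\]
so that $\Pi_z=M+\mathscr C_z$ and $\underline{\Pi_z}=N-\mathscr C_z$.

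\textbf{Step 1 (algebraic cancellation).} By \eqref{symbol_C}, $\mathscr C_z$ is off-diagonal modulo lower order terms, and $\sigma_3$ anticommutes with off-diagonal matrices. At the level of principal symbols this gives $p_{\mathscr C_z}N=Mp_{\mathscr C_z}$. I will use this to show that the cross terms cancel:
\[
\Pi_z\underline{\Pi_z}=MN-\mathscr C_z^2+(\mathscr C_zN-M\mathscr C_z)=\tfrac1d-\tfrac14+K=\tfrac{4-d}{4d}+K .
\]
Here $K$ collects two contributions. The first is $\mathscr C_z^2-\tfrac14$, which is compact in $L^2(\Sigma;\cc^2)$ by Proposition~\ref{Basic_C_z}(ii) with $s=0$. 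The second is the remainder of $\mathscr C_zN-M\mathscr C_z$. This remainder consists of the lower-order (order $\le -1$) part of $\mathscr C_z$ times bounded multipliers, which is compact, plus commutators of order-zero singular integral operators ($H_\Sigma$, $H_\Sigma^*$) with $C^1$ multiplication operators, which are compact on $L^2(\Sigma)$ by the classical commutator estimates. The same computation applies to $\underline{\Pi_z}\Pi_z$. Since $d\notin\{0,4\}$ on $\Sigma$, the factor $\tfrac{4-d}{4d}$ is an invertible continuous multiplier. Hence $\underline{\Pi_z}$, composed with its inverse, is a left and right parametrix for $\Pi_z$, and $\Pi_z$ is Fredholm.

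\textbf{Step 2 (index zero).} For $\zeta\in\rho(D_0)\cap\rr$, both $M$ and $\mathscr C_\zeta$ are self-adjoint (the latter by Proposition~\ref{Basic_C_z}(iii)), so $\Pi_\zeta$ is a self-adjoint Fredholm operator and has index $0$. For general $z$, write $\mathscr C_z-\mathscr C_\zeta$ via Lemma~\ref{Phi_identities} as $(z-\zeta)\Phi^*_{\bar z}\Phi_\zeta$ modulo the jump terms, which coincide for $z$ and $\zeta$. This operator maps into $H^{1/2}$ and is therefore compact in $L^2$, so $\mathrm{ind}\,\Pi_z=\mathrm{ind}\,\Pi_\zeta=0$.

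\textbf{Step 3 (kernels).} Let $u\in\ker(D_{\epsilon,\tau}-z)$. Then $u\in H_{\sigma,A}(\rr^2\setminus\Sigma;\cc^2)$ and $(D-z)u=0$ off $\Sigma$, so Proposition~\ref{Basic_Phi}(i) gives $u=\Phi_zg$ for a unique $g\in H^{-1/2}$. By \eqref{eq_C_z_formula},
\[
\gamma_D^+u_+-\gamma_D^-u_-=-i(\sigma\cdot\nu)g,
\]
so $g=i\sigma\cdot\nu(\gamma_D^+u_+-\gamma_D^-u_-)\in H^{1/2}$, because $u\in\dom D_{\epsilon,\tau}$. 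Inserting the jump formula into \eqref{TC} gives $(\epsilon I_2+\tau\sigma_3)\mathscr C_zg+g=0$. Multiplying by $(\epsilon I_2+\tau\sigma_3)^{-1}=M$ turns this into $\Pi_zg=0$.

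Conversely, take $g\in H^{1/2}\cap\ker\Pi_z$. Proposition~\ref{Further_Phi_z} gives $\Phi_zg\in H^1(\Omega_+)\oplus H^1_{\sigma,A}(\Omega_-)$, and reversing the computation shows that \eqref{TC} holds, so $\Phi_zg\in\ker(D_{\epsilon,\tau}-z)$. Injectivity of $\Phi_z$ then yields $\dim\ker(D_{\epsilon,\tau}-z)=\dim(\ker\Pi_z\cap H^{1/2})$.

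\textbf{Main obstacle.} It remains to show $\ker\Pi_z\subset H^{1/2}$, i.e.\ an elliptic regularity statement with only $C^1$ coefficients. From Step 1, any $g\in\ker\Pi_z\subset L^2$ satisfies
\[
g=-\Big(\tfrac{4-d}{4d}\Big)^{-1}Kg .
\]
I will try to show that $K$ maps $L^2$ into $H^{1/2}$. Its order $\le -1$ pieces clearly do so. For the commutator pieces $[H_\Sigma,a]$ with $a\in C^1$, I expect smoothing of order close to $1$, in particular $L^2\to H^{1/2}$ boundedness. Should this prove delicate, the fallback is to rerun the Fredholm argument of Steps 1--2 in $H^{1/2}$ and in $L^2$. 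The index vanishes in both spaces, and a dense-range argument would then show the kernels coincide.
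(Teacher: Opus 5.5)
Your Steps 1--3 follow the paper's own proof. The paper also uses $\underline{\Pi_z}$ as a parametrix, splits the cross terms into $[\frac{\epsilon}{\epsilon^2-\tau^2},\mathscr C_z]$ and $\{\frac{\tau}{\epsilon^2-\tau^2}\sigma_3,\mathscr C_z\}$ (the latter with vanishing principal symbol because $p_{\mathscr C_z}$ is off-diagonal), takes index zero from self-adjointness, and identifies the kernels through the jump formula exactly as you do. Your Step 2 is more complete than the paper's. Self-adjointness gives index zero only for $z\in\rr$. The identity $\mathscr C_z-\mathscr C_\zeta=(z-\zeta)\Phi^*_{\bar z}\Phi_\zeta$, which is compact since it maps into $H^{1/2}$, is what extends this to non-real $z$. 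Theorem~\ref{main_th1} needs exactly those $z$ in its self-adjointness argument.

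The real gap is the inclusion $\ker\Pi_z\subset H^{1/2}(\Sigma;\cc^2)$, where $\ker\Pi_z$ is the $L^2$-kernel. It is needed for the ``in particular'' clause, and for concluding $\ker\Pi_z=\{0\}$ in Theorem~\ref{main_th1}. The paper's proof passes over it, and your proposal leaves it open as well.

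Your main route is correct in substance but not justified as written. For $a\in C^1$ the kernel of $[a,H_\Sigma]$, essentially $\frac{a(x)-a(y)}{(x_1-y_1)+i(x_2-y_2)}$, is only continuous. That gives compactness but no Sobolev gain. The bound $[a,H_\Sigma]\colon L^2\to H^1$ for Lipschitz $a$ is Calder\'on's first commutator theorem, which you would have to invoke explicitly.

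Your fallback presupposes $\operatorname{ind}_{H^{1/2}}\Pi_z=0$, but self-adjointness is an $L^2$ statement. The dense-embedding comparison only gives $\operatorname{ind}_{H^{1/2}}\Pi_z\le\operatorname{ind}_{L^2}\Pi_z$, so the nontrivial half is missing.

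An elementary way to close the gap is as follows.
\begin{itemize}
\item Approximate each $f\in\{\frac{\epsilon}{\epsilon^2-\tau^2},\frac{\tau}{\epsilon^2-\tau^2}\}$ in $C^1(\Sigma)$ by smooth $f_n$. By Lemma~\ref{Cummutator smooth}, $[f_n,H_\Sigma]$ and $[f_n,H_\Sigma^*]$ are smoothing.
\item Set $c:=\frac{4-(\epsilon^2-\tau^2)}{4(\epsilon^2-\tau^2)}$ and $T_1:=\underline{\Pi_z}\Pi_z-c$. Use this left product, not $\Pi_z\underline{\Pi_z}$, when $g\in\ker\Pi_z$. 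Every other piece of $T_1$ maps $L^2$ into $H^{1/2}$: these pieces are either $\Theta^2-\frac14$, which is smoothing, or of order $\le -1$.
\item Let $E_n$ collect the commutators with $f-f_n$. Its norm tends to $0$ simultaneously in $\mathcal B(L^2)$ and $\mathcal B(H^{1/2})$, because $C^1$ functions act as multipliers on $H^{1/2}$ with norm $\lesssim\|\cdot\|_{C^1}$.
\item For $g\in\ker\Pi_z$ this gives $(c+E_n)g=-(T_1-E_n)g\in H^{1/2}$. A Neumann series for $(I+c^{-1}E_n)^{-1}$ converges in both spaces with the same sum, so $g\in H^{1/2}$.
\end{itemize}
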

\begin{proof} Let us show (i). By  definition of $\mathscr{C}_z$ is suffices to prove the Fredholmness for a fixed $z\in\rho(D_0)\setminus\{-m\}$. Using that $(\epsilon I_2 + \tau\sigma_{3})(\epsilon I_2 - \tau\sigma_3)=\epsilon^{2} - \tau^{2}\neq 0$, we compute 
\begin{align*}
 \underline{ \Pi_z}\Pi_z&= \frac{1}{\epsilon^2-\tau^2}-\mathscr{C}_z^2+  [\frac{\epsilon}{\epsilon^2-\tau^2}, \mathscr{C}_z]+ \{\frac{\tau}{\epsilon^2-\tau^2}\sigma_3,\mathscr{C}_z\},
\end{align*}
and 
\begin{align*}
	\Pi_z\underline{ \Pi_z}= \frac{1}{\epsilon^2-\tau^2}-\mathscr{C}_z^2+  [\mathscr{C}_z, \frac{\epsilon}{\epsilon^2-\tau^2}]+ \{\frac{\tau}{\epsilon^2-\tau^2}\sigma_3,\mathscr{C}_z\},
\end{align*}
where $[M,N ]:=MN-NM$ and $\{M,N \}:=MN+NM$ denote the commutator and anticommutator of $M$ and $N$. Recall that $\mathscr{C}_{z}$ is a zero-order pseudodifferential operator and its principal symbol $p_{\mathscr{C}_z}(x,\xi)$ is given by \eqref{symbol_C}. Since $\epsilon$ and $\tau$ are $C^1$-smooth, the multiplications by $\epsilon$ and $\tau$ give rise to pseudodifferential operators of order zero, respectively, with symbols given by
\[
p_{\frac{\epsilon}{\epsilon^2-\tau^2}}(x,\xi)=\frac{\epsilon}{\epsilon^2-\tau^2}(x) \quad\text{and} \quad p_{\frac{\tau}{\epsilon^2-\tau^2}}(x,\xi)=\frac{\tau}{\epsilon^2-\tau^2}(x).
\]
Since $\epsilon$ and $\tau$ are scalar valued, it follows that $[\frac{\epsilon}{\epsilon^2-\tau^2}, \mathscr{C}_z]$ and $\{\frac{\tau}{\epsilon^2-\tau^2}\sigma_3,\mathscr{C}_{z}\}$ are pseudodifferential operators of order $-1$ and $0$, respectively. Moreover, we have 
\begin{align*}
      p_{\frac{\tau}{\epsilon^2-\tau^2}}(x,\xi)\left(\sigma_3 p_{\mathscr{C}_{z}}(x,\xi)+  p_{\mathscr{C}_{z}}(x,\xi)\sigma_3\right)= 0,
\end{align*}
which means that  $\{\frac{\tau}{\epsilon^2-\tau^2}\sigma_3,\mathscr{C}_z\}$ is of order -1. Therefore, $[\frac{\epsilon}{\epsilon^2-\tau^2}, \mathscr{C}_z]$ and
 $\{ \frac{\tau}{\epsilon^2-\tau^2}\sigma_3,\mathscr{C}_z\}$ are compact in $L^2(\Sigma;\cc^2)$. Thus, Proposition \ref{Basic_C_z}-(ii) implies that 
\begin{align}\label{inverse_Pi}\begin{split}
\underline{\Pi_{z}}\Pi_{z}= \frac{4-\epsilon^2+\tau^2}{4(\epsilon^2-\tau^2)}+ T_{1} \quad\text{and}\quad \Pi_{z}\underline{\Pi_{z}}= \frac{4-\epsilon^2+\tau^2}{4(\epsilon^2-\tau^2)}+ T_{2} ,
\end{split}
\end{align}
for some compact operators $T_j$ in $L^2(\Sigma;\cc^2)$. Since $\epsilon^2-\tau^2\neq 4$, by Theorems 1.50 and 1.51 in \cite{Aiena}, we deduce that $\Pi_{z}$ and $\underline{\Pi_{z}}$ are Fredholm operators.  Moreover, since $\mathscr{C}_z$ is self-adjoint for $z\in\rho(D_0)\cap\rr$ (Proposition \ref{Basic_C_z}-(iii)),  $\Pi_{z}$ shares this property and hence has  index zero.

Now, let us prove (ii). Let $0\neq u\in \ker (D_{\epsilon,\tau}-z)$. By Proposition \ref{Basic_Phi}-(i) we have  $u=\Phi_z g$ with $g\in H^{-\frac{1}{2}}(\Sigma;\cc^2)$. Using Proposition \ref{Further_Phi_z} we obtain 
\begin{align}\label{Traceeigen}
\begin{split}
\frac{1}{2}(\epsilon I_2+\tau\sigma_3)(\gamma_D^+ u_+ +\gamma_D^- u_-)&=(\epsilon I_2+\tau\sigma_3)\mathscr{C}_zg,\\
 i\sigma\cdot\nu(\gamma_D^+ u_+ -\gamma_D^- u_-)&=g.
 \end{split}
\end{align}
From the last equality we see that $g\in H^{\frac{1}{2}}(\Sigma;\cc^2)$. Moreover, as $u$ satisfies the transmission condition and $(\epsilon I_2+\tau\sigma_3)$ is invertible, it follows that  
\[
\Pi_zg= (\epsilon I_2+\tau\sigma_3)^{-1}\left( I_2+ (\epsilon I_2+\tau\sigma_3)\mathscr{C}_z\right)g=0,
\]
which shows that $g\in\ker \Pi_z$ and implies the inclusion $\subset$. Conversely, if $0\neq g\in\ker \Pi_z$, then $u=\Phi_z g \in \dom D_{\epsilon,\tau}$ and satisfies $(D_{\epsilon,\tau}-z)u=0$ by Propositions \ref{Basic_Phi}-(i) and \ref{Further_Phi_z} and the computation carried out above. Since $\Phi_z$ is injective, we have $u\neq 0$, and this proves the reverse inclusion $\supset$ and the equality $\mathrm{dim}\ker (D_{\epsilon,\tau}-z)=\mathrm{dim}\ker \Pi_z$. 

\end{proof}
The following theorem concerns the self-adjointness and basic spectral properties of $D_{\epsilon,\tau}$ for the  non-critical case $\epsilon^2-\tau^2\neq 4$. 
\begin{theorem}\label{main_th1}Let $\epsilon,\tau\in C^1(\Sigma)$ satisfy $(\epsilon^2-\tau^2)(\Sigma) \cap\{0, 4\}=\emptyset$. Then for any $z \in \rho(D_0)\cap\rho(D_{\epsilon, \tau})$ the operator $ \Pi_{z}$ is boundedly invertible in $L^2(\Sigma; \mathbb{C}^2)$, and the resolvent formula
			\begin{equation}\label{Krein}
				(D_{\epsilon, \tau} - z)^{-1} = (D_0 - z)^{-1} - \Phi_{z} \Pi_{z}^{-1} \Phi_{\Bar{z}}^*,
\end{equation}				
holds. Moreover, $D_{\epsilon,\tau}$ is self-adjoint and there holds 
 \[
 \spece (D_{\epsilon,\tau})=\spece (D_0)=\spec (D_0).
 \]			
\end{theorem}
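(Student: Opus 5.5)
Our plan follows the standard boundary-triple/Krein-formula route, in four steps: invertibility of $\Pi_z$, verification of the resolvent formula, self-adjointness, and the essential spectrum.

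\textbf{Step 1: invertibility of $\Pi_z$.} Fix $z\in\rho(D_0)$ with $z\notin\spec_p(D_{\epsilon,\tau})$; in particular this holds for $z\in\rho(D_0)\cap\rho(D_{\epsilon,\tau})$, and for every non-real $z$ by symmetry (Lemma \ref{Closable}). By Lemma \ref{Pi_Fredholm}(i), $\Pi_z$ is Fredholm of index zero in $L^2(\Sigma;\cc^2)$, so it suffices to show $\ker\Pi_z=\{0\}$ in $L^2$. Lemma \ref{Pi_Fredholm}(ii) only handles kernel elements lying in $H^{\frac12}$, so we need a regularity bootstrap. If $g\in L^2$ and $\Pi_z g=0$, then by \eqref{inverse_Pi},
\[
\frac{4-\epsilon^2+\tau^2}{4(\epsilon^2-\tau^2)}\,g=-T_1g,
\]
and $T_1$ is built from $\Op\cS^{-1}$ terms plus the $H^{-\frac12}\to H^{\frac12}$ remainder. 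Since the prefactor is nonvanishing, this gives $g\in H^{\frac12}$. Strictly, the $C^1$ coefficients make the commutators only of order $-1$ in a limited Sobolev range, but $C^1$ multipliers act on $H^{s}$ for $|s|\le\frac12$, which is enough. Then $\Phi_z g\in\ker(D_{\epsilon,\tau}-z)=\{0\}$, and injectivity of $\Phi_z$ gives $g=0$.

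\textbf{Step 2: the resolvent formula.} For $f\in L^2(\rr^2;\cc^2)$, set
\[
u=(D_0-z)^{-1}f-\Phi_z\Pi_z^{-1}\Phi_{\bar z}^*f .
\]
Here $\Phi_{\bar z}^* f=\gamma_D^\Sigma(D_0-z)^{-1}f\in H^{\frac12}$. We must check that $\Pi_z^{-1}$ preserves $H^{\frac12}$, which follows from the same bootstrap applied to $\Pi_z h=k$ with $k\in H^{\frac12}$. Hence $g:=\Pi_z^{-1}\Phi^*_{\bar z}f\in H^{\frac12}$, and Proposition \ref{Further_Phi_z} gives $u\in H^1(\Omega_+)\oplus H^1_{\sigma,A}(\Omega_-)$. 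Using the jump relations \eqref{eq_C_z_formula} and continuity of $\gamma_D^\Sigma(D_0-z)^{-1}f$ across $\Sigma$, the transmission condition \eqref{TC} reduces to
\[
(\epsilon I_2+\tau\sigma_3)\bigl(\Phi^*_{\bar z}f-\mathscr C_z g\bigr)-g=0,
\]
that is, $\Pi_z g=\Phi^*_{\bar z} f$, which holds by construction. Thus $u\in\dom D_{\epsilon,\tau}$ and $(D_{\epsilon,\tau}-z)u=f$.

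\textbf{Step 3: self-adjointness.} Step 2 applied at $z=\pm i$ shows $\ran(D_{\epsilon,\tau}\mp i)=L^2$. Combined with symmetry (Lemma \ref{Closable}), this gives self-adjointness of $D_{\epsilon,\tau}$ itself, not merely of its closure. The formula \eqref{Krein} then holds on all of $\rho(D_0)\cap\rho(D_{\epsilon,\tau})$ by the injectivity argument of Step 1.

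\textbf{Step 4: essential spectrum.} The resolvent difference $\Phi_z\Pi_z^{-1}\Phi_{\bar z}^*$ is compact, because $\Phi_{\bar z}^*$ is compact into $L^2(\Sigma)$ by Proposition \ref{Basic_Phi}(ii) and $\Phi_z$, $\Pi_z^{-1}$ are bounded. Weyl's theorem then gives $\spece(D_{\epsilon,\tau})=\spece(D_0)=\spec(D_0)$, the last equality because every Landau--Dirac level has infinite multiplicity.

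The main obstacle is the regularity bootstrap in Steps 1 and 2, namely justifying that $L^2$ kernel elements of $\Pi_z$ lie in $H^{\frac12}$ when $\epsilon,\tau$ are only $C^1$. We would handle this by working with \eqref{inverse_Pi} in $H^s$ for $s\in[-\frac12,\frac12)$ and using that $\underline{\Pi_z}$ is a parametrix there.
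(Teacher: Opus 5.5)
Your proposal is correct and follows the paper's proof essentially step for step. The steps are the same: Fredholm index zero plus injectivity of $\Pi_z$ via Lemma~\ref{Pi_Fredholm}, the transmission-condition check for $(D_0-z)^{-1}f-\Phi_z\Pi_z^{-1}\Phi_{\bar z}^*f$, surjectivity of $D_{\epsilon,\tau}\mp i$ combined with symmetry, and Weyl's theorem for the compact resolvent difference. You also make explicit two points the paper passes over: the regularity bootstrap through \eqref{inverse_Pi}, which puts $L^2$-kernel elements of $\Pi_z$ and $\Pi_z^{-1}$ of $H^{\frac12}$ data into $H^{\frac12}$, and the observation that at $z=\pm i$ only injectivity (from symmetry), not $z\in\rho(D_{\epsilon,\tau})$, is needed to invert $\Pi_z$, which removes the apparent circularity in the paper's self-adjointness step.
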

\begin{proof}
 Let $z \in \rho(D_0)\cap \rho(D_{\epsilon, \tau})$, then assertion (ii) in Lemma \ref{Pi_Fredholm} implies that $\ker \Pi_z=\{0\}$. Since $\Pi_z$ is Fredholm with zero index by Lemma \ref{Pi_Fredholm}-(i), it follows that $\ran \Pi_z=L^2(\Sigma; \mathbb{C}^2)$ and thus it is boundedly invertible in $L^2(\Sigma; \mathbb{C}^2)$. 
 
Denote by $R(z)$ the bounded linear operator
\begin{align*}\label{Kresolvent1}
	R(z)= (D_0-z)^{-1} - \Phi_{z}\Pi_{z}^{-1} \Phi^{\ast}_{\overline{z}}:\ L^2(\rr^2,\cc^2)\to L^2(\rr^2,\cc^2),
\end{align*}
and let us prove that $R(z)=(D_{\epsilon,\tau}-z)^{-1}$. Given $u\in L^2(\rr^2,\cc^2)$, using Propositions \ref{Basic_Phi}-(i) and \ref{Further_Phi_z}, and the fact that $\Pi_z^{-1}: \ H^{\frac{1}{2}}(\Sigma^2,\cc^2)\to H^{\frac{1}{2}}(\Sigma^2,\cc^2)$, we obtain that
\[
v:=R(z)u\in H^1(\Omega_+,\cc^2)\oplus H^1_{\sigma, A}(\Omega_-,\cc^2),
\]
and that
\begin{align*}
\frac{1}{2}(\gamma_D^+v_+ + \gamma_D^-v_-)&=\gamma_D^\Sigma(D_0-z)^{-1} u -\mathscr{C}_z \Pi_{z}^{-1} \Phi^*_{\overline{z}}u
=\Phi^*_{\Bar{z}}v-\mathscr{C}_z \Pi_{z}^{-1} \Phi^*_{\Bar{z}}u,\\
\gamma_D^+ v_+ -\gamma_D^- v_- &=i\sigma\cdot\nu\Pi_{z} ^{-1} \Phi^*_{\Bar{z}}u.
\end{align*}
Hence, using the identities $(\epsilon I_2 + \tau\sigma_{3})(\epsilon I_2 - \tau\sigma_3)=\epsilon^{2} - \tau^{2}$  and $\Phi^*_{\Bar{z}}=\gamma_D^\Sigma(D_0-z)^{-1}$, we obtain
\begin{align*}
\frac{1}{2}(\epsilon I_2+\tau\sigma_3)(\gamma_D^+ v_+ +\gamma_D^- v_-)+ &i\sigma\cdot\nu(\gamma_D^+ v_+ -\gamma_D^- v_-)\\
&=(\epsilon I_2 + \tau\sigma_{3})\Big[I-\left(\frac{1}{\epsilon^{2} - \tau^{2}}(\epsilon I_2 - \tau\sigma_3)+\mathscr{C}_z \right)\Pi_z^{-1})\Big] \Phi^*_{\Bar{z}}u\\
&=(\epsilon I_2 + \tau\sigma_{3})(I-I)\Phi^*_{\Bar{z}}u=0,
\end{align*}
which  shows that $v$ satisfies the transmission condition \eqref{TC}, and therefore it  is in $\dom D_{\epsilon,\tau}$. Since  $(D_0-z)\Phi_{z}\Pi_{z}^{-1} \Phi^{\ast}_{\overline{z}}=0$ in $\rr^2\setminus\Sigma$, it follows that
\begin{align*}
(D_{\epsilon,\tau}-z)v=(D-z)R(z)u=(D-z)(D_0-z)^{-1}u=u,
\end{align*}
and thus $R(z)=(D_{\epsilon,\tau}-z)^{-1}$, which shows the resolvent formula \eqref{Krein}.

We now turn to the proof of  the self-adjointness.  Let $z\in \cc\setminus\rr$, then $\ker (D_{\epsilon,\tau}-z)=\{0\}$ because $D_{\epsilon, \tau}$ is symmetric, and in view of the resolvent formula \eqref{Krein} we have $\ran (D_{\epsilon,\tau}-z)=L^2(\rr^2;\cc^2)$. From this we conclude that $D_{\epsilon,\tau}$ is self-adjoint. 

Finally,  since $\Phi_z^*$ is compact from $ L^2(\rr^2; \mathbb{C}^2)$ to $L^2(\Sigma; \mathbb{C}^2)$ by Proposition \ref{Basic_Phi}-(ii) and $\Phi_z\Pi_z^{-1}: L^2(\Sigma; \mathbb{C}^2) \rightarrow L^2(\rr^2; \mathbb{C}^2)$ is bounded,  it follows that $\Phi_z\Pi_z^{-1} \Phi_{\bar z}^*$ is a compact operator on $ L^2(\rr^2; \mathbb{C}^2)$.  By  \eqref{Krein} we deduce that the resolvent difference $(D_{\epsilon, \tau} - z)^{-1} - (D_0 - z)^{-1}$ is a compact operator in $ L^2(\rr^2; \mathbb{C}^2)$, and we conclude by  Weyl's theorem that $\spece (D_{\epsilon,\tau})=\spece (D_0)$. \end{proof}

The following proposition gives additional properties of the spectrum of $D_{\epsilon, \tau}$ in the confining case $\epsilon^2-\tau^2=-4$. In this situation, thanks to Theorem \ref{main_th1} and Lemma \ref{TC_alter},  we have the decomposition $D_{\epsilon, \tau}=D_{\epsilon,\tau}^{\Omega_+}\oplus D_{\epsilon,\tau}^{\Omega_-}$ and the operators $D_{\epsilon,\tau}^{\Omega_\pm}$ defined in \eqref{H_Omega} are self-adjoint.  

\begin{proposition}\label{Spec_H} Let $\epsilon,\tau\in C^1(\Sigma)$ satisfy $\epsilon^2(s)-\tau^2(s)= -4$ for all $s\in\Sigma$. Then, the spectrum of $D_{\epsilon,\tau}^{\Omega_+}$ is purely discrete and  $\spece(D_{\epsilon,\tau}^{\Omega_-})=\spece(D_0)$. In addition, if $\tau(s)\geq2$ 
holds for each $s\in\Sigma$, then we have
\[
\specd(D_{\epsilon,\tau}^{\Omega_+})\subset \rr\setminus [-m,m] \quad \text{and}\quad \specd(D_{\epsilon,\tau}^{\Omega_-})\subset \rr\setminus [-m,m).
\]
\end{proposition}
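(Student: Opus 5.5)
Since $\epsilon^2-\tau^2\equiv-4$ lies outside $\{0,4\}$, Theorem~\ref{main_th1} applies: $D_{\epsilon,\tau}$ is self-adjoint, the difference $(D_{\epsilon,\tau}-z)^{-1}-(D_0-z)^{-1}$ is compact, and $\spece(D_{\epsilon,\tau})=\spec(D_0)$. By Lemma~\ref{TC_alter}(i) we have $D_{\epsilon,\tau}=D^{\Omega_+}_{\epsilon,\tau}\oplus D^{\Omega_-}_{\epsilon,\tau}$, and each summand is self-adjoint as the restriction of a self-adjoint operator to a reducing subspace. Hence $\spece(D_{\epsilon,\tau})=\spece(D^{\Omega_+}_{\epsilon,\tau})\cup\spece(D^{\Omega_-}_{\epsilon,\tau})$.

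\emph{Discreteness of $D^{\Omega_+}_{\epsilon,\tau}$.} By \eqref{H_Omega}, $\dom D^{\Omega_+}_{\epsilon,\tau}\subset H^1(\Omega_+;\cc^2)$ with the graph norm controlling the $H^1$ norm, because the domain is a closed subspace of $\dom D_{\epsilon,\tau}$, which embeds in $H^1$ near $\Sigma$. Since $\Omega_+$ is bounded, Rellich's theorem makes the embedding compact, so the resolvent is compact and the spectrum is purely discrete.

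\emph{Essential spectrum of $D^{\Omega_-}_{\epsilon,\tau}$.} We get $\spece(D^{\Omega_-}_{\epsilon,\tau})=\spece(D_{\epsilon,\tau})=\spec(D_0)$, since $D^{\Omega_+}_{\epsilon,\tau}$ contributes nothing.

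\emph{Spectral gap.} The key step is a quadratic-form identity. Write $D=-i\sigma\cdot\nabla_A+m\sigma_3$ and, for $u$ in the domain on $\Omega_\pm$, expand
\[
\|Du\|^2=\|\sigma\cdot\nabla_A u\|^2+m^2\|u\|^2+2m\,\mathrm{Re}\langle -i\sigma\cdot\nabla_A u,\sigma_3u\rangle .
\]
Using the anticommutation $\sigma_3\sigma_j=-\sigma_j\sigma_3$ and Lemma~\ref{Green}, the cross term becomes a boundary term of the form
\[
\pm m\langle i\sigma\cdot\nu\,\sigma_3\,\gamma u,\gamma u\rangle_{L^2(\Sigma)}
\]
(up to sign conventions). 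On $\Omega_-$ one works with $u$ of compact support, by density from Proposition~\ref{density_result}, or notes that the magnetic terms cancel in the cross term because $A$ is real. The boundary condition $P_\pm\gamma u=0$ means $\gamma u=\frac{\pm i}{2}\sigma\cdot\nu(\epsilon+\tau\sigma_3)\gamma u$. Substituting this and using $\tau\ge2$, $\epsilon^2=\tau^2-4$, I expect pointwise
\[
\langle i\sigma\cdot\nu\,\sigma_3 g,g\rangle=c(s)|g|^2\quad\text{with } c(s)\ \text{of sign favourable to positivity.}
\]
For example, in the infinite-mass case $(0,2)$, $\gamma u=i\sigma\cdot\nu\,\sigma_3\gamma u$, which gives exactly $|\gamma u|^2$. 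This yields $\|Du\|^2\ge m^2\|u\|^2+\|\sigma\cdot\nabla_Au\|^2+(\text{nonnegative})$ on both sides.

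\emph{Excluding $\pm m$.} On $\Omega_+$ the bound gives $\spec\cap(-m,m)=\emptyset$. To exclude $\pm m$ themselves, equality would force $\sigma\cdot\nabla_A u=0$ and a vanishing trace where $c>0$, so $u$ is zero near $\Sigma$ and $u=0$ by unique continuation. On $\Omega_-$ the same argument removes $(-m,m)$ and the point $-m$, and I expect the sign structure to exclude $-m$. The level $m=\mu_0$ is an essential eigenvalue, so only $[-m,m)$ is excluded, consistent with the claimed statement.

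The main obstacle is the algebra: showing that, for variable $\epsilon,\tau$ with $\tau\ge2$, the boundary form has a definite sign on $\ker P_\pm$, and checking the exclusion of $-m$ (the zero-mode analysis) on $\Omega_-$.
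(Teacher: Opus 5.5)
\textbf{Gap: the boundary-term positivity, which is the heart of the proof, is asserted but not verified.} Your strategy is otherwise the paper's: compact resolvent for $D^{\Omega_+}_{\epsilon,\tau}$, and Theorem~\ref{main_th1} together with the orthogonal decomposition for the essential spectra. For the spectral gap you use the identity
\[
\|Du_\pm\|^2_{L^2(\Omega_\pm;\cc^2)}=\|\sigma\cdot\nabla_A u_\pm\|^2_{L^2(\Omega_\pm;\cc^2)}+m^2\|u_\pm\|^2_{L^2(\Omega_\pm;\cc^2)}+m\langle \mp i\sigma\cdot\nu\,\gamma_D^\pm u_\pm,\sigma_3\gamma_D^\pm u_\pm\rangle_{L^2(\Sigma;\cc^2)}
\]
from Lemma~\ref{Green}, which applies directly on $\Omega_-$ as well. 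Your expectation about the sign of the boundary term is correct, and checking it takes one line. Apply $\mp i\sigma\cdot\nu$ to the boundary condition $\gamma_D^\pm u_\pm=\pm\frac{i}{2}\sigma\cdot\nu(\epsilon I_2+\tau\sigma_3)\gamma_D^\pm u_\pm$ and use $(\sigma\cdot\nu)^2=I_2$. This gives $\mp i\sigma\cdot\nu\,\gamma_D^\pm u_\pm=\frac12(\epsilon I_2+\tau\sigma_3)\gamma_D^\pm u_\pm$, so with $u_\pm=(v_\pm,w_\pm)^\top$ the boundary term is
\[
\frac m2\int_\Sigma\Big[(\tau+\epsilon)|\gamma_D^\pm v_\pm|^2+(\tau-\epsilon)|\gamma_D^\pm w_\pm|^2\Big]\ddd s .
\]
Both weights are strictly positive, because $\tau\ge 2$ and $\tau^2=\epsilon^2+4$ force $\tau>|\epsilon|$. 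On the one-dimensional fibre $\ker P_\pm(s)$ one has $|w|=\frac{\epsilon+\tau}{2}|v|$. So the term is indeed $m\int_\Sigma c\,|\gamma_D^\pm u_\pm|^2\ddd s$ with $c=\frac{4(\epsilon+\tau)}{4+(\epsilon+\tau)^2}>0$, as you guessed. This gives $\|D_{\epsilon,\tau}u\|\ge m\|u\|$ on both sides at once.

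\textbf{Endpoints.} Equality does force $\sigma\cdot\nabla_A u=0$ and $\gamma_D u=0$ on all of $\Sigma$, but ``so $u$ is zero near $\Sigma$'' does not follow. What follows is that the extension of $u$ by zero has no jump across $\Sigma$. Hence it lies in $\dom D_0$ and solves the same eigenvalue equation on $\rr^2$.
\begin{itemize}
\item For $-m$ this ends the argument on either side, since $-m\in\rho(D_0)$ when $m>0$. This is the paper's argument ($u=\Phi_{-m}g$ with jump $g=0$), and no further ``sign structure'' is needed on $\Omega_-$.
\item For $+m$ on $\Omega_+$, the extension is $(e^{-b|x|^2/4}F,0)^\top$ with $F$ entire and vanishing on $\Omega_-$, so $F\equiv0$. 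The paper phrases this as the bottom of the Dirichlet magnetic Laplacian on $\Omega_+$ lying strictly above $b$.
\end{itemize}

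\textbf{The case $m=0$.} Exactly as in the paper, the equality argument needs $m>0$. For $m=0$ the boundary term carries the factor $m$ and gives no information on the trace, so $0\notin\specd(D^{\Omega_+}_{\epsilon,\tau})$ requires a separate argument. One option: write $v_+=e^{-b|x|^2/4}F$ and $w_+=e^{b|x|^2/4}\overline{G}$ with $F,G$ holomorphic in $\Omega_+$. The boundary condition and Cauchy's theorem for $FG$ on $\Sigma$ then yield $\int_\Sigma(\epsilon+\tau)e^{-b|x|^2/2}|F|^2\ddd s=0$. Hence $\gamma_D^+u_+=0$, and one concludes as before.
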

\begin{proof} Clearly $D_{\epsilon,\tau}^{\Omega_+}$ has a compact resolvent implying that its spectrum is purely discrete. Consequently, Theorem \ref{main_th1} yields that $\spece(D_{\epsilon,\tau}^{\Omega_-})=\spece(D_{\epsilon,\tau})=\spece(D_0)$.

We are now going to prove that $\specd(D_{\epsilon,\tau})\subset \rr\setminus [-m,m)$ holds for $\tau\geq2$ .  For any $u=u_+\oplus u_-\in\dom D_{\epsilon,\tau}$ we have 
\[
\|D_{\epsilon,\tau}u \|^2_{L^2(\rr^2;\cc^2)}= \|D_{\epsilon,\tau}^{\Omega_+}u_+ \|^2_{L^2(\Omega_+;\cc^2)}+\|D_{\epsilon,\tau}^{\Omega_-} u_- \|^2_{L^2(\Omega_-;\cc^2)},
\]
 and using Lemma \ref{Green}, we compute    
 \begin{align}\label{Quadratic1}
 \begin{split}
 \|D_{\epsilon,\tau}^{\Omega_\pm}u_\pm \|^2_{L^2(\Omega_\pm;\cc^2)}=& \langle D u_\pm, D u_\pm\rangle_{L^2(\Omega_\pm;\cc^2)}\\
 =&  \|\sigma\cdot\nabla_Au_\pm \|^2_{L^2(\Omega_\pm;\cc^2)} + m^2  \| u_\pm \|^2_{L^2(\Omega_\pm;\cc^2)}\\
 &+m\langle \mp i\sigma\cdot\nu \gamma_D^\pm u_\pm, \sigma_3 \gamma_D^\pm u_\pm\rangle_{L^2(\Sigma;\cc^2)} .  
 \end{split}
 \end{align}
By Lemma \ref{TC_alter}-(i), we have 
 \begin{align}\label{Quadratic2}
 \gamma_D^\pm u_\pm= \pm\frac{i}{2}\sigma\cdot\nu(\epsilon I_2+\tau\sigma_3)\gamma_D^\pm u_\pm.
 \end{align}
 Thus, if write  $u_\pm =(v_\pm,w_\pm)^\top$ then  \eqref{Quadratic2} is equivalent to
 \[
 \mp i\sigma\cdot\nu \gamma_D^\pm u_\pm= \begin{pmatrix}
     \frac{\epsilon+\tau}{2} & 0\\
     0 & \frac{\epsilon-\tau}{2}
 \end{pmatrix} \gamma_D^\pm u_\pm  = \frac{1}{2} \begin{pmatrix}(\epsilon+\tau)\gamma_D^\pm v_\pm\\
(\epsilon-\tau) \gamma_D^\pm w_\pm
 \end{pmatrix}.
 \]
 This implies 
 \begin{align*}
 \langle \mp i\sigma\cdot\nu \gamma_D^\pm u_\pm, \sigma_3 \gamma_D^\pm u_\pm\rangle_{L^2(\Sigma;\cc^2)}= \frac{1}{2}\int_\Sigma \left[  (\epsilon+\tau) |v_\pm |^2 + (\tau-\epsilon) |w_\pm |^2 \right]\mathrm{d}s.
 \end{align*}
  Plugging this into  \eqref{Quadratic1}  yields 
 \begin{align}\label{Quadratic3}
 \begin{split}
 \|D_{\epsilon,\tau}u \|^2_{L^2(\rr^2;\cc^2)} =&  \|\sigma\cdot\nabla_A u_+ \|^2_{L^2(\Omega_+;\cc^2)} + \|\sigma\cdot\nabla_A u_- \|^2_{L^2(\Omega_-;\cc^2)} + m^2 \|u \|^2_{L^2(\rr^2;\cc^2)} \\
 &+ \frac{m}{2} \int_\Sigma (\epsilon+\tau)(|v_+ |^2 + |v_- |^2) \mathrm{d} s   + \frac{m}{2} \int_\Sigma (\tau-\epsilon)( |w_+ |^2 + |w_- |^2)\mathrm{d}s.
 \end{split}
 \end{align}
Since $\tau \geq2$ it follows that $\tau>|\epsilon|$, and  \eqref{Quadratic3} entails that
 \begin{align*}
 \|D_{\epsilon,\tau}u \|^2_{L^2(\rr^2;\cc^2)} \geq m^2 \|u \|^2_{L^2(\rr^2;\cc^2)}
 \end{align*}
 showing that $\spec(D_{\epsilon,\tau})\subset \rr\setminus (-m,m)$. In view of \eqref{Quadratic3}, if $-m$ is an eigenvalue of $D_{\epsilon,\tau}$, then its associated eigenfunction $u$ satisfies $\gamma_D^\pm u_\pm=0$. However, $u=\Phi_{-m} g$ with $g=i\sigma\cdot\nu(\gamma_D^+ u_+-\gamma_D^- u_-)$ by \eqref{Traceeigen}, and we conclude that $u$ vanishes identically on $\rr^2$. Therefore, $-m\notin\spec(D_{\epsilon,\tau})$ which leads to $\specd(D_{\epsilon,\tau})\subset \rr\setminus [-m,m)$.  

Similarly, if  $m\in \specd (D_{\epsilon,\tau}^{\Omega_+})$ then by  \eqref{Quadratic3} its associated eigenfunction $u_+\in \dom D_{\epsilon,\tau}^{\Omega_+}\setminus\{0\}$ satisfies 
\begin{align*}
\gamma_D^+ u_+=0 \,\,\text{on }\,\Sigma \quad\text{and}\quad \sigma\cdot\nabla_A u_+=0 \,\,\text{in }\,\Omega_+.
\end{align*}
Hence, $u_+ =(v_+,w_+)^\top\in H^1_0(\Omega_+;\cc^2)$ and applying $\sigma\cdot\nabla_A$ to the second equality above gives
\begin{align*}
0=(\sigma\cdot\nabla_A)^2 u_+\;\;
&=\; \begin{pmatrix} -\nabla_A^2 - b & 0 \\ 0 & -\nabla_A^2 + b \end{pmatrix}\begin{pmatrix} v_+\\
  w_+
 \end{pmatrix}.
\end{align*}
This contradicts the fact that the infimum of the spectrum of the magnetic Schr\"odinger operator with Dirichlet boundary condition is larger than the lowest Landau level $b$. Therefore,  $m\notin \specd (D_{\epsilon,\tau}^{\Omega_+})$ and the proposition is proved.
\end{proof}
\begin{remark} In the confining regime $\epsilon^2-\tau^2=-4$, the condition  $\pm \tau \geq 2$ on $\Sigma$ is equivalent to $\pm(\epsilon+\tau) >0$ on $\Sigma$. As illustrated in Theorem \ref{Cluster},  if $\tau \leq -2$ on $\Sigma$, then $D^{\Omega_-}_{\epsilon,\tau}$ admits infinitely many eigenvalues accumulating to the point $m$ from below. 
\end{remark}

\section{Eigenvalues Asymptotics of $D_{\epsilon,\tau}$ in the non-critical case}\label{S5}

In this section we are interested in the influence of the $\delta$-interactions $(\epsilon I_2 + \tau \sigma_3)\delta_\Sigma$ on the distribution of the eigenvalues of $D_{\epsilon,\tau}$  near the Landau--Dirac levels $\mu_q$, $q \in \zz$. It is expected that under the influence of the perturbation, each infinite multiplicity eigenvalue $\mu_q$ of  $D_0$ will split into an infinity of discrete eigenvalues (counted with their multiplicities) that accumulate at $\mu_q$. To study the distribution of the eigenvalues close to a fixed Landau--Dirac level $\mu_q$, we fix a sequence of reals numbers $\{\alpha_q\}_{q\in \zz}$ such that $ \mu_q < \alpha_q  < \mu_{q+1}$  and we introduce the functions $ {\mathcal N}_{+}^q$ (resp.  ${\mathcal N}_{-}^q$) that count the eigenvalues of $D_{\epsilon,\tau}$ in the interval $(\mu_q + \lambda, \alpha_q)$ (resp. $(\alpha_{q-1}, \mu_q - \lambda)$) for $\lambda >0$ close to $0$. 

More generally, if  $E_{\mathcal A}(\omega)$ denotes the spectral projection of the self-adjoint operator $\mathcal A$  associated with the Borel set $\omega$ and if $\omega$ does not intersect the essential spectrum of $\mathcal A$, we define the eigenvalue counting function 
\[
\mathcal N(\omega; \mathcal A):= {\rm Tr}E_{\mathcal A}(\omega).
\]
In our case, for $\lambda >0$ we set
 \[
 {\mathcal N}_{+}^q(\lambda):=\mathcal N((\mu_q + \lambda, \alpha_q); D_{\epsilon,\tau}),\quad  {\mathcal N}_{-}^q(\lambda):=\mathcal N((\alpha_{q-1},\mu_q-\lambda); D_{\epsilon,\tau}).
 \]
If $\mathcal A$ is also compact, for $s>0$  we also use the notation  
 \[
 n_\pm(s;\mathcal A)=\mathcal N((s,\infty); \pm \mathcal A).
 \]
Note that the behaviour of ${\mathcal N}_{\pm}^q(\lambda)$, as $\lambda \searrow 0$, is independent of the choice of $\{\alpha_q\}_q$ because $\mu_q$ is the only possible accumulation point in $(\alpha_{q-1}, \alpha_q)$. Thus either the number of eigenvalues near $\mu_q$ is finite (i.e. ${\mathcal N}_{\pm}^q(\lambda)=O(1)$), or  infinitely many eigenvalues accumulate at $\mu_q$ (from the right when ${\mathcal N}_{+}^q(\lambda) \rightarrow + \infty$ or the left when ${\mathcal N}_{-}^q(\lambda) \rightarrow + \infty$).

As we shall see below, contrary to previous results in similar models (\cite{ RaWa02, FiPu06, MeRo07, PushRoz07,  RoTa08, Pe09, PuRo11, GoKaPe16, BHOP20, BM}), accumulation can occur on  the left of the Landau levels even if the matrix $\epsilon I_2 + \tau \sigma_3$ is positive definite.
 In spite of this, the rate of accumulation is still described in terms of the {\em logarithmic capacity} of $\Sigma$, defined by $\capa(\Sigma) : = e^{-{\mathcal I}(\Sigma)}$ where
\[
{\mathcal I}(\Sigma) : = \inf_{\mu \in \gM(\Sigma)}  \int_{\Sigma} \int_{\Sigma} \ln{|x-y|^{-1}} d\mu(x) d\mu(y)
\]    
and $\mathfrak{M}(\Sigma)$ is the set of compactly supported probability measures on $\Sigma$.

We can now state the main result of this section concerning the accumulation of eigenvalues to the Landau--Dirac levels. 
\begin{theorem}\label{Cluster}
Assume that $\Sigma $ is a $C^\infty$ curve and   $\epsilon,\tau$ are in $C^1(\Sigma)$ such that $(\epsilon^2-\tau^2)(\Sigma) \cap \{0,4\} = \emptyset$. 
Then the effective function determining the side of the spectral accumulation is 
\[
V_1:=\frac{4(\epsilon+\tau)}{4-(\epsilon^2-\tau^2)},
\]
in the sense that if $\pm V_1 >0$ on $\Sigma$ then, as $\lambda \searrow 0$, ${\mathcal N}_{\mp}^q(\lambda)= O(1)$ and 
\begin{align}\label{Asymp}
 {\mathcal N}_{\pm}^q(\lambda)=\frac{|\ln{\lambda}|}{\ln|\ln(\lambda)|} + \frac{|\ln{\lambda}|\ln(\ln|\ln(\lambda)|)}{\ln|\ln(\lambda)|^2}  +   \frac{|\ln{\lambda}|}{\ln|\ln(\lambda)|^2}\big(\mathfrak C(\Sigma)+o(1)\big),
\end{align}
 with $\mathfrak C(\Sigma)  = 1+  \ln{\left(\frac{b}2\,{\rm Cap}(\Sigma)^2\right)}$.
\end{theorem}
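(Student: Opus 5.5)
We will reduce the counting functions ${\mathcal N}_\pm^q$ to the eigenvalue counting of a compact Toeplitz-type operator on the range of the Landau projection, and then apply the known asymptotics for Toeplitz operators with symbols supported on $\Sigma$ (as in \cite{FiPu06,PushRoz07,BM}). Fix $\zeta\in\rr\cap\rho(D_0)\cap\rho(D_{\epsilon,\tau})$, $\zeta\neq-m$. By Theorem~\ref{main_th1} the resolvent difference is
\[
\Upsilon:=(D_{\epsilon,\tau}-\zeta)^{-1}-(D_0-\zeta)^{-1}=-\Phi_\zeta\Pi_\zeta^{-1}\Phi_\zeta^*,
\]
which is compact and self-adjoint. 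Eigenvalues of $D_{\epsilon,\tau}$ near $\mu_q$ correspond, via $\lambda\mapsto(\lambda-\zeta)^{-1}$, to eigenvalues of $(D_{\epsilon,\tau}-\zeta)^{-1}$ near $(\mu_q-\zeta)^{-1}$. Let $P_q$ be the projection onto $\ker(D_0-\mu_q)$. By the standard Birman--Schwinger/Weyl-inequality argument (as in \cite{PushRoz07,BM}, cf. the lemma announced in Section~\ref{SEC51}), we get, for every $\delta>0$,
\[
n_\pm\big((1+\delta)\lambda'; \pm P_q\Upsilon P_q\big)+O(1)\le \mathcal N_{\mp}^q(\lambda)\le n_\pm\big((1-\delta)\lambda';\pm P_q\Upsilon P_q\big)+O(1),
\]
with $\lambda'\asymp\lambda/(\mu_q-\zeta)^2$ and the signs tracking $\operatorname{sgn}(\mu_q-\zeta)$. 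The $O(1)$ term uses that the part of $\Upsilon$ off the $\mu_q$-block is compact, so it contributes only finitely many eigenvalues away from $\mu_q$. Since the asymptotic formula \eqref{Asymp} is insensitive to constant rescalings of $\lambda$, it suffices to study the Toeplitz operator $P_q\Upsilon P_q=-(\Phi_\zeta^*P_q)^*\Pi_\zeta^{-1}(\Phi_\zeta^*P_q)$.

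The second step, and the core of the argument, is to analyse $\Pi_\zeta^{-1}$ as a pseudodifferential operator on $\Sigma$ and extract an effective scalar weight. From \eqref{inverse_Pi} and \eqref{symbol_C}, $\Pi_\zeta^{-1}$ is a classical $\Psi$DO of order $0$ whose principal symbol is
\[
\tfrac{4(\epsilon^2-\tau^2)}{4-(\epsilon^2-\tau^2)}\Big(\tfrac{\epsilon+\tau\sigma_3}{\epsilon^2-\tau^2}-p_{\mathscr C_\zeta}\Big).
\]
Its diagonal part $\tfrac{4}{4-(\epsilon^2-\tau^2)}(\epsilon I_2+\tau\sigma_3)$ has $(1,1)$ entry exactly $V_1$ and $(2,2)$ entry $\tfrac{4(\epsilon-\tau)}{4-(\epsilon^2-\tau^2)}$. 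On the other hand, $\ran P_q$ consists of functions of the form $(c_1 f, c_2 a f)$, where $f$ ranges over the lowest-type Landau states. The operator $\Phi^*_\zeta P_q$ is extremely smoothing: its singular values decay super-exponentially. Following \cite{BM}, the pair $(1,1)$ components dominate the eigenvalue asymptotics, while the $(2,2)$ and off-diagonal (order $0$ but oscillatory, $\mathrm{sgn}\xi$) contributions are of lower order. Concretely, I would write $\Pi_\zeta^{-1}=\mathcal V+\mathcal R$, with $\mathcal V=V_1\pi_+$ plus terms acting through the lower component, and then show
\[
n_\pm\big(s;(\Phi^*_\zeta P_q)^*\mathcal R(\Phi^*_\zeta P_q)\big)=o\Big(\tfrac{|\ln s|}{\ln|\ln s|^2}\Big).
\]
This amounts to a comparison of the traces of $P_q$ on $\Sigma$ with $H^{\pm1/2}$-norms, in the spirit of \cite{PushRoz07}.

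Once $\pm V_1>0$ on $\Sigma$, the leading operator is comparable, from above and below, to $\pm c\,P_q\gamma^*\gamma P_q$ with $c>0$ constant, where $\gamma$ is the trace on $\Sigma$ of the relevant component. Hence $n_\mp(s;\cdot)=O(1)$, and for the other sign the count is squeezed between two multiples. The capacity asymptotics of \cite{FiPu06,PushRoz07} for the Toeplitz operator $P_q\gamma^*\gamma P_q$, which are insensitive to multiplicative constants at the displayed orders, then give \eqref{Asymp} with $\mathfrak C(\Sigma)=1+\ln(\tfrac b2\capa(\Sigma)^2)$.

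The main obstacle will be showing that the remainder $\mathcal R$ does not spoil the asymptotics. Its off-diagonal piece has order $0$, so it is not compact-small, and one has to exploit the specific structure of $\ran P_q$ to control it. I would handle this with a Cauchy--Schwarz/form-sandwich argument. For $\delta>0$, bound $\pm\mathcal R\le\delta\,\gamma_1^*\gamma_1+C_\delta\,\gamma_2^*\gamma_2+(\text{smoothing})$ in the sense of forms on $\ran P_q$, where $\gamma_j$ are the traces of the components. The $\gamma_2$ trace of $a f$ is then controlled by the $\gamma_1$ trace of a shifted Landau state, whose Toeplitz count lies in the same asymptotic class but with a negligible loss. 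After the sandwich, let $\delta\to0$.
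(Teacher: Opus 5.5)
Your overall architecture matches the paper: the Krein formula for $\Upsilon$, reduction to the Toeplitz block on $\ker(D_0-\mu_q)$, comparison with $p_q\delta_\Sigma p_q$, and the capacity asymptotics of \cite{PushRoz07,BrRa20}, which are insensitive to constants. However, the step you flag as the main obstacle rests on a claim that is false. For $q\neq0$, an element of $\ran\mathcal P_q$ has components proportional to $(f,af)$ with $f$ in the $|q|$-th Landau level, so $af$ sweeps out the whole $(|q|-1)$-th level. Restricted to $\ran\mathcal P_q$, the $V_2$-block is therefore unitarily equivalent to a positive multiple of $p_{|q|-1}V_2\delta_\Sigma p_{|q|-1}$, and the $W_\Sigma$ cross terms are likewise Toeplitz-type operators. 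Their counting functions have the same leading order $|\ln s|/\ln|\ln s|$ as the main term. Hence $n_\pm\bigl(s;(\Phi^*P_q)^*\mathcal R(\Phi^*P_q)\bigr)=o\bigl(|\ln s|/(\ln|\ln s|)^2\bigr)$ cannot hold. Your fallback, bounding the $\gamma_2$-part by a Toeplitz operator on the shifted level ``in the same asymptotic class'', introduces an additive error of the same order, which under Weyl's inequality changes the leading coefficient. It also cannot give $\mathcal N^q_\mp(\lambda)=O(1)$. In the infinite-mass case $(\epsilon,\tau)=(0,2)$ one has $V_2=-V_1=-1$, so the $(2,2)$ block alone has infinitely many eigenvalues of the wrong sign.

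What is needed is a relative form bound, not a bound on counting functions. For every $\tilde\delta>0$ there must be a finite-codimensional subspace of the $|q|$-th level on which $\|(af)|_\Sigma\|^2_{L^2(\Sigma)}\le\tilde\delta\,\|f|_\Sigma\|^2_{L^2(\Sigma)}$. The paper proves this in Fock space, writing $p_qf=C_qe^{-b|z|^2/4}(2\partial-b\bar z)^qF$. Analyticity and the compact embedding $H^1(\Sigma)\hookrightarrow L^2(\Sigma)$ give $\int_\Sigma|\partial^kF|^2\le\tilde\delta^2\int_\Sigma|\partial^qF|^2$ for $k<q$ off a finite-dimensional space, as in \cite[Prop.~4.1]{PushRoz07}. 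So on that subspace $A_{q-1}(1)$ is bounded by an arbitrarily small multiple of $A_q(1)\asymp\|\partial^qF\|^2_{L^2(\Sigma)}$. Combining this with Cauchy--Schwarz on $B_q(W_\Sigma)$ and the compactness of $S$, every term other than the $V_1$ one is absorbed into $A_q(V_1-\delta)\ge v_1A_q(1)$. This yields the two-sided bound \eqref{13nov25}, and min-max then gives both $\mathcal N^q_\mp=O(1)$ and \eqref{Asymp}, with no limit $\delta\to0$ needed. This is exactly why $V_1$, and not $V_2$, decides the side: the upper component always carries the higher Landau index. Relatedly, your first step, with a multiplicative error $(1\pm\delta)\lambda'$, is not justified by compactness of the off-diagonal block. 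The Cauchy--Schwarz splitting of Lemma \ref{le1} produces an additive error $\pm\delta\,\mathcal P_q\Phi\Phi^*\mathcal P_q$. Because $\Pi^{-1}$ is indefinite, this error is not a priori dominated by $|\mathcal P_q\Upsilon\mathcal P_q|$; it is absorbed only through the same form comparison, using $V_1-\delta\ge v_1>0$.
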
 

Before  giving the proof of this theorem, let us give the following consequence in the confining case $\epsilon^2-\tau^2=-4$ that is for $V_1=(\epsilon+\tau)/2$ whose sign is that of $\tau$. In this situation, we have the decomposition $D_{\epsilon, \tau}=D_{\epsilon,\tau}^{\Omega_+}\oplus D_{\epsilon,\tau}^{\Omega_-}$ and the operator $D_{\epsilon,\tau}^{\Omega_+}$ has purely discrete spectrum (see Proposition \ref{Spec_H}). Then the above result of spectral accumulation is necessarily carried by the exterior Hamiltonian $D_{\epsilon,\tau}^{\Omega_-}$ and it confirms a conjecture of \cite{BM} 
in the particular - confining - case $(\epsilon, \tau)=(0, \pm 2)$ corresponding to  \textit{infinite mass boundary conditions} (see Remark \ref{rqconf}).

\begin{corollary}\label{ClusConf} Let $\epsilon,\tau\in C^1(\Sigma)$ satisfy $\epsilon^2(s)-\tau^2(s)= -4$ for all $s\in\Sigma$. Then eigenvalues of the operator $D_{\epsilon,\tau}^{\Omega_-}$ accumulate above (resp. below) each Landau level $\mu_q$, $q \in \zz$,  when $\tau (s) \geq 2$ (resp. $\tau  (s) \leq - 2$)  for all $s\in\Sigma$. On the other side the spectrum is finite. 
\end{corollary}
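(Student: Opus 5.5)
The corollary follows from Theorem \ref{Cluster}, the decomposition of Lemma \ref{TC_alter}(i) and Proposition \ref{Spec_H}; the plan is to evaluate the effective function $V_1$ in the confining regime and then transfer the count to the exterior operator.

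The first step is to compute $V_1$. When $\epsilon^2-\tau^2\equiv -4$ on $\Sigma$ we have $4-(\epsilon^2-\tau^2)=8$, so
\[
V_1=\frac{4(\epsilon+\tau)}{8}=\frac{\epsilon+\tau}{2}.
\]
The hypotheses of Theorem \ref{Cluster} hold, since $\epsilon^2-\tau^2=-4\notin\{0,4\}$. Next, $\tau^2=\epsilon^2+4$ gives $|\tau|>|\epsilon|$. Hence $\tau\ge 2$ implies $\epsilon+\tau\ge \tau-|\epsilon|>0$, and $\tau\le-2$ implies $\epsilon+\tau<0$. Since $\tau$ is continuous and $|\tau|\geq 2$ on $\Sigma$, $\tau$ has constant sign on each component of $\Sigma$; the hypothesis $\tau\ge2$ (resp. $\tau\le -2$) on all of $\Sigma$ then gives $V_1>0$ (resp. $V_1<0$) on $\Sigma$.

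Theorem \ref{Cluster} now says that, near each $\mu_q$, $D_{\epsilon,\tau}$ has infinitely many eigenvalues on the side given by the sign of $V_1$, with counting function \eqref{Asymp}, and only $O(1)$ eigenvalues on the other side. To pass this to the exterior operator, use the orthogonal decomposition $D_{\epsilon,\tau}=D^{\Omega_+}_{\epsilon,\tau}\oplus D^{\Omega_-}_{\epsilon,\tau}$ from Lemma \ref{TC_alter}(i); both summands are self-adjoint by Theorem \ref{main_th1}. Spectral projections split accordingly, so for any interval $\omega$ avoiding the essential spectrum,
\[
\mathcal N(\omega;D_{\epsilon,\tau})=\mathcal N(\omega;D^{\Omega_+}_{\epsilon,\tau})+\mathcal N(\omega;D^{\Omega_-}_{\epsilon,\tau}).
\]
By Proposition \ref{Spec_H}, $D^{\Omega_+}_{\epsilon,\tau}$ has compact resolvent. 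Therefore it has only finitely many eigenvalues in the bounded window $(\alpha_{q-1},\alpha_q)$, and its contribution to ${\mathcal N}^q_\pm(\lambda)$ is $O(1)$ uniformly in $\lambda$. It follows that $D^{\Omega_-}_{\epsilon,\tau}$ has the same counting functions up to $O(1)$. In particular, it has infinitely many eigenvalues accumulating at $\mu_q$ from above when $\tau\ge2$ (resp. from below when $\tau\le -2$), with asymptotics \eqref{Asymp}, and finitely many on the opposite side.

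None of these steps is a real obstacle; the whole proof is a reduction to Theorem \ref{Cluster}. The only points to check are the sign computation of $V_1$ and the fact that the $O(1)$ contribution of the interior operator is uniform in $\lambda$. The latter holds because the counting window $(\alpha_{q-1},\alpha_q)$ is fixed and bounded while $D^{\Omega_+}_{\epsilon,\tau}$ has discrete spectrum.
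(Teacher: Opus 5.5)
Your proposal is correct and follows the paper's argument. In the confining regime $V_1=(\epsilon+\tau)/2$ has the sign of $\tau$, so Theorem~\ref{Cluster} applies to $D_{\epsilon,\tau}=D^{\Omega_+}_{\epsilon,\tau}\oplus D^{\Omega_-}_{\epsilon,\tau}$; since $D^{\Omega_+}_{\epsilon,\tau}$ has compact resolvent (Proposition~\ref{Spec_H}), it contributes only $O(1)$ to each counting function, and the accumulation is therefore carried by the exterior operator.
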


Since the operators $D_0$ and $D_{\epsilon,\tau}$ are not defined on the same domain, for the proof of the above Theorem, it will be easier to work on their inverse (or their resolvent at a fixed real number $\zeta_0$) and to exploit that
 \begin{align*}
 {\mathcal N}_{+}^q(\lambda)&=\mathcal N\Big((\frac{1}{\alpha_q- \zeta_0},\frac{1}{\mu_q- \zeta_0+\lambda});{(D_{\epsilon,\tau}- \zeta_0)^{-1}}\Big)
 , \\
 {\mathcal N}_{-}^q(\lambda)&=\mathcal N\Big((\frac{1}{\mu_q- \zeta_0-\lambda}, \frac{1}{\alpha_{q-1} -\zeta_0}); (D_{\epsilon,\tau}- \zeta_0)^{-1}\Big).
 \end{align*}
For $\zeta_0 \in \rho(D_0) \cap \rho(D_{\epsilon,\tau})\cap \rr\setminus\{-m\}$ define the bounded operator
\[
\Upsilon:=(D_{\epsilon,\tau} - \zeta_0)^{-1}-(D_0 - \zeta_0)^{-1} = -\Phi_{\zeta_0}\,\Pi_{\zeta_0}^{-1}\,\Phi_{\zeta_0}^{*},
\]
where the second equality follows from Theorem \ref{main_th1}. Then the proof of Theorem \ref{Cluster} consists of two main steps: First we reduce the problem to the counting function (near $0$) of a compact operator, the Toeplitz-type operator $ \mathcal P_q \Upsilon \mathcal P_q$ where ${\mathcal P}_{q}$ is  the  infinite dimensional orthogonal projection onto Ker$(D_0-\mu_{q})$. Second we study the eigenvalue distribution of this Toeplitz operator.

In the interest of simplifying the expressions, in what follows, we will assume that we can choose $\zeta_0=0$ and we omit the subscript for $\Phi_{\zeta_0}$, $\Pi_{\zeta_0}$: $\Phi:=\Phi_{\zeta_0}$, $\Pi:=\Pi_{\zeta_0}$.
Moreover we will give the proof of Theorem \ref{Cluster} only for $V_1>0$ and for $q\geq 0$. The same reasoning applies to the other cases.

\subsection{Reduction to a Toeplitz operator }\label{SEC51}

The first step consists to prove the following Lemma saying that the distribution of the eigenvalues near $\mu_q$ is governed by the compact operator $\mathcal P_q\Upsilon  \mathcal P_q$.

\begin{lemma}\label{le1} Let  $q\in \zz$  and set $\lambda_0:=\frac{\lambda}{(\mu_q+\lambda)\mu_q}$. Then  
for any $\delta>0$ we have that 
	\begin{align*}
	n_-(\lambda_0,\mathcal P_q(\Upsilon + \delta \Phi \Phi^{*})\mathcal P_q)+O(1)&\leq \mathcal N_+^q(\lambda)\leq  n_-(\lambda_0,\mathcal P_q(\Upsilon - \delta \Phi \Phi^{*})\mathcal P_q)+O(1),
\end{align*}
as $\lambda \searrow 0$.
\end{lemma}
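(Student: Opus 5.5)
Throughout take $\zeta_0=0$ and $q\geq 0$ with $\mu_q>0$, as the paper does. The plan is to move the counting problem to the bounded operators $D_{\epsilon,\tau}^{-1}=D_0^{-1}+\Upsilon$ and then decouple the Landau level $\mu_q$ from the rest of the spectrum.

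\emph{Reduction to an interval below $1/\mu_q$.} Since $\frac{1}{\mu_q+\lambda}=\frac{1}{\mu_q}-\lambda_0$, the relation recorded before the lemma gives
\[
\mathcal N_+^q(\lambda)=\mathcal N\big((\tfrac{1}{\alpha_q},\tfrac{1}{\mu_q}-\lambda_0);\,D_0^{-1}+\Upsilon\big).
\]
Here $D_0^{-1}=\sum_n \mu_n^{-1}\mathcal P_n$, and $1/\mu_q$ is an isolated point of $\spec(D_0^{-1})$. Writing $\mathcal Q_q:=I-\mathcal P_q$, the operator $\mathcal Q_qD_0^{-1}\mathcal Q_q$ has no spectrum in a fixed neighbourhood $(1/\mu_q-\eta,1/\mu_q+\eta)$.

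\emph{Block-diagonal operator bounds.} Recall $\Upsilon=-\Phi\Pi^{-1}\Phi^*$ with $\Pi^{-1}$ bounded. Decompose $\Upsilon$ with respect to $I=\mathcal P_q+\mathcal Q_q$. The off-diagonal part is
\[
-(\mathcal P_q\Phi)\Pi^{-1}(\Phi^*\mathcal Q_q)-(\mathcal Q_q\Phi)\Pi^{-1}(\Phi^*\mathcal P_q).
\]
By Cauchy--Schwarz this is bounded above and below by $\pm\big(\delta\,\mathcal P_q\Phi\Phi^*\mathcal P_q+\delta^{-1}\|\Pi^{-1}\|^2\,\mathcal Q_q\Phi\Phi^*\mathcal Q_q\big)$. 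Hence, with $C_\delta:=\delta^{-1}\|\Pi^{-1}\|^2$,
\[
B_-:=\mathcal P_q(\Upsilon-\delta\Phi\Phi^*)\mathcal P_q+\mathcal Q_q(\Upsilon-C_\delta\Phi\Phi^*)\mathcal Q_q
\;\le\;\Upsilon\;\le\;
B_+:=\mathcal P_q(\Upsilon+\delta\Phi\Phi^*)\mathcal P_q+\mathcal Q_q(\Upsilon+C_\delta\Phi\Phi^*)\mathcal Q_q.
\]
All the perturbations involved are compact, because $\Phi^*$ is compact (Proposition \ref{Basic_Phi}).

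\emph{Monotone counting.} The main obstacle is that counting eigenvalues in an interval is not monotone in the operator. To get around it, I would count relative to $D_0^{-1}$ using the index of the pair of spectral projections,
\[
n(s;A):=\operatorname{ind}\big(E_{A}(-\infty,s),\,E_{D_0^{-1}}(-\infty,s)\big),\qquad s\notin\spec(D_0^{-1}),
\]
as in \cite{PushRoz07,BM}. For $A=D_0^{-1}+K$ with $K$ compact this index is finite. It is monotone: $A\le A'$ implies $n(s;A)\ge n(s;A')$. It also satisfies
\[
\mathcal N((a,s);A)=n(s;A)-n(a;A)\quad\text{for } a<s \text{ both in gaps of } D_0^{-1}.
\]
Take $a=1/\alpha_q$, which is fixed; then $n(a;\cdot)=O(1)$ for each of the three operators $D_0^{-1}+B_\pm$ and $D_0^{-1}+\Upsilon$. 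This gives
\[
\mathcal N\big((\tfrac{1}{\alpha_q},s);D_0^{-1}+B_+\big)+O(1)\;\le\;\mathcal N_+^q(\lambda)\;\le\;\mathcal N\big((\tfrac{1}{\alpha_q},s);D_0^{-1}+B_-\big)+O(1),\qquad s=\tfrac{1}{\mu_q}-\lambda_0.
\]

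\emph{Evaluating the blocks.} The operators $D_0^{-1}+B_\pm$ are block diagonal, so each count above splits into two pieces.

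On $\ran\mathcal Q_q$ the operator is $\mathcal Q_q(D_0^{-1}+\text{compact})\mathcal Q_q$. Its spectrum near $1/\mu_q$ is discrete and cannot accumulate at $1/\mu_q$, so it contributes $O(1)$ uniformly as $\lambda\searrow0$.

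On $\ran\mathcal P_q$ the operator is $\frac{1}{\mu_q}+\mathcal P_q(\Upsilon\pm\delta\Phi\Phi^*)\mathcal P_q$. This is $\frac{1}{\mu_q}$ plus a compact operator. Its eigenvalues in $(\frac{1}{\alpha_q},\frac{1}{\mu_q}-\lambda_0)$ are exactly the eigenvalues of $\mathcal P_q(\Upsilon\pm\delta\Phi\Phi^*)\mathcal P_q$ lying below $-\lambda_0$. Those below the fixed level $\frac{1}{\alpha_q}-\frac{1}{\mu_q}$ are only finitely many, so this piece equals $n_-(\lambda_0,\mathcal P_q(\Upsilon\pm\delta\Phi\Phi^*)\mathcal P_q)+O(1)$.

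Putting the two pieces together yields the two-sided estimate of Lemma \ref{le1}. The cases $q<0$ (where $\mu_q<0$ and inversion reverses orientation) and $\zeta_0\neq0$ are handled in the same way.
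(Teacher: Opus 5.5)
Your proposal is correct and follows essentially the same route as the paper. You use the same Cauchy--Schwarz domination of the off-diagonal part of $\Upsilon=-\Phi\Pi^{-1}\Phi^*$ by block-diagonal compact operators, then the index-of-spectral-projections monotonicity and the splitting into $\ran\mathcal P_q$ and its complement, which the paper packages (and only sketches) as Lemma \ref{le1b}. The one difference is cosmetic: you bound the complementary block by $\delta^{-1}\|\Pi^{-1}\|^2\Phi\Phi^*$ rather than $\delta^{-1}\Phi\Pi^{-2}\Phi^*$, and this changes nothing in the argument.
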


This Lemma is obtained by exploiting the following spectral result.

\begin{lemma}\label{le1b} Let $\mu_0$ be in the essential spectrum of a self-adjoint bounded operator $\mathcal A_0$ such that for some $\beta< \mu_0$ we have $[\beta, \mu_0) \subset \rho(\mathcal A_0)$ (i.e. $\mathcal A_0$ has a spectral gap below $\mu_0$). 
\begin{itemize}
     \item[(i)] If $M$, $M_-$, $M_+$ are self-adjoint compact operators such that
    $M_- \leq M \leq M_+$, in the sense of forms, then uniformly w.r.t. $\lambda >0$, we have:
    \[
  \mathcal N([\beta, \mu_0 - \lambda); \mathcal A_0 +M_+) + O(1)  \leq \mathcal N([\beta, \mu_0 - \lambda); \mathcal A_0 +M) \leq \mathcal N([\beta, \mu_0 - \lambda); \mathcal A_0 +M_-) + O(1).
    \]
    \item[(ii)] If moreover, $\mu_0$ is an isolated eigenvalue of infinite multiplicity of $\mathcal A_0$ and if the associated eigenspace is stable by the compact operators $M_-$ and $M_+$, then as $\lambda \searrow 0$, 
    \[
  n_-(\lambda, \mathcal P_0 M_+ \mathcal P_0) + O(1)  \leq \mathcal N([\beta, \mu_0 - \lambda); \mathcal A_0 +M) \leq n_-(\lambda, \mathcal P_0 M_- \mathcal P_0) + O(1),
    \]
    where $\mathcal P_0$ denotes the orthogonal projection onto $\ker(\mathcal A_0 - \mu_0)$.
\end{itemize}

\end{lemma}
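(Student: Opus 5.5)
For part (i) I would combine the min-max principle with a Weyl-type argument that absorbs finite-rank pieces. For part (ii) I would then reduce to the eigenspace $\ker(\mathcal A_0-\mu_0)$ by a Birman--Schwinger-type decoupling.

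\emph{Part (i).} By the min-max principle, $\mathcal A_0+M_-\le \mathcal A_0+M\le \mathcal A_0+M_+$ gives, for every $E$,
\[
\mathcal N((-\infty,E);\mathcal A_0+M_+)\le \mathcal N((-\infty,E);\mathcal A_0+M)\le \mathcal N((-\infty,E);\mathcal A_0+M_-).
\]
Since $M_\pm$ and $M$ are compact, all three operators have the same essential spectrum as $\mathcal A_0$. The point $\beta$ lies in a gap $[\beta,\mu_0)$ of $\mathcal A_0$, so the spectral projections of each perturbed operator on $(-\infty,\beta)$ differ from that of $\mathcal A_0$ by finite-rank operators. More precisely, I would take $\beta'<\beta$ still in the gap and use the fact that the counting function $\mathcal N((\beta',\beta);\cdot)$ is finite for each operator. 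To make this rigorous I would write $\mathcal N([\beta,\mu_0-\lambda);B)=\dim E_B((-\infty,\mu_0-\lambda))H\ominus E_B((-\infty,\beta))H$. I would then compare the infinite-dimensional spaces through relative dimension (index of pairs of projections), which is well defined because the differences $E_{\mathcal A_0+M}((-\infty,\beta))-E_{\mathcal A_0}((-\infty,\beta))$ are compact when $\beta\notin\spec_{ess}$. Monotonicity of the relative index under form ordering then gives the claim with an $O(1)$ that is uniform in $\lambda$.

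\emph{Part (ii).} Decompose $H=\mathcal P_0H\oplus\mathcal P_0^\perp H$. By the stability assumption, $M_\pm$ commute with $\mathcal P_0$, so
\[
\mathcal A_0+M_\pm=(\mu_0+\mathcal P_0M_\pm\mathcal P_0)\oplus(\mathcal A_0+M_\pm)|_{\mathcal P_0^\perp H}.
\]
On $\mathcal P_0^\perp H$ the unperturbed operator has a gap around $\mu_0$, namely $(\beta,\mu_0+\delta)$. A compact perturbation can create only finitely many eigenvalues in $[\beta,\mu_0-\lambda)$, uniformly in $\lambda$, because $\mu_0$ is not in its essential spectrum. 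On $\mathcal P_0H$ the eigenvalues in $[\beta,\mu_0-\lambda)$ are exactly $\mu_0+$ (eigenvalues of $\mathcal P_0M_\pm\mathcal P_0$ in $[\beta-\mu_0,-\lambda)$). Their number is $n_-(\lambda,\mathcal P_0M_\pm\mathcal P_0)$ up to the finitely many eigenvalues below $\beta-\mu_0$, which is $O(1)$ by compactness. Combining these counts with (i), applied to $M_-\le M\le M_+$, yields (ii).

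\emph{Main obstacle.} The delicate point is the uniform $O(1)$ in (i), since the counting is relative to a possibly infinite-dimensional spectral subspace below $\beta$. I would first try the relative-index formulation above. If that fails, I would instead use the known result of Pushnitski--Rozenblum (their lemma on compact perturbations in gaps) in the version of \cite{PushRoz07,BM}, which is exactly this statement.
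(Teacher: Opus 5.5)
Your proposal follows the paper's route. For (i) the paper likewise uses min-max only in the case where $\mu_0$ is the bottom of the essential spectrum, and otherwise appeals to monotonicity of the index of a pair of spectral projections (as in \cite[Section 4.1]{BM}); for (ii) it decomposes along $\ker(\mathcal A_0-\mu_0)$ and its orthogonal complement exactly as you do. One small correction: $E_{\mathcal A_0+M}((-\infty,\beta))-E_{\mathcal A_0}((-\infty,\beta))$ is in general only compact, not finite rank, and that is why the relative index is needed rather than your half-line min-max counts, which are infinite here.
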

The first point of this lemma is a direct consequence of min-max principle when $\mu_0$ is the bottom of the essential spectrum. In more general settings it can be proved by using monotonicity properties for the index of a pair of spectral projections  (e.g. as in \cite[Section 4.1]{BM}). For the proof if (ii), it suffices to decompose $\mathcal A_0 +M_\pm$ into $E_0:=\ker(\mathcal A_0 -\mu_0)$ and its orthogonal. On $E_0^\perp$, the spectrum of $\mathcal A_0 +M_\pm$ is discrete in $[\beta, \mu_0]$, then 
\[ \mathcal N([\beta, \mu_0 - \lambda); \mathcal A_0 + M_\pm) =
 \mathcal N([\beta, \mu_0 - \lambda);  \mu_0 + \mathcal P_0 M_\pm \mathcal P_0 ) + O(1),
\]
and the result follows because 
\[
\mathcal N([\beta, \mu_0 - \lambda);  \mu_0 + \mathcal P_0 M_\pm \mathcal P_0 ) = \mathcal N([\beta-\mu_0, - \lambda);  \mathcal P_0 M_\pm \mathcal P_0 ) = n_-(\lambda, \mathcal P_0 M_\pm \mathcal P_0) + O(1).
\]

\begin{proof}[Proof of Lemma \ref{le1}]  Set $\mathcal P_q^{\perp}:= I -\mathcal P_q$.  By Theorem \ref{main_th1} we have that  $\Upsilon= -\Phi\,\Pi^{-1}\,\Phi^{*}$, 
then for all $\delta > 0$ and any $f\in L^2(\mathbb R^2;\cc^2)$, we have 
\begin{align*}
|\langle \mathcal P_q \Upsilon  \mathcal P_q^{\perp} f, f \rangle_{L^2(\mathbb R^2;\cc^2)}|
= &|\langle \Pi^{-1}\Phi^* \mathcal P_q^{\perp} f, \Phi^*\mathcal P_q f \rangle_{L^2(\Sigma;\cc^2)}|\\
\leq & \|\Pi^{-1} \Phi^* \mathcal P_q^{\perp} f\|_{L^2(\Sigma;\cc^2)} \, \|\Phi^* \mathcal P_qf \|_{L^2(\Sigma;\cc^2)} \\
\leq & \delta^{-1} \, \|\Pi^{-1} \Phi^* \mathcal P_q^{\perp} f\|_{L^2(\Sigma;\cc^2)}^2 + \delta \|\Phi^* \mathcal P_qf \|_{L^2(\Sigma;\cc^2)}^2\\
=&\delta^{-1} \langle \Pi^{-2}\Phi^* \mathcal P_q^{\perp} f, \Phi^*\mathcal P_q^{\perp} f \rangle_{L^2(\Sigma;\cc^2)}
+ \delta\langle \Phi^* \mathcal P_q f, \Phi^*\mathcal P_q f \rangle_{L^2(\Sigma;\cc^2)}.
\end{align*}
Writing
\[
\Upsilon = \mathcal P_q \Upsilon \mathcal P_q + \mathcal P_q^{\perp} \Upsilon \mathcal P_q^{\perp} + \mathcal P_q \Upsilon \mathcal P_q^{\perp} + \mathcal P_q^{\perp} \Upsilon \mathcal P_q,
\]
it follows, in the sense of forms, that  
\[
M_{-\delta} \leq \Upsilon \leq M_{+\delta} 
\]
with
\[M_{\pm \delta}:= \mathcal P_q( \Upsilon \pm \delta \Phi \Phi^{*})\mathcal P_q 
+ \mathcal P_q^{\perp}( \Upsilon \pm \delta^{-1} \Phi \Pi^{-2}\Phi^*)\mathcal P_q^{\perp} .
\]
Thanks to Proposition \ref{Basic_Phi}, the operators $M_{\pm \delta}$ are compact and we can apply Lemma \ref{le1b} for $\mathcal A_0= D_0^{-1} $, $\mu_0=\frac1{\mu_q}$, $M= \Upsilon$ and $M_\pm=M_{\pm \delta}$. We obtain:
\begin{align*}
	&n_-(\lambda_0,\mathcal P_q(\Upsilon + \delta \Phi \Phi^{*})\mathcal P_q)+O(1)\\
	\leq\, & \mathcal N \Big((\alpha_q^{-1},(\mu_q+\lambda)^{-1});{D_{\epsilon,\tau}^{-1}}\Big)\\ \leq & n_-(\lambda_0,\mathcal P_q(\Upsilon - \delta \Phi \Phi^{*})\mathcal P_q)+O(1),
\end{align*}
because $\mu_q^{-1} - (\mu_q+\lambda)^{-1} = \lambda_0 $, and the lemma follows.
\end{proof}

\subsection{Counting function of eigenvalues for the effective Hamiltonian}\label{SEC52}
   Lemma \ref{le1} entails that our {effective Hamiltonian} is  $-\mathcal P_q\Phi \Xi_{\delta} \Phi^{*}\mathcal P_q$  with $\Xi_{\delta}:= \Pi^{-1} -\delta I_2$ and  $\pm \delta >0$. 
Recalling  that $\Phi^{*}=\gamma^\Sigma_D\,\,D_{0}^{-1}$, the above effective Hamiltonian becomes formally:
\(
- \mu_q^{-2} \mathcal P_q (\gamma^\Sigma_D)^{*} \Xi_{\delta} \gamma^\Sigma_D\mathcal P_q
\). 
Moreover, from \eqref{inverse_Pi} it can be seen that
\begin{align*}
\Xi_{\delta}=\frac{4(\epsilon^2-\tau^2)}{4-(\epsilon^2-\tau^2)}
\begin{pmatrix}
\frac{\epsilon +\tau}{ \epsilon^2-\tau^2}& - H_\Sigma\\
- H_\Sigma^* & \frac{\epsilon -\tau}{\epsilon^2-\tau^2}
\end{pmatrix}
+S - \delta I_2=:  \begin{pmatrix}
V_1 - \delta & W_\Sigma\\
W^*_\Sigma &V_2 - \delta
\end{pmatrix}
+S,
\end{align*}
 where  \( H_\Sigma\) is the bounded operator defined in \eqref{Cauchy2} and $S$ is a compact operator in $L^2(\Sigma;\cc^2)$. Then inspired by \cite{BM}, the proof of Theorem \ref{Cluster} consists to prove that the distribution of the eigenvalues near $\mu_q$ is formally governed by $ p_{|q|}  (- V_1)  \delta_\Sigma p_{|q|}$, with $p_n$  the orthogonal projections onto $\ker(L -\Lambda_{n})$ where $\Lambda_n:=2nb$, $n \in \nn$ and
 \[ 
L:= a^* a = -\nabla_A^2  -b.
\]
Using the notation of \eqref{pi} we can see that  the orthogonal projections $\mathcal P_q$ are given by
\[
\mathcal{P}_q =
\begin{cases}
U_{FW}^*p_q\pi_+ U_{FW}\quad&\text{if }\,q\geq 0,\\
U_{FW}^*p_{|q|-1}\pi_- U_{FW}\quad&\text{if }\,q< 0,
\end{cases}
\]
where $U_{FW}$ is the unitary operator
\[
U_{FW}:=
\bigl(I + \sigma_3 (D_0 - m \sigma_3)|D_0 - m \sigma_3|^{-1} \bigr)\frac1{\sqrt 2} \sqrt{I - m |D_0|^{-1}}.
\]
For more details see \cite[Section 3]{BM}.

\begin{proof}[Proof of Theorem \ref{Cluster}]
We give the proof for $V_1>0$ on $\Sigma$ and for $q\geq 0$. As in \cite{BM}, the same arguments work for $q<0$.

Let $\tilde \pi_+: L^2(\mathbb R^2)\to L^2(\mathbb R^2;\mathbb C^2)$ be the operator defined by $\tilde \pi_+ f=(f,0)^\intercal$. Then it is obvious that the operator  $-\mathcal P_q(\Upsilon + \delta \Phi \Phi^{*})\mathcal P_q= \mathcal P_q\Phi \Xi_{\delta} \Phi^{*}\mathcal P_q$ has the same non-zero eigenvalues than the operator defined by the quadratic form  $Q$ in $L^{2}(\mathbb{R}^{2})$, 
\begin{align}\label{Def Q}
\begin{split}
Q[f]:=&\bigl\langle \Xi_{\delta}\Phi^{*}\mathcal{P}_q U_{FW}^*\tilde\pi_+ f,\;\Phi^{*}\mathcal{P}_q U_{FW}^*\tilde\pi_+ f \bigr\rangle_{L^2(\Sigma)}\\
=&\bigl\langle \Xi_{\delta} \Phi^{*}U_{FW}^*\pi_+p_qf ,\;\Phi^{*}U_{FW}^*\pi_+p_qf \bigr\rangle_{L^2(\Sigma)}.
\end{split}
\end{align}
Set $\mathcal A:=\begin{pmatrix}
V_{1} -\delta & W_\Sigma\\
W^*_\Sigma & V_{2} - \delta
\end{pmatrix}$ and define  the quadratic forms in $L^2(\mathbb R^2;\cc^2)$
\begin{align*}
Q_{\mathcal A}[f]:=&\bigl\langle \mathcal A \Phi^{*}U_{FW}^*\pi_+p_qf ,\;\Phi^{*}U_{FW}^*\pi_+p_qf \bigr\rangle_{L^2(\Sigma)},\\
Q_S[f]:=&\bigl\langle S\Phi^{*}U_{FW}^*\pi_+p_qf ,\;\Phi^{*}U_{FW}^*\pi_+p_qf \bigr\rangle_{L^2(\Sigma)}.
\end{align*}
So that $Q=Q_{\mathcal A}+Q_S$.

From the definition of $U_{FW}$, it is possible to show that there exist real numbers $c_1$ and $c_2$ such that
\begin{align*}
{U}_{FW}^* {p}_q \pi_+=&
\bigl(c_1 I - c_2 \sigma_3 (D_0 - m \sigma_3)\bigr)p_q \pi_+
=
\begin{pmatrix}
c_1p_q & 0 \\
c_2ap_q & 0
\end{pmatrix}.
\end{align*}
Therefore, recalling  that $\Phi^{*}=\gamma^\Sigma_D\,\,D_{0}^{-1}$ and $D_{0}^{-1}\mathcal{P}_q = \mu_q^{-1}\mathcal{P}_q$,  we obtain
\begin{align*}
Q_{\mathcal A}[f]
=&\alpha _1
\left\langle 
(V_1-\delta) \left.(p_q f)\right|_{\Sigma},
\left.(p_q f)\right|_{\Sigma}
\right\rangle_{L^2(\Sigma)}
+\alpha _2
\left\langle 
(V_2 -\delta) \left.(ap_q f)\right|_{\Sigma},
\left.(ap_q f)\right|_{\Sigma}
\right\rangle_{L^2(\Sigma)}\\
+&{\beta}
{\rm Re}
\left\langle 
W_\Sigma \left.(ap_q f)\right|_{\Sigma},
\left.(p_q f)\right|_{\Sigma}
\right\rangle_{L^2(\Sigma)},
\end{align*}
where 
\[
\alpha_1 = t_q : = \frac{\mu_q+m}{2\mu_q}, \quad 
\alpha_2 = \frac{1-t_q}{2bq} = \frac{1}{2\mu_q(\mu_q+m)}, \quad 
\beta = \frac{1}{\mu_q}.
\]

For $Q_S$, by exploiting that $S$ is compact, we deduce that for all $\delta>0$, there exists a finite-codimension subspace of $L^2(\Sigma)$ such that
\begin{equation*}\label{12nov25b}
|Q_S[f]|\leq \delta ( \| \left.(p_q f)\right|_\Sigma\|_{L^2(\Sigma)}^2 + \| \left.(a p_q f)\right|_\Sigma\|_{L^2(\Sigma)}^2).
\end{equation*}


Using the identities \eqref{a_partial} we have that  
\[
a = -i \left(2\overline{\partial} + \frac{zb}2 \right)= -i e^{-b|z|^2/4} 2\overline{\partial} e^{b|z|^2/4}; \qquad 
a^* = -i \left(2{\partial} - \frac{\overline{z} b}2\right)= -i e^{b|z|^2/4} 2{\partial} e^{-b|z|^2/4}.
\]
Thus 
$\ker L= \ker (a^*a) = \ker (a)$ is given by the functions $u \in L^2(\mathbb R^2)$ such that $f: z \mapsto e^{b|z|^2/4} u(z) $ is entire.
Then, as in previous works on magnetic Hamiltonians, we consider the Fock space $\mathcal F^2$ of entire functions $F$ such that
$$
\|F\|_{\mathcal{F}^2}^2=\int_{\mathbb C}|Fz)|^2 e^{-b|z|^2/2}dm(z)<\infty ,
$$
in such a way that $\ker L = e^{-b|z|^2/4} \mathcal F^2  $ and 
\[
\mathcal H_n:=
\ker (L-\Lambda_n)= \ker (a^*a - 2bn) = (a^*)^n \ker( a) = (a^*)^n \left( e^{-b|z|^2/4} \mathcal F^2  \right) . 
\]
Recall now that we have  the unitary operators 
\[
C_n (a^*)^n: \mathcal H_0 \to  \mathcal H_n, \qquad C_n^{-1} :=\sqrt{(2b)^{n}(n!)},
\]
and that the map
\[
\mathcal{U}:\mathcal{F}^{2} \to \mathcal{H}_0,\quad F\mapsto e^{\frac{-b|\cdot|^2}{4}} F
\]
is also unitary.

Using  the identities
\begin{align*}
(a^{*})^{n} \left( e^{-\frac{b|\cdot|^{2}}{4}} F \right)(z)
&= (-i)^ne^{-\frac{b|z|^{2}}{4}} (2\partial-b\bar z)^{n} F(z)\\
a(a^{*})^{n} p_0 &= 2nb (a^{*})^{n-1} p_0,
\end{align*}
we have that for any $p_n f \in  \mathcal H_n$ there exists $F \in \mathcal{F}^{2}$ such that 
\[
p_n f = C_n e^{\frac{-b|\cdot|^2}{4}} (2\partial-b\bar z)^n  F \quad ; \qquad 
a p_n f = \sqrt{2 n b} \, C_{n-1} e^{\frac{-b|\cdot|^2}{4}} (2\partial-b\bar z)^{n-1}  F .
\]
Then, if on $\mathcal{F}^{2}$  we define the quadratic forms
\[
A_n(v)[F]
= \int_{\Sigma} v(s)
e^{-\frac{bs^{2}}{2}}
\left| (2\partial-b\bar z)^{n} F (s)\right|^2ds,
\]
and
\[
B_n(W_\Sigma)[F]=2{\rm Re} \int_{\Sigma} \left(W_\Sigma\left.\left[e^{-\frac{b|\cdot|^{2}}{4}} (2\partial-b\bar z)^{n} F\right]\right|_{\Sigma}\right)
\overline{ \left.\left[e^{-\frac{b|\cdot|^{2}}{4}} (2\partial-b\bar z)^{n-1} F\right]\right|_{\Sigma} } \mathrm{d} s,
\]
we have:

\begin{lemma}\label{lem55} The quadratic form $Q=Q_{\mathcal A}+Q_S$ is unitarily equivalent to a quadratic form $\widetilde Q= \widetilde Q_{\mathcal A} + \widetilde Q_{S}$ defined on the Fock space $\mathcal F^2$ where 
\[
\widetilde Q_{\mathcal A} := \left\{ \begin{array}{ll}
\widetilde\alpha_1\,A_q(V_1-\delta) + \widetilde\alpha_2\,A_{q-1}(V_2-\delta) 
+ \widetilde\beta\,B_q(W_\Sigma), & \text{ for } q\geq 1,  \\
A_0(V_1-\delta),  & \text{ for } q = 0, 
\end{array} \right.
\]
for
\[
\;\widetilde\alpha_1 = \frac{1}{2\mu_q(\mu_q-m)(2b)^{q-1}(q-1)!}, \quad
\widetilde\alpha_2 = \frac{\mu_q-m}{2\mu_q(2b)^{q-1}(q-1)!}, \quad
\widetilde\beta = \frac{1}{2\mu_q(2b)^{q-1}(q-1)!}\;,
\]
and for all  $ \tilde\delta >0$, there exists a finite-codimension subspace of $\mathcal{F}^{2}$ such that
\[
|\widetilde Q_S | \leq \tilde\delta ( A_q(1) + A_{q-1}(1) ).
\]
\end{lemma}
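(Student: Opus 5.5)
The plan is to transport the quadratic forms $Q_{\mathcal A}$ and $Q_S$, which live on $\ker(L-\Lambda_q)=\mathcal H_q$ through the parametrization $f\mapsto p_qf$, to the Fock space $\mathcal F^2$. We compose the unitaries $\mathcal U:\mathcal F^2\to\mathcal H_0$ and $C_q(a^*)^q:\mathcal H_0\to\mathcal H_q$, and then identify the coefficients.

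\textbf{Step 1: reduce to $F\in\mathcal F^2$.} Since $Q$ only depends on $p_qf$, it suffices to consider $Q$ restricted to $\mathcal H_q$. On that subspace we write $p_qf=C_q e^{-b|\cdot|^2/4}(2\partial-b\bar z)^qF$ and $ap_qf=\sqrt{2qb}\,C_{q-1}e^{-b|\cdot|^2/4}(2\partial-b\bar z)^{q-1}F$. These identities are the ones displayed before the lemma. They come from $(a^*)^n(e^{-b|\cdot|^2/4}F)=(-i)^ne^{-b|\cdot|^2/4}(2\partial-b\bar z)^nF$ and from $a(a^*)^np_0=2nb(a^*)^{n-1}p_0$. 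The unimodular factors $(-i)^n$ drop out of all moduli. In the cross term they give a common phase $(-i)^q\overline{(-i)^{q-1}}=-i$, which has to be tracked and absorbed into the sign and phase convention defining $B_q$ together with $\mathrm{Re}$.

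\textbf{Step 2: identify the coefficients.} Substituting into the expression for $Q_{\mathcal A}$ obtained above, the first term becomes $\alpha_1C_q^2A_q(V_1-\delta)[F]$. The second becomes $\alpha_2\,2qb\,C_{q-1}^2A_{q-1}(V_2-\delta)[F]$, and the cross term becomes $\tfrac{\beta}{2}\sqrt{2qb}\,C_qC_{q-1}B_q(W_\Sigma)[F]$. We then insert $C_n^{-2}=(2b)^nn!$, $\alpha_1=t_q=(\mu_q+m)/(2\mu_q)$, $\alpha_2=1/(2\mu_q(\mu_q+m))$, $\beta=1/\mu_q$ and $\mu_q^2-m^2=2bq$. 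For instance,
\[
\alpha_1C_q^2=\frac{\mu_q+m}{2\mu_q(2b)^qq!}=\frac{1}{2\mu_q(\mu_q-m)(2b)^{q-1}(q-1)!},
\]
and analogous one-line computations give $\widetilde\alpha_2$ and $\widetilde\beta$. The normalisations of $\beta$ and of $B_q$ (the factor $2$ and the square-root factors) should be checked carefully here. For $q=0$ the second component vanishes because $ap_0=0$, and $\alpha_1=1$ when $\mu_0=m$. We are left with $C_0^2A_0(V_1-\delta)$ with $C_0=1$, which is the stated $q=0$ form.

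\textbf{Step 3: the compact remainder.} We start from the bound already obtained, $|Q_S[f]|\le\delta(\|(p_qf)|_\Sigma\|^2+\|(ap_qf)|_\Sigma\|^2)$ on a finite-codimension subspace. Transporting it by the same unitaries gives $|\widetilde Q_S|\le\delta(C_q^2A_q(1)+2qbC_{q-1}^2A_{q-1}(1))$ on the image subspace, which still has finite codimension because the map is unitary. Choosing $\delta=\tilde\delta/\max(C_q^2,2qbC_{q-1}^2)$ gives the claim.

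The main obstacle is the bookkeeping in Step 1: the phases from $(-i)^n$ and the conjugation convention in $B_q$ have to match the $\mathrm{Re}$ term of $Q_{\mathcal A}$ with the correct constant $\widetilde\beta$. I would verify this by a direct computation with $q=1$ first, before doing the general case.
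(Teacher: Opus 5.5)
Your route is the one the paper intends. The paper gives no separate proof: the lemma is read off by substituting the displayed formulas for $p_qf$ and $ap_qf$ into $Q_{\mathcal A}$ and $Q_S$. Your coefficient formulas for $\widetilde\alpha_1,\widetilde\alpha_2,\widetilde\beta$, the case $q=0$, and the transport of the $Q_S$ bound in Step~3 are all correct.

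The one point to repair is the phase in Step~1, which cannot be ``absorbed into the convention'' while keeping $B_q(W_\Sigma)$ as defined. Take $p_qf=C_q(a^*)^q\mathcal UF$ and set $X_k:=\big[e^{-b|\cdot|^2/4}(2\partial-b\bar z)^kF\big]\big|_\Sigma$. Then $(p_qf)|_\Sigma=(-i)^qC_qX_q$ and $(ap_qf)|_\Sigma=(-i)^{q-1}\sqrt{2qb}\,C_{q-1}X_{q-1}$; the display preceding the lemma drops this relative factor $i$. Hence the cross term of $Q_{\mathcal A}$ is $2\widetilde\beta\,\mathrm{Re}\big(i\int_\Sigma W_\Sigma[X_{q-1}]\,\overline{X_q}\,\mathrm{d}s\big)=\widetilde\beta\,B_q(-iW_\Sigma^*)[F]$, not $\widetilde\beta\,B_q(W_\Sigma)[F]$.

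A relative factor $\pm i$ between two quantities built from the same $F$ cannot be removed by renormalising $F$, since $\mathrm{Re}(iw)=-\mathrm{Im}\,w$. The adjoint appears because $W_\Sigma$ acts on the $(q-1)$-st factor in $Q_{\mathcal A}$ but on the $q$-th one in $B_q$.

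This is harmless downstream. The subsequent argument only uses $|B_q(W)[F]|\le 2\|W\|\,A_q(1)[F]^{1/2}A_{q-1}(1)[F]^{1/2}$, and $\|{-i}W_\Sigma^*\|=\|W_\Sigma\|$. So simply state the lemma with $B_q(-iW_\Sigma^*)$ in place of $B_q(W_\Sigma)$.
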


Let us note that for any $\tilde\delta>0$,
\begin{align*}
|B_n(W_\Sigma)[F]|  \leq& 2  \left(  \int_{\Sigma} \left| W_\Sigma \left[ e^{-\frac{b|\cdot|^2}{4}} (2\partial-b\bar z)^{n-1} F \right]_\Sigma\right|^2 \mathrm{d} s \right)^{1/2}
\left( \int_{\Sigma} e^{-\frac{bs^2}{4}}  \left| (2\partial-b\bar z)^{n} F(s) \right|_\Sigma^2 \mathrm{d} s \right)^{1/2}\\
\leq & 2\| W_\Sigma \| \left|\left| \left[ e^{-\frac{b|\cdot|^2}{4}} (2\partial-b\bar z)^{n-1} F \right]_\Sigma\right|\right|_{L^2(\Sigma)} A_{n}(1)[F]^{1/2} \\
\leq&2\tilde\delta\| W_\Sigma \| A_n(1)[F]+2{\tilde\delta}^{-1} \| W_\Sigma \|A_{n-1}(1)[F].
\end{align*}
Moreover, for $V_1$ positive on the compact set $\Sigma$, there exists  $v_1>0$ such that  $V_1-\tilde\delta  \geq v_1$. 
Then since $V_1$, $V_2$ and $W_\Sigma$  are bounded, by taking 
$\tilde\delta$ small enough, from Lemma \ref{lem55}, we deduce that there exist $c_0>0$, $C_0>0$ and $C_1\in \rr$ such that  on a subspace of $\mathcal F^2$ of finite-codimension:
\begin{equation}\label{>Qt>}
C_0 ( A_q (1) + A_{q-1} (1)) \geq \widetilde Q \geq c_0 A_q (1) - C_1  A_{q-1} (1).
\end{equation}
Now we conclude as in the proof of \cite[Proposition 4.1]{PushRoz07}. By using the compact embedding of $H^1(\Sigma)$ into $L^2(\Sigma)$ and analytic properties, for any $\tilde\delta>0$, there exists a subspace of $\mathcal F^2$ of finite-codimension on which 
\[
\int_\Sigma  \Big| \partial^k F(s) \Big|^2 ds \leq \tilde\delta^2 \int_\Sigma  \Big| \partial^q F(s) \Big|^2 ds, \qquad \forall k=0,1,\cdots, q-1.
\]
We deduce that for any $\tilde\delta>0$ sufficiently small, there exist $c_{\tilde\delta}>0$, $C_{\tilde\delta}>0$ and a subspace of finite-codimension on which 
\[ 
A_{q-1} (1)[F] \leq \tilde\delta \| \partial^q F \|_{L^2(\Sigma)}^2 \quad , \qquad c_{\tilde\delta} \| \partial^q F \|_{L^2(\Sigma)}^2 \leq A_{q} (1)[F] \leq C_{\tilde\delta} \| \partial^q F \|_{L^2(\Sigma)}^2 .
\]
By exploiting these estimates in \eqref{>Qt>} and coming back to the quadratic form $Q$ (defined by \eqref{Def Q}) we obtain that for some positive constants $c_\Sigma$, $C_\Sigma$,
\begin{equation}\label{13nov25}
C_\Sigma \| \left.(p_q f)\right|_\Sigma\|_{L^2(\Sigma)}^2 \geq     Q[f] \geq  c_\Sigma \| \left.(p_q f)\right|_\Sigma\|_{L^2(\Sigma)}^2
\end{equation}
on a subspace of finite codimension in $L^2(\mathbb R^2)$.

Define in $L^2(\mathbb R^2)$  the operator  $p_q\delta_\Sigma p_q$ induced by the quadratic form  
\[
\int_\Sigma  \Big|\left.(p_q f)\right|_\Sigma\Big|^2 ds.
\]
Then from lemma \ref{le1}, the estimates \eqref{13nov25},  and  the min-max principle, it follows
\begin{align*}
	n_+(\lambda , c_\Sigma p_q\delta_\Sigma p_q)+O(1)&\leq \mathcal N_+^q(\lambda)\leq  n_+(\lambda , C_\Sigma p_q\delta_\Sigma p_q)+O(1).
	\end{align*}
Using \cite[Proposition 4.1 (ii)]{PushRoz07} combined with \cite[Corollary 5.11]{BrRa20} we get, for any $C>0$, the eigenvalue asymptotic 
\begin{equation*}
n_+(\lambda; C p_q\delta_\Sigma p_q)=
 \frac{|\ln{\lambda}|}{\ln_2(\lambda)} + \frac{|\ln{\lambda}|\ln_3(\lambda)}{\ln_2(\lambda)^2}  +   \frac{|\ln{\lambda}|}{\ln_2(\lambda)^2}\big(\mathfrak C(\Sigma)+o(1)\big).\end{equation*}
This completes the proof of  theorem \ref{Cluster}.
\end{proof}

\section{Essential self-adjointness and basic spectral properties of $D_{\epsilon,\tau}$: Critical case}\label{S6}
In this section,  we focus on the spectral analysis of $D_{\epsilon,\tau}$ in the critical case. Henceforth, we assume that
\begin{align}\label{Assumption}
\epsilon, \tau \in C^\infty(\Sigma,\rr)\quad \text{with}\quad \epsilon^2(s)-\tau^2(s)=4 \quad \text{for all }\, s\in \Sigma. 
\end{align}
The main change in this setting  is loss of the Fredholm property of  $\Pi_z$, which was a key tool in proving the self-adjointness in the non-critical case.  We recall that $\Pi_z$ and $\underline{\Pi_z}$ are given in \eqref{def_Pi}.

\begin{lemma}\label{Pi_notFredholm} Assume the condition \eqref{Assumption}. Then, for any $z\in\rho(D_0)$ the following holds:
\begin{enumerate}
\item[(i)]  The operator $\Pi_z$ is not Fredholm in $L^2(\Sigma;\cc^2)$. 
\item[(ii)] The operators $\Pi_z\underline{\Pi_z}$ and $\underline{\Pi_z}\Pi_z$ are bounded from $H^{-\frac{1}{2}}(\Sigma;\cc^2)$ to $H^{\frac{1}{2}}(\Sigma;\cc^2)$.
\end{enumerate}
\end{lemma}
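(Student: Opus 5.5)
We will work with the products computed in the proof of Lemma \ref{Pi_Fredholm}. Under \eqref{Assumption} we have $\epsilon^2-\tau^2\equiv 4$, so $\frac{1}{\epsilon^2-\tau^2}=\frac14$ and
\[
\underline{\Pi_z}\Pi_z=\tfrac14 I-\mathscr C_z^2+\big[\tfrac{\epsilon}{4},\mathscr C_z\big]+\big\{\tfrac{\tau}{4}\sigma_3,\mathscr C_z\big\},
\]
with the analogous formula for $\Pi_z\underline{\Pi_z}$. The idea is that the constant term $\frac14 I$ cancels against the principal part of $\mathscr C_z^2$. The remaining operators are therefore of order $-1$, and this is exactly what gives (ii). As in Lemma \ref{Pi_Fredholm}, the case $z=-m$ is reduced to $z\neq -m$ through \eqref{C_-m}, since $\Phi^*_{-m}\Phi_\zeta$ is bounded from $H^{-1/2}$ to $H^{1/2}$.

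For (ii) we use the decomposition \eqref{IdenC}, $\mathscr C_z=\Theta+(zI_2+m\sigma_3)(S(0)\otimes I_2)+T_z$, together with Lemma \ref{H properties}. Then $\mathscr C_z^2=\Theta^2+(\text{order }-1)$, and
\[
\Theta^2=\mathrm{diag}(H_\Sigma H_\Sigma^*,\,H_\Sigma^*H_\Sigma).
\]
Since $H_\Sigma^*$ equals $\overline T$ times the adjoint of $H_\Sigma\overline T$, which is $-\mathcal H^*$, and since $(H_\Sigma\overline T)^2=\frac14 I$ holds exactly, we get $H_\Sigma H_\Sigma^*=\frac14 I+R$ with $R$ of order $-1$. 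Here $R$ comes from commuting the smooth multiplier $T$ with the order-zero operator $\mathcal H$, and from $\mathcal H^*-\mathcal H$ having a smoothing kernel on a smooth curve. Hence $\frac14-\mathscr C_z^2\in\Op\cS^{-1}_2(\Sigma)$. The commutator $[\epsilon/4,\mathscr C_z]$ has order $-1$, and by the symbol computation in the proof of Lemma \ref{Pi_Fredholm} the anticommutator $\{\frac\tau4\sigma_3,\mathscr C_z\}$ also has order $-1$. The smoothness of $\epsilon,\tau$ assumed in \eqref{Assumption} is what allows the full symbolic calculus. An operator in $\Op\cS^{-1}_2(\Sigma)$ maps $H^{-1/2}$ to $H^{1/2}$, which proves (ii).

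For (i) we argue by contradiction. Suppose $\Pi_z$ is Fredholm in $L^2$. Its principal symbol is
\[
\sigma(x,\xi)=\tfrac14(\epsilon-\tau\sigma_3)-\tfrac12\begin{pmatrix}0&T\,\mathrm{sgn}\xi\\ \overline T\,\mathrm{sgn}\xi&0\end{pmatrix}.
\]
Its determinant is $\frac1{16}(\epsilon^2-\tau^2)-\frac14|T|^2=\frac14-\frac14=0$. A zero-order classical pseudodifferential operator on a compact manifold is Fredholm on $L^2$ if and only if its principal symbol is invertible on $S^*\Sigma$, so this already gives the claim. If we prefer to stay within the paper's tools, we can argue directly. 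By (ii), $\underline{\Pi_z}\Pi_z$ is compact on $L^2$. If $\Pi_z$ had a parametrix $B$, then $\underline{\Pi_z}=\underline{\Pi_z}\Pi_zB+\text{compact}$ would be compact. But the principal symbol of $\underline{\Pi_z}$, namely $\frac14(\epsilon+\tau\sigma_3)+\frac12(\cdots)$, is nonzero because $|T|=1$. Testing on oscillating sequences $e^{ik s}\chi(s)v$ with $k\to\infty$ shows that $\underline{\Pi_z}$ is not compact, which is a contradiction.

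The main obstacle is the precise justification that $H_\Sigma H_\Sigma^*-\frac14$ is of order $-1$ and not merely compact. I would settle this by computing the full symbol of $\mathcal H$ to second order on a smooth curve, or equivalently by noting that the Cauchy kernel minus its adjoint is smooth.
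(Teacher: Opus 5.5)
Your proposal is correct and follows essentially the same route as the paper. For (i), both arguments rest on non-ellipticity, read off from the vanishing determinant $\frac{\epsilon^2-\tau^2}{16}-\frac14=0$ of the principal symbol. For (ii), both rest on the cancellation of the principal symbol of the product: you organize it through the formula from Lemma~\ref{Pi_Fredholm}, while the paper multiplies the two $2\times 2$ block operators directly.

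The step you flag as the main obstacle is not one. The fact that $\frac14-\mathscr C_z^2\in\Op\cS^{-1}_2(\Sigma)$ follows at once from $p_{\mathscr C_z}(x,\xi)^2=\frac14 I_2$, as already noted in the proof of Proposition~\ref{Basic_C_z}(ii), so no Kerzman--Stein argument is needed.
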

\begin{proof} By Proposition \ref{Basic_C_z} there are bounded operators $L_z,\, \widetilde{L_z}: H^{-\frac{1}{2}}(\Sigma;\cc^2)\rightarrow H^{\frac{1}{2}}(\Sigma;\cc^2)$ such that
\begin{align*}
\Pi_z =  \begin{pmatrix} 
	      \frac{\epsilon -\tau}{4} & H_\Sigma\\ 
	H^*_\Sigma & \frac{\epsilon +\tau}{4}  \, 
	   \end{pmatrix}  + L_z, \quad  \underline{\Pi_z} =  \begin{pmatrix} 
	      \frac{\epsilon +\tau}{4} & -H_\Sigma\\ 
	-H^*_\Sigma & \frac{\epsilon-\tau}{4}  \, 
	   \end{pmatrix}  +\widetilde{L_z},
\end{align*}
and the determinant of the principal symbol of the first term in the right hand side of each equality above is
\begin{align*}
\mathrm{det}\begin{pmatrix} 
	      \frac{\epsilon \mp \tau}{4} & \mp\frac{1}{2}T \mathrm{sgn}(\xi)\\ 
	\mp \frac{1}{2}\overline{T} \mathrm{sgn}(\xi) & \frac{\epsilon \pm \tau}{4}  \, 
	   \end{pmatrix} = \frac{\epsilon^2-\tau^2}{16} -\frac{1}{4}=0 \, , \quad \forall \xi\neq0,
\end{align*}
which implies that $\Pi_z$ and  $\underline{\Pi_z}$ are not elliptic, in particular,  not Fredholm in $L^2(\Sigma;\cc^2)$.  Now a simple computation shows
\begin{align*}
 \begin{pmatrix} 
	      \frac{\epsilon -\tau}{4} & H_\Sigma\\ 
	H^*_\Sigma & \frac{\epsilon +\tau}{4}  \, 
	   \end{pmatrix}   \begin{pmatrix} 
	      \frac{\epsilon +\tau}{4} & -H_\Sigma\\ 
	-H^*_\Sigma & \frac{\epsilon-\tau}{4}  \, 
	   \end{pmatrix}= \begin{pmatrix} 
	   \frac{\epsilon +\tau}{4} & -H_\Sigma\\ 
	   -H^*_\Sigma & \frac{\epsilon-\tau}{4}  \, 
	   \end{pmatrix}\begin{pmatrix} 
	   \frac{\epsilon -\tau}{4} & H_\Sigma\\ 
	   H^*_\Sigma & \frac{\epsilon +\tau}{4}  \, 
	   \end{pmatrix}  =  0 \quad \mathrm{ mod }\,\,  \Op \cS^{-1}_2(\Sigma).
\end{align*}
This together with the properties of $L_z$ and $ \widetilde{L_z}$ gives (ii).   
\end{proof}

 Next, we prove the essential self-adjointness of $D_{\epsilon,\tau}$. As will become apparent later in Proposition \ref{BS and Krein 2}, the domain of its closure denoted $\dom \overline{D_{\epsilon,\tau}}$, is not contained in  $H^1(\rr^2\setminus\Sigma;\cc^2)$.

\begin{theorem}\label{main_th2} Let $D_{\epsilon,\tau}$  be as in \eqref{def_A_eta_tau}. Then, under the assumption  \eqref{Assumption},  the operator $(D_{\epsilon,\tau},\dom D_{\epsilon,\tau})$ is essentially self-adjoint and there holds 
\begin{equation*} 
		\begin{split}
			\overline{D_{\epsilon,\tau}}u = (D u_+) \oplus (D u_-), \\
			\dom \overline{D_{\epsilon,\tau}} = \Big\{ u = u_+ \oplus u_- \in  H_\sigma(\Omega_+,\cc^2)&\oplus H_{\sigma, A}(\Omega_-,\cc^2): \\
           &\, \eqref{TC} \text{ holds in } H^{-\frac{1}{2}}(\Sigma;\cc^2) \Big\}.
		\end{split}
	\end{equation*}
\end{theorem}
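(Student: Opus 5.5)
Let $\widetilde D$ denote the operator on the right-hand side of the statement: $\widetilde D u=(Du_+)\oplus(Du_-)$ on the set of $u\in H_\sigma(\Omega_+;\cc^2)\oplus H_{\sigma,A}(\Omega_-;\cc^2)$ for which \eqref{TC} holds in $H^{-\frac12}(\Sigma;\cc^2)$. The traces are meaningful there by Lemma \ref{tracemap}, and multiplication by the smooth matrices $\epsilon I_2+\tau\sigma_3$ and $\sigma\cdot\nu$ preserves $H^{-\frac12}$. I will show $D_{\epsilon,\tau}^\ast=\widetilde D$ and that $\widetilde D$ is symmetric. Since $D_{\epsilon,\tau}\subset\widetilde D$ and the adjoint of a symmetric operator extends its closure, this gives $D_{\epsilon,\tau}\subset\overline{D_{\epsilon,\tau}}\subset\overline{D_{\epsilon,\tau}}^{\,*}=D_{\epsilon,\tau}^\ast=\widetilde D$. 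Symmetry of $\widetilde D$ then yields $\widetilde D\subset\widetilde D^{\ast}\subset D_{\epsilon,\tau}^{\ast}=\widetilde D$, so $\widetilde D$ is self-adjoint. It remains to check that $\widetilde D=\overline{D_{\epsilon,\tau}}$.

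\emph{Step 1: $D_{\epsilon,\tau}^\ast\subset\widetilde D$.} Take $v\in\dom D_{\epsilon,\tau}^\ast$. Testing against $C_0^\infty(\rr^2\setminus\Sigma;\cc^2)\subset\dom D_{\epsilon,\tau}$ gives $v_\pm\in H_{\sigma,A}(\Omega_\pm)$ and $D^\ast_{\epsilon,\tau}v=Dv_+\oplus Dv_-$. Next I use the Green formula of Lemma \ref{Green}, extended by the density result Proposition \ref{density_result} to the pairing $H^{\frac12}\times H^{-\frac12}$ when one argument lies only in $H_{\sigma,A}$. This gives
\[
0=\langle -i\sigma\cdot\nu\gamma_D^+u_+,\gamma_D^+v_+\rangle+\langle i\sigma\cdot\nu\gamma_D^-u_-,\gamma_D^-v_-\rangle\quad\text{for all }u\in\dom D_{\epsilon,\tau}.
\]
Now I parametrize the admissible traces. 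Since $\epsilon^2-\tau^2=4\neq-4$, Lemma \ref{TC_alter}(ii) applies and gives $\gamma_D^+u_+=Q_+\gamma_D^-u_-$. Here $Q_+$ is a smooth invertible matrix, and $\gamma_D^-u_-=h$ can be an arbitrary element of $H^{\frac12}(\Sigma;\cc^2)$: extend $h$ to $H^1(\Omega_-)$ with compact support, and $Q_+h$ to $H^1(\Omega_+)$. Using $Q_+^\ast(\sigma\cdot\nu)Q_+=\sigma\cdot\nu$, the identity becomes $\langle i\sigma\cdot\nu\, h,\ \gamma_D^-v_- - Q_+^{-\ast}\ldots\rangle=0$, i.e. $(\sigma\cdot\nu)(\gamma^-_Dv_- - Q_+^{-1}\gamma_D^+v_+)=0$ up to the adjoint identity. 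Equivalently, $\gamma_D^+v_+=Q_+\gamma_D^-v_-$ in $H^{-\frac12}$, and this is \eqref{TC} for $v$. I have to be careful with signs here; I expect the algebra to reduce to the identity already used in the proof of Lemma \ref{Closable}.

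\emph{Step 2: $\widetilde D$ is symmetric.} For $u,v\in\dom\widetilde D$ the boundary form is the same expression as above, now paired in $H^{-\frac12}\times H^{-\frac12}$, which is not a priori defined. This is the main obstacle. My plan is to use the Krein-type decomposition $u=u_0+\Phi_zg$, with $u_0\in\dom D_0$ and $g=i\sigma\cdot\nu(\gamma_D^+u_+-\gamma_D^-u_-)\in H^{-\frac12}$, valid by Proposition \ref{Basic_Phi}(i). Combined with Proposition \ref{Further_Phi_z}, condition \eqref{TC} becomes $\Pi_zg=-(\epsilon I_2+\tau\sigma_3)^{-1}\ldots\gamma_D^\Sigma u_0\in H^{\frac12}$. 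Applying $\underline{\Pi_z}$ and Lemma \ref{Pi_notFredholm}(ii), I obtain control of the combination entering the boundary form. The form then reduces to pairings of the $H^{\frac12}$ function $\gamma_D^\Sigma u_0$ against $H^{-\frac12}$ data, plus the term $\langle \Pi_z g,g\rangle$-type quantities. These are real because $\mathscr C_z$ is self-adjoint for real $z$ (Proposition \ref{Basic_C_z}(iii)), and hence they cancel. Alternatively, I could approximate $u,v$ in the graph norm by elements of $\dom D_{\epsilon,\tau}$. This needs density of $\{g\in H^{\frac12}\}$ within $\{g\in H^{-\frac12}:\Pi_zg\in H^{\frac12}\}$ in the appropriate graph norm, which I would obtain by smoothing $g$ with a mollifier that commutes with $\Pi_z$ modulo order $-1$. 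That density also gives the final identification $\widetilde D=\overline{D_{\epsilon,\tau}}$, which completes the proof.
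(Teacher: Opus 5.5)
Your overall logic is sound and equivalent to the paper's. Proving $D_{\epsilon,\tau}^\ast=\widetilde D$ together with symmetry of $\widetilde D$ is the same as proving $D_{\epsilon,\tau}^\ast\subset\overline{D_{\epsilon,\tau}}$, which is what the paper does after taking the description of the adjoint from the literature.

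Your Step~1 does close. Testing with $\gamma_D^-u_-=h$ and $\gamma_D^+u_+=Q_+h$ gives $Q_+^\ast(\sigma\cdot\nu)\gamma_D^+v_+=(\sigma\cdot\nu)\gamma_D^-v_-$. Since $Q_+^\ast(\sigma\cdot\nu)Q_+=\sigma\cdot\nu$, this yields $\gamma_D^+v_+=Q_+\gamma_D^-v_-$. You should also record the reverse inclusion $\widetilde D\subset D_{\epsilon,\tau}^\ast$, which is the same computation read backwards.

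No separate final identification is needed. Once $D_{\epsilon,\tau}^\ast=\widetilde D$ is self-adjoint, $\overline{D_{\epsilon,\tau}}=D_{\epsilon,\tau}^{\ast\ast}=\widetilde D^\ast=\widetilde D$.

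\textbf{Gap in your first option for Step~2.} Write $u=u_0+\Phi_zg$ and $v=v_0+\Phi_zh$ with $z\in\rho(D_0)\cap\rr$, $\gamma_D^\Sigma u_0=-\Pi_zg$ and $\gamma_D^\Sigma v_0=-\Pi_zh$. Then $\langle\widetilde Du,v\rangle-\langle u,\widetilde Dv\rangle$ equals exactly $\langle g,\Pi_zh\rangle-\langle\Pi_zg,h\rangle$. Here $g,h$ lie only in $\mathcal G:=\{g\in H^{-\frac12}(\Sigma;\cc^2):\Pi_zg\in H^{\frac12}(\Sigma;\cc^2)\}$.

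The $L^2$ self-adjointness of $\mathscr C_z$ lets you move $\Pi_z$ across such a pairing only when one argument lies in $H^{\frac12}$. For general $g,h\in\mathcal G$, the vanishing of this expression is precisely the nontrivial content of essential self-adjointness. So ``real because $\mathscr C_z$ is self-adjoint'' does not prove it.

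It can be repaired in the direction you hint at. Set $e:=2/\epsilon$; in the critical case $e(\Pi_z+\underline{\Pi_z})=I$, and
\[
K:=\underline{\Pi_z}\,e\,\Pi_z=\Pi_z-\Pi_z e\Pi_z=\Pi_z\,e\,\underline{\Pi_z}=K^\ast .
\]
This $K$ is bounded $H^{-\frac12}\to H^{\frac12}$ by Lemma \ref{Pi_notFredholm}(ii) and the order $-1$ commutator $[\mathscr C_z,e]$. Split $g=e\Pi_zg+e\underline{\Pi_z}g$ in the first pairing and $h$ likewise in the second. This reduces the form to $\langle g,Kh\rangle-\langle Kg,h\rangle$, which vanishes by continuity from $L^2$.

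\textbf{Your second option is correct and differs genuinely from the paper.} Both arguments need graph-norm density of $H^{\frac12}$ in $\mathcal G$. That density is then turned into approximants $v_j=u_j+\Phi_zg_j\in\dom D_{\epsilon,\tau}$ via the correction $u_j=u-\mathcal E(\Pi_zg_j-\Pi_zg)$, with $\mathcal E:H^{\frac12}(\Sigma;\cc^2)\to\dom D_0$ an extension operator. You should spell this step out.

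The paper gets the density algebraically. For any $f_j\in H^{\frac12}$ with $f_j\to g$ in $H^{-\frac12}$, it sets $g_j=g+\frac{2}{\epsilon}\underline{\Pi_0}(f_j-g)\in H^{\frac12}$. Then $\Pi_0g_j-\Pi_0g=\Pi_0\frac{2}{\epsilon}\underline{\Pi_0}(f_j-g)\to0$ in $H^{\frac12}$, via the same operator $K$ as above. So criticality is used essentially.

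Your mollifier $g_\delta=J_\delta g$ instead requires two things:
\begin{itemize}
\item $[\Pi_z,J_\delta]$ must be bounded $H^{-\frac12}\to H^{\frac12}$ uniformly in $\delta$. This holds if $J_\delta$ has a scalar symbol uniformly bounded in $\cS^0$, since the principal matrix commutator then vanishes.
\item A density argument must give the strong convergence $[\Pi_z,J_\delta]g\to0$.
\end{itemize}
Your route never uses $\epsilon^2-\tau^2=4$, so it works for any order-zero $\Pi_z$. The price is parameter-uniform symbol estimates, which the paper's one-line construction avoids.
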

    
\begin{proof} Following the same arguments as in \cite{BB1} one shows that the adjoint operator $D^{\ast}_{\epsilon,\tau}$ acts as  $D^\ast_{\epsilon,\tau}v = (D_0 v_+) \oplus (D_0 v_-)$ on the domain
\[
\dom D^\ast_{\epsilon,\tau} = \Big\{ v = v_+ \oplus v_- \in  H_\sigma(\Omega_+,\cc^2)\oplus H_{\sigma, A}(\Omega_-,\cc^2): 
           \, \eqref{TC} \text{ holds in } H^{-\frac{1}{2}}(\Sigma;\cc^2) \Big\}.
\]
Since $D_{\epsilon,\tau}$ is symmetric by Lemma \ref{Closable}, it follows that  $D_{\epsilon,\tau}\subset D^{\ast}_{\epsilon,\tau}$. Thus, taking the adjoint yields the inclusion $\overline{D_{\epsilon,\tau}}\equiv D^{\ast\ast}_{\epsilon,\tau}\subset D^{\ast}_{\epsilon,\tau}$. Hence, it remains to prove the inclusion $D^{\ast}_{\epsilon,\tau}\subset\overline{D_{\epsilon,\tau}}$. To do so, given $v\in \dom D^{\ast}_{\epsilon,\tau}$ we will construct a sequence  $(v_j)_{j\in\nn}\subset \dom D_{\epsilon,\tau}$ such that 
 \[
 (v_j, D_{\epsilon,\tau}v_j) \xrightarrow[j\to\infty]{} (v, D^\ast_{\epsilon,\tau}v) \,\,\text{in }\,\, L^2(\rr^2;\cc^2)\times L^2(\rr^2;\cc^2).
 \]
Fix $v\in \dom D^{\ast}_{\epsilon,\tau}$ and set 
\[
g=i\sigma\cdot\nu(\gamma_D^+ v_+-\gamma_D^- v_-)\in H^{-\frac{1}{2}}(\Sigma;\cc^2).
\]
We claim that $\Pi_0 g\in H^{\frac{1}{2}}(\Sigma;\cc^2)$. Indeed, define $u=v-\Phi_0g \in  H_\sigma(\Omega_+,\cc^2)\oplus H_{\sigma, A}(\Omega_-,\cc^2)$, then the jump formula \eqref{eq_C_z_formula} gives
\begin{align}\label{simplify1}
\gamma_D^\pm u_\pm= \gamma_D^\pm v_\pm \pm \frac{i}{2}(\sigma\cdot\nu) g -\mathscr{C}_0 g \equiv \frac{1}{2}(\gamma_D^+ v_+ +\gamma_D^- v_-) - \mathscr{C}_0 g.
\end{align}
Hence, $\gamma_D^+ u_+=\gamma_D^-u_-$ and thus
  \[
    D_0u= (D u_+)\oplus(D u_-) +i\sigma\cdot\nu(\gamma_D^+u_+-\gamma_D^-u_-)\delta_\Sigma=(D_0 u_+)\oplus(D_0 u_-).
    \]
Since $D_0$ is invertible this implies that  $u\in \dom D_0$ and that  $u+\Phi_0g \in \dom D^{\ast}_{\epsilon,\tau}$.
 In particular, we have $\gamma_D^\Sigma u \in H^{\frac{1}{2}}(\Sigma;\cc^2)$. On the other hand, \eqref{simplify1} can be rewritten as 
\begin{align*}
\gamma_D^\Sigma u&= \frac{1}{2}(\gamma_D^+ v_+ +\gamma_D^- v_-) + \frac{1}{4}(\epsilon I_2-\tau\sigma_3) g   -\Pi_0g,\\
&\equiv \frac{1}{4}(\epsilon I_2-\tau\sigma_3)\left[\frac{1}{2} (\epsilon I_2+\tau\sigma_3)(\gamma_D^+ v_+ +\gamma_D^- v_-) + i\sigma\cdot\nu(\gamma_D^+ v_+-\gamma_D^- v_-)\right]   -\Pi_0g.
\end{align*}
Since $v$ satisfies the transmission condition in $H^{-\frac{1}{2}}(\Sigma;\cc^2)$, it follows that $\Pi_0g=- \gamma_D^\Sigma u \in H^{\frac{1}{2}}(\Sigma;\cc^2)$, proving the claim.

Next, let $(f_j)_{j\in\nn}\subset H^{\frac{1}{2}}(\Sigma;\cc^2)$ be such that $f_j \xrightarrow[j\to\infty]{} g$ in $H^{-\frac{1}{2}}(\Sigma;\cc^2)$ and define the sequence of functions
 \begin{align*}
  g_j:= \frac{2}{\epsilon} (\Pi_0g +\underline{\Pi_0} f_j )\equiv g +\frac{2}{\epsilon}  \underline{\Pi_0}[f_j - g],\quad\forall j\in\mathbb{N}.
  \end{align*}
Clearly, $g_j,\, \Pi_0g_j\in H^{\frac{1}{2}}(\Sigma;\cc^2)$ for any $j\in\nn$, and $g_j \xrightarrow[j\to\infty]{} g$ in $H^{-\frac{1}{2}}(\Sigma;\cc^2)$. In addition, we have
\begin{align*}
\Pi_0	g_j= \Pi_0g +\Pi_0\frac{2}{\epsilon}  \underline{\Pi_0}[f_j - g]\equiv \Pi_0g  +\frac{2}{\epsilon}\Pi_0  \underline{\Pi_0}[f_j - g]  +[\Pi_0,\frac{2}{\epsilon}]  \underline{\Pi_0}[f_j - g],
\end{align*}
and as the commutator  $[\Pi_0,\frac{2}{\epsilon}]$ belongs to $\Op \cS^{-1}_2(\Sigma)$, by Lemma \ref{Pi_notFredholm}-(ii) it follows that $\Pi_0 g_j \xrightarrow[j\to\infty]{} \Pi_0 g$ in $H^{\frac{1}{2}}(\Sigma;\cc^2)$.

Now, denote by $\mathcal E$ the extension operator from  $ H^{\frac{1}{2}}(\Sigma;\cc^2)$ to  $ \dom D_{0}$ and define the sequence of functions
\[
v_j= u_j +\Phi_0 g_j, \quad\text{with}\quad u_j= u- \mathcal{E}\left(\Pi_0 \frac{2}{\epsilon} \underline{\Pi_0}[f_j-g]\right) \in \dom D_{0}, \quad \text{for all } j\in\nn.
\]  
By definition we have $v_j\in H^1(\Omega_+,\cc^2)\oplus H^1_{\sigma, A}(\Omega_-,\cc^2)$. Moreover, the same computation as above shows that $v_j$ satisfies the transmission condition in $ H^{\frac{1}{2}}(\Sigma;\cc^2)$. Therefore,  $v_j\in \dom D_{\epsilon,\tau}$ for all $j\in\nn$. Hence, the continuity properties of $\Phi_0$ and $\Pi_0\frac{2}{\epsilon}\underline{\Pi_0}$ implies that 
\begin{align*}
v- v_j&= \mathcal{E}\left(\Pi_0 \frac{2}{\epsilon} \underline{\Pi_0}[f_j-g]\right) - \Phi_0(g- g_j) \xrightarrow[j\to\infty]{}0 \quad \text{in } L^2(\rr^2;\cc^2)\\
D^\ast_{\epsilon,\tau}v- D_{\epsilon,\tau}v_j&= D\mathcal{E}\left(\Pi_0 \frac{2}{\epsilon} \underline{\Pi_0}[f_j-g]\right) \xrightarrow[j\to\infty]{}0 \quad \text{in } L^2(\rr^2;\cc^2).
\end{align*}
Summing up, for any $v\in \dom D^\ast_{\epsilon,\tau}$ we constructed a sequence $(v_j)_{j\in\nn}\subset \dom D_{\epsilon,\tau}$ with $(v_j, D_{\epsilon,\tau}v_j) \to(v, D^\ast_{\epsilon,\tau}v)$  in $L^2(\rr^2;\cc^2)\times L^2(\rr^2;\cc^2)$, for $j\to\infty$. Therefore $ D^{\ast}_{\epsilon,\tau}\subset\overline{D_{\epsilon,\tau}}$ and the theorem is proved. 
\end{proof}
\begin{remark}\label{Remark decomp} A slight change in the proof above actually shows that for any $z\in\rho(D_0)$ and $v\in\dom \overline{D_{\epsilon,\tau}}$ there are unique functions $u\in\dom D_0$ and $g\in H^{-\frac{1}{2}}(\Sigma;\cc^2)$ such that $v= u+ \Phi_z g$ and $\gamma_D^\Sigma u=-\Pi_z g \in H^{\frac{1}{2}}(\Sigma;\cc^2)$.
\end{remark}

We next state the Birman-Schwinger principle and Krein resolvent formula for the critical case.

\begin{proposition}\label{BS and Krein 2}  The following hold:
\begin{itemize}
\item[(i)] For all $z\in\rho(D_0)$ one has $\ker (\overline{D_{\epsilon,\tau}}-z)=\Phi_z\{ g\in H^{-\frac{1}{2}}(\Sigma;\cc^2): g\in\ker \Pi_z\}$. In particular, it holds that $\mathrm{dim}\ker (\overline{D_{\epsilon,\tau}}-z)=\mathrm{dim}\ker \Pi_z$.
\item[(ii)]  For any $z\in\rho(D_0)\cap\rho(\overline{D_{\epsilon,\tau}})$ the operator $\Pi_z$ is bounded bijective from $\{ g\in H^{-\frac{1}{2}}(\Sigma;\cc^2): \Pi_z g\in H^{\frac{1}{2}}(\Sigma;\cc^2)\}$ to $H^{\frac{1}{2}}(\Sigma;\cc^2)$. In particular, $\Pi_z$ admits a bounded inverse from $H^{\frac{1}{2}}(\Sigma;\cc^2)$ to $H^{-\frac{1}{2}}(\Sigma;\cc^2)$, and  one has 
\begin{equation}\label{Krein2}
				(\overline{D_{\epsilon, \tau}} - z)^{-1} = (D_0 - z)^{-1} - \Phi_{z} \Pi_{z}^{-1} \Phi_{\Bar{z}}^*,
\end{equation}	
\item[(iii)] We have  $\dom \overline{D_{\epsilon, \tau}} \not\subset H^1(\rr^2\setminus\Sigma;\cc^2)$.
\end{itemize}
\end{proposition}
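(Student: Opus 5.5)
We prove the three assertions in order, each resting on the decomposition of Remark \ref{Remark decomp} and the jump formula \eqref{eq_C_z_formula}.

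For (i), take $v\in\ker(\overline{D_{\epsilon,\tau}}-z)$. Since $(D-z)v_\pm=0$ in $\Omega_\pm$ and $v\in H_{\sigma}(\Omega_+)\oplus H_{\sigma,A}(\Omega_-)$, Proposition \ref{Basic_Phi}(i) gives $v=\Phi_z g$ with $g\in H^{-\frac12}(\Sigma;\cc^2)$. The jump formula \eqref{eq_C_z_formula}, now valid in $H^{-\frac12}$, yields \eqref{Traceeigen} with $g=i\sigma\cdot\nu(\gamma_D^+v_+-\gamma_D^-v_-)$. The transmission condition \eqref{TC} holds in $H^{-\frac12}(\Sigma;\cc^2)$, and $\epsilon I_2+\tau\sigma_3$ is an invertible $C^\infty$ multiplier, so it stays invertible on $H^{-\frac12}$. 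Hence $\Pi_z g=0$.

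Conversely, if $g\in H^{-\frac12}$ satisfies $\Pi_zg=0$, the same computation shows that $\Phi_zg$ satisfies \eqref{TC} in $H^{-\frac12}$. Theorem \ref{main_th2} then gives $\Phi_zg\in\dom\overline{D_{\epsilon,\tau}}$, and $(\overline{D_{\epsilon,\tau}}-z)\Phi_zg=0$. Injectivity of $\Phi_z$ gives the equality of dimensions.

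For (ii), let $z\in\rho(D_0)\cap\rho(\overline{D_{\epsilon,\tau}})$.

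1. \emph{Injectivity.} This follows from (i).
2. \emph{Surjectivity onto $H^{\frac12}$.} Given $h\in H^{\frac12}(\Sigma;\cc^2)$, take $w=\mathcal E h\in\dom D_0$ (the extension operator from the proof of Theorem \ref{main_th2}). Set
\[
u=w-(\overline{D_{\epsilon,\tau}}-z)^{-1}(D-z)w\in\dom\overline{D_{\epsilon,\tau}}.
\]
Then $(D-z)u=0$ off $\Sigma$, so by Proposition \ref{Basic_Phi} we have $u=\Phi_zg$ with $g\in H^{-\frac12}$. By Remark \ref{Remark decomp} applied to $v=-u+w$, or directly by the computation \eqref{simplify1}, one gets $\Pi_z g=\pm\gamma_D^\Sigma w=\pm h$ up to sign, which is fine.
3. \emph{Boundedness of the inverse.} The inverse $H^{\frac12}\to H^{-\frac12}$ is bounded by the closed graph theorem, since $\Pi_z$ is continuous $H^{-\frac12}\to H^{-\frac12}$ and the domain carries the graph norm.
4. \emph{Resolvent formula.} Since $\Phi_{\bar z}^*$ maps into $H^{\frac12}$, the operator $R(z)=(D_0-z)^{-1}-\Phi_z\Pi_z^{-1}\Phi_{\bar z}^*$ is well defined. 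Repeating the computation of Theorem \ref{main_th1} in $H^{-\frac12}$ shows that $R(z)u$ satisfies \eqref{TC} and $(D-z)R(z)u=u$. Thus $R(z)$ is a right inverse of $\overline{D_{\epsilon,\tau}}-z$, which is invertible, so \eqref{Krein2} holds.

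For (iii), we argue by contradiction. Suppose $\dom\overline{D_{\epsilon,\tau}}\subset H^1(\rr^2\setminus\Sigma;\cc^2)$. By the closed graph theorem the inclusion is continuous, so the traces map into $H^{\frac12}$. Then every $g=i\sigma\cdot\nu(\gamma_D^+v_+-\gamma_D^-v_-)$ arising in (ii) lies in $H^{\frac12}$. Consequently $\Pi_z^{-1}$ maps $H^{\frac12}$ boundedly into $H^{\frac12}$, and $\Pi_z$ is a bijection $\{g\in H^{\frac12}:\Pi_zg\in H^{\frac12}\}\to H^{\frac12}$. Because $\Pi_z$ is a zero-order classical pseudodifferential operator, we then expect it to be Fredholm on $H^{\frac12}$, contradicting the non-ellipticity in Lemma \ref{Pi_notFredholm}(i).

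This last step is the delicate one. The cleaner route is to exhibit an explicit $g\in H^{-\frac12}\setminus H^{\frac12}$ with $\Pi_zg\in H^{\frac12}$. Take $g=\underline{\Pi_z}f$ with $f\in H^{-\frac12}$ chosen microlocally in the range of the symbol of $\underline{\Pi_z}$, which is nonzero since that rank-one $2\times2$ symbol does not vanish. By Lemma \ref{Pi_notFredholm}(ii), $\Pi_zg\in H^{\frac12}$. Choosing $f$ with the right polarization concentrated at high frequency ensures $g\notin H^{\frac12}$. By (ii), $v=\Phi_zg-\mathcal E(\Pi_zg)$, corrected as in the proof of Theorem \ref{main_th2}, lies in $\dom\overline{D_{\epsilon,\tau}}$. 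Its traces jump by a non-$H^{\frac12}$ quantity, so by Lemma \ref{tracemap} $v\notin H^1$. Verifying the non-$H^{\frac12}$ claim via the principal symbol is the main technical point.
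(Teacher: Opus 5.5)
Your proposal is correct and follows essentially the paper's proof. Part (i) is the same argument via $v=\Phi_z g$, the jump formula and the invertibility of $\epsilon I_2+\tau\sigma_3$. In (ii), resolving $(D-z)\mathcal E h$ is just the paper's use of $\ran\Phi_{\bar z}^\ast=H^{\frac12}(\Sigma;\cc^2)$ together with Remark~\ref{Remark decomp} and the closed graph theorem, and checking that $R(z)$ is a right inverse is equivalent to reading \eqref{Krein2} off that decomposition. Your first route for (iii) is exactly the paper's contradiction argument, resting on the same standard fact that a non-elliptic zero-order pseudodifferential operator is not invertible on $H^{\frac12}(\Sigma;\cc^2)$.

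One harmless slip: $u=w-(\overline{D_{\epsilon,\tau}}-z)^{-1}(D-z)w$ lies in $\dom\overline{D_{\epsilon,\tau}}$ only when $h=0$. It is $w-u$ that lies there, and that is what you actually use. Your alternative route for (iii), via $g=\underline{\Pi_z}f$ and $v=\Phi_z g-\mathcal E(\Pi_z g)$, is also sound: $\underline{\Pi_z}$ has a non-vanishing zero-order principal symbol, so it cannot map $H^{-\frac12}(\Sigma;\cc^2)$ into $H^{\frac12}(\Sigma;\cc^2)$.
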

\begin{proof} The proof of assertion (i) is similar to that of Lemma \ref{Pi_Fredholm}-(ii) taking into account the fact that the function $g$ in the second identity of \eqref{Traceeigen} is only in $H^{-\frac{1}{2}}(\Sigma;\cc^2)$ in view of the regularity of functions in $\dom \overline{D_{\epsilon,\tau}}$.

 We now prove (ii). Fix $z\in\rho(D_0)\cap\rho(\overline{D_{\epsilon,\tau}})$ and note that  assertion (i) implies that $\ker \Pi_z=\{0\}$. Moreover, by Remark  \ref{Remark decomp}  any $v\in \dom \overline{D_{\epsilon,\tau}}$ can be written as 
 \begin{align*}
 v= (D_0-z)^{-1}f +\Phi_z g
 \end{align*}
  where $f\in L^2(\rr^2;\cc^2)$ and $g\in H^{-\frac{1}{2}}(\Sigma;\cc^2)$ with 
 \begin{align*}
\Pi_z g= - \gamma_D^\Sigma  (D_0-z)^{-1}f\equiv -\Phi_{\Bar z}^\ast f \in H^{\frac{1}{2}}(\Sigma;\cc^2).
 \end{align*}
 Note that $(\overline{D_{\epsilon,\tau}}-z) v= (D-z)(D_0-z)^{-1}f=f$ and thus $v=(\overline{D_{\epsilon,\tau}}-z)^{-1}f$.  Consequently, for any $f\in L^2(\rr^2;\cc^2)$ one has 
  \begin{align}\label{IDD2}
 (\overline{D_{\epsilon,\tau}}-z)^{-1}f= (D_0-z)^{-1}f +\Phi_z g \quad \text{with }\,  \Pi_z g=  -\Phi_{\Bar z}^\ast f.
 \end{align}
 This together with the fact that $\ran  \Phi_{\Bar z}^\ast= H^{\frac{1}{2}}(\Sigma;\cc^2)$ imply that $\ran \Pi_z= H^{\frac{1}{2}}(\Sigma;\cc^2)$. Therefore, the mapping 
 \[
 \Pi_z:  \mathcal G:=\{ g\in H^{-\frac{1}{2}}(\Sigma;\cc^2): \Pi_z g\in H^{\frac{1}{2}}(\Sigma;\cc^2)\} \longrightarrow  H^{\frac{1}{2}}(\Sigma;\cc^2)
 \]
 is well-defined and bijective. Moreover, as $\Pi_z$ is continuous and injective in $H^{-\frac{1}{2}}(\Sigma;\cc^2)$ one easily deduce that $\Pi_z:  \mathcal G \longrightarrow  H^{\frac{1}{2}}(\Sigma;\cc^2)$ is closed. Hence, $\Pi_z^{-1}: H^{\frac{1}{2}}(\Sigma;\cc^2) \longrightarrow \mathcal G $ is everywhere defined and closed, and therefore bounded by the closed graph theorem. Thus, $\Pi_z$ admits a bounded inverse from $H^{\frac{1}{2}}(\Sigma;\cc^2)$ to $H^{-\frac{1}{2}}(\Sigma;\cc^2)$ and plugging $g= - \Pi_z^{-1}\Phi_{\Bar z}^\ast f$ in \eqref{IDD2} yields the resolvent formula \eqref{Krein2}.
 
 Let us show (iii). By contradiction, suppose that $\dom \overline{D_{\epsilon, \tau}} \subset H^1(\rr^2\setminus\Sigma;\cc^2)$. Then, for any  $z\in\cc\setminus\rr$ the resolvent formula implies that $\Phi_{z} \Pi_{z}^{-1} \Phi_{\Bar{z}}^*f \in H^1(\rr^2\setminus\Sigma;\cc^2)$ for all $f \in L^2(\rr^2;\cc^2)$. Since $\ran  \Phi_{\Bar z}^\ast= H^{\frac{1}{2}}(\Sigma;\cc^2)$, applying the jump formula \eqref{eq_C_z_formula} yields that $\ran  \Pi_{z}^{-1}\subset H^{\frac{1}{2}}(\Sigma;\cc^2)$. This shows that $\Pi_{z}$ is boundedly invertible in $H^{\frac{1}{2}}(\Sigma;\cc^2)$. However, from the proof of Lemma \ref{Pi_notFredholm} we have seen that $\Pi_{z}$  is a non elliptic zero-order pseudodifferential operator, in particular, $\Pi_{z}$ is not invertible in any $H^s(\Sigma;\cc^2)$ for any $s\in\rr$ leading to a contradiction. Therefore, $\dom \overline{D_{\epsilon, \tau}} \not\subset H^1(\rr^2\setminus\Sigma;\cc^2)$.
\end{proof}
\begin{remark} In the proof of (iii) we only used that $\Pi_{z}$ is not invertible in $H^{\frac12}(\Sigma;\cc^2)$. By exploiting more carefully that the non elliptic part of the operator $\Pi_{z}$ is a pseudo-differential of order $-1$ (because $0$ is an eigenvalue of its $0$-order principal symbol; see the proof of Lemma \ref{Pi_notFredholm})  then (iii) could be extended to any $H^s(\rr^2\setminus\Sigma;\cc^2)$, $s>0$ (see also Remark \ref{extTr}). 
\end{remark}

We next give a complete characterization of the essential spectrum of $\overline{D_{\epsilon,\tau}}$.
\begin{theorem}\label{Ess Spec}Under the assumption \eqref{Assumption} we have $\spece \overline{D_{\epsilon,\tau}}= \spec D_0 \cup I_\Sigma$, where
\begin{align}\label{Def I Sigma}
I_\Sigma:=  \ran\left(-m\frac{\tau}{\epsilon}\right).
\end{align}
\end{theorem}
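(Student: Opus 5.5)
We prove the two inclusions separately. The strategy is to reduce the spectral question for $\overline{D_{\epsilon,\tau}}$ near a real $z\notin\spec D_0$ to the Fredholm theory of the boundary operator $\Pi_z$, whose principal part is a zero-order pseudodifferential operator that degenerates in the critical case.

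\textbf{Inclusion $\spec D_0\subset\spece\overline{D_{\epsilon,\tau}}$.} For each Landau--Dirac level $\mu_n$ we construct a singular Weyl sequence supported away from $\Sigma$. Since $\ker(D_0-\mu_n)$ is infinite dimensional and invariant under magnetic translations, take an orthonormal family of eigenfunctions, translate them far away, and cut them off outside a large ball containing $\Sigma$. Gaussian decay makes the cutoff errors small. These functions lie in $\dom D_{\epsilon,\tau}$ because they vanish near $\Sigma$, so they form a singular Weyl sequence.

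\textbf{Reduction to $\Pi_z$ for $z\in\rr\setminus\spec D_0$.} The Krein formula \eqref{Krein2} and Proposition~\ref{BS and Krein 2}(i) show that $z\in\spece\overline{D_{\epsilon,\tau}}$ if and only if $\Pi_z$ fails to be Fredholm as a map $\mathcal G_z\to H^{1/2}(\Sigma;\cc^2)$, where $\mathcal G_z=\{g\in H^{-1/2}:\Pi_zg\in H^{1/2}\}$. One direction: if $\Pi_z$ is Fredholm there, a finite-rank modification gives a parametrix, and the resolvent-type identity shows that near $z$ only discrete spectrum can occur. The other direction: a singular sequence for $\Pi_z$ lifts through $\Phi_z$ to a Weyl sequence, using $v=u+\Phi_zg$ with $\gamma_D^\Sigma u=-\Pi_zg$ as in Remark~\ref{Remark decomp}.

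To analyse $\Pi_z$ we diagonalise its principal part. By Proposition~\ref{Basic_C_z},
\[
\Pi_z=\Theta_{\epsilon,\tau}+(zI_2+m\sigma_3)(S(0)\otimes I_2)+\text{(order }-2),
\]
where $\Theta_{\epsilon,\tau}$ has principal symbol
\[
\begin{pmatrix}\tfrac{\epsilon-\tau}{4}&-\tfrac12T\,\mathrm{sgn}\xi\\ -\tfrac12\overline T\,\mathrm{sgn}\xi&\tfrac{\epsilon+\tau}{4}\end{pmatrix}.
\]
Because $\epsilon^2-\tau^2=4$, this symbol has eigenvalues $0$ and $\epsilon/2$. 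Let $P_0(x,\xi)$ be the projection onto its kernel, a zero-order symbol, and quantize it as in \cite{BP24}. Block-decompose $\Pi_z$ with respect to $\Op P_0$ and $I-\Op P_0$. The $(I-P_0)$ block is elliptic of order $0$ since $\epsilon\neq0$. The $P_0$ block is of order $-1$, and its principal symbol is
\[
\frac{1}{2|\xi|}\operatorname{tr}\bigl(P_0(zI_2+m\sigma_3)\bigr)+c_{\epsilon,\tau}(x,\xi),
\]
where $c_{\epsilon,\tau}$ collects the subprincipal terms from $\Theta_{\epsilon,\tau}$ (commutators with $\epsilon,\tau$, curvature). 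A computation gives $P_0=\frac{1}{2\epsilon}\bigl(\epsilon I_2+\tau\sigma_3+\text{off-diagonal}\bigr)$, so
\[
\operatorname{tr}\bigl(P_0(zI_2+m\sigma_3)\bigr)=z+m\tau/\epsilon .
\]
The hope is that $c_{\epsilon,\tau}$ vanishes or is lower order, in contrast to the 3D case where curvature enters. Taking a Schur complement, the reduced operator in the order-$(-1)$ scale is elliptic exactly when $z+m\tau(s)/\epsilon(s)\neq0$ for all $s\in\Sigma$. Then $\Pi_z\colon\mathcal G_z\to H^{1/2}$ is Fredholm iff $z\notin I_\Sigma$. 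For $z=-m\tau(s_0)/\epsilon(s_0)$ we build a singular sequence concentrated near $s_0$ at high frequency in $\ran\Op P_0$.

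\textbf{Main obstacle.} The hardest step is the exact subprincipal computation of the $P_0$ block, i.e.\ showing that the order-$(-1)$ symbol is $\frac{z+m\tau/\epsilon}{2|\xi|}$ up to a factor. This requires the order-$(-1)$ term of $\mathscr C_z$ and the commutators $[\Op P_0,\epsilon]$, $[\Op P_0,\tau]$. I would first handle constant $\epsilon,\tau$ on a circle via Fourier modes as a check, and then treat general $\Sigma$ by symbol calculus.
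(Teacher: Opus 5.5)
Your architecture is close to the paper's. The Landau levels are handled by magnetically translated, cut-off eigenfunctions, as in Lemma~\ref{Ess Spec1}. Your analysis of the critical zero-order symbol is correct: it has eigenvalues $0$ and $\epsilon/2$, and $\operatorname{tr}(P_0(zI_2+m\sigma_3))=z+m\tau/\epsilon$. In the paper the same quantity appears as the Schur complement $\mathcal S_z=\frac{2(z\epsilon+m\tau)}{\epsilon-\tau}$ modulo $\Op\cS^{-1}_1(\Sigma)$ of $\mathcal L_z=(\Lambda\otimes I_2)\Pi_z(\Lambda\otimes I_2)$, with $\Lambda=S(-1)^{-1/2}$. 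Quantizing the kernel projection $P_0$, rather than taking a Schur complement in the fixed $\cc^2$ splitting, is only a cosmetic difference.

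However, there is a genuine gap. The step you leave as a ``hope'' is that the order-$(-1)$ remainder $c_{\epsilon,\tau}$ coming from $\frac14(\epsilon I_2-\tau\sigma_3)+\Theta$ and its commutators vanishes. That step is the real content of the theorem. Without it, nothing identifies the zero set of the reduced symbol with $I_\Sigma$: an unknown order-$(-1)$ term could shift or widen the interval, as the curvature term $A_\Sigma$ does in the 3D analogue. Checking a circle with constant coefficients cannot replace this.

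The missing ingredient is a genuinely two-dimensional fact, the paper's Lemma~\ref{Cummutator smooth}:
\begin{itemize}
\item $[H_\Sigma,f]$ is smoothing for $f\in C^\infty(\Sigma)$. Its kernel is a constant multiple of $\frac{f(x)-f(y)}{x-y}$ in complex notation, which is $C^\infty$ on $\Sigma\times\Sigma$.
\item $H_\Sigma^\ast H_\Sigma-\frac14$ and $H_\Sigma H_\Sigma^\ast-\frac14$ are smoothing. This follows because the Kerzman--Stein operator $\mathcal H+\mathcal H^\ast$ has a smooth kernel and $4\mathcal H^2=I$.
\end{itemize}
In symbolic terms: the full symbol of $H_\Sigma$ equals its principal symbol modulo $\Op\cS^{-\infty}_1(\Sigma)$. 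Since that symbol depends on $\xi$ only through $\mathrm{sgn}\,\xi$, every subprincipal term in your compression vanishes. Using these facts, the paper shows that $\Lambda\bigl(\frac{\epsilon+\tau}{4}-H_\Sigma^\ast(\epsilon+\tau)H_\Sigma\bigr)\Lambda$ is smoothing, and that $\Lambda H_\Sigma^\ast(\epsilon+\tau)^2\Lambda^{-2}H_\Sigma\Lambda=\frac{(\epsilon+\tau)^2}{4}$ modulo order $-1$. Together these give the formula for $\mathcal S_z$ above.

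There is also a smaller gap in the reverse inclusion. Fredholmness of $\Pi_z$ at one real $z$, plus a ``finite-rank modification parametrix'', does not by itself rule out eigenvalues accumulating at $z$. The paper instead factors $\mathcal L_z=(I+K_z(\mathcal M+z)^{-1})(\mathcal M+z)$ and uses invertibility for $z\notin\rr$. It then applies the meromorphic Fredholm theorem on the connected set $\cc\setminus(\spec D_0\cup I_\Sigma)$, which via \eqref{Krein3} gives only finite-rank poles of $(\overline{D_{\epsilon,\tau}}-z)^{-1}$. Likewise, turning ``$\Pi_z$ is not Fredholm'' into a singular sequence needs a self-adjoint formulation such as $\mathcal L_z$ for real $z$. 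Otherwise the failure of Fredholmness could come from the cokernel alone.
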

\begin{remark} Note that, in view of the condition \eqref{Assumption},  $I_\Sigma=-m \frac{\tau}{\epsilon}(\Sigma)$ is a closed interval satisfying $I_\Sigma \subset (-m, m)$ and reduces to a single point if and only if $\epsilon$ and $\tau$ are constants or $m=0$.
\end{remark}

The remainder of this section will be devoted to the proof of Theorem \ref{Ess Spec}. As first step we have:
\begin{lemma}\label{Ess Spec1} It holds that $\spec D_0\subset \spece \overline{D_{\epsilon,\tau}}$.
\end{lemma}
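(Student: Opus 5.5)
Fix a Landau--Dirac level $\mu_q\in\spec D_0$. The plan is to build a singular Weyl sequence for $\overline{D_{\epsilon,\tau}}$ at $\mu_q$ out of eigenfunctions of $D_0$ localized far from $\Sigma$. Since $\ker(D_0-\mu_q)$ is infinite dimensional, we can choose an orthonormal sequence $(f_j)\subset\ker(D_0-\mu_q)$ that converges weakly to zero. In the symmetric gauge it is natural to take the standard angular-momentum eigenfunctions, i.e. the Fock-space images of $\omega^k e^{-b|\omega|^2/4}$ via $U_{FW}^*$ and $(a^*)^{|q|}$. Their mass concentrates near the circle $|x|=\sqrt{2k/b}$, so it escapes to infinity as $k\to\infty$.

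Pick $R>0$ with $\overline{\Omega_+}\subset B(0,R)$ and a cutoff $\chi\in C^\infty(\rr^2)$ with $\chi=0$ on $B(0,R)$ and $\chi=1$ outside $B(0,2R)$. Set $v_j:=\chi f_j$. Then $v_j$ vanishes near $\Sigma$, so both traces are zero and \eqref{TC} holds trivially. Since $f_j\in\dom D_0\subset H^1_{\rm loc}$, we get $v_j\in\dom D_{\epsilon,\tau}\subset\dom\overline{D_{\epsilon,\tau}}$. A direct computation gives
\[
(\overline{D_{\epsilon,\tau}}-\mu_q)v_j=\chi(D-\mu_q)f_j-i(\sigma\cdot\nabla\chi)f_j=-i(\sigma\cdot\nabla\chi)f_j,
\]
and its norm is bounded by $C\|f_j\|_{L^2(B(0,2R))}$.

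The key step is to show $\|f_j\|_{L^2(B(0,2R))}\to0$. For the explicit choice above this is an elementary Gamma-function estimate: the mass of $|\omega|^{2k}e^{-b|\omega|^2/2}$ inside a fixed ball tends to zero as $k\to\infty$. The $a$-component appearing for $q\ge1$ is again a finite combination of such functions, so the same bound applies. More abstractly, one can use that $\mathbf 1_{B(0,2R)}\mathcal P_q$ is compact, because it equals $\mathbf 1_{B(0,2R)}(D_0-\zeta)^{-1}\mathcal P_q(\mu_q-\zeta)$ and $\mathbf 1_B(D_0-\zeta)^{-1}$ is compact by the local $H^1$ regularity of $\dom D_0$ and Rellich's theorem. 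Compactness then turns weak convergence $f_j\rightharpoonup0$ into strong convergence $\mathbf 1_{B(0,2R)}f_j\to0$. This argument works for any orthonormal choice of $(f_j)$, so I would use it.

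It follows that $\|v_j\|\to1$, that $(\overline{D_{\epsilon,\tau}}-\mu_q)v_j\to0$, and that $v_j\rightharpoonup0$, because $v_j-f_j=(\chi-1)f_j\to0$ strongly. Hence $(v_j)$ is a singular Weyl sequence, and Weyl's criterion gives $\mu_q\in\spece\overline{D_{\epsilon,\tau}}$. The only point requiring care is this localization/compactness step. Checking that $v_j$ lies in the operator domain is immediate from Theorem \ref{main_th2}, since $v_j$ vanishes near $\Sigma$ and the form of the closure given there applies directly.
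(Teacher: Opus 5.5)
Your argument is correct, but it takes a different route from the paper's.

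\textbf{The paper's route.} It fixes one normalized eigenfunction $\phi$ at the level $\mu$. It moves $\phi$ to points $x_k$ with $|x_k|\to\infty$ using the magnetic translations $U_c f(x)=e^{i\frac{b}{2}(c_1x_2-c_2x_1)}f(x-c)$, which commute with $D_0$. It then multiplies by cutoffs $\chi_k(x)=\chi(|x-x_k|/R_k)$ with $R_k=|x_k|/3$. The defect $\|(\sigma\cdot\nabla\chi_k)U_{x_k}\phi\|$ is controlled by an explicit Gaussian bound $|\phi(x)|\le C_N(1+|x|)^Ne^{-\frac{b}{4}|x|^2}$. Weak convergence to $0$ comes from the supports escaping to infinity.

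\textbf{Your route.} You keep one fixed cutoff and let the functions run through an orthonormal sequence in the infinite-dimensional eigenspace. All decay information is replaced by the compactness of $\mathbf 1_{B(0,2R)}\mathcal P_q=(\mu_q-\zeta)\mathbf 1_{B(0,2R)}(D_0-\zeta)^{-1}\mathcal P_q$. This compactness follows from local $H^1$ bounds in terms of the graph norm of $D_0$ (the vector potential is bounded on compacts) together with Rellich's theorem.

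\textbf{What each buys.} Your route is softer. It needs neither translation covariance nor pointwise bounds on eigenfunctions. The paper's Gaussian bound holds for suitably chosen $\phi$ (e.g.\ polynomial times Gaussian), but not for an arbitrary element of the eigenspace. Your argument also applies verbatim to any self-adjoint operator that coincides with $D$ on elements of $\dom D_0$ vanishing near a fixed compact set. The paper's construction, in return, is explicit and quantitative, with defect $O(R_k^{N+1}e^{-\frac{b}{4}R_k^2})$, and it uses the magnetic translation structure of $D_0$.

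Your discarded aside on explicit angular-momentum eigenfunctions has an index slip: for $q<0$ one needs $\pi_-$ and $(a^*)^{|q|-1}$. It plays no role in the proof.
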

\begin{proof} The result will follow by constructing a singular sequence for each Landau--Dirac level. 

For  $c=(c_1,c_2)\in \rr^2$ define the map $U_c: L^2(\rr^2;\cc^2) \longrightarrow L^2(\rr^2;\cc^2)$ by
\[
U_c f (x):= e^{i\frac{b}{2}(c_1 x_2-c_2x_1)}f(x-c).
\]
Clearly, $U_c$ is a unitary operator on $L^2(\rr^2;\cc^2)$ and maps isomorphically $H^1(\rr^2;\cc^2)$ into itself.  Moreover, it is straightforward to check that $D_0 U_c=U_cD_0$.

Let $\mu \in\spec D_0$ be a Landau--Dirac level and let $\phi$ be an associated normalized eigenfunction, i.e., $D_0 \phi=\mu\phi$ and $\| \phi\|=1$. Note that, in view of the resolvent kernel of $D_0$ (see Lemma \ref{prop:explicit_varphi}), there exist $N\in \nn$ and $C_N>0$ such that the following Gaussian bound holds
\begin{align*}
|\phi(x)| \leq C_N(1+|x|)^Ne^{-\frac{b}{4}|x|^2 }\quad \forall x\in\rr^2,
\end{align*}
see also \cite[Sec. 3.1]{RaWa02} for more details.
Then, for $u_c= U_c\phi$ we have $\| u_c\|=1$, $D_0u_c=U_c D_0\phi=\mu U_c\phi=\mu u_c$, and 
\begin{align}\label{Exp decay}
|u_c(x)| \leq C_N(1+|x-c|)^Ne^{-\frac{b}{4}|x-c|^2} \quad \forall x\in\rr^2.
\end{align}
Fix $R>0$ with $\Sigma\subset B(0,R)$. Pick a sequence of points $\{x_k\}_{k\in\nn}\subset \rr^2$ with $|x_k| \rightarrow \infty$, and set $R_k:= |x_k|/3$. Let $\chi\in C_0^\infty([0,\infty);[0,1])$ be a cut-off function with $\chi\equiv 1$ on $[0,1]$, $\chi\equiv 0$ on $[2,\infty)$, and set
\[
\chi_k(x)= \chi(|x-x_k|/R_k),\quad u_k= U_{x_k}\phi, \quad v_k=\chi_k u_k.
\] 
For $k$ large enough (so that $R_k>R$), we have 
\[
\supp \chi_k\subset \overline{B(x_k,2R_k)} \quad\text{and}\quad \mathrm{dist}\,( \overline{B(x_k,2R_k)}, \Sigma)\geq R_k-R  \xrightarrow[k\to\infty]{} \infty,
\]
so  $\supp v_k \subset \Omega_-$ which implies that $v_k$  vanishes identically in a neighbourhood of $\Sigma$. Remark that $|1-\chi_k|\leq 1$ and is supported in $|x-x_k|\geq R_k$. From this and the Gaussian bound \eqref{Exp decay}, we conclude by the dominated convergence theorem that 
\[
| \| v_k\|_{L^2(\rr^2;\cc^2)}-1|\leq  \| (\chi_k^2-1)u_k\|_{L^2(\rr^2;\cc^2)}\leq \|u_k\|_{L^2(\rr^2\setminus B (x_k, R_k);\cc^2)}  \xrightarrow[k\to\infty]{} 0,
\]
and thus $\| v_k\| \xrightarrow[k\to\infty]{} 1$. Moreover, for any $\varphi\in C_0^\infty(\rr^2;\cc^2)$ with $\supp \varphi\subset B(0,\tilde{R})$ for some $\tilde R>0$, we have 
\begin{align*}
|\langle v_k,\varphi\rangle|\leq \int_{ B(0,\tilde{R})} |u_k(x)| |\varphi(x)|\mathrm{d}x &\leq \|\varphi\| _{L^2(\rr^2;\cc^2)} \|u_k\| _{L^2(B(0,\tilde{R});\cc^2)}\\
&\equiv \|\varphi\| _{L^2(\rr^2;\cc^2)} \|\phi\| _{L^2(B(x_k,\tilde{R});\cc^2)} \xrightarrow[k\to\infty]{} 0,
\end{align*}
because $|x_k|  \xrightarrow[k\to\infty]{} \infty$ and $\tilde R$ is fixed. Since $C_0^\infty(\rr^2;\cc^2)$ is dense in $L^2(\rr^2;\cc^2)$, from the above it follows that $v_k$ converges weakly to $0$.

We proceed to show that $\| (D_{\epsilon,\tau}-\mu)v_k\|_{L^2(\rr^2;\cc^2)} \xrightarrow[k\to\infty]{} 0$. We have 
\[
(D_{\epsilon,\tau}-\mu)v_k= (D-\mu)v_k= \chi_k(D-\mu)u_k -(i\sigma\cdot\nabla \chi_k)u_k\equiv  -(i\sigma\cdot\nabla \chi_k)u_k.
\]
Observe that $\| \nabla\chi_k\|_{\infty}= \| \nabla\chi\|_{\infty}/ R_k$, $\supp\nabla\chi_k\subset \{ R_k\leq |x-x_k|\leq 2 R_x\}$, and 
\begin{align*}
|u_k(x)| \leq C_N(1+2R_k)^Ne^{-\frac{b}{4} R_k^2} \quad \forall x\in \{ R_k\leq |x-x_k|\leq 2 R_x\},
\end{align*}
holds by \eqref{Exp decay}. Hence, using $| B(x_k, 2R_k)|= 4\pi R_k^2$,  we get 
\begin{align*}
\| (D_{\epsilon,\tau}-\mu)v_k\|_{L^2(\rr^2;\cc^2)} &=   \|(i\sigma\cdot\nabla \chi_k)u_k\|_{L^2(\rr^2;\cc^2)}= \|(\nabla \chi_k)u_k\|_{L^2(\rr^2;\cc^2)}\\
&\leq \frac{\|\nabla\chi\|_{\infty}}{R_k} C_N(1+2R_k)^Ne^{-\frac{b}{4} R_k^2} (4\pi R_k^2) \xrightarrow[k\to\infty]{} 0.
\end{align*}
Thus
\begin{align*}
\frac{\| (D_{\epsilon,\tau}-\mu)v_k\|_{L^2(\rr^2;\cc^2)} }{\| v_k\|_{L^2(\rr^2;\cc^2)}}   \xrightarrow[k\to\infty]{} 0.
\end{align*}
Therefore, for $k_0$ large enough,  $(v_k)_{k\geq k_0}\subset \dom D_{\epsilon,\tau}$ is a singular sequence for $\mu$, proving the inclusion $\spec D_0\subset \spece \overline{D_{\epsilon,\tau}}$.
\end{proof}

Let $\Lambda= (S(-1))^{-\frac{1}{2}}$ where $S(-1)$ is single layer boundary operator given in \eqref{SL} with $\mu=-1$. By Lemma  \ref{H properties}, the operator $\Lambda$ is self-adjoint in $L^2(\Sigma)$ and belongs to  $\Op \cS^{\frac{1}{2}}_1(\Sigma)$ with
\[
p_\Lambda(x,\xi)=\sqrt{2|\xi|}\quad \forall \xi\in T^\ast\Sigma.
\]
Moreover,  $\Lambda: H ^{s}(\Sigma)\rightarrow H ^{s-\frac{1}{2}}(\Sigma)$ is an isomorphism for any $s\in\rr$.

For $z\in\rho(D_0)$ we introduce the operator 
\begin{align*}
\mathcal L_z= (\Lambda \otimes I_2)\Pi_z ( \Lambda \otimes I_2)
\end{align*}
acting on the maximal domain 
\[
\dom \mathcal L_z= \{ g\in L^2(\Sigma;\cc^2): \mathcal L_z g\in L^2(\Sigma;\cc^2)\}.
\]

\begin{lemma}\label{L BS} The following holds: 
\begin{itemize}
\item[(i)] For all $z\in\rho(D_0)$ one has $\ker (\overline{D_{\epsilon,\tau}}-z)=\Phi_z\Lambda\{ g\in L^2(\Sigma;\cc^2): g\in\ker \mathcal L_z\}$. In particular, it holds that $\mathrm{dim}\ker (\overline{D_{\epsilon,\tau}}-z)=\mathrm{dim}\ker \mathcal L_z$.
\item[(ii)] For all $z\in\rho(D_0)\cap\rho(\overline{D_{\epsilon,\tau}})$ the operator $\mathcal L_z $ is bounded bijective from  $\dom \mathcal L_z$ to $L^2(\Sigma;\cc^2)$,  and it holds that 
\begin{equation}\label{Krein3}
				(\overline{D_{\epsilon, \tau}} - z)^{-1} = (D_0 - z)^{-1} - \Phi_{z} \Lambda\mathcal L_{z}^{-1} \Lambda \Phi_{\Bar{z}}^*,
\end{equation}
\item[(iii)] For any $z\in\rho(D_0)\cap\rr$ the operator $(\mathcal L_z ,\dom \mathcal L_z)$ is self-adjoint.	
\end{itemize}
\end{lemma}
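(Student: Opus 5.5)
The plan is to transfer Proposition \ref{BS and Krein 2} through the isomorphism $\Lambda\otimes I_2$, which for simplicity we still write $\Lambda$. We use that $\Lambda: H^{s}(\Sigma;\cc^2)\to H^{s-\frac12}(\Sigma;\cc^2)$ is an isomorphism for every $s$. In particular it maps $L^2$ onto $H^{-\frac12}$ and $H^{\frac12}$ onto $L^2$.

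For (i), take $g\in H^{-\frac12}(\Sigma;\cc^2)$ with $\Pi_z g=0$. Writing $g=\Lambda h$ with $h\in L^2(\Sigma;\cc^2)$ gives $\mathcal L_z h=\Lambda\Pi_z\Lambda h=0$, so $h\in\ker\mathcal L_z$. Conversely, if $h\in\ker\mathcal L_z$ then $\Lambda(\Pi_z\Lambda h)=0$, and injectivity of $\Lambda$ gives $\Pi_z\Lambda h=0$. Hence $\Lambda$ restricts to a bijection $\ker\mathcal L_z\to\ker\Pi_z$. Combined with Proposition \ref{BS and Krein 2}(i) and the injectivity of $\Phi_z$ (Proposition \ref{Basic_Phi}(i)), this yields both the formula for $\ker(\overline{D_{\epsilon,\tau}}-z)$ and the dimension identity.

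For (ii), note first that $h\in\dom\mathcal L_z$ if and only if $\Lambda\Pi_z\Lambda h\in L^2$, i.e. if and only if $\Pi_z\Lambda h\in H^{\frac12}$. So $\Lambda$ maps $\dom\mathcal L_z$ bijectively onto the set $\mathcal G$ of Proposition \ref{BS and Krein 2}(ii). On $\mathcal G$, $\Pi_z$ is a bijection onto $H^{\frac12}$, and $\Lambda:H^{\frac12}\to L^2$ is a bijection. Therefore $\mathcal L_z=\Lambda\Pi_z\Lambda:\dom\mathcal L_z\to L^2$ is a bijection, with
\[
\mathcal L_z^{-1}=\Lambda^{-1}\Pi_z^{-1}\Lambda^{-1}.
\]
This inverse is bounded on $L^2$, being a composition of bounded maps $L^2\to H^{\frac12}\to H^{-\frac12}\to L^2$. (We read ``bounded bijective'' as: bounded for the graph norm, or equivalently, with bounded inverse.) Substituting $\Pi_z^{-1}=\Lambda\mathcal L_z^{-1}\Lambda$ into \eqref{Krein2} gives \eqref{Krein3}.

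For (iii), fix $z\in\rho(D_0)\cap\rr$. By Proposition \ref{Basic_C_z}(iii), $\mathscr C_z$ is self-adjoint, and since $\epsilon,\tau$ are real, $\Pi_z$ is formally symmetric. Moreover $\Lambda$ is self-adjoint, so $\mathcal L_z$, viewed as a map $L^2\to H^{-1}$, satisfies $\langle \mathcal L_z g,h\rangle=\langle g,\mathcal L_z h\rangle$ in the $H^{1}$–$H^{-1}$ duality whenever both sides make sense. We now identify the adjoint of the maximal operator. Suppose $h\in L^2$ and $w\in L^2$ satisfy $\langle \mathcal L_z g,h\rangle=\langle g,w\rangle$ for all $g\in\dom\mathcal L_z$. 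Since $\dom\mathcal L_z\supset H^1(\Sigma;\cc^2)$, testing with $g\in H^1$ gives $\mathcal L_z h=w$ in the distributional sense. Hence $h\in\dom\mathcal L_z$, and so $\mathcal L_z^*\subset\mathcal L_z$.

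The main obstacle is the reverse inclusion, namely symmetry of $\mathcal L_z$ on the full maximal domain, where elements are only in $L^2$. This is a pairing-extension issue because $\Pi_z$ is not elliptic. We would handle it with a Friedrichs-type approximation. Take mollified $g_n=\Lambda^{-1}J_n\Lambda g$, with $J_n$ smoothing operators commuting with $\Lambda$ up to lower order. The difficulty is to show that $\mathcal L_z g_n\to\mathcal L_z g$ weakly in $L^2$, using that commutators $[\mathcal L_z,J_n]$ are uniformly bounded on $L^2$ since $\mathcal L_z$ has order $1$. We would try this first.

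If that route fails, an alternative for real $z$ in the resolvent set is to use (ii): $\mathcal L_z^{-1}$ is bounded, and it is symmetric because \eqref{Krein3} makes $\Lambda\mathcal L_z^{-1}\Lambda$ the sandwich of a self-adjoint resolvent difference. A bounded symmetric operator is self-adjoint, so its inverse $\mathcal L_z$ is self-adjoint as well. For general real $z\in\rho(D_0)$ we would then combine $\mathcal L_z^*\subset\mathcal L_z$ with an analytic-perturbation argument in $z$: $\mathcal L_z-\mathcal L_{z'}=\Lambda(\mathscr C_z-\mathscr C_{z'})\Lambda$ is bounded, because $\mathscr C_z-\mathscr C_{z'}$ maps $H^{-\frac12}\to H^{\frac12}$ by Lemma \ref{Phi_identities}. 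Kato–Rellich then transfers self-adjointness from a nearby point in $\rho(\overline{D_{\epsilon,\tau}})$, which exists since that operator is self-adjoint and so the real resolvent set is nonempty.
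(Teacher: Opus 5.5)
Parts (i) and (ii) are handled exactly as in the paper, by transporting Proposition \ref{BS and Krein 2} through the isomorphism $\Lambda$. For (iii), the paper's proof is your first route, and what you call the main obstacle is a standard result: the Friedrichs lemma (cited there as \cite[Proposition~7.4]{T2}). It says that for a first-order pseudodifferential operator on a closed manifold, elliptic or not, the maximal realization equals the closure of its restriction to $C^\infty$. Concretely, take ordinary mollifiers $J_n$; conjugating by $\Lambda$ is unnecessary. Then $[\mathcal L_z,J_n]$ is uniformly bounded on $L^2$ and tends to $0$ strongly, so $\mathcal L_zJ_ng=J_n\mathcal L_zg+[\mathcal L_z,J_n]g\to\mathcal L_zg$ in $L^2$. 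Formal symmetry therefore passes to the maximal domain, and together with your inclusion $\mathcal L_z^*\subset\mathcal L_z$ this gives self-adjointness.

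Your fallback route, however, fails as written at its last step. You assert that a real point of $\rho(D_0)\cap\rho(\overline{D_{\epsilon,\tau}})$ exists because $\overline{D_{\epsilon,\tau}}$ is self-adjoint. Self-adjointness does not make the real resolvent set nonempty (think of multiplication by $x$ on $L^2(\rr)$). Nothing proved so far rules out that $\spec\overline{D_{\epsilon,\tau}}$ fills the gaps of $D_0$. The essential spectrum is only identified in Theorem \ref{Ess Spec}, whose proof goes through Proposition \ref{Spec ess L} and uses precisely the self-adjointness of $\mathcal L_z$, so appealing to it would be circular.

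The fix is to start from a non-real $z$, where (ii) is available.
\begin{itemize}
\item Take adjoints in \eqref{Krein3} and compare with \eqref{Krein3} at $\bar z$. Using that $\Lambda\Phi_z^*$ maps $L^2(\rr^2;\cc^2)$ onto $L^2(\Sigma;\cc^2)$ (Proposition \ref{Basic_Phi}) and that $\Phi_{\bar z}\Lambda$ is injective, this yields $(\mathcal L_z^{-1})^*=\mathcal L_{\bar z}^{-1}$, i.e. $\mathcal L_z^*=\mathcal L_{\bar z}$.
\item For real $x\in\rho(D_0)$, write $\mathcal L_x=\mathcal L_z+B$ with $B=\Lambda(\mathscr C_x-\mathscr C_z)\Lambda$, which is bounded as you noted.
\item By \eqref{var_ast}, $B^*=\Lambda(\mathscr C_x-\mathscr C_{\bar z})\Lambda$. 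Hence $\mathcal L_x^*=\mathcal L_{\bar z}+B^*=\mathcal L_x$ on the common maximal domain, with no smallness or Kato--Rellich needed.
\end{itemize}
The same surjectivity of $\Lambda\Phi_z^*$ is also what you tacitly use when you deduce symmetry of $\mathcal L_z^{-1}$ from the Krein formula.

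Repaired this way, the fallback is a genuinely different proof from the paper's. It derives self-adjointness of $\mathcal L_x$ from that of $\overline{D_{\epsilon,\tau}}$ (Theorem \ref{main_th2}), using only mapping properties of $\Phi_z$ and $\Lambda$. The paper's argument is instead intrinsic to $\Sigma$, but relies on the pseudodifferential structure of $\mathscr C_z$.
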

\begin{proof} Assertions (i) and (ii) are  direct consequences of Proposition \ref{BS and Krein 2}. To prove (iii),  note that for any $z\in\rho(D_0)\cap\rr$,  $\mathcal L_z$ is a symmetric first order pseudodifferential operator and therefore self-adjoint on the maximal domain $\dom \mathcal L_z$ by the Friedrichs lemma, see e.g. \cite[Proposition.~7.4]{T2}.
\end{proof}

\begin{lemma}\label{L relate D} If $z\in \rho(D_0)\cap\rr$ is such that $0\in\spece \mathcal L_z$, then $z\in\spece  \overline{D_{\epsilon,\tau}}$.
\end{lemma}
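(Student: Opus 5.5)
Fix $z\in\rho(D_0)\cap\rr$ with $0\in\spece\mathcal L_z$. The plan is to turn a Weyl singular sequence for $\mathcal L_z$ at $0$ into a Weyl singular sequence for $\overline{D_{\epsilon,\tau}}$ at $z$. By Lemma \ref{L BS}(iii), $\mathcal L_z$ is self-adjoint, so Weyl's criterion gives $g_k\in\dom\mathcal L_z$ with $\|g_k\|_{L^2(\Sigma;\cc^2)}=1$, $g_k\rightharpoonup 0$ weakly and $\mathcal L_z g_k\to0$ in $L^2(\Sigma;\cc^2)$. Write $\tilde\Lambda:=\Lambda\otimes I_2$ and $h_k:=\tilde\Lambda g_k\in H^{-\frac12}(\Sigma;\cc^2)$. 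Then $\Pi_z h_k=\tilde\Lambda^{-1}\mathcal L_z g_k\to0$ in $H^{\frac12}(\Sigma;\cc^2)$.

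The natural candidate, suggested by Remark \ref{Remark decomp}, is
\[
v_k:=u_k+\Phi_z h_k,\qquad u_k:=\mathcal E\bigl(-\Pi_z h_k\bigr)\in\dom D_0,
\]
where $\mathcal E$ is a bounded right inverse of $\gamma_D^\Sigma$ from $H^{\frac12}(\Sigma;\cc^2)$ into $\dom D_0$, as used in the proof of Theorem \ref{main_th2}. The jump formula \eqref{eq_C_z_formula} gives $\gamma_D^\pm(v_k)_\pm=-\Pi_z h_k+\mathscr C_z h_k\mp\frac i2(\sigma\cdot\nu)h_k$. The same computation as in the proof of Theorem \ref{main_th2}, using $\Pi_z=\frac14(\epsilon I_2-\tau\sigma_3)+\mathscr C_z$ and $(\epsilon I_2+\tau\sigma_3)(\epsilon I_2-\tau\sigma_3)=4$, should then show that \eqref{TC} holds in $H^{-\frac12}$. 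Hence $v_k\in\dom\overline{D_{\epsilon,\tau}}$ by Theorem \ref{main_th2}. Moreover $(\overline{D_{\epsilon,\tau}}-z)v_k=(D-z)u_k$, and its norm is $O(\|\Pi_z h_k\|_{H^{1/2}})\to0$. Since also $\|u_k\|\to0$, the sequence $v_k$ is asymptotically $\Phi_z h_k$.

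It remains to show that $\Phi_z h_k$ is bounded below in norm and tends to zero weakly. Weak convergence should be straightforward: for $f\in L^2(\rr^2;\cc^2)$,
\[
\langle\Phi_z h_k,f\rangle=\langle g_k,\tilde\Lambda\Phi_z^*f\rangle_{L^2(\Sigma)}\to0,
\]
because $\Phi_z^*f\in H^{\frac12}$ gives $\tilde\Lambda\Phi_z^*f\in L^2(\Sigma;\cc^2)$ and $g_k\rightharpoonup0$.

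The main obstacle is the lower bound $\liminf\|\Phi_z h_k\|_{L^2}>0$. Note that $\Phi_z\tilde\Lambda$ is bounded and injective from $L^2(\Sigma;\cc^2)$ to $L^2(\rr^2;\cc^2)$ by Proposition \ref{Basic_Phi}(i) and $\Lambda$ being an isomorphism. Moreover $\Phi_z$ is an isomorphism of $H^{-\frac12}$ onto the closed kernel $\ker(\mathcal T^*-z)$ equipped with the graph norm. On that kernel the graph norm is equivalent to the $L^2$ norm, since $\mathcal T^*u=zu$ there. Therefore $\|\Phi_z h\|_{L^2}\ge c\|h\|_{H^{-\frac12}}\ge c'\|g\|_{L^2}$, which yields $\|\Phi_z h_k\|\ge c'$. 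If this closed-range argument has a gap, I would fall back on the identity $\Phi_z^*\Phi_z=\frac{d}{dz}\mathscr C_z$ (from Lemma \ref{Phi_identities}) to compute $\|\Phi_z h\|^2$ via a symbol of order $-1$. With these facts, $v_k/\|v_k\|$ is a singular sequence, and Weyl's criterion gives $z\in\spece\overline{D_{\epsilon,\tau}}$.
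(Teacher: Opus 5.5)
Your proposal is correct and follows essentially the same route as the paper. Your $v_k=\mathcal E(-\Pi_z h_k)+\Phi_z h_k$ coincides with the paper's $v_j=\mathcal E(-\Lambda^{-1}\mathcal L_z g_j)+\Phi_z\Lambda g_j$, and your remaining steps match the paper's: the transmission-condition check, the weak convergence, and the lower bound $\|\Phi_z\Lambda g_j\|\ge c\|g_j\|$ from the isomorphism in Proposition \ref{Basic_Phi}(i). Your closed-range argument for the lower bound already suffices, so the fallback via $\tfrac{d}{dz}\mathscr C_z$ is not needed.
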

\begin{proof} Fix $z\in \rho(D_0)\cap\rr$ with $0\in\spece \mathcal L_z$. Then, there exists a sequence $(g_j)_{j\in\nn}\subset \dom \mathcal L_z$ weakly converging to $0$ with $\|g_j\|_{L^2(\Sigma;\cc^2)}=1$ for all $j\in\nn$, and $\| \mathcal L_z g_j\|_{L^2(\Sigma;\cc^2)} \rightarrow 0$ as $j \rightarrow \infty$. 
Recall the extension operator $\mathcal E: H^{\frac{1}{2}}(\Sigma;\cc^2) \rightarrow \dom D_{0}$ and set 
\[
v_j= \mathcal{E}(-\Lambda^{-1}\mathcal L_z g_j)+ \Phi_z\Lambda g_j\quad \forall j\in\nn.
\]
Obviously, $v_j\in H_\sigma(\Omega_+,\cc^2)\oplus H_{\sigma, A}(\Omega_-,\cc^2)$, and by Proposition \ref{Further_Phi_z}  and the definition of $\mathcal L_z$ we have 
\begin{align*}
\gamma_D^\pm v_{j,\pm}= -\Lambda^{-1}\mathcal L_z g_j \mp\frac{i}{2}(\sigma\cdot\nu)\Lambda g_j + \mathscr C_z\Lambda g_j=  -(\frac{1}{4}(\epsilon I_2 - \tau\sigma_3) \pm\frac{i}{2}(\sigma\cdot\nu) ) \Lambda g_j.
\end{align*}
From this we easily check that 
\begin{align*}
\frac{1}{2}(\epsilon I_2+\tau\sigma_3)(\gamma_D^+ v_{j,+} +\gamma_D^- v_{j,-})+ &i\sigma\cdot\nu(\gamma_D^+ v_{j,+} -\gamma_D^- v_{j,-})= 0,
\end{align*}
and thus $v_j\in\dom  \overline{D_{\epsilon,\tau}}$. The task now is to show that there exists $j_0\geq 1$ large enough so that $(v_j)_{j\geq j_0}$ is a singular sequence for $z$. 

Let $\mathcal E': H^{-1}(\rr^2;\cc^2)\rightarrow H^{-\frac{1}{2}}(\Sigma;\cc^2)$ be the dual operator of $\mathcal E$. Then, for $f\in L^2(\rr^2;\cc^2)$ we compute 
\begin{align*}
\langle v_j,f \rangle_{ L^2(\rr^2;\cc^2)} &= \langle \mathcal{E}(-\Lambda^{-1}\mathcal L_z g_j), f\rangle_{ L^2(\rr^2;\cc^2)} + \langle  \Phi_z\Lambda g_j, f \rangle _{ L^2(\rr^2;\cc^2)}\\
&= \langle -\mathcal L_z g_j, \Lambda^{-1}\mathcal{E}'f\rangle_{ L^2(\Sigma;\cc^2)} + \langle   g_j,\Lambda \Phi_{\overline{z}}^\ast f \rangle_{ L^2(\Sigma;\cc^2)}.
\end{align*}
Here we used that  $\Lambda: H^s(\Sigma;\cc^2)\rightarrow H^{s-\frac{1}{2}}(\Sigma;\cc^2)$ is an isomorphism for any $s\in\rr$. Hence
\begin{align*}
\left| \langle v_j,f \rangle_{ L^2(\rr^2;\cc^2)} \right|&\leq   \| \mathcal L_z g_j\|_{ L^2(\Sigma;\cc^2)} \| \Lambda^{-1}\mathcal{E}'f\|_{ L^2(\Sigma;\cc^2)} + \langle   g_j,\Lambda \Phi_{\overline{z}}^\ast f \rangle_{ L^2(\Sigma;\cc^2)}\xrightarrow[j\to\infty]{} 0
\end{align*}
because $(g_j)_{j\in\nn}$ is a Weyl sequence, and so $(v_j)_{j\in\nn}$ is weakly converging to $0$. By Proposition \ref{Basic_Phi}-(i) there exists $c>0$ such that 
\begin{align*}
	\|\Phi_z\Lambda g_j \|_{L^2(\rr^2;\cc^2)}\equiv \|\Phi_z\Lambda g_j \|_{H_{\sigma, A}(\rr^2\setminus\Sigma;\cc^2)} \geq c\|\Lambda g_j\|_{H^{-\frac{1}{2}}(\Sigma;\cc^2)}=c  \| g_j\|_{L^2(\Sigma;\cc^2)}=c,
\end{align*}
and thus 
\begin{align*}
c \leq \|\Phi_z\Lambda g_j\|_{L^2(\rr^2;\cc^2)}&\leq  \| v_j\|_{L^2(\rr^2;\cc^2)} + \| \mathcal{E}(-\Lambda^{-1}\mathcal L_z g_j)\|_{L^2(\rr^2;\cc^2)}\\
&\leq  \| v_j\|_{L^2(\rr^2;\cc^2)} + \| \Lambda^{-1}\mathcal L_z g_j\|_{H^{\frac{1}{2}}(\Sigma;\cc^2)}\\
&=  \| v_j\|_{L^2(\rr^2;\cc^2)} + \| \mathcal L_z g_j\|_{L^2(\Sigma;\cc^2)}.
\end{align*}
This entails that for any $\eta>0$ there exists $j_0\geq 1$ such that for all $j\geq j_0$  we have $\| v_j\|_{L^2(\rr^2;\cc^2)}\geq c-\eta$. 
Next we estimate
\begin{align*}
\| ( \overline{D_{\epsilon,\tau}}- z)v_j\|_{L^2(\rr^2;\cc^2)}&= \| (D-z)\mathcal{E}(-\Lambda^{-1}\mathcal L_z g_j) \|_{L^2(\rr^2;\cc^2)}\\
& \leq \| \Lambda^{-1}\mathcal L_z g_j\|_{H^{\frac{1}{2}}(\Sigma;\cc^2)}= \| \mathcal L_z g_j\|_{L^2(\Sigma;\cc^2)}\xrightarrow[j\to\infty]{} 0.
\end{align*}
Thus, for $j_0\geq 1$ sufficiently large and $j\geq j_0$,
\begin{align*}
\frac{\| ( \overline{D_{\epsilon,\tau}}- z)v_j\|_{L^2(\rr^2;\cc^2)}}{\| v_j\|_{L^2(\rr^2;\cc^2)} }\leq \frac{1}{c-\eta} \| ( \overline{D_{\epsilon,\tau}}- z)v_j\|_{L^2(\rr^2;\cc^2)} \xrightarrow[j\to\infty]{} 0.
\end{align*}
Hence, $(v_j)_{j\geq j_0}$ is a singular sequence for $z$. Therefore $z\in\spece  \overline{D_{\epsilon,\tau}}$ and this completes the proof.
\end{proof}

We next analyze the essential spectrum of $\mathcal L_z$.
\begin{proposition}\label{Spec ess L} Let $I_\Sigma$ be as in \eqref{Def I Sigma}. Then, for any  $z\in  \rho(D_0)$  one has $0\in \spece \mathcal L_z$ if and only if $z\in I_\Sigma$.
\end{proposition}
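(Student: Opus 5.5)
The plan is to reduce the question to a statement about the essential spectrum of a zero-order scalar pseudodifferential operator on $\Sigma$, by splitting off the elliptic part of $\mathcal L_z$. Two facts are used throughout. First, by Proposition \ref{Basic_C_z} and the proof of Lemma \ref{Pi_notFredholm},
\[
\Pi_z=\begin{pmatrix}\frac{\epsilon-\tau}{4} & H_\Sigma\\ H_\Sigma^* & \frac{\epsilon+\tau}{4}\end{pmatrix}+(zI_2+m\sigma_3)(S(0)\otimes I_2)+T_z ,
\]
with $T_z\in\Op\cS^{-2}_2(\Sigma)$. Second, $\Lambda\in\Op\cS^{1/2}_1(\Sigma)$ has principal symbol $\sqrt{2|\xi|}$.

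\textbf{Symbol of $\mathcal L_z$.} It follows that $\mathcal L_z\in\Op\cS^1_2(\Sigma)$ with principal symbol $2|\xi|\,M(x,\xi)$, where
\[
M(x,\xi)=\begin{pmatrix}\frac{\epsilon-\tau}{4} & -\frac12 T\,\mathrm{sgn}\xi\\ -\frac12\overline T\,\mathrm{sgn}\xi & \frac{\epsilon+\tau}{4}\end{pmatrix}.
\]
Under \eqref{Assumption} we have $\det M=0$ and $\operatorname{tr}M=\epsilon/2\neq0$, so $M$ has the simple eigenvalues $0$ and $\epsilon/2$. The corresponding spectral projections $P_0(x,\xi)$ and $P_1(x,\xi)$ are smooth and homogeneous of degree $0$. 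The zero-order part of $\mathcal L_z$ contains $(zI_2+m\sigma_3)\,\Lambda S(0)\Lambda$. Since $S(0)-S(-1)$ has a kernel that is $C^1$ across the diagonal, this term equals $(zI_2+m\sigma_3)(I+\text{order}\le -1)$.

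\textbf{Evaluation on $\ker M$.} A normalized kernel vector $e=(e_1,e_2)$ of $M$ satisfies $|e_1|^2/|e_2|^2=4/(\epsilon-\tau)^2=(\epsilon+\tau)^2/4$. Using $(\epsilon-\tau)(\epsilon+\tau)=4$, this gives $|e_1|^2-|e_2|^2=\tau/\epsilon$. Hence
\[
e^*(zI_2+m\sigma_3)e=z+m\frac{\tau}{\epsilon}(x),
\]
which vanishes exactly when $z=-m\tau/\epsilon(x)$, i.e.\ when $z\in I_\Sigma$.

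\textbf{Block diagonalization.} I will quantize $P_0,P_1$ and correct them iteratively, in the standard way, to obtain a zero-order pseudodifferential operator $U$, unitary modulo smoothing operators, such that
\[
U^*\mathcal L_zU=\begin{pmatrix}\mathcal B_1&0\\0&\mathcal B_0\end{pmatrix}+\mathcal K ,
\]
with $\mathcal K$ smoothing. Here $\mathcal B_1$ is scalar of order $1$ with principal symbol $\epsilon|\xi|$, hence elliptic, so that $0\notin\spece\mathcal B_1$. The remaining block $\mathcal B_0$ is scalar of order $0$, and its principal symbol is $b_0(x,\xi)=e^*\bigl(\text{zero-order symbol of }\mathcal L_z\bigr)e$. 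Because $\mathcal L_z$ is self-adjoint on its maximal domain (Lemma \ref{L BS}-(iii)), and these conjugations preserve Weyl sequences up to compact errors, we get $0\in\spece\mathcal L_z$ iff $0\in\spece\mathcal B_0$. For a zero-order classical scalar operator on a closed curve, the essential spectrum is the range of its principal symbol over the cosphere bundle. In the real case, for instance, one uses $\mathcal B_0=b_0^+\Pi_++b_0^-\Pi_-+{}$compact, with $\Pi_\pm$ the Szeg\H{o}-type projections, together with localized Weyl sequences that oscillate at a point $x_0$ in the direction $\pm\xi$. If $b_0(x,\xi)=z+m\tau/\epsilon(x)$, the claim then follows. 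For nonreal $z$ the range never contains $0$, consistent with $I_\Sigma\subset\mathbb R$.

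\textbf{Main obstacle.} The hardest step is to verify that $b_0$ is really $z+m\tau/\epsilon(x)$, i.e.\ that no other order-$0$ contributions survive the projection onto $\ker M$. Such terms could come from the subprincipal symbol of the order-one part, namely from commutators $[\Lambda,\epsilon\pm\tau]$, from $x$-derivatives of $P_0$ in the composition formula, and from lower-order terms of $H_\Sigma$ and $\Lambda$. My first attempt will use exact identities rather than symbol expansions. Since $(H_\Sigma\overline T)^2=\frac14 I$ exactly, the $H_\Sigma$-block behaves algebraically like $\frac12\mathcal H$ with $\mathcal H^2=I$, so I will use $\frac12(I\pm2\mathcal H)$ as exact (not merely modulo lower order) spectral projections. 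I also expect that a suitable symmetric choice of $\Lambda$ (self-adjoint, commuting with $\mathcal H$ modulo smoothing on a curve) makes the leftover order-$0$ terms anti-self-adjoint, hence zero after symmetrization, or of order $\le -1$, hence compact. If this fails, I will compute the subprincipal term explicitly in local coordinates and check that its contribution on $\ker M$ cancels, as the self-adjointness of $\mathcal L_z$ for real $z$ suggests.
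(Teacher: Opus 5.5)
Your principal-level computations are correct. $M$ has eigenvalues $0$ and $\epsilon/2$, the normalized kernel vector satisfies $|e_1|^2-|e_2|^2=\tau/\epsilon$, and $e^*(zI_2+m\sigma_3)e=z+m\tau/\epsilon$. Up to the positive factor $|e_2|^{-2}=2\epsilon/(\epsilon-\tau)$, this is exactly the symbol $\frac{2(z\epsilon+m\tau)}{\epsilon-\tau}$ the paper obtains for its Schur complement.

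\textbf{The gap.} The proposal does not prove the proposition. The step you call the main obstacle is the whole content of the statement, and you leave it as an expectation. Because the order-one symbol $2|\xi|M$ vanishes on $\ker M$, the order-zero symbol $b_0$ of $\mathcal B_0$ is not determined by principal symbols. It also picks up contributions from three sources:
\begin{itemize}
\item the subprincipal part of $(\Lambda\otimes I_2)\bigl(\frac14(\epsilon I_2-\tau\sigma_3)+\Theta\bigr)(\Lambda\otimes I_2)$, namely the commutators $[\Lambda,\epsilon\pm\tau]$, the order $-1$ part of $H_\Sigma$, and lower-order terms of $\Lambda$;
\item the $x$-dependence of $P_0$ in the conjugation by $U$.
\end{itemize}
Self-adjointness of $\mathcal L_z$ only makes $b_0$ real. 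It cannot exclude a real order-zero shift, which would move the zero set of $b_0$ and hence change $I_\Sigma$.

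Your proposed shortcuts do not close this. The identity $(H_\Sigma\overline T)^2=\frac14 I$ does not by itself give information on $H_\Sigma^*H_\Sigma$, which is what the $2\times 2$ block structure involves. Also, $\Lambda=S(-1)^{-1/2}$ is fixed by the definition of $\mathcal L_z$, so a ``suitable choice'' of $\Lambda$ would first require an invariance argument.

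\textbf{What the paper uses.} The missing ingredient is Lemma \ref{Cummutator smooth}. For $f\in C^\infty(\Sigma)$, the commutator $[H_\Sigma,f]$ is \emph{smoothing}: its kernel $\frac{i}{2\pi}\frac{f(\rho(s))-f(\rho(t))}{z(s)-z(t)}$ is $C^\infty$ after factoring out $s-t$. Likewise, $H_\Sigma^*H_\Sigma-\frac14$ is smoothing because the Kerzman--Stein operator is. Merely order $-1$ would not suffice without further computation, since such errors sandwiched between two factors $\Lambda$ become order-zero terms.

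With these facts the paper avoids diagonalization altogether:
\begin{itemize}
\item In \eqref{L decomposition} it adds a finite-rank $\mathbb P$ to the upper-left entry and passes to the Schur complement $\mathcal S_z$.
\item It expands $(I+C)^{-1}=I-C+C^2(I+C)^{-1}$.
\item The entire order-one part $\Lambda\bigl(\frac{\epsilon+\tau}{4}-H_\Sigma^*(\epsilon+\tau)H_\Sigma\bigr)\Lambda$ is then smoothing, so no subprincipal bookkeeping is needed. Only principal symbols of $\Lambda$ and $H_\Sigma$ are used.
\item What remains is $\mathcal S_z=\frac{2(z\epsilon+m\tau)}{\epsilon-\tau}$ modulo $\Op\cS^{-1}_1(\Sigma)$, a multiplication operator plus a compact one.
\end{itemize}

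\textbf{Closing the gap on your route.} You would have to carry out the order-zero computation explicitly, for instance in the Weyl calculus. There the band-reduction corrections do drop out, because $e$ depends on $\xi$ only through $\mathrm{sgn}\,\xi$ and $e^*M=0$. Then only $e^*L_0^W e$ survives, where $L_0^W$ is the order-zero Weyl symbol of $\mathcal L_z$. You would still need to show that the order-one part contributes nothing to $L_0^W$. That requires the same structural information about $H_\Sigma$ that the paper proves, together with control of the lower-order symbol of $\Lambda$.
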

Before going through the proof of Proposition \ref{Spec ess L} we will need further properties of the operator $H_\Sigma$ defined by \eqref{Cauchy2}. The proof of the following lemma is inspired by \cite[Sec. 5]{bell}.
\begin{lemma}\label{Cummutator smooth} Let $f\in C^\infty(\Sigma,\rr)$ and $H_\Sigma$ be as in \eqref{Cauchy2}.  Then it  holds that 
\begin{align*}
[H_\Sigma,f],\,\,H_\Sigma^\ast H_\Sigma-\frac{1}{4},\, H_\Sigma H^\ast_\Sigma-\frac{1}{4}\,\in  \Op \cS^{-\infty}_1(\Sigma).
\end{align*}
\end{lemma}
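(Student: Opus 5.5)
We plan to prove all three statements by writing the kernel of $H_\Sigma$ explicitly and comparing it with the Cauchy integral operator on $\Sigma$, whose properties are classical. Parametrize $\Sigma$ by arclength, $\gamma:\mathbb R/\ell\mathbb Z\to\Sigma$, and identify $x=\gamma(s)$, $y=\gamma(s')$ with the complex numbers $w=\gamma_1(s)+i\gamma_2(s)$, $w'=\gamma_1(s')+i\gamma_2(s')$. With $\mathrm{d}w'=(t_1+it_2)(s')\,\mathrm{d}s'$ we get
\[
H_\Sigma h = \frac{i}{2\pi}\,\pv\int_\Sigma \frac{\overline{T}(y)\,h(y)}{w-w'}\,\mathrm{d}w' = -\mathcal K(\overline{T}h),
\qquad \mathcal K u(w):=\frac{1}{2\pi i}\,\pv\int_\Sigma\frac{u(w')}{w'-w}\,\mathrm{d}w'.
\]
Here $\mathcal K$ is the Cauchy principal-value operator, and by Plemelj $\mathcal K$ equals $\tfrac12$ times (trace from the interior minus trace from the exterior) of the Cauchy integral. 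It is classical (cf. \cite[Sec.~5]{bell}) that $\mathcal K$ differs from the self-adjoint ``Szeg\H{o}-type'' operator by a smoothing operator. More concretely, the \emph{Kerzman--Stein kernel} $A(w,w')=\frac{1}{2\pi i}\big(\frac{T(w')}{w'-w}-\frac{\overline{T(w)}}{\overline{w'}-\overline{w}}\big)$ extends to a $C^\infty$ function on $\Sigma\times\Sigma$, since $\Sigma$ is $C^\infty$. Consequently $\mathcal K-\mathcal K^*\in\Op\cS^{-\infty}_1(\Sigma)$, where the adjoint is taken in $L^2(\Sigma,\mathrm{d}s)$.

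\textbf{Commutator.} The kernel of $[H_\Sigma,f]$ is
\[
\frac{i}{2\pi}\,\frac{f(y)-f(x)}{w-w'}\,\overline{T}(y)\,|\mathrm{d}w'|.
\]
Since $f\circ\gamma$ and $\gamma$ are smooth and periodic, $(f(y)-f(x))/(w-w')$ is a smooth function of $(s,s')$: write $w-w'=(s-s')\,\beta(s,s')$ with $\beta$ smooth and nonvanishing, using that $\Sigma$ is embedded and $|\gamma'|=1$, and use Hadamard's lemma for the numerator. A smooth kernel gives an operator in $\Op\cS^{-\infty}_1(\Sigma)$.

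\textbf{The products.} Write $\mathcal K=\tfrac12(\mathcal K+\mathcal K^*)+\tfrac12(\mathcal K-\mathcal K^*)$. We first show $\mathcal K^2=\tfrac14$ exactly, which is the Plemelj--Privalov identity (\cite[\S27]{musk}), in the form $(H_\Sigma\overline T)^2=\tfrac14$ already recorded in Lemma \ref{H properties}(ii). Now
\[
H_\Sigma^*H_\Sigma=(\mathcal K\overline T)^*(\mathcal K\overline T)=T\,\mathcal K^*\mathcal K\,\overline T .
\]
Replacing $\mathcal K^*$ by $\mathcal K$ modulo $\Op\cS^{-\infty}$ gives $T\mathcal K^2\overline T+\mathcal R=\tfrac14\,T\overline T+\mathcal R=\tfrac14+\mathcal R$, using $|T|=1$. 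Here $\mathcal R$ is a composition of the bounded zero-order operator $\mathcal K$ with a smoothing operator, and multiplications by the smooth functions $T,\overline T$ preserve the smoothing class, so $\mathcal R\in\Op\cS^{-\infty}_1(\Sigma)$.

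For $H_\Sigma H_\Sigma^*=\mathcal K\overline T T\mathcal K^*=\mathcal K\mathcal K^*$ we argue the same way: replacing $\mathcal K^*$ by $\mathcal K$ gives $\mathcal K^2=\tfrac14$ modulo smoothing.

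\textbf{Main obstacle.} The key step is the smoothness of the Kerzman--Stein kernel $A$ on the diagonal. Expanding $w'-w=T(w)(s'-s)+\tfrac12\kappa\, iT(w)(s'-s)^2+\dots$, the singular terms $\frac{1}{s'-s}$ cancel. One must then check that the full difference is a smooth function of $(s,s')$, not merely bounded. We would verify this through the exact identity
\[
\frac{T(w')}{w'-w}-\frac{\overline{T(w)}}{\overline{w'-w}}=\frac{T(w')\,\overline{(w'-w)}-\overline{T(w)}\,(w'-w)}{|w'-w|^2},
\]
whose numerator vanishes to second order at $s=s'$ (by Taylor's formula with smooth remainders), while the denominator equals $(s-s')^2$ times a smooth positive function. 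Once this is settled, the rest is bookkeeping in the pseudodifferential calculus on the closed curve $\Sigma$.
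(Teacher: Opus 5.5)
Your proposal is correct and follows the paper's proof. The kernel of $[H_\Sigma,f]$ is smooth by the same Taylor/Hadamard factorization of $f(y)-f(x)$ and $w-w'$. The two products reduce to the smoothness of the Kerzman--Stein kernel plus the Plemelj--Privalov identity $\mathcal{K}^2=\frac14$; the only difference is that you verify the Kerzman--Stein smoothness yourself (second-order vanishing of the numerator against $|w'-w|^2\sim(s-s')^2$) instead of deferring to \cite{bell}.

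There is one harmless slip. Since $\frac{i}{2\pi}\frac{1}{w-w'}=\frac{1}{2\pi i}\frac{1}{w'-w}$, one has $H_\Sigma=+\mathcal{K}\overline{T}$, i.e.\ $\mathcal{K}=H_\Sigma T$. So the identity you are really using is $(H_\Sigma T)^2=\frac14$, not the form $(H_\Sigma\overline{T})^2=\frac14$ quoted from Lemma~\ref{H properties}(ii), which has $T$ and $\overline{T}$ interchanged. Also, $\mathcal{K}$ is the average, not half the difference, of the interior and exterior boundary values. None of this affects your argument, which only uses $\mathcal{K}^2=\frac14$ and $\mathcal{K}-\mathcal{K}^\ast\in\Op\cS^{-\infty}_1(\Sigma)$.
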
	
\begin{proof}  
 We parametrize $\Sigma$  by arc length $\rho: \mathbb T \longrightarrow \Sigma$, $\mathbb T = \rr / l\zz$, with  $l$ being the length of $\Sigma$, $\rho(s)=(\rho_1(s),\rho_2(s))$, $|\rho'(s)|=1$. We shall identify $\rr^2$ with $\cc$ and set $z(s)=\rho_1(s)+i\rho_2(s)$. Since $\rho$ is simple and closed, $z: \mathbb T \longrightarrow \cc$ is injective and $z'(s)=\rho'_1(s)+i\rho'_2(s)\neq 0$ for all $s$.

 Now, as the multiplication by $f$ gives rise to a zero order pseudodifferential operator it holds that $ [H_\Sigma,f]\in  \Op \cS^{-1}_1(\Sigma)$. Thus, for all $g\in C^\infty(\Sigma)$ one has 
\[
 [H_\Sigma,f]g(\rho(s))=  \int_{\mathbb T} K(s,t)\,g(\rho(t))\,\mathrm{d} t
\]
with
\[
K(s,t) := \frac{i}{2\pi}\,\frac{f(\rho(s))-f(\rho(t))}{z(s)-z(t)}.
\]
Let $h= f\circ\rho \in C^\infty(\mathbb T)$, then by Taylor's theorem with
integral remainder there holds
\begin{align*}
h(s)-h(t)= (s-t)F(s,t), \quad F(s,t):= \int_0^1 h'(t+ r(s-t))\,\mathrm{d} r,\\
z(s)-z(t)= (s-t)G(s,t), \quad G(s,t):= \int_0^1 z'(t+ r(s-t))\,\mathrm{d}r,
\end{align*}
and both formulas extend continuously to $s=t$ with $F(s,s)=h'(s)$ and $G(s,s)=z'(s)$. Clearly we have $F, G\in C^\infty(\mathbb T\times \mathbb T)$ since $h, z\in C^\infty(\mathbb T)$. Moreover, as $z'(s)\neq 0$ holds for any $s\in \mathbb T$, it follows that $G(s,t)\neq 0$ for all $(s,t)\in \mathbb T\times \mathbb T$. Thus 
\[
K := \frac{i}{2\pi}\, \frac{F}{G}\in C^\infty(\mathbb T\times \mathbb T),
\]
which implies that $[H_\Sigma,f]$ is a smoothing operator, and therefore  $[H_\Sigma,f]\in  \Op \cS^{-\infty}_1(\Sigma)$.

Next, recall that the operator of multiplication $T := t_1+ it_2$ given in \eqref{Def T} where $(t_1,t_2)$ is the tangent vector on $\Sigma$, and that $\mathcal H=-H_\Sigma \overline{T}$ is the Hilbert transform on $\Sigma$. Then, the same reasoning as above shows that the Kerzman--Stein operator  $\mathcal H +\mathcal H^\ast  =-(H_\Sigma \overline{T}+ TH_\Sigma^\ast)$ has a smooth integral kernel, see also \cite[Sec. 5, p. 18]{bell} for more details. Hence, $H_\Sigma \overline{T}+ TH_\Sigma^\ast \in \Op \cS^{-\infty}_1(\Sigma)$, and by Lemma \ref{H properties} we obtain
\begin{align*}
H_\Sigma H_\Sigma^\ast -\frac{1}{4} =H_\Sigma  \overline{T} T H_\Sigma^\ast +(H_\Sigma  \overline{T} )^2= H_\Sigma  \overline{T} (H_\Sigma  \overline{T}+T H_\Sigma^\ast) \in  \Op \cS^{-\infty}_1(\Sigma),
\end{align*}
and taking the adjoint gives $H_\Sigma^\ast H_\Sigma-\frac{1}{4} \in  \Op \cS^{-\infty}_1(\Sigma)$.
\end{proof}

\begin{proof}[Proof of Proposition \ref{Spec ess L}] We only focus on the case $z\in \rr\cap \rho(D_0)$ as $\mathcal L_z$ is invertible for $z\in \rho(D_0)\setminus \rr$. 

Since $\Lambda S(0) \Lambda= I \,\, \mathrm{mod }  \Op \cS^{-1}_1(\Sigma)$, by Proposition \ref{Basic_C_z}  we have 
\begin{align}\label{L decomposition}
\begin{split}
\mathcal L_z&= (\Lambda \otimes I_2)( \frac{1}{4}(\epsilon I_2-\tau\sigma_3)+ \Theta )(\Lambda \otimes I_2) +(m\,\sigma_3 + z\,I_2) +K_z\\
&=\mathcal M + z I +K_z,
\end{split}
\end{align}
where $K_z$  are compact operators in $L^2(\Sigma;\cc^2)$ depending holomorphically on $z\in\rho(D_0)$ due to the holomorphic properties of $\mathscr C_z$ (see Proposition \ref{Further_Phi_z}), and 
\begin{align*}
\begin{split}
\mathcal M :=\begin{pmatrix}
\Lambda\frac{1}{\epsilon+\tau}\Lambda +m &\Lambda H_\Sigma\Lambda  \\
\Lambda H_\Sigma^\ast \Lambda  & \Lambda \frac{1}{\epsilon-\tau}\Lambda -m
\end{pmatrix} 
\end{split}
\end{align*}
Let $\mathbb P$ be the spectral projector onto the finite dimensional subspace $\ker (\Lambda\frac{1}{\epsilon+\tau}\Lambda +(z+m) )$ in $L^2(\Sigma)$. Then $\mathbb P$ is a smoothing operator and one has
\begin{align*}
		0\in \spece \mathcal L_z \Leftrightarrow 0\in \spece (\mathcal M+z) \Leftrightarrow
		0\in\spece \begin{pmatrix}
\Lambda\frac{1}{\epsilon+\tau}\Lambda +(z+m) + \mathbb P &\Lambda H_\Sigma\Lambda  \\
\Lambda H_\Sigma^\ast \Lambda  & \Lambda \frac{1}{\epsilon-\tau}\Lambda +(z-m)
\end{pmatrix}.
		\end{align*}
Since $\mathcal L_z$ is self-adjoint, by the theory of block operator matrices \cite[Thm.~2.4.6]{tretter} it follows that 
\begin{align*}
			0\in \spece \mathcal L_z \Leftrightarrow  0\in \spece \mathcal S_z,
\end{align*}
where, as an operator in $L^2(\Sigma)$,  $\mathcal S_z$ is the Schur complement defined on its maximal domain by
\[
\mathcal S_z= \Lambda \frac{1}{\epsilon-\tau}\Lambda +(z-m) - \Lambda H_\Sigma^\ast \Lambda \left(\Lambda\frac{1}{\epsilon+\tau}\Lambda +(z+m) + \mathbb P \right)^{-1} \Lambda H_\Sigma \Lambda.
\]
We claim that $\mathcal S_z$ is bounded in $L^2(\Sigma)$ and one has  $0\in \spece \mathcal S_z$ if and only if $-z\in m \frac{\tau}{\epsilon}(\Sigma)$. Indeed, observe that 
\[
\mathcal S_z= \Lambda \frac{1}{\epsilon-\tau}\Lambda +(z-m) - \Lambda H_\Sigma^\ast(\epsilon+\tau) \left(1+\Lambda^{-1}(z+m + \mathbb P)\Lambda^{-1} (\epsilon+\tau)\right)^{-1} H_\Sigma \Lambda.
\]
Using the resolvent formulas $(I+C)^{-1}= I-C+ C^2(I+C)^{-1}$ and the fact that $\mathbb P$ is a smoothing operator, we get 
\begin{align*}
\left(1+\Lambda^{-1}(z+m + \mathbb P)\Lambda^{-1} (\epsilon+\tau)\right)^{-1}&= I-(z+m) \Lambda^{-2}(\epsilon+\tau)\quad \mathrm{ mod }\,\,  \Op \cS^{-2}_1(\Sigma)\\
&= I-(z+m) (\epsilon+\tau)\Lambda^{-2}\quad \mathrm{ mod }\,\,  \Op \cS^{-2}_1(\Sigma),
\end{align*}
where in the last equality we used that $[ (\epsilon+\tau),\Lambda^{-2} ]\in \Op \cS^{-2}_1(\Sigma)$. Hence, we can rewrite $\mathcal S_z$ as
\begin{align*}
\mathcal S_z= (z-m) +\Lambda \left(\frac{\epsilon+\tau}{4} -    H_\Sigma^\ast(\epsilon+\tau) H_\Sigma \right)\Lambda +(z+m) \Lambda H_\Sigma^\ast(\epsilon+\tau)^2\Lambda^{-2} H_\Sigma \Lambda \,\, \mathrm{ mod }\,\,  \Op \cS^{-1}_1(\Sigma).
\end{align*}
 Applying Lemma \ref{Cummutator smooth} with $f=\epsilon +\tau$, we obtain
\begin{align*}
\Lambda \left(\frac{\epsilon+\tau}{4} -   H_\Sigma^\ast(\epsilon+\tau) H_\Sigma \right)\Lambda &= \Lambda \left(H_\Sigma^\ast H_\Sigma(\epsilon+\tau) -    H_\Sigma^\ast(\epsilon+\tau) H_\Sigma \right) \Lambda\,\, \mathrm{ mod }\,\,  \Op \cS^{-\infty}_1(\Sigma)\\
&\equiv \Lambda H_\Sigma^\ast [ H_\Sigma,\,\epsilon+\tau] \Lambda \,\, \mathrm{ mod }\,\,  \Op \cS^{-\infty}_1(\Sigma),
\end{align*}
and thus
\begin{align*}
\Lambda \left(\frac{\epsilon+\tau}{4} -   H_\Sigma^\ast(\epsilon+\tau) H_\Sigma \right)\Lambda \in  \Op \cS^{-\infty}_1(\Sigma).
\end{align*}
Using again that $H_\Sigma^\ast H_\Sigma= \frac{1}{4} \,\, \mathrm{ mod }\,\,  \Op \cS^{-\infty}_1(\Sigma)$ and  iterating the commutator arguments as before yield
\begin{align*}
 \Lambda H_\Sigma^\ast(\epsilon+\tau)^2\Lambda^{-2} H_\Sigma \Lambda = \frac{(\epsilon+\tau)^2}{4} \,\, \mathrm{ mod }\,\,  \Op \cS^{-1}_1(\Sigma).
\end{align*}
Consequently,  $\mathcal S_z$ is bounded in $L^2(\Sigma)$ and 
\begin{align*}
\mathcal S_z= (z-m) +\frac{(z+m)(\epsilon+\tau)^2}{4}   \,\, \mathrm{ mod }\,\,  \Op \cS^{-1}_1(\Sigma).
\end{align*}
Notice that 
\begin{align*}
 (z-m) +\frac{(z+m)(\epsilon+\tau)^2}{4} = 2\frac{ z\epsilon +m\tau}{\epsilon-\tau}.
\end{align*}
Therefore, $0\in \spece \mathcal S_z$ if and only if $z\epsilon(s) +m\tau(s)=0$ for some $s\in\Sigma$, and hence 
\begin{align*}
			0\in \spece \mathcal L_z \Leftrightarrow  0\in \spece \mathcal S_z \Leftrightarrow z\in I_\Sigma,
\end{align*}
and this completes the proof.
\end{proof}

We can now provide the  proof of Theorem \ref{Ess Spec} 

\begin{proof}[Proof of Theorem \ref{Ess Spec}] By Lemmas \ref{Ess Spec1} and \ref{L relate D} we have $\spec D_0 \cup I_\Sigma \subset \spece \overline{D_{\epsilon,\tau}}$. So it remains to show the inclusion $\spece \overline{D_{\epsilon,\tau}}\subset \spec D_0 \cup I_\Sigma$. For this, it suffices to show the map $z \mapsto (\overline{D_{\epsilon,\tau}} -z)^{-1}$ extends meromorphically into  $\cc\setminus ( \spec D_0 \cup I_\Sigma)$ with finite-dimensional coefficients in the principal parts of the Laurent series at the poles.

From \eqref{L decomposition} we have $\mathcal L_z=\mathcal M + z I +K_z$ with $\rho(D_0)\ni z\mapsto K_z \in \mathcal L(L^2(\Sigma;\cc^2))$ a holomorphic operator-valued functions, $\mathcal M$ is self-adjoint on $\dom \mathcal L_z$, and  
\[
0\in \spece \mathcal L_z \Leftrightarrow 0\in \spece (\mathcal M+z), \quad \spece \mathcal M = -I_\Sigma,
\]
holds by Proposition \ref{Spec ess L}. Hence,  $z\mapsto (\mathcal M+z)^{-1}$ is meromorphic in $\cc\setminus I_\Sigma$ with simple poles of finite rank residues.  Hence, the operator-valued function $ K_z(\mathcal M + z )^{-1}$ are compact operators depending holomorphically on $\cc\setminus(\spec D_0\cup\spec(-\mathcal M))$ and extend meromorphically to $\cc\setminus(\spec D_0\cup I_\Sigma)$. Therefore, the operator function  
	\[
		z\mapsto (I+{K}_{z}(\mathcal{M}+zI)^{-1}),
	\]
fulfils the assumption of the meromorphic Fredholm theorem on $\cc\setminus(\spec D_0\cup I_\Sigma)$; see, e.g. \cite[Theorem XIII.13]{RS}. 

Now, for any $z\in\cc\setminus(\spec D_0 \cup\spec(-\mathcal M) ) \subset \rho(D_0)$  we write 
\[
\mathcal L_z= (I +K_z(\mathcal M + z )^{-1})(\mathcal M + z ).
\]
 Since $\mathcal L_z$ is boundedly invertible for $z\in\cc\setminus\rr$ (Lemma \ref{L BS}), $(I +K_z(\mathcal M + z )^{-1})$ have the same property, and
 \[
\mathcal L_z^{-1}= (\mathcal M + z )^{-1}(I +K_z(\mathcal M + z )^{-1})^{-1} .
\]
By the meromorphic Fredholm theorem  \cite[Theorem XIII.13]{RS}) we deduce that $z\mapsto (I+{K}_{z}(\mathcal{M}+zI)^{-1})^{-1}$, and thus $\mathcal L_z^{-1}$,  are meromorphic in $\cc\setminus(\spec D_0\cup I_\Sigma)$ such that the only possible singularities at the point of $\specd(-\mathcal M)$ are simple poles with residues of finite rank. Combining this with the fact that the maps  $(D_0-z)^{-1}$, $\Phi_z\Lambda$, and $\Lambda \Phi_{\overline{z}}^\ast$ are holomorphic in $\rho(D_0)$, we conclude from the resolvent formula \eqref{Krein3} that $(\overline{D_{\epsilon,\tau}} -z)^{-1}$ has a meromorphic extension to  $\cc\setminus ( \spec D_0 \cup I_\Sigma)$ such that the coefficients in the Laurent series at the poles are finite-rank operators. Therefore, $\spece \overline{D_{\epsilon,\tau}}\subset \spec D_0 \cup I_\Sigma$ and concludes the proof of the theorem.
\end{proof}

The proof above gives more, namely, if $z\in \rho(D_0)\cap\rr$ is such that $z\in\spece  \overline{D_{\epsilon,\tau}}$ then $0\in\spece \mathcal L_z$. Combining this with Lemmas \ref{L relate D} and \ref{L BS}  we arrive at:
\begin{corollary} For any $z\in\rho(D_0)$ there holds
\begin{align*}
z\in \spece  \overline{D_{\epsilon,\tau}}  &\Leftrightarrow0\in \spece \mathcal L_z ,\\
z\in \specd  \overline{D_{\epsilon,\tau}}  &\Leftrightarrow0\in \specd \mathcal L_z.
\end{align*}
\end{corollary}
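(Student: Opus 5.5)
The plan is to prove the two equivalences separately, taking $z\in\rho(D_0)$ throughout. For nonreal $z$ both sides are false: $\overline{D_{\epsilon,\tau}}$ is self-adjoint, and $\mathcal L_z$ is boundedly invertible by Lemma \ref{L BS}(ii). So it suffices to treat $z\in\rho(D_0)\cap\rr$. Here $\mathcal L_z$ is self-adjoint by Lemma \ref{L BS}(iii), so $\spec\mathcal L_z=\spece\mathcal L_z\,\dot\cup\,\specd\mathcal L_z$. Likewise $z\in\spec\overline{D_{\epsilon,\tau}}$ lies in exactly one of $\spece$ or $\specd$.

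\textbf{Essential spectrum.} The implication ``$\Leftarrow$'' is exactly Lemma \ref{L relate D}. For ``$\Rightarrow$'', assume $0\notin\spece\mathcal L_z$. By Proposition \ref{Spec ess L} this gives $z\notin I_\Sigma$. Since also $z\notin\spec D_0$, the point $z$ lies in $\cc\setminus(\spec D_0\cup I_\Sigma)$. The proof of Theorem \ref{Ess Spec} shows that on this set $(\overline{D_{\epsilon,\tau}}-\cdot)^{-1}$ extends meromorphically, with finite-rank Laurent coefficients. Hence $z$ is either in the resolvent set or an isolated eigenvalue of finite multiplicity, so $z\notin\spece\overline{D_{\epsilon,\tau}}$. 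This is the remark stated just before the corollary.

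\textbf{Discrete spectrum.} First, $z\in\spec\overline{D_{\epsilon,\tau}}$ iff $0\in\spec\mathcal L_z$. If $0\notin\spec\mathcal L_z$, then $\mathcal L_z$ is bijective from $\dom\mathcal L_z$ onto $L^2$. Then the right-hand side of \eqref{Krein3} is a bounded operator, and the argument of Lemma \ref{L BS}(ii) / Proposition \ref{BS and Krein 2}(ii) applies. Concretely, for $f\in L^2$ I set $v=(D_0-z)^{-1}f-\Phi_z\Lambda\mathcal L_z^{-1}\Lambda\Phi_{\bar z}^*f$. The jump formula \eqref{eq_C_z_formula} shows that $v$ satisfies \eqref{TC} in $H^{-1/2}$, so $v\in\dom\overline{D_{\epsilon,\tau}}$ and $(\overline{D_{\epsilon,\tau}}-z)v=f$. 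Injectivity comes from Lemma \ref{L BS}(i), so $z\in\rho(\overline{D_{\epsilon,\tau}})$. Conversely, if $0\in\spec\mathcal L_z$, split into two cases. If $0\in\spece\mathcal L_z$, the first part applies. Otherwise $0$ is an eigenvalue of $\mathcal L_z$, and Lemma \ref{L BS}(i) makes $z$ an eigenvalue of $\overline{D_{\epsilon,\tau}}$.

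Combining these, $z\in\specd\overline{D_{\epsilon,\tau}}$ iff $z\in\spec\setminus\spece$. This holds iff $0\in\spec\mathcal L_z\setminus\spece\mathcal L_z=\specd\mathcal L_z$. Lemma \ref{L BS}(i) also gives $\dim\ker(\overline{D_{\epsilon,\tau}}-z)=\dim\ker\mathcal L_z$, so multiplicities match.

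The main obstacle is the resolvent-set step for real $z$, namely that invertibility of $\mathcal L_z$ forces $z\in\rho(\overline{D_{\epsilon,\tau}})$. Lemma \ref{L BS}(ii) is stated only in the other direction, so I would redo the construction above. The points to check are that $\Lambda\mathcal L_z^{-1}\Lambda$ maps $H^{1/2}$ into $\{g\in H^{-1/2}:\Pi_zg\in H^{1/2}\}$, and that the Remark \ref{Remark decomp} decomposition yields injectivity.
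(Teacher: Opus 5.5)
Your proposal is correct and follows the paper's route. The essential-spectrum equivalence comes from Lemma \ref{L relate D} together with the meromorphic continuation in the proof of Theorem \ref{Ess Spec} (equivalently, Theorem \ref{Ess Spec} combined with Proposition \ref{Spec ess L}), and the discrete part comes from the kernel identity in Lemma \ref{L BS}(i). The step you flag as the main obstacle is not needed. Once $z\notin\spece\overline{D_{\epsilon,\tau}}$ is known, self-adjointness gives $z\in\spec\overline{D_{\epsilon,\tau}}$ iff $\ker(\overline{D_{\epsilon,\tau}}-z)\neq\{0\}$, and by Lemma \ref{L BS}(i) this holds iff $\ker\mathcal L_z\neq\{0\}$. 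So your explicit Krein-type construction, though correct, can be dropped.
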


	\section*{Acknowledgments} 
	Part of this paper was written while B. Benhellal was working at Carl von Ossietzky Universit\"at Oldenburg and was supported by the Deutsche Forschungsgemeinschaft (German Research Foundation), project 491606144. V. Bruneau is partially supported by the ANR-24-CE40-2939-01 grant. B. Benhellal and  P. Miranda were partially supported by Fondecyt grant 1241983.


\end{document}